\crefname{lemma}{Lemma}{Lemmas}
\crefname{observation}{Observation}{Observations}
\crefname{fact}{Fact}{Facts}
\newcommand{\colorconstraints}{\text{Color Constraints}}
\crefname{colorconstraints}{(color constraints)}{Color Constraints}
\crefname{indsetconstraints}{(indset constraints)}{IndSet Constraints}
\crefname{theorem}{Theorem}{Theorems}
\crefname{mtheorem}{Theorem}{Theorems}
\crefname{corollary}{Corollary}{Corollaries}
\crefname{claim}{Claim}{Claims}
\crefname{example}{Example}{Examples}
\crefname{algorithm}{Algorithm}{Algorithms}
\crefname{problem}{Problem}{Problems}
\crefname{definition}{Definition}{Definitions}
\crefname{notation}{Notation}{Notations}
\crefname{equation}{Eq.}{Eq.}
\crefname{table}{Table}{Tables}
\newcommand{\paren}[1]{(#1)}
\newcommand{\Paren}[1]{\left(#1\right)}
\newcommand{\Brac}[1]{\left[#1\right]}
\newcommand{\norm}[1]{\left\lVert#1\right\rVert}
\newcommand{\iprod}[1]{\langle#1\rangle}
\newcommand{\Iprod}[1]{\left\langle#1\right\rangle}
\newcommand{\op}{\mathrm{op}}
\DeclareMathOperator{\I}{\mathbb{I}}
\DeclareMathOperator*{\E}{\mathbb{E}}
\DeclareMathOperator{\R}{\mathbb{R}}
\DeclareMathOperator{\C}{\mathbb{C}}
\DeclareMathOperator{\N}{\mathbb{N}}
\DeclareMathOperator*{\Var}{\mathrm{Var}}
\DeclareMathOperator{\poly}{poly}
\DeclareMathOperator{\Tr}{\mathrm{Tr}}
\renewcommand{\[}{\begin{eqnarray*}}
\renewcommand{\]}{\end{eqnarray*}}
\newcommand{\set}[1]{ \left\{ #1 \right\}}
\newcommand{\one}{\mathbf{1}}
\newcommand{\eps}{\varepsilon}
\newtheorem{theorem}{Theorem}[section]
\newtheorem*{theorem*}{Theorem}
\newtheorem{lemma}[theorem]{Lemma}
\newtheorem*{lemma*}{Lemma}
\newtheorem*{proposition*}{Proposition}
\newtheorem{fact}[theorem]{Fact}
\theoremstyle{definition}
\newtheorem{definition}{Definition}[section]
\newtheorem{conjecture}{Conjecture}[section]
\theoremstyle{remark}
\newtheorem{remark}{Remark}[section]
\let\oldmarginpar\marginpar
\renewcommand\marginpar[1]{\-\oldmarginpar[\raggedleft\footnotesize #1]%
{\raggedright\footnotesize #1}}
\newcommand{\var}{\mathrm{Var}}
\newcommand{\dtv}{d_{\mathrm{TV}}}
\newcommand{\vol}{\mathrm{Vol}}
\DeclarePairedDelimiterX{\infdivx}[2]{(}{)}{%
  #1\,\delimsize\|\,#2%
}
\newcommand{\chisquare}{\chi^2\infdivx}
\newcommand{\chisquaretrunc}{\chi^2_D\infdivx}
\newcommand{\Planted}{\mathsf{P}}
\newcommand{\Null}{\mathsf{N}}
\newcommand{\Ber}{\mathrm{Ber}}
\newcommand{\Bin}{\mathrm{Bin}}
\newcommand{\Krawdaunt}{Krawtchouk\xspace}
\newcommand{\Kr}{\mathrm{Kr}}
\newcommand{\Ov}{\mathrm{Ov}}
\newcommand{\ShadowNull}{\nu}
\newcommand{\ShadowPlanted}{\pi}
\newcommand{\sgraph}{\vartheta}
\newcommand{\e}{\varepsilon}
\newcommand{\Abs}[1]{\left\lvert#1\right\rvert}
\DeclareMathOperator*{\Prob}{\mathbb{P}}
\DeclareMathOperator{\OU}{\mathrm{U}}
\DeclareMathOperator{\T}{\mathrm{T}}
\newcommand{\numm}{D}
\newcommand{\cN}{\calN} 
\newcommand{\berp}{\gamma} 
\newcommand{\berq}{\tilde{\gamma}} 
\newcommand\MYcurrentlabel{xxx}
\newcommand{\MYstore}[2]{%
  \global\expandafter \def \csname MYMEMORY #1 \endcsname{#2}%
}
\newcommand{\MYload}[1]{%
  \csname MYMEMORY #1 \endcsname%
}
\newcommand{\MYnewlabel}[1]{%
  \renewcommand\MYcurrentlabel{#1}%
  \MYoldlabel{#1}%
}
\newcommand{\MYdummylabel}[1]{}
\newcommand{\torestate}[1]{%
  \let\MYoldlabel\label%
  \let\label\MYnewlabel%
  #1%
  \MYstore{\MYcurrentlabel}{#1}%
  \let\label\MYoldlabel%
}
\newcommand{\restatetheorem}[1]{%
  \let\MYoldlabel\label
  \let\label\MYdummylabel
  \begin{theorem*}[Restatement of \cref{#1}]
    \MYload{#1}
  \end{theorem*}
  \let\label\MYoldlabel
}
\newcommand{\restatelemma}[1]{%
  \let\MYoldlabel\label
  \let\label\MYdummylabel
  \begin{lemma*}[Restatement of \cref{#1}]
    \MYload{#1}
  \end{lemma*}
  \let\label\MYoldlabel
}
\newcommand{\restateprop}[1]{%
  \let\MYoldlabel\label
  \let\label\MYdummylabel
  \begin{proposition*}[Restatement of \cref{#1}]
    \MYload{#1}
  \end{proposition*}
  \let\label\MYoldlabel
}
\newcommand{\restatefact}[1]{%
  \let\MYoldlabel\label
  \let\label\MYdummylabel
  \begin{fact*}[Restatement of \cref{#1}]
    \MYload{#1}
  \end{fact*}
  \let\label\MYoldlabel
}
\newcommand{\restate}[1]{%
  \let\MYoldlabel\label
  \let\label\MYdummylabel
  \MYload{#1}
  \let\label\MYoldlabel
}
\renewcommand{\Pr}{\Prob}
\newcommand{\wt}[1]{\widetilde{#1}}
\newcommand{\wh}[1]{\widehat{#1}}
\newcommand{\ol}[1]{\overline{#1}}
\newcommand{\ul}[1]{\underline{#1}}
\newcommand{\pmo}{\{\pm1\}}
\newcommand{\TV}{d_{\mathrm{TV}}}
\newcommand{\dif}{\mathop{}\!\mathrm{d}}
\newcommand{\floor}[1]{\lfloor #1 \rfloor}
\newcommand{\polylog}{\operatorname{polylog}}
\newcommand{\Ind}{\boldsymbol{1}}
\newcommand{\supp}{\mathrm{supp}}
\newcommand{\Unif}{\operatorname{Unif}}
\newcommand{\grad}{\nabla}
\newcommand{\calA}{\mathcal{A}}
\newcommand{\calB}{\mathcal{B}}
\newcommand{\calC}{\mathcal{C}}
\newcommand{\calD}{\mathcal{D}}
\newcommand{\calE}{\mathcal{E}}
\newcommand{\calL}{\mathcal{L}}
\newcommand{\calN}{\mathcal{N}}
\newcommand{\calO}{\mathcal{O}}
\newcommand{\calP}{\mathcal{P}}
\newcommand{\calR}{\mathcal{R}}
\newcommand{\bg}{\boldsymbol{g}}
\newcommand{\bx}{{\boldsymbol{x}}}
\newcommand{\by}{\boldsymbol{y}}
\newcommand{\bz}{\boldsymbol{z}}
\newcommand{\bX}{\boldsymbol{X}}
\DeclarePairedDelimiter\abs{\lvert}{\rvert}
\DeclarePairedDelimiter\parens{\lparen}{\rparen}
\DeclarePairedDelimiter\braces{\lbrace}{\rbrace}
\DeclarePairedDelimiter\bracks{\lbrack}{\rbrack}
\DeclarePairedDelimiter\angles{\langle}{\rangle}
\newcommand\numberthis{\addtocounter{equation}{1}\tag{\theequation}}
\newcommand{\parhead}[1]{\medskip \noindent {\bfseries \ignorespaces {#1}}\hskip 0.9em}
\newcommand{\mper}{\,.}
\newcommand{\mcom}{\,,}
\renewcommand{\le}{\leqslant}
\renewcommand{\leq}{\leqslant}
\renewcommand{\ge}{\geqslant}
\renewcommand{\geq}{\geqslant}
\newcommand{\PlantedShadow}{\pi_{k,\eps}}
\newcommand{\FakePlantedShadow}{\pi'_{k,\eps}}
\newcommand{\NullShadow}{\nu_k}
\newcommand{\FakePlantedShadowHat}{\wh{\pi}'_{k,\eps}}
\newcommand{\NullShadowHat}{\wh{\nu}_k}
\newcommand{\erdos}{Erd\H{o}s\xspace}
\newcommand{\renyi}{R\'enyi\xspace}
\newif\ifjerryvictory
\newcommand{\rvg}{g}
\newcommand{\rvh}{h}
\newcommand{\rvx}{x}
\newcommand{\rvy}{y}
\newcommand{\rvz}{z}
\newcommand{\rvG}{G}
\newcommand{\rvM}{M}
\newcommand{\rvY}{Y}
\newcommand{\rvg}{\boldsymbol g}
\newcommand{\rvh}{\boldsymbol h}
\newcommand{\rvx}{\boldsymbol x}
\newcommand{\rvy}{\boldsymbol y}
\newcommand{\rvz}{\boldsymbol z}
\newcommand{\rvG}{\boldsymbol G}
\newcommand{\rvM}{\boldsymbol M}
\newcommand{\rvY}{\boldsymbol Y}
\def\colorful{0}
\newcommand{\stefan}[1]{{\texttt{\color{Blue} Stefan: [{#1}]}}}
\newcommand{\tim}[1]{{\texttt{\color{Cerulean} Tim: [{#1}]}}}
\newcommand{\sid}[1]{{\texttt{\color{Fuchsia} Sidhanth: [{#1}]}}}
\newcommand{\pravesh}[1]{{\texttt{\color{Red} Pravesh: [{#1}]}}}
\newcommand{\jerry}[1]{{\texttt{\color{SeaGreen} Jerry: [{#1}]}}}
\newcommand{\stefan}[1]{}
\newcommand{\tim}[1]{}
\newcommand{\sid}[1]{}
\newcommand{\pravesh}[1]{}
\newcommand{\jerry}[1]{}
\title{Rigorous Implications of the Low-Degree Heuristic} 
\author{
    Jun-Ting Hsieh\thanks{MIT. \texttt{juntingh@mit.edu}.} \and
    Daniel M. Kane\thanks{UCSD. \texttt{dakane@ucsd.edu}.} \and
    Pravesh K. Kothari\thanks{Princeton University. \texttt{kothari@cs.princeton.edu}.} \and
    Jerry Li\thanks{University of Washington, Seattle. \texttt{jerryzli@cs.washington.edu}.} \and
    Sidhanth Mohanty\thanks{Northwestern University. \texttt{sidhanth.mohanty@northwestern.edu}.} \and
    Stefan Tiegel\thanks{MIT. \texttt{stefan23@mit.edu}. Supported by an SNF Postdoc.Mobility Grant (P500-2\_235374)}
}
\date{\today}
\begin{document}
\maketitle

\begin{abstract}


Over the past decade, the \emph{low-degree heuristic} has been used to estimate the algorithmic thresholds for a wide range of average-case \emph{planted vs null} distinguishing problems.
Such results rely on the hypothesis that if the low-degree moments of the planted and null distributions are sufficiently close, then no efficient (noise-tolerant) algorithm can distinguish between them.
This hypothesis is appealing due to the simplicity of calculating the low-degree likelihood ratio (LDLR) --- a quantity that measures the similarity between low-degree moments.
However, despite sustained interest in the area, it remains unclear whether low-degree indistinguishability actually rules out any interesting class of algorithms.

In this work, we initiate the study and develop technical tools for translating LDLR upper bounds to rigorous lower bounds against concrete algorithms.
As a consequence, we prove:
for any permutation-invariant distribution $\Planted$,
\begin{enumerate}
    \item If $\Planted$ is over $\{0,1\}^n$ and is low-degree indistinguishable from $U = \Unif(\{0,1\}^n)$, then a noisy version of $\Planted$ is statistically indistinguishable from $U$.

    \item If $\Planted$ is over $\R^n$ and is low-degree indistinguishable from the standard Gaussian $\calN(0, 1)^n$, then no statistic based on symmetric polynomials of degree at most $O(\log n/\log \log n)$ can distinguish between a noisy version of $\Planted$ from $\calN(0, 1)^n$.

    \item If $\Planted$ is over $\R^{n\times n}$ and is low-degree indistinguishable from $\calN(0,1)^{n\times n}$, then no constant-sized subgraph statistic can distinguish between a noisy version of $\Planted$ and $\calN(0, 1)^{n\times n}$.
\end{enumerate}

\end{abstract}

\thispagestyle{empty}
\setcounter{page}{0}

\newpage

\tableofcontents

\thispagestyle{empty}
\setcounter{page}{0}

\newpage




\section{Introduction}
\label{sec:intro}

Over the past decade, there has been a surge of interest in understanding average-case signal detection and recovery problems. These problems arise as central challenges in multiple areas including algorithm design (e.g., detecting planted cliques in random graphs), statistics (e.g., learning the parameters of a Gaussian mixture), and cryptography (e.g., distinguishing the output of a pseudorandom generator from uniform bits). Such problems can usually be modeled as a pair of probability distributions: a \emph{planted} distribution with a signal of tunable ``strength'', and a \emph{null} distribution without any signal.
The main research question is whether the minimum signal strength required for \emph{efficient} signal detection/recovery exceeds the statistical minimum.
Decades of research suggest that information-computation gaps are ubiquitous not just in foundational problems, such as the examples above, but also in many new questions that arise in learning theory, statistics, and cryptography.
The modern area of average-case complexity has focused on building a principled theory to understand and predict such gaps.

How might we approach such an \emph{average-case} complexity theory?\footnote{The classical theory~\cite{levin1986average,bogdanov2006average,goldreich2011notes} of average-case complexity does not help understand algorithmic problems under input distributions such as those mentioned above.}
One strategy, in analogy to worst-case hardness reductions, is to build a web of reductions starting from some canonical average-case signal recovery problems.
This effort has made notable progress~\cite{berthet2013complexity,brennan2020reducibility,bresler2025computational}, but its success is still confined to a limited set of problems
due to the difficulty of designing reductions that transform a distribution on instances of one problem into the target distribution of another.\footnote{For example, despite significant efforts, we have so far failed to reduce refuting random $4$-SAT formulas with $n^{1+\Omega(1)}$ clauses to refuting random $3$-SAT formulas with $n^{1+\Omega(1)}$ clauses.}
Instead, most evidence of algorithmic vs.~statistical gaps in average-case complexity has come from a second approach --- showing lower bounds against restricted classes of algorithms (similar to circuit lower bounds in worst-case complexity).

\parhead{Hardness against restricted algorithms.}
A long sequence of works has focused on lower bounds against spectral methods \cite{MRZ16,BMVVX18,KWB19}, statistical query algorithms \cite{Kearns98,FGRVX17,DKS17,BBHLS21}, Markov chains \cite{Jerrum92,CMRW23,CSZ24,CMZ25}, approximate message passing \cite{GJ21,GJW24}, convex programs such as the Sum-of-Squares (SoS) semidefinite programming hierarchies \cite{BHKKMP16,HKPRSS17,MRX20,GJJPR20,KB21,JPRTX21,JPRX23,KPX24}.
In particular, in the context of SoS lower bounds, the discovery of the \emph{pseudocalibration} approach \cite{BHKKMP16}
suggested a strong relationship between SoS lower bounds and a certain \emph{low-degree likelihood ratio} (LDLR) between a planted and a null distribution that captures the closeness between low-degree moments of the two distributions.

\begin{definition}[Low-degree likelihood ratio and advantage] \label{def:LDLR-intro}
Let $\Planted$ (Planted) and $\Null$ (Null) be two probability distributions on $\{0,1\}^{\binom{n}{k}}$.
The likelihood ratio of $\Planted$ and $\Null$ is the function $L(x) = \frac{d\Planted}{d\Null}(x)$, and we use $L^{\leq D}$ to denote the projection of $L$ onto the span of polynomials of degree $\leq D$.

The degree-$D$ \emph{advantage} between $\Planted$ and $\Null$ is defined as $\sqrt{\Var_{\Null}[L^{\leq D}]}$, which is also equal to
\begin{equation}
\label{eq:ldlr}
    \max_{f\text{ deg-}D\text{ polynomial}} \frac{\E_{\Planted} f - \E_{\Null} f}{\sqrt{ \Var_{\Null}\, f }} \; .
\end{equation}
We use $\chisquaretrunc{\Planted}{\Null}$ to denote $\Var_{\Null}[L^{\leq D}]$, i.e., the square of the degree-$D$ advantage.
\end{definition}

In words, the degree-$D$ advantage measures how well degree-$D$ polynomials, normalized under $\Null$, can distinguish between $\Planted$ and $\Null$ in expectation.
The fact that $\chisquaretrunc{\Planted}{\Null}$ equals the square of \eqref{eq:ldlr} is a standard calculation; see Hopkins's thesis \cite{Hopkins18} or the survey of \cite{KWB19}.
Moreover, $\chisquaretrunc{\Planted}{\Null}$ can in fact be interpreted as the degree-$D$ truncation of the $\chi^2$-divergence between $\Planted$ and $\Null$, hence the notation.
Henceforth, we will use the term ``vanishing LDLR'' to mean $\chisquaretrunc{\Planted}{\Null} \leq o(1)$.

Starting with the works of \cite{HS17,HKPRSS17,Hopkins18}, it has been hypothesized that a small LDLR indicates computational hardness for hypothesis testing problems.
A precise conjecture is formalized in Hopkins's thesis \cite[Conjecture 2.2.4]{Hopkins18}, which roughly states that for symmetric distributions, small degree-$\polylog(n)$ advantage implies the absence of polynomial-time \emph{noise-tolerant} distinguishing algorithms.

\begin{conjecture}[The Low-Degree Hypothesis (informal) \cite{Hopkins18}] \label{conj:low-deg-conj}
    Fix $d \in \N$.
    Let $\Null$ be the uniform distribution on $\{0,1\}^{\binom {n}{d}}$ or the standard Gaussian $\calN(0,1)^{\binom{n}{d}}$ (viewed as symmetric $d$-tensors).
    Let $\Planted$ be a distribution on the same domain as $\Null$ which is $S_n$-symmetric\footnote{$S_n$-symmetry means that the distribution is invariant under the relabeling action of $S_n$.}
    and satisfies $\chisquaretrunc{\Planted}{\Null} \leq o(1)$ for $D = (\log n)^{1+\Omega(1)}$.
    Then, no polynomial-time computable test distinguishes $\T_{\eps} \Planted$ and $\Null$ with probability $1-o(1)$.
    Here, $\T_{\eps} \Planted$ is the noisy version of $\Planted$ (see \Cref{sec:overview-prelims} for precise definitions).
\end{conjecture}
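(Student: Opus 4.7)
The plan is to reduce any polynomial-time distinguisher $A \colon \Omega \to \{0,1\}$ to an approximately low-degree symmetric function, then invoke the vanishing LDLR hypothesis. The crucial identity is $\E_{\T_\eps \Planted}[A] = \E_\Planted[\T_\eps A]$, combined with the $\T_\eps$-invariance of $\Null$, which lets us rewrite the distinguishing advantage as $\E_\Planted[\T_\eps A] - \E_\Null[\T_\eps A]$. The noise operator $\T_\eps$ acts diagonally in the orthonormal basis of $L^2(\Null)$ (Fourier characters for the hypercube, Hermite polynomials for the Gaussian), damping the degree-$k$ component by $(1-\eps)^k$. Thus $\T_\eps A$ is morally a low-degree function, even though $A$ itself need not be.

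First I would symmetrize $A$ using the $S_n$-invariance of both distributions: replacing $A(x)$ by $\E_{\pi \sim S_n}[A(\pi x)]$ preserves both expectations and commutes with $\T_\eps$. Next, I would decompose $\T_\eps A = f_{\leq D} + f_{> D}$ at degree $D = (\log n)^{1+\Omega(1)}$. The low-degree piece is immediately controlled by Cauchy-Schwarz: $|\E_\Planted[f_{\leq D}] - \E_\Null[f_{\leq D}]| \leq \|f_{\leq D}\|_{L^2(\Null)} \sqrt{\chisquaretrunc{\Planted}{\Null}} = o(1)$, using that $\|f_{\leq D}\|_{L^2(\Null)} \leq \|\T_\eps A\|_{L^2(\Null)} \leq 1$ since $\T_\eps$ is a contraction and $A$ is bounded. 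Under $\Null$, the tail is also harmless because $\|f_{>D}\|_{L^2(\Null)}^2 \leq (1-\eps)^{2D} = n^{-\omega(1)}$.

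The hard part will be controlling the high-degree tail \emph{under the planted distribution}, $\E_\Planted[f_{>D}]$. A naive Cauchy-Schwarz against the full likelihood ratio $L = d\Planted/d\Null$ would require a bound on $\chi^2(\Planted \| \Null)$, which the LDLR hypothesis explicitly does \emph{not} provide --- indeed, the whole point of the hypothesis is to only control the low-degree truncation. This is the fundamental obstruction to the full conjecture: passing from low-degree moment control to genuine distributional control. Overcoming it seems to require either (i) exploiting hypercontractivity of $\T_\eps$ to upgrade the $L^2$ bound on $f_{>D}$ to an $L^q$ bound for some $q>2$, combined with finite higher-moment control of the likelihood ratio under a structural assumption on $\Planted$, or (ii) restricting $A$ to a well-behaved class --- symmetric polynomials of bounded degree, or constant-size subgraph statistics --- for which the tail of $f_{>D}$ is combinatorially forced to be small. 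The second route is precisely the approach taken by the three partial results announced in the abstract, which progressively trade generality of the algorithm class against strength of the conclusion.
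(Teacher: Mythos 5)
This statement is labeled as a \emph{conjecture} in the paper, not a theorem: it is Hopkins's Low-Degree Hypothesis, which the paper cites but does not prove. There is no ``paper's own proof'' to compare your attempt against. In fact, the paper explicitly notes (citing \cite{BHJK25}) that this conjecture has been \emph{disproved} in the quasi-polynomial-time regime by a carefully constructed counterexample, so a full proof along your lines is not merely difficult --- it would contradict a known result. Any correct argument must use restrictions not present in the conjecture as stated.

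That said, your diagnosis of the obstruction is precisely right: after decomposing $\T_\eps A = f_{\leq D} + f_{>D}$, the low-degree piece is controlled by Cauchy--Schwarz against the truncated likelihood ratio, the tail is controlled under $\Null$ by the $(1-\eps)^{2D}$ decay, but the tail under $\Planted$ is uncontrolled because the LDLR hypothesis gives no bound on $\chi^2(\Planted\|\Null)$. This matches the paper's own discussion of why the hypothesis is only a heuristic. You also correctly identify that the paper's three actual results are restricted versions that trade generality of the test class for provable guarantees.

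Where your sketch diverges from what the paper actually does: for the Boolean vector case (Theorem~1.2), the paper's resolution of the uncontrolled-$\chi^2$ problem is a variant of your route~(i), but implemented by \emph{truncation} rather than hypercontractivity --- it replaces $\Planted$ by $\bar{\Planted} = \Planted \mid \{|\by|\le\tau\}$, observes that $\chi^2(\bar{\Planted}\|\Null)\le\mathrm{poly}(n)$ deterministically (Lemma~4.2), and then the $(1-\eps)^\ell$ decay over degrees $\ell > D = O(\eps^{-1}\log n)$ kills this polynomial factor. For the Gaussian cases (Theorems~1.3 and~1.4), the paper does \emph{not} proceed through any $\chi^2$ or Cauchy--Schwarz argument at all; instead it bounds the $\ell_\infty$-distance of the densities of the relevant test statistics via Fourier inversion of characteristic functions, which requires establishing genuine Fourier decay of $\widehat{p(\T_\eps\Planted)}$ --- a substantially different technical route (a local-CLT-style argument) that your proposal does not anticipate. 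So your route~(ii) is the right conceptual map, but the machinery the paper deploys there is Fourier-analytic, not a moment/Cauchy--Schwarz calculation.
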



We first make a few remarks about \Cref{conj:low-deg-conj}.
Both the $S_n$-symmetry assumption and adding noise to $\Planted$ are necessary.
Without noise, there are simple counter-examples where the efficient distinguishers are brittle algorithms such as Gaussian elimination or lattice basis reduction \cite{ZSWB22,DK22}.
The symmetry is also necessary due to counter-examples based on error-correcting codes~\cite{HW21}.

In a recent work~\cite{BHJK25} building on \cite{HW21}, Hopkins's conjecture (in the quasi-polynomial time regime) was disproved via a carefully constructed null vs.\ planted hypothesis testing problem.
Nevertheless, the LDLR method remains a popular method for deriving conclusions of hardness for average-case problems; 
see, e.g., \cite{KWB19,BHJK25,Wein25} and references therein for extensive lists of applications.
Moreover, the applications of the LDLR method extend beyond algorithm design and are popular even in statistics (to derive computational hardness of estimation) and cryptography~\cite{ABIKN23,BKR23,BJRZ24} (to obtain formal backing for hardness conjectures underlying security analyses).

The appeal of the LDLR framework stems from its \emph{simplicity} and \emph{predictive power}.
On the one hand, the degree-$D$ advantage is typically straightforward to compute for most hypothesis testing problems.
On the other hand, the hardness predictions from the LDLR method have been strikingly consistent with the conjectured hard regimes across a wide range of average-case problems,
including planted clique, planted coloring, planted dense subgraph problems, spiked Gaussian matrix and tensor models, planted vectors in random subspaces, sparse PCA, non-Gaussian component analysis, and more (see the survey of \cite{KWB19}).
In these problems, the hardness predicted by the LDLR method match the breakdown points of the best-known algorithms.


\parhead{What does a vanishing LDLR mean?}
Despite its remarkable ability to predict computational thresholds, the meaning of a vanishing LDLR remains poorly understood.
The work of Kunisky, Wein, and Bandeira~\cite{KWB19} made the first steps in this direction.
They showed that when $\Null$ is i.i.d.\ $\cN(0,1)$ or i.i.d.\ $\Unif(\pmo)$\footnote{The proof relies only on hypercontractivity, so any reasonable distribution would suffice.},
if there is a polynomial $p: \R^n \to \R$ of degree $\leq D/2$ that satisfies 
\begin{itemize}
    \item Large expectation under $\Planted$: $\E_{\rvx\sim\Planted}[p(\rvx)] \geq A$,
    \item Concentration under $\Null$: $\Pr_{\rvx\sim \Null}[p(\rvx)\geq B] \leq e^{-\Omega(D)}$,
\end{itemize}
for some $A > B > 0$,
then $\chisquaretrunc{\Planted}{\Null} \geq \Omega(1)$ (see \cite[Theorem 4.3]{KWB19} for the precise statement).
In contrapositive form, this result means that a vanishing $\chisquaretrunc{\Planted}{\Null}$ rules out the existence of low-degree polynomials that are concentrated under $\Null$ yet have large expectations under $\Planted$.

However, such guarantees on the expectations of low-degree polynomials do not formally rule out the use of low-degree polynomials as distinguishers.
For instance, even in the scenario where $\chisquaretrunc{\Planted}{\Null} = 0$ (i.e., all degree-$D$ moments match), we only know that the \emph{expectation} of any degree-$D$ polynomial is the same under the two distributions.
But such a matching of expectations does not rule out the possibility that some other statistic of a low-degree polynomial could distinguish between $\Null$ and $\Planted$.
Indeed, establishing rigorous hardness implications of the LDLR method was pointed out as an outstanding research direction in a recent workshop on the topic~\cite{AIM24}.

In this work, we initiate the first steps towards this research direction.
We focus on perhaps the most natural first question and ask:
\begin{center}
    {\it Does vanishing LDLR formally imply failure of distinguishers based on low-degree polynomials of inputs?}
\end{center}

Concretely, we would like to prove that, under the symmetry conditions of \Cref{conj:low-deg-conj} and assuming $\chisquaretrunc{\Planted}{\Null} \leq o(1)$ for $D = \omega(\log n)$, any low-degree polynomial $p$ satisfies $\TV(p(\Null), p(\T_{\eps} \Planted)) \leq o(1)$ for any constant $\eps > 0$.
This would rule out \emph{any} algorithm, whether efficient or not, that uses the evaluation of $p$ as a distinguisher.
In this work, we consider a stronger statement:
given a list of low-degree polynomials $\calP = (p_1, p_2,\dots,p_\ell)$, we would like the joint distributions of their evaluations under $\Null$ and $\T_\e \Planted$ to be statistically indistinguishable, i.e., $\TV(\calP(\Null), \calP(\T_\e \Planted)) \leq o(1)$.

In fact, in the Boolean vector case ($d=1$ in \Cref{conj:low-deg-conj}), we are able to prove the strongest possible guarantee that the two distributions themselves are statistically indistinguishable: $\TV(\Null, \T_\e \Planted) \leq o(1)$ (see \Cref{thm:binom_intro}).
This result exploits the strong restriction imposed by $S_n$-symmetry for vector-valued distributions.
In contrast, for matrix or tensor-valued problems ($d\geq 2$ in \Cref{conj:low-deg-conj}), we believe that such a strong statement is likely false.

\parhead{Bounded independence plus noise.}
From a technical perspective, our results have close connections to the pseudorandomness literature.
The condition $\chisquaretrunc{\Planted}{\Null} \leq o(1)$ can be viewed as an analogue of $\Planted$ being almost $D$-wise independent, and we are studying the noisy version of $\Planted$.
This resembles the ``bounded independence plus noise'' paradigm that has been studied in the context of pseudorandom generators (PRGs) \cite{AW85,HLV18,FK18,LV20}.
The key distinction is that these works focused on fooling specific function classes, i.e., showing that functions in a given class have similar expectations under the pseudorandom and truly random distributions.
In contrast, we aim to establish total variation closeness between certain functions under the two distributions, which is a stronger guarantee.
An important difference is that those works require ``pseudorandom noise'' to reduce the number of random bits, while in our setting, the added noise is i.i.d.\ by the definition of $\T_\e\Planted$.

Despite the differences mentioned above, the high-level strategy is similar:
given the Fourier expansion of a function, use $D$-wise independence to bound the low-degree terms, and use the noise to bound the high-degree ones.
In our case, to show TV closeness, we need \emph{point-wise} bounds on the differences between the probability density functions, which we obtain via the inverse Fourier transform and bounds on the characteristic functions $\wh{p}(\xi) = \E[e^{i \xi p}]$.
We follow the general template of bounding the low-frequency terms via moment matching and high-frequency terms via noise, though the technical details of analyzing the characteristic functions are substantially different from previous works on PRGs.
A detailed overview is provided in \Cref{sec:overview}.

We also mention a line of work on bounded independence alone (without noise) fooling several classes of functions, including $AC^0$ circuits, halfspaces, and low-degree polynomial threshold functions (in both the Boolean and Gaussian settings) \cite{Braverman08,DGJSV10,DKN10,Kane10,BIVW16,BDFIS22}.
Though we emphasize that, as in the LDLR hypothesis, the addition of noise is necessary in all our TV-closeness results --- for example, one can have a discrete distribution $\Planted$ that matches all low-degree moments of $\Null = \calN(0, \I_n)$, yet any function of it must have large TV distance.

\subsection{Our Results}

In this section, we will use $\T_\e \Planted$ (for the Boolean setting) and $\OU_\e \Planted$ (for the Gaussian setting) to denote an $\eps$-noisy version of the planted distribution $\Planted$; see \Cref{sec:overview-prelims} for precise definitions.
To begin with, we will show that $S_n$-symmetric distributions over $\set{\pm 1}^n$ with bounded LDLR (with noise) are in fact close in TV distance to the null distribution.
\begin{theorem}
\torestate{
\label{thm:binom_intro}
    Let $\berp,\eps \in (0,1)$ be absolute constants, and $\delta \geq 0$.
    Let $D \in \N$ such that $D \geq \frac{C_{\berp}}{\eps}\log n$, where $C_{\berp}$ is a constant depending only on $\berp$.
    Let $\Null = \Ber(\berp)^n$ and let $\Planted$ be an $S_n$-symmetric distribution over $\{\pm1\}^n$ such that $\chisquaretrunc{\Planted}{\Null} \leq \delta$. 
    Then,
    \[
        \TV(\Null, \T_\e \Planted) \leq O_{\berp}\Paren{\frac{\delta} {\e^2} + \frac{1}{\sqrt{n} \cdot \eps^2} }\,.
    \]
}
\end{theorem}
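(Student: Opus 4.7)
The natural first step is a symmetry reduction. Since $\Null = \Ber(\berp)^n$ is a product measure and $\T_\e \Planted$ remains $S_n$-invariant because $\T_\e$ acts coordinate-wise, both distributions are determined by the distribution of the Hamming weight $W = \sum_i \rvx_i$, giving
\[
\TV(\Null, \T_\e \Planted) \;=\; \TV(\mu_\Null, \mu_{\T_\e \Planted})\;,
\]
where $\mu_\Null = \Bin(n,\berp)$. By the same symmetry, the likelihood ratio $L(x) = d\Planted/d\Null(x)$ depends only on $|x|$, so expanding $L$ in the Krawtchouk polynomial basis (the orthogonal polynomials for $\Bin(n,\berp)$) converts $\chisquaretrunc{\Planted}{\Null} \le \delta$ into the univariate moment-matching statement: for every polynomial $p$ of degree at most $D$,
\[
\bigl|\E_{\mu_\Planted} p(W) - \E_{\mu_\Null} p(W)\bigr| \;\le\; \sqrt{\delta \cdot \Var_{\mu_\Null} p(W)}\;.
\]

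The reduced problem is to bound TV between two distributions on $\{0,\dots,n\}$, and the tool signposted in the introduction is Fourier analysis via characteristic functions on $\bbZ$. Conditioning on $\rvx \sim \Planted$ coordinate by coordinate and integrating out the noise factors the characteristic-function difference as
\[
\phi_{\mu_{\T_\e\Planted}}(t) - \phi_{\mu_\Null}(t) \;=\; \alpha(t)^n \cdot \bigl(\E_{\mu_\Planted} r(t)^W - \E_{\mu_\Null} r(t)^W\bigr)\;,
\]
where $\alpha(t)$ and $\beta(t) = \alpha(t) r(t)$ are the per-coordinate factors contributed by bits originally in state $0$ and $1$ respectively. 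One checks $\alpha(0) = r(0) = 1$, $|\alpha(t)|^2 = 1 - \Omega_\berp\bigl(\e (1-\cos t)\bigr)$, and that $\max(|\alpha(t)|,|\beta(t)|)^n$ dominates $|\phi_{\mu_{\T_\e \Planted}}(t)|$.

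I would then split the Fourier inversion integral at a threshold $t_0 \asymp \sqrt{\log n/(\e n)}$. On the high-frequency range $|t| \ge t_0$, both characteristic functions are individually super-polynomially small because $\max(|\alpha(t)|,|\beta(t)|)^n \le \exp\bigl(-\Omega_\berp(\e n (1-\cos t))\bigr)$, so the integrated contribution is negligible. On the low-frequency range $|t| \le t_0$, I would approximate $r(t)^W$ by its Taylor expansion in $W$ of degree $D$ around the mean $n\berp$. The polynomial piece has degree at most $D$, so by the moment-matching statement its expectations under $\mu_\Planted$ and $\mu_\Null$ differ by at most $\sqrt{\delta}$ times its $L^2(\mu_\Null)$-norm, which is bounded via $|W - n\berp| = O(\sqrt{n})$ under $\mu_\Null$. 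The Taylor remainder is $O\bigl((\sqrt{n}\cdot t_0)^{D+1}/(D+1)!\bigr)$ after using $|\log r(t)| = O(|t|)$ near $t = 0$, and this is $o(1)$ once $D \ge C_\berp \log n / \e$.

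The final step is Fourier inversion: $|\mu_\Null(w) - \mu_{\T_\e \Planted}(w)| \le \tfrac{1}{2\pi}\int |\phi_{\mu_\Null}(t) - \phi_{\mu_{\T_\e\Planted}}(t)|\,dt$, summed over the effective $O(\sqrt{n})$-wide support of both distributions to obtain the stated TV bound. The main technical obstacle I expect is the joint calibration of $t_0$, $D$, and the support window so that all three sources of error --- Taylor remainder, moment-matching error, and high-frequency tail --- fit within the stated $O_\berp(\delta/\e^2 + 1/(\sqrt{n}\e^2))$ budget; the factors of $\e$ arise because $|\alpha(t)|^n$ sets the natural Fourier length-scale $1/\sqrt{n\e}$ that must be resolved to recover each of the roughly $\sqrt{n}$ relevant values of $w$.
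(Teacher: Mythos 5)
Your proposal pursues a genuinely different route from the paper's own proof. The paper does not use characteristic functions or pointwise Fourier inversion in the Boolean case: it bounds $\chi^2\parens{\T_\e \Planted \,\|\, \Null}$ directly by expanding the likelihood ratio in the \Krawdaunt basis, using that the noise operator damps the degree-$\ell$ coefficient by $(1-\e)^\ell$, and controls the low-degree coefficients by the LDLR hypothesis and the high-degree ones by a crude $\chi^2\parens{\ol{\Planted}\,\|\,\Null} \le \poly(n)$ bound for a truncated planted distribution $\ol{\Planted}$. The TV bound then follows from $\TV \le \tfrac{1}{2}\sqrt{\chi^2}$. Your approach --- condition on $\rvx$, factor the characteristic function of the noisy weight, split the Fourier integral at $t_0 \asymp \sqrt{\log n/(\e n)}$, kill high frequencies by the pointwise decay of $\alpha(t)^n$, and kill low frequencies by Taylor expansion plus moment matching --- is essentially the local-CLT template that the paper reserves for the Gaussian cases (\cref{sec:tv_closeness} and \cref{sec:subgraph-stats}), and it should also work here. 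The paper's $\chi^2$ route is cleaner precisely because the symmetry reduction makes the problem one-dimensional with a natural orthogonal-polynomial structure, so one can avoid explicit Fourier inversion and the associated aliasing bookkeeping (note that $W = \sum_i \rvx_i$ has a fixed parity, so $|\alpha(\pi)| = |\beta(\pi)| = 1$ and your high-frequency decay claim as stated is false near $t = \pi$; you would need to integrate over a half-period $[-\pi/2,\pi/2]$, which is a fixable but real nuisance).

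There is one genuine gap you have not addressed, and it is exactly the subtlety that drives the paper's truncation step: nothing in the hypothesis bounds the tails of $\Planted$. When you estimate the Taylor remainder of $r(t)^W$ under $\mu_{\Planted}$, you need $\E_{\mu_\Planted}\bracks{\,\abs{\lambda(W-n\berp)}^{D+1}/(D+1)!\,}$ to be small with $\lambda = \log r(t)$, but you justify $\abs{W-n\berp} = O(\sqrt{n})$ only under $\mu_\Null$. A priori $\Planted$ could place an exponentially small amount of mass at $W = n$, which makes the remainder term blow up even though $\chisquaretrunc{\Planted}{\Null}$ is tiny. The same issue recurs at your last step, where you restrict the sum over $w$ to an ``effective $O(\sqrt{n})$-wide support of both distributions.'' The paper resolves this by deriving Gaussian-like tail bounds for $\Planted$ \emph{from} the moment-matching hypothesis (\cref{lem:tail-bound}), then conditioning on $\abs{\by} \le \tau$ with $\tau \asymp \sqrt{\log n}$ and showing the conditioning costs only $O(\delta/n)$ in TV (\cref{eq:noised-planted-vs-truncated}) while making all the subsequent $\chi^2$ quantities finite and polynomial. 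You would need an analogous truncation --- or a direct moment-based bound on the planted Taylor remainder via $\E_{\mu_\Planted}\bracks{(W-n\berp)^{2k}}$ for $2k \le D$ --- before the low-frequency estimate closes; as written it silently assumes $\Planted$ is already concentrated, which is not given.
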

We remark that this result implies that no algorithm can distinguish the two distributions.
Observe that we prove a guarantee on the closeness of $\Null$ and the noisy version $\T_\e\Planted$ rather than $\Planted$ itself.
This is necessary to get a total variation guarantee, as simple counterexamples exist in the absence of noise.

Crucially, this result exploits that the $S_n$-symmetry reduces the question to a one-dimensional statement (see \cref{sec:overview_binom} for more details) as any $S_n$-invariant distinguisher must only depend on the Hamming weight of the input. 
This makes the analysis amenable to tools related to the ones used in the study of low-degree lower bounds.
In particular, we can upper bound the total variation distance by upper bounding the $\chi^2$-distance (after conditioning).

\parhead{Gaussian space.}
The Gaussian setting is significantly more challenging that its Boolean counterpart since $S_n$-symmetry is not sufficient to reduce to a one-dimensional problem. As a result, the Gaussian setting becomes significantly more challenging.
Nevertheless, we are able to derive the following two results.
The first concerns the vector-valued setting and arbitrary test statistics based on low-degree symmetric polynomials.
The second concerns the matrix-valued setting and a natural class of test statistics.

We begin with the vector-valued setting.
For $S_n$-symmetric planted distributions, it is natural to consider distinguishers based on low-degree symmetric polynomials.
However, we show that no statistic based on such polynomials can succeed.
In particular, we show that no such test can distinguish $\Null$ and $\OU_\e \Planted$ with non-vanishing probability if the LDLR is at most $o(k^{-O(k)} \log^{-1} n)$.

\begin{theorem}[See \cref{thm:main_technical} for full version]
\label{thm:vector-main}
    Let $\e \in (0,1)$ be a constant.
    Let $k, \numm \in \N$ be such that $\numm \geq \frac{C}{\eps^2} k^2\log n$ and $k\leq \frac{\eps \log n}{C\log \log n}$ where $C>0$ is a large universal constant.
    Let $\Null = \calN(0,\I_n)$ and $\Planted$ be a distribution over $\R^n$ such that $\chisquaretrunc{\Planted}{\Null} = \delta$.
    Let $\calP(\cdot)$ be the vector of symmetric polynomials of degree at most $k$.
    Then, it holds that
    \[
        \dtv\Paren{\calP(\Null),\calP(\OU_\e \Planted)} \leq \delta^{\Omega(1)} \cdot k^{k/2} + n^{-1/2} \,.
    \]
\end{theorem}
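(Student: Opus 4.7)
The plan is a Fourier-analytic argument following the ``bounded independence plus noise'' template described in the introduction: LDLR for low-frequency terms of the characteristic function and OU noise for high-frequency terms. The first step is a reduction to power sums. By Newton's identities, every symmetric polynomial on $\R^n$ of degree at most $k$ is a polynomial in the power sums $p_j(\rvx) \coloneqq \sum_{i=1}^n \rvx_i^j$ for $1 \leq j \leq k$. Hence $\calP(\rvx)$ is a measurable function of $\calP'(\rvx) \coloneqq (p_1(\rvx), \ldots, p_k(\rvx)) \in \R^k$, and by the data-processing inequality it suffices to bound $\dtv(\calP'(\Null), \calP'(\OU_\e \Planted))$. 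Let $\phi_N$ and $\phi_P$ denote the characteristic functions of $\calP'$ under $\Null$ and $\OU_\e\Planted$ respectively. After truncating $\calP'(\Null)$ to a ball of radius $\poly(n,k)$ (losing at most $n^{-1/2}$ in TV by concentration of power sums), Fourier inversion bounds the TV distance by an integral of $|\phi_N(\xi) - \phi_P(\xi)|$, which I split into a low-frequency ball $\|\xi\|\leq R$ and its complement.

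For low frequencies I would Taylor expand $\prod_j \exp(i \xi_j p_j(\rvx))$ into monomials $\prod_j (i\xi_j p_j(\rvx))^{a_j}/a_j!$ and truncate to those whose weighted $\rvx$-degree $\sum_j j a_j$ is at most $D$. The resulting truncation $q_\xi(\rvx)$ is a polynomial of $\rvx$-degree at most $D$. Since $\calN(0,\I_n)$ is stationary under the OU semigroup, the $\chi^2$ data-processing inequality gives $\chisquaretrunc{\OU_\e\Planted}{\Null} \leq \chisquaretrunc{\Planted}{\Null} \leq \delta$, and hence by Cauchy--Schwarz,
\[
    \abs{\E_{\OU_\e\Planted} q_\xi - \E_\Null q_\xi} \leq \sqrt{\delta}\cdot\norm{q_\xi}_{L^2(\Null)}\mper
\]
The $L^2$ norm of $q_\xi$ under $\Null$ is controlled by Gaussian moments of power sums, which ultimately yield the $k^{k/2}$ factor in the final bound. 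The Taylor truncation error is controlled via concentration of the $p_j$ under both distributions (Hermite / hypercontractive tails), and it is for bounding this error that the assumption $D \geq C k^2 \log n/\eps^2$ becomes essential.

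For high frequencies I need both $|\phi_N|$ and $|\phi_P|$ to be tiny. Under $\Null = \calN(0,\I_n)$, independence of the coordinates factorizes $\phi_N(\xi) = g(\xi)^n$, where $g(\xi) = \E_{\rvg \sim \calN(0,1)}\exp(i \sum_j \xi_j \rvg^j)$; Gaussian anti-concentration of the degree-$k$ polynomial $\sum_j \xi_j y^j$ gives $|g(\xi)| \leq 1 - c$ at the relevant scale, so $|\phi_N|$ decays super-polynomially in $n$. For $\phi_P$, writing $\OU_\e \rvx = \sqrt{1-\eps^2}\rvx + \eps\rvg$ with $\rvg \sim \calN(0,\I_n)$ independent of $\rvx$, I condition on $\rvx \sim \Planted$: the characteristic function over $\rvg$ factorizes across the $n$ coordinates, and each factor admits the same anti-concentration bound uniformly in $\rvx$, giving analogous decay. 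The main obstacle will be precisely this uniform high-frequency control of $\phi_P$: since $\Planted$ is not assumed to be product, the needed product structure comes entirely from the OU noise, and maintaining a quantitative Gaussian anti-concentration for every conditioning $\rvx$ while respecting the tight constraints $D \geq C k^2 \log n/\eps^2$ and $k \leq \eps \log n/(C \log \log n)$ is where I expect the technical work to concentrate.
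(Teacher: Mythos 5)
Your skeleton — reduce by data processing to a $k$-dimensional vector of sufficient statistics, bound TV via Fourier inversion, handle low frequencies by Taylor expansion plus LDLR moment matching, and handle high frequencies by noise-driven characteristic-function decay — matches the structure of the paper's proof of \cref{thm:main_technical}. But your high-frequency argument for $\phi_P$ asserts something false, and that is exactly where the paper's central new idea lives. You claim that after conditioning on $\rvx\sim\Planted$, the per-coordinate factor of the characteristic function "admits the same anti-concentration bound uniformly in $\rvx$." It does not. The magnitude of the $j$-th factor is governed by $\Var_{\rvg\sim\calN(0,1)}\bigl[p_\xi\bigl(\sqrt{1-\eps}\,x_j + \sqrt{\eps}\,\rvg\bigr)\bigr]$, where $p_\xi$ is the degree-$\leq k$ univariate polynomial induced by the frequency $\xi$, and this variance depends on $x_j$: for adversarially chosen $x$, all $n$ variances can be small simultaneously (e.g.\ $p_\xi$ with a high-order critical point near $\sqrt{1-\eps}\,x_j$ for all $j$), so that $|\phi_P(\xi\mid x)|$ need not decay at all for that $x$. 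No pointwise bound in $x$ is available. The paper's resolution (\cref{def:regular_point}, \cref{lem:regular_points}) is to call $y\in\R^n$ \emph{$\ell$-regular} if $\frac{1}{n}\sum_j q(y_j)$ is within a factor of two of $\E_{\rvg\sim\calN(0,1)}q(\rvg)$ for every univariate degree-$\leq 2\ell$ SOS polynomial $q$; to prove via an $\epsilon$-net and second-moment argument (itself using the LDLR bound) that $\rvy\sim\Planted$ is regular with probability $1-2^{O(\ell)}/n$; to condition $\Planted$ to the regular set at a small TV cost; and to run the decay argument only over regular $y$, where regularity furnishes $\frac{1}{n}\sum_j\Var_{\rvg}\bigl[p_\xi(\sqrt{1-\eps}y_j+\sqrt{\eps}\rvg)\bigr]\geq\frac{\eps}{2}\|\xi\|^2$ (\cref{eq:variance-sum-lower-bound}). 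You flag uniform high-frequency control as "where the technical work concentrates," which is the right intuition, but this regularization-and-conditioning mechanism is the missing ingredient; without some version of it the argument does not close.

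A secondary but material point: the paper takes its sufficient statistic to be the normalized Hermite sums $\widetilde{h}_\ell(x)=\frac{1}{\sqrt{n}}\sum_i h_\ell(x_i)$, not your raw power sums $p_j(x)=\sum_i x_i^j$. Both generate all degree-$\leq k$ symmetric polynomials, but the Hermite choice is what makes the variance identity above exact and tractable: by orthonormality $\Var_\Null\iprod{\xi,F_k(\rvg)}=\|\xi\|^2$, the OU semigroup acts diagonally so that $\E_{\rvg}h_\ell(\sqrt{1-\eps}\,y+\sqrt{\eps}\,\rvg)=(1-\eps)^{\ell/2}h_\ell(y)$, and the sub-Gaussian moment bounds of \cref{lem:moment_bounds} come out cleanly. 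With unnormalized power sums you would be carrying large deterministic means ($\E p_2(\rvg)=n$), a non-diagonal Gram matrix for the monomial basis, and a non-diagonal OU action through every estimate. If you pursue your plan, switch to the Hermite sums first; the regularity argument will then be both necessary and much easier to execute.
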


Note that since we only consider symmetric polynomials in \Cref{thm:vector-main}, we don't actually need the assumption that $\Planted$ is $S_n$-symmetric.

\begin{remark} \label{rem:local-CLT}
    Although establishing this is not the main focus of our work, we believe that our proof could, with some more work (e.g., a more careful analysis of \Cref{lem:moment_bounds}), yield \emph{local central limit theorems} (cf.\ \cite{Petrov12}).
    In particular, with an appropriate normalization of $\calP$, we expect to get $\TV(\calP(\Null), \calN(0, \I_k)) \leq o(1)$ and $\TV(\calP(\OU_\eps \Planted), \calN(0, \I_k)) \leq o(1)$, which would imply \Cref{thm:vector-main}.

    On the other hand, this can be viewed as an inherent barrier to extending \Cref{thm:vector-main} to $k \geq \Omega(\log n)$.
    We do not expect symmetric polynomials of degree $\Omega(\log n)$, such as the power-sum polynomial $\sum_{i=1}^n x_i^k$, to satisfy local central limit theorems, even under $\Null$.
\end{remark}

\parhead{Subgraph counts in matrix models.}
\stefan{To be checked again if we really need the extra $\log \log n$ factor, it might just have been for safety}
In the matrix-valued setting, we work in the setting where $\Null$ is the distribution of $n\times n$ symmetric matrices with independent unit variance Gaussian entries (up to symmetry), and $\Planted$ is some distribution for which $\chi^2_D\parens*{\Planted\|\Null}\le\delta$ where $D \ge \log n \cdot \log\log n$.

We study the distinguishing power of \emph{signed subgraph counts}.
Concretely, for a connected graph $\sgraph$ on $v$ vertices and $e$ edges, and a matrix $M$, we define its \emph{signed $\sgraph$-count} as
\begin{align*}
    \chi_{\sgraph}(M) \coloneqq \sum_{\substack{H \text{ copy of }\sgraph \\ \text{ in }K_n}} \prod_{ab\text{ edge in } H} M_{ab}\mper
\end{align*}
Our main result on the (lack of) distinguishing power of signed subgraph counts is that no statistic based on the signed subgraph count of a constant-sized subgraph can distinguish $\Null$ from a noisy version of $\Planted$.
Formally:
\begin{theorem}[See \cref{thm:main-subgraph} for the full version]
\label{thm:sub-graph-intro}
    Let $\ShadowNull$ refer to the law of $\chi_{\sgraph}(\rvM)$ for $\rvM\sim\Null$, and let $\ShadowPlanted$ refer to its law for $\rvM\sim\OU_{\eps}\Planted$ where $\eps > 0$ is an absolute constant.
    Suppose $\chisquaretrunc{\Planted}{\Null} \le \delta$ where $D \ge \log n\cdot \log\log n$.
    There is a sufficiently small constant $\alpha > 0$ for which:
    \[
        \dtv\parens*{\ShadowNull,\ShadowPlanted} \le O\parens*{ \delta^{\alpha\eps^2} + n^{-\alpha\eps^2} }\mcom
    \]
    where the $O\parens*{\cdot}$ hides constant factors depending on $\eps$ and $\vartheta$.
\end{theorem}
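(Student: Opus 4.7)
The plan is to compare the characteristic functions of $\chi_{\sgraph}(\rvM)$ under $\Null$ and $\OU_\eps\Planted$ and to convert the comparison into a total variation bound via a smoothing / Plancherel argument. Normalize $\tilde\chi := \chi_{\sgraph}/\sigma$ with $\sigma^2 := \Var_\Null[\chi_{\sgraph}]$ (which is of order $n^v/|\Aut(\sgraph)|$), and write $\hat\nu(\xi) := \E_\Null[e^{i\xi\tilde\chi}]$, $\hat\pi(\xi) := \E_{\OU_\eps\Planted}[e^{i\xi\tilde\chi}]$. After convolving both laws with a small independent Gaussian of scale $\eta$ (whose cost in TV is controlled by a pointwise density bound on $\tilde\chi$), Plancherel gives
\begin{equation*}
\|f_\nu^{\eta} - f_\pi^{\eta}\|_2^2 \;=\; \frac{1}{2\pi}\int |\hat\nu(\xi)-\hat\pi(\xi)|^2\, e^{-\eta^2\xi^2}\, d\xi,
\end{equation*}
and combining this with $\polylog$-tail bounds on $|\tilde\chi|$ under both laws reduces the task to (i) bounding $|\hat\nu - \hat\pi|$ in a low-frequency window and (ii) uniformly controlling $|\hat\nu|,\,|\hat\pi|$ at high frequencies.

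For the low-frequency window $|\xi|\le T$, I would Taylor-expand $e^{i\xi y} = \sum_{k=0}^{K-1}(i\xi y)^k/k! + R_K$, choosing $K := \lfloor D/e\rfloor$ so that each monomial $\tilde\chi^k$ is a polynomial in $\rvM$ of degree at most $D$. The hypothesis $\chisquaretrunc{\Planted}{\Null}\le\delta$ combined with the contractivity of $\OU_\eps$ in the Hermite basis then yields $|\E_\pi[\tilde\chi^k] - \E_\nu[\tilde\chi^k]| \le \sqrt{\delta}\cdot\|\tilde\chi^k\|_{L^2(\Null)}$, and Gaussian hypercontractivity for degree-$ke$ polynomials bounds the right-hand side by $\sqrt{\delta}\cdot (C_{\sgraph}k)^{ek/2}$. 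The Taylor remainder $R_K$ is controlled via $\E[|\tilde\chi|^{2K}] \le (C_{\sgraph}K)^{eK}$, which holds under both $\nu$ and $\pi$ (the planted bound follows from a hybrid LDLR-plus-hypercontractivity estimate applied to $\tilde\chi^{2K}$, whose degree is also $\le D$). Summing yields $|\hat\nu(\xi)-\hat\pi(\xi)| \le (C_{\sgraph}T)^K\sqrt{\delta} + (C_{\sgraph}T/K)^{eK}$ uniformly on $|\xi|\le T$.

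For the high-frequency tail, I would bound $|\hat\nu|$ and $|\hat\pi|$ individually. Under $\Null$, $\tilde\chi$ is a degree-$e$ Wiener chaos in independent Gaussians, so the Carbery--Wright anti-concentration inequality yields both the pointwise density bound $\|f_\nu\|_\infty \le C_{\sgraph}$ (which also controls $\dtv(\tilde\chi_\nu, \tilde\chi_\nu+\eta Z)$) and the decay $|\hat\nu(\xi)| \le C_{\sgraph}|\xi|^{-1/e}$. Under $\OU_\eps\Planted$, write $\rvM = \sqrt{1-\eps^2}\rvM_0 + \eps\rvG$ with $\rvM_0\sim\Planted$ independent of $\rvG\sim\Null$, and condition on $\rvM_0$: then $\chi_{\sgraph}(\rvM)$ is a degree-$e$ polynomial in $\rvG$ whose top Wiener-chaos component in $\rvG$ equals $\eps^e\chi_{\sgraph}(\rvG)$, with variance $\Omega(\eps^{2e}\sigma^2)$ irrespective of $\rvM_0$. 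Applying Carbery--Wright to the $\rvG$-randomness then yields $|\E[e^{i\xi\tilde\chi}\mid\rvM_0]| \le C_{\sgraph}(\eps|\xi|)^{-1/e}$, a bound preserved under the outer average over $\rvM_0$, and the Gaussian weight $e^{-\eta^2\xi^2}$ in the Plancherel integral then makes the high-frequency part negligible.

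Balancing all three error sources by taking $T \approx \log(1/\delta)\wedge\log n$, $K = \Theta(D/e)$ with $D = \log n\cdot\log\log n$ (the extra $\log\log n$ is exactly what is needed so that the Taylor remainder $(T/K)^{eK}$ beats $n^{-\alpha\eps^2}$), $\eta$ just above $T^{-1}$, and $R = \polylog(n)$ for hypercontractive tail bounds, one recovers $\dtv(\ShadowNull,\ShadowPlanted) \le O(\delta^{\alpha\eps^2} + n^{-\alpha\eps^2})$ for a small absolute constant $\alpha > 0$. The main obstacle I expect is the uniform-in-$\rvM_0$ high-frequency control of $|\hat\pi|$: one must show that the top Wiener chaos component in $\rvG$ is non-degenerate and dominates the lower-order $\rvM_0$-dependent cross terms in the Carbery--Wright sense, regardless of the planted $\rvM_0$, while tracking the graph-combinatorial constants from $\sgraph$ through the anti-concentration estimate in a way that produces the $\eps^{2}$-exponent in the final bound.
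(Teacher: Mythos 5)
Your low-frequency plan (Taylor expansion, LDLR, moment comparison) matches the paper's Section 5 at a high level, but two of your key technical claims are wrong in ways that break the argument, and the third points you toward a different (and, as written, insufficient) route than the one the paper actually takes.

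First, the moment bound you invoke for the Taylor remainder, $\E_\Null[|\tilde\chi|^{2K}]\le (C_\sgraph K)^{eK}$, is what Gaussian hypercontractivity gives for a \emph{generic} degree-$e$ polynomial, and it is far too weak. With $K!\approx K^Ke^{-K}$ the remainder term $T^K\,\E[|\tilde\chi|^K]/K!$ becomes roughly $(TCe)^K K^{(e-2)K/2}$, which \emph{diverges} as $K\to\infty$ whenever $\vartheta$ has $e\ge 2$ edges (i.e.\ for any connected $\vartheta$ other than a single edge). The paper avoids this by proving a genuinely sub-Gaussian bound $\E_\Null[\chi_\sgraph^q]\le \sqrt{q}^{\,q}\,(1+o(1))$ for $q$ up to $n^{\alpha}$ (\cref{lem:subgraph-moments}), via a careful combinatorial argument exploiting that $\chi_\sgraph$ is an average of $\sim n^{v(\sgraph)}$ terms; hypercontractivity cannot see this structure. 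Without that lemma your remainder does not go to zero. (Related: the paper's low-frequency window is $T=O(\sqrt{\log n})$, not $T\approx\log n$ as you write; the $\sqrt{\log n}$ scale is what makes $(CT/\sqrt{K})^K$ exponentially small with $K=\Theta(\log n)$.)

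Second, your high-frequency control via Carbery--Wright does not work. Carbery--Wright gives only the anti-concentration $\Pr[|p(\rvg)|\le\eps\sigma]\le Bk\eps^{1/k}$, which for $e\ge 2$ is consistent with an \emph{unbounded} density ($\|f_\nu\|_\infty\le C_\sgraph$ is false in general for degree-$e$ polynomials, e.g.\ $g^2$ has a $x^{-1/2}$ singularity). Even if one derives a characteristic-function bound from anti-concentration as in the paper's \cref{lem:PolyFourierBoundLem}, the decay is only polynomial, $|\hat\pi(\xi)|\lesssim |\xi|^{-\Theta(1/e)}$, and this is not integrable at infinity; even with your Gaussian smoothing weight $e^{-\eta^2\xi^2}$, the weighted tail integral $\int_{|\xi|>T}|\hat\pi(\xi)|^2e^{-\eta^2\xi^2}\,d\xi$ is of order $\eta^{-1}T^{-\Theta(1/e)}$, which with $\eta\approx T^{-1}$ and $T=\polylog n$ is \emph{not} $n^{-\alpha\eps^2}$. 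The paper's way around this is the essential new idea: it invokes Chatterjee's second-order Poincar\'e theorem (\cref{lem:shivam}) to show that $\chi_\sgraph(\OU_\eps\rvM)$ is, conditionally on $\rvM$, TV-close to $(1-\eps)^{e(\sgraph)/2}\chi_\sgraph(\rvM)+\wt\sigma(\rvM)\,\rvg$ with an additive Gaussian whose scale $\wt\sigma(\rvM)\gtrsim\sqrt{\eps}$ is bounded below uniformly (after truncation to a high-probability set). This regularized law has genuine \emph{exponential} Fourier decay $|\hat{\ul\pi}(\xi)|\le e^{-\Omega(\eps^2\xi^2)}$, which is what makes the high-frequency integral close. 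Carbery--Wright, even after conditioning on $\rvM_0$ as you suggest, is simply too crude to substitute for this. The price of Chatterjee's approach is the additional work of controlling the gradient, Hessian, and variance of $\chi_\sgraph$ over the randomness of $\rvM$, done via graph-matrix norm bounds from \cite{AMP16}; that computation is the main technical content of \cref{lem:noisy-planted} and is absent from your outline.

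In summary, your Taylor-plus-LDLR low-frequency step survives only if you replace the hypercontractivity moment bound by the sharper sub-Gaussian estimate of \cref{lem:subgraph-moments}, and your high-frequency step fails outright without a source of exponential Fourier decay; the paper obtains it from Chatterjee's local-CLT theorem, not from Carbery--Wright.
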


\begin{remark}
    In the special case where $\Planted = \Null$, our result establishes a \emph{local central limit theorem} for subgraph count polynomials in Gaussian matrices.
    Considerable efforts have been expended in establishing such results for \erdos--\renyi random graphs \cite{GK16,Ber18,SS22} via a ``bare-hands'' analysis of the Fourier spectrum of these distributions.
    Our analysis of the Fourier spectrum in the Gaussian case makes connections to second-order Poincar{\'e} inequalities of Chatterjee \cite{Cha09}, and operator norm bounds for graph matrices studied in the context of sum-of-squares lower bounds \cite{AMP16,BHKKMP16}.
\end{remark}

\begin{remark}
    To preclude state-of-the-art algorithms such as spectral methods, or distinguishers based on the largest entry of a matrix, we would need a version of \Cref{thm:sub-graph-intro} that holds for $O(\log n)$-sized subgraphs.
    While there are still technical barriers to surmount (cf.\ \Cref{rem:local-CLT}) to achieve this, our result is the first step in this direction.
\end{remark}

As an example, consider the \emph{Wigner sparse PCA} problem, where the planted distribution is described by $\lambda uu^\top + W$, where $\lambda > 0$, $W \sim \Null$ and $u$ is an $\ell$-sparse unit vector whose non-zero entries are, say, $\pm 1/\sqrt{\ell}$, and $\ell \ll \sqrt{n}$.
Note that in this case choosing $\e = \Theta(1)$, the noise operator only changes the signal strength $\lambda$ by a small constant.
We are interested in how large $\lambda$ needs to be so that we can distinguish $\Null$ and $\Planted$.
Statistically, this is possible if and only if $\lambda = \wt{\Omega}(\sqrt{\ell})$.
On the other hand, all known efficient algorithms require $\lambda = \wt{\Omega}(\ell)$ \cite{johnstone2009consistency}.

For concreteness, consider the setting where $\ell = n^{0.4}$ and $\lambda = n^{0.3}$.
Then, statistically, $\Null$ and $\Planted$ are distinguishable since $\lambda \gg n^{0.2}$.
Yet, known algorithms do not succeed since $\lambda \ll \wt{\Omega}(n^{0.4})$.
Indeed, it was verified in \cite[Proof of Theorem 2.24 (a)]{MR4760356-Ding24} that the LDLR up to degree $o(n)$ is bounded by $o(1)$ in this setting.
Thus, our theorem implies that any test based on a single, constant-sized subgraph statistic must necessarily have vanishing success probability.




\section{Technical Overview}
\label{sec:overview}
\subsection{Background and Notation}  \label{sec:overview-prelims}
We use lower case letters to denote scalars and vectors and upper case letter to denote matrices
In the paper, we will let denote a standard normal random variables by $\rvg$ (for scalars and vectors) or $\rvG$ (for matrices).

We first define the noise models that we use.
\begin{definition}[Noise Models]
\label{def:noise-models}
\mbox{}
    \begin{enumerate}
        \item \textbf{Vector-valued Binomial setting:} Let $\Planted$ be a distribution over $\{\pm 1\}^n$ and $\rvy \sim \Planted$. Let $\rvz \sim \Null$ be independent of $\rvy$. We denote by $\T_\e \Planted$ the distribution of $\wt{\rvy}$, where for each $i$ we independently set $\wt{\rvy}_i = \rvy_i$ with probability $1-\e $ and $\wt{\rvy}_i = \rvz_i$ with probability $\e$.
        \item \textbf{Vector-valued Gaussian setting:} Let $\Planted$ be a distribution over $\R^n$ and $\rvy \sim \Planted$. Let $\rvg \sim \calN(0,I_n)$ be independent of $\rvy$. We denote by $\OU_\e \Planted$ the distribution of $\sqrt{1-\e}\rvy + \sqrt{\e} \rvg$.
        \item \textbf{Matrix-valued Gaussian setting:} Let $\Planted$ be a distribution over symmetric matrices in $\R^{n \times n}$ and $\rvM \sim \Planted$. Let $\rvG \in \R^{n \times n}$ be symmetric, independent of $\rvM$ and have i.i.d. $\calN(0,1)$ entries (up to symmetry). We denote by $\OU_\e \Planted$ the distribution of $\sqrt{1-\e}\rvM + \sqrt{\e} \rvG$.
    \end{enumerate}
\end{definition}



\parhead{Fourier transform.}
For any sufficiently nice function $f: \R^d \to \R$, we let $\wh{f} (\xi)$ denote the standard Fourier transform of $f$:
\[
\wh{f} (\xi) = \tfrac 1 {\sqrt{2\pi}}\int_{\R^d} e^{-i \iprod{\xi, x}} f(x) \, dx \; .
\]
For any sufficiently nice random variable $\rvx$ over $\R^d$, let $\pi_{\rvx}$ denote the probability density function of $\rvx$, and let $\widehat{\pi}_{\rvx}$ denote the Fourier transform of $\rvx$:
\[
\wh{\pi}_{\rvx} (\xi) = \E_{\rvx} \left[ \exp (i \iprod{\xi, \rvx}) \right] \; .
\]
Equivalently, $\wh{\pi}_{\rvx}(\xi)$ is the Fourier transform of $\pi_{\rvx}$ at $\xi$.\footnote{Note that this coincides with the characteristic function of $\rvx$.}

\subsection{Lower Bounds in Binomial Space}
\label{sec:overview_binom}

For $S_n$-symmetric distributions over $\{\pm1\}^n$, the analysis for \Cref{thm:binom_intro} reduces to studying a single statistic --- the sum of the entries.
For this overview, with a slight abuse of notation, we will use $\Null$ and $\Planted$ to denote the distribution of the sum of the entries.
The null distribution is exactly the binomial distribution $\Bin(n,\gamma)$, and the associated orthonormal polynomials are the (shifted) \Krawdaunt polynomials $(\Kr_i)_{i\leq n}$, which depend on $\gamma$ and $n$ (see \Cref{sec:orthogonal-polynomials}).
Note that by the central limit theorem, $\Bin(n,\gamma)$ converges to a shifted Gaussian, so we expect $\Kr_i$ to behave roughly like the Hermite polynomials.

The natural strategy is to upper bound $\chisquare{\T_\eps \Planted}{\Null}$, which implies an upper bound on the TV distance.
Given the orthonormal basis, we can write
\begin{align*}
    \chisquare{\T_\eps \Planted}{\Null} = \sum_{i\geq 1} \parens*{\E_{\T_\eps \Planted}[ \Kr_i]}^2
    = \sum_{i \geq 1} (1-\eps)^i \parens*{\E_{\Planted}[\Kr_i]}^2 \mper
\end{align*}
With the $(1-\eps)^i$ decay, it is natural to split the summation into the low-degree and high-degree parts.
For the low-degree part, bounded LDLR, i.e., $\chisquaretrunc{\Planted}{\Null} \leq \delta$, implies that 
$\sum_{i=1}^D (1-\eps)^i \E_{\Planted}[\Kr_i]^2 \leq \delta$.
For the high-degree part, we have $\sum_{i > D} (1-\eps)^i (\E_{\Planted}[\Kr_i])^2 \leq e^{-\eps D} \sum_{i \geq 1} (\E_{\Planted}[\Kr_i])^2$, and note that $\sum_{i \geq 1} (\E_{\Planted}[\Kr_i])^2$ is exactly $\chisquare{\Planted}{\Null}$.

Unfortunately, a priori, we do not have any upper bound on $\chisquare{\Planted}{\Null}$, as $\Planted$ may have some small probability events that blows up the $\chi^2$-divergence.
To bypass this barrier, we observe that just matching low-degree moments suffices to obtain well-behaved tail bounds for $\Planted$.
Thus, we consider the truncated distribution $\ol{\Planted}$, which is $\Planted$ conditioned on being within $[-\tau, \tau]$ for $\tau = \Theta(\sqrt{\log n})$ --- the typical region of a standard Gaussian.
This event has probability $1-\frac{1}{\poly(n)}$ even under $\Planted$,
thus $\TV(\Planted, \ol{\Planted}) \leq \frac{1}{\poly(n)}$.

With the truncation, it is easy to see that $\chisquare{\ol{\Planted}}{\Null} \leq \poly(n)$.
Thus, setting $D = O(\frac{1}{\eps}\log n)$ suffices to make $(1-\eps)^D \poly(n) \leq o(1)$.
Some care is required to show that $\E_{\ol{\Planted}}[\Kr_i]$ still remains small for $i \leq D$ after truncation.
We prove this in \Cref{lem:basis-truncated} using LDLR and properties of \Krawdaunt polynomials.

\subsection{Lower Bounds in Gaussian Space}
We proceed to discuss our results about low-degree indistinguishability in Gaussian space.
As discussed before, the situation in this space is substantially more involved than in the binomial space.
Intuitively, this is because $S_n$-symmetry is a much weaker condition over the real numbers than it is for discrete distributions.
To prove our bounds in this space, we depart significantly from ``standard'' techniques in the literature on low-degree likelihood ratio lower bounds.
Instead of relating the TV distance to a quantity like $\chi^2$-divergence such as in the previous section, we will directly bound the $L_1$-distance between the probability density functions of polynomials of Gaussians and polynomials of the noisy planted distribution.
We note that our strategy resembles the blueprint of local limit theorems based on Fourier inversion \cite{sirazdinov1962mean,Petrov12,GK16,Ber18,SS22}.

As a running example, in this section we will sketch how we do this for a single hypothetical degree $k$ polynomial $p: \R^n \to \R$.
Note that for our actual results, we will actually have to consider arbitrary vector-valued symmetric polynomials, and structured polynomials over matrix inputs, i.e. $p: \R^{n \times n} \to \R$. 
We will explain how our actual techniques differ from this simplified exposition as the differences arise.
Let $\rvy \sim \Planted$ be a random variable so that $\chisquaretrunc{\Planted}{\Null} \leq \delta$ for some $D$ sufficiently large, and let $\eps$ be a small constant.
Let $\rvz = \OU_\e (\rvy)$ be a noisy version of $\rvy$.
Our goal will be to show that the total variation distance between $p(\rvg)$ and $p(\rvz)$ is vanishingly small.
To simplify the exposition somewhat, let us assume that $\rvy$ not only approximately matches $D$ moments with a Gaussian, but in fact exactly matches all moments of degree at most $D$.
Additionally, let us normalize $p$ so that $\Var[p(\rvg)] = 1$, and let $T = D / k$.

\parhead{Step 1: From TV distance to Fourier bounds.}
Our first move is rather standard.
Let $R$ to be chosen later.
Up to a factor of 1/2, we can write the total variation distance between $p(\rvg)$ and $p(\rvz)$ as
\begin{equation}
\label{eq:prelim-split}
\int_{\R} \left| \pi_{p(\rvg)} (x) - \pi_{p(\rvz)}(x) \right| d x = \int_{|x| \leq R} \left| \pi_{p(\rvg)} (x) - \pi_{p(\rvz)}(x) \right| d x +  \int_{|x| > R} \left| \pi_{p(\rvg)} (x) - \pi_{p(\rvz)}(x) \right| d x \; .
\end{equation}
By standard concentration bounds, we know that because $p$ has unit variance and is low degree, if we choose $R = \omega (1)$, then the second term is vanishing since both $p(\rvg)$ and $p(\rvz)$ place vanishing probability mass outside of a constant-length interval.

To bound the first term, our strategy will be to show a \emph{pointwise} bound on $|\pi_{p(\rvg)} (x) - \pi_{p(\rvz)}(x)|$.
To show such a bound, our main tool will be to analyze the Fourier transforms of $p(\rvg)$ and $p(\rvz)$.
In particular, for any fixed $y$, by the inverse Fourier transform, we have that
\begin{align*}
    2\pi \cdot |\pi_{p(\rvg)} (y) - \pi_{p(\rvz)}(y)| &= \left| \int_{\R} e^{i \xi y} \left( \widehat{p(\rvg)} (\xi) - \widehat{p(\rvz)} (\xi) \right) d \xi \right| \\
    &\leq \int_{\R} \left| \widehat{p(\rvg)} (\xi) - \widehat{p(\rvz)} (\xi) \right| d \xi \\
    &= \underbrace{\int_{|\xi| \leq R'} \left| \widehat{p(\rvg)} (\xi) - \widehat{p(\rvz)} (\xi) \right| d \xi}_{\mathrm{(A)}} + \underbrace{\int_{|\xi| > R'} \left| \widehat{p(\rvg)} (\xi) - \widehat{p(\rvz)} (\xi) \right| d \xi}_{\mathrm{(B)}} \; , \numberthis \label{eq:prelim-fourier-split}
\end{align*}
for some appropriate choice of parameter $R'$.
It now remains to bound both of the terms in~\eqref{eq:prelim-fourier-split}.

\parhead{Step 2: Frequency matching.}
We first show how to bound (A) in~\eqref{eq:prelim-fourier-split}.
This is the regime in which we use moment matching.
Recall that $T = D / k$.
By Taylor series expansion, we obtain
\begin{align*}
\widehat{p(\rvg)} (\xi) &= \E [\exp (i \xi p(\rvg))]\\
&= \sum_{\ell = 0}^{T - 1}  \E \left[ \frac{(i \xi p(\rvg))^\ell}{\ell!}\right] + O \left( \E \left[ \frac{(\xi p(\rvg))^{T}}{T!} \right] \right) \; ,
\end{align*}
and similarly
\[
\widehat{p(\rvz)} (\xi) = \sum_{\ell = 0}^{T - 1}  \E \left[ \frac{(i \xi p(\rvz))^\ell}{\ell!}\right] + O \left( \E \left[ \frac{(\xi p(\rvz))^{T}}{T!} \right] \right) \; .
\]
Since $G$ and $Z$ match $D$ moments it holds that $\E\left[ p(\rvz)^\ell \right] = \E \left[ p(\rvg)^\ell \right]$ for all $\ell \leq D/k$.
Hence,
\[
\left| \widehat{p(\rvg)} (\xi) - \widehat{p(\rvz)} (\xi)\right| \leq O \left( \E \left[ \frac{(\xi p(\rvg))^{T}}{T!} \right] \right) \; .
\]
The crucial thing we will require of $p$ is that the degree-$T$ moment of $p(\rvg)$ must grow strictly slower than $O(T^T)$; that is, $p(\rvg)$ must have smaller degree-$T$ moments than a sub-exponential random variable has.
For instance, if the $T$-th moment of $p(\rvg)$ behaves like that of a sub-Gaussian random variable, i.e. $\E [p(\rvg)^T] = O(T^{T/2})$, then we would have that
\begin{equation}
    \label{eq:prelims-frequencyt-matching}
    \left| \widehat{p(\rvg)} (\xi) - \widehat{p(\rvz)} (\xi)\right| \leq O \left( \frac{|\xi|^2}{T} \right)^{T/2} \; ,
\end{equation}
and so we will obtain that 
\[
\mathrm{(A)} \leq O(T) \cdot \exp(-\Omega (T)) \ll 1 \; ,
\]
as long as $|R'|^2 \leq O(T)$.

For the structured polynomials with matrix-valued inputs we consider (known as subgraph count polynomials), such moment bounds are known in the literature when $T$ is a large constant (see, e.g., \cite{SS22}).
Unfortunately, we will require $T$ to be slowly growing with $n$ (i.e. $O(\log n / \log \log n)$), so these bounds do not suffice.
Instead, we derive these bounds by hand via involved combinatorial arguments.

The situation is a bit more complicated for low-degree symmetric polynomial with vector-valued inputs.
Not only do we have to deal with the fact that we wish to control the entire distribution of \emph{all} symmetric polynomials simultaneously, but 
also, it is patently untrue that even a single low-degree symmetric polynomial of a Gaussian will always have sub-Gaussian moments.
For instance, consider the symmetric degree-$2$ polynomial 
\[
p(x) = \left( \frac{1}{\sqrt{n}} \sum_{j = 1}^n x_i \right)^2 \; . 
\]
Then, $p(\rvg)$ is distributed as a chi-squared random variable, which clearly does not have sub-Gaussian tails.

However, we observe that while not all symmetric polynomials may have sub-Gaussian tails, there exists a nice family of symmetric polynomials that do in fact exhibit such tail behavior, and which form a sufficient statistic for all low-degree symmetric polynomials.
Specifically, we will exploit that any symmetric degree-$k$ polynomial can be written as a deterministic function of the polynomials 
\begin{equation}
\label{eq:phi}
\phi_\ell = \frac{1}{\sqrt{n}} \sum_{j = 1}^n h_\ell (x_j) \; ,
\end{equation}
for $\ell = 1, \ldots, k$, where $h_\ell$ is the $\ell$-th (probabilist's) Hermite polynomial.
Note that this is a somewhat non-standard family of symmetric polynomials, however, it turns out to be a very natural choice for our purposes.
Crucially, $\phi_\ell (\rvg)$ is a sum of $n$ independent random variables, and so one can readily verify that all such polynomials will have sub-Gaussian moment bounds for all values of $T$ that we require, so long as $k \leq O(\log n / \log \log n)$.

Our strategy will be to directly show that the distribution of the vector-valued polynomial $F_k: \R^n \to \R^k$ where $(F_k)_i = (\phi_\ell)_{1 \le \ell\le k}$ under $\rvg$ and under $\rvz$ are close in TV distance, by showing that their multivariate Fourier transforms are pointwise close.
By showing that the distributions of $F_k (\rvg)$ and $F_k(\rvz)$ are close, by the data processing inequality, this implies that the joint distribution of \emph{all} symmetric low-degree polynomials under $\rvg$ and $\rvz$ are close to each other. 
This in turn immediately implies that there is no possible distinguisher that uses any amount of symmetric low-degree information that can distinguish between $\rvg$ and $\rvz$.
We emphasize that this choice of sufficient statistics turns out to be very important for our proof: not only does it yield the necessary moment bounds here, but the independence structure in these polynomials also turns out to be important for obtaining the right Fourier decay bounds, as we explain now.

\parhead{Step 3: Fourier decay.}
Finally, we turn to the task of bounding term (B) in~\eqref{eq:prelim-fourier-split}.
We first write
\[
\mathrm{(B)} \leq \int_{|\xi| > T} \left| \widehat{p(\rvg)} (\xi) \right| d \xi + \int_{|\xi| > T} \left| \widehat{p(\rvz)} (\xi) \right| d \xi \; ,
\]
and we will bound each term separately, although the second term will be almost strictly harder to control than the first one, so here we will primarily discuss how to control the second term. 

That is to say, the last remaining conceptual challenge is to demonstrate Fourier decay for $p(\rvz)$.
Here, we must somehow use the fact that $\rvz = \OU_\e [\rvy]$, i.e. that $\rvz$ has Gaussian noise built into it.
But this by itself will be insufficient, as we also need control over the behavior of $\rvy$.
It turns out that the most convenient way to do this, in both settings we consider, will be to not directly work with $p(\rvz)$, but with a random variable which is close in TV distance to $p(\rvz)$, which satisfies certain regularity conditions.
So, in our actual proofs, we run through Steps 1-3 as outlined above not with $p(\rvz)$, but with regularized versions of $p(\rvz)$, and then eventually relate the bounds back to $p(\rvz)$ with a triangle inequality.

The exact notion of regularity we will enforce will be dependent on the specific setting.
In the symmetric setting, recall we are working with the basis of polynomials $\phi_\ell$ as defined in \Cref{eq:phi}, and that this basis has the nice property that the polynomials are a sum of $n$ independent random variables, when the input is Gaussian.
Because of this independence structure, it turns out that the magnitude of the Fourier coefficients depends primarily on the variance of the polynomials $q_j(x) = h_\ell (\sqrt{1 - \e^2} {\rvy}_j + \e x)$ for ``typical'' $\rvy$ under Gaussian input $x$.
While it is not true that this variance is well-controlled for typical $\rvy$ for all of the coordinates $j = 1, \ldots, d$, we can show that with high probability over the choice of $\rvy$, there exist many coordinates for which this variance is well-controlled, and so if we condition on this event happening, we are able to achieve the desired bound on the Fourier coefficients of the original polynomial.
Summarizing, we show that there is a high probability event for $\rvy$ so that if we condition on this event, $\widehat{p(\OU_\e [\rvy])} (\xi)$ is bounded.
So here, our regularized version of $p(\OU_\e [\rvy])$ is just $p(\OU_\e [\tilde{\rvy}])$, where $\tilde{\rvy}$ drawn from the same distribution as $\rvy$, conditioned on this high probability event.

\parhead{Subgraph count polynomials.}
For the subgraph count polynomials, unfortunately, we lose this independence structure, and so we cannot use this type of argument to bound the Fourier coefficients.
However, for any single, sufficiently nice function $f$, there exist strong local limit theorems due to Chatterjee (see \Cref{lem:shivam}) which state that $f(\OU_\e [\rvY])$ is close in TV distance to $f(\rvY) + \sigma \rvG'$, where $\rvG'$ is Gaussian, for some appropriate scale factor $\sigma$.
Since computing Fourier transforms of Gaussians is easy, we can take this to be the regularized version of $f(\OU_\e [\rvY])$ in this setting.
The main technical challenge here is demonstrating that Chatterjee's bound is sufficiently tight to obtain a non-trivial local limit theorem (LLT) for these subgraph count polynomials, which entails bounding the $\ell_2$-norm of the gradient, the spectral norm of the Hessian, and give a lower bound on the variance of these polynomials.
Doing so requires fairly involved graph matrix moments that we control using careful combinatorial bounds, as well as techniques developed originally for proving sum-of-squares lower bounds by \cite{AMP16}.

\parhead{Remark: Relation to local limit theorems.}
As a brief aside, these sorts of Fourier decay bounds are very closely related to LLTs, and it should be no surprise that LLTs arise in our proofs.
A local limit theorem (LLT) is a strengthening of the classic central limit theorems (CLT) from probability theory.
CLTs, or more generally, invariance principles, state that for random variables $\rvx \in \R^n$ with ``nice'' i.i.d. coordinates, and for ``nice'' functions $f:\R^n \to \R$ we have that $f(\rvx)$ is close in distribution to an appropriately scaled Gaussian, i.e., its cumulative distribution function is everywhere close to the cdf of the Gaussian.
In contrast, a local limit theorem states that the \emph{pdf} of $f(\rvx)$ is everywhere close to that of a Gaussian---in particular, this implies that all of the local behavior of $f(\rvx)$ is also similar to that of a Gaussian, hence the name.
Because the higher order Fourier coefficients of a Gaussian are exponentially vanishing, to prove an LLT, one typically needs to first prove that the higher order Fourier coefficients of $f(\rvx)$ are also exponentially small.
While there are some limited classes of polynomials for which such LLTs are known, unfortunately they do not suffice for our purposes, 
and which is why we prove these Fourier decay bounds by hand for the polynomials that arise in our proofs.
In fact, our bounds can be interpreted as a type of new local limit theorem for random variables of the form $p(\OU_\e [\rvy])$.

\section{Technical Preliminaries}
\label{sec:prelims}

\subsection{Statistical Distances}

\begin{definition} \label{def:chi-squared}
    The $\chi^2$-divergence, denoted $\chisquare{P}{Q}$, is defined as
    \begin{align*}
        \chisquare{P}{Q} = \E_{\rvx\sim Q} \bracks*{\parens*{ \frac{P(\rvx)}{Q(\rvx)}-1 }^2 }
        = \E_{\rvx\sim Q} \bracks*{\parens*{ \frac{P(\rvx)}{Q(\rvx)}}^2 } - 1 \mper
    \end{align*}
    In other words, $\chisquare{P}{Q}$ is the variance of the likelihood ratio $P/Q$ under $Q$.
\end{definition}

We note that the $\chi^2$-divergence is not a distance, since it is not symmetric.
However, from Jensen's inequality, we get the following simple fact that relates the $\chi^2$-divergence to the TV distance.
\begin{fact} \label{fact:tv-chi-squared}
    $\TV(P,Q) \leq \frac{1}{2}\sqrt{\chisquare{P}{Q}}$.
\end{fact}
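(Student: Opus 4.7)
The plan is to write the TV distance as an $L^1$-norm of the density difference, pull out a factor of $Q$ to turn the integral into an expectation of $|P/Q - 1|$ under $Q$, and then use Jensen's inequality (equivalently, Cauchy-Schwarz) to bound this first absolute moment by the second moment, which is exactly $\chi^2(P\|Q)$ by \Cref{def:chi-squared}.

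Concretely, I would start from the standard identity
\[
    \TV(P,Q) = \tfrac{1}{2}\int |P(x) - Q(x)|\, dx = \tfrac{1}{2}\, \E_{\rvx \sim Q}\!\left[\left|\tfrac{P(\rvx)}{Q(\rvx)} - 1\right|\right] \mcom
\]
assuming for concreteness that $P$ is absolutely continuous with respect to $Q$ (if not, the $\chi^2$-divergence is $+\infty$ and the inequality is trivial). Applying Jensen's inequality to the concave function $t \mapsto \sqrt{t}$ on the nonnegative random variable $(P(\rvx)/Q(\rvx) - 1)^2$ gives
\[
    \E_{\rvx \sim Q}\!\left[\left|\tfrac{P(\rvx)}{Q(\rvx)} - 1\right|\right] = \E_{\rvx \sim Q}\!\left[\sqrt{\left(\tfrac{P(\rvx)}{Q(\rvx)} - 1\right)^2}\right] \le \sqrt{\E_{\rvx \sim Q}\!\left[\left(\tfrac{P(\rvx)}{Q(\rvx)} - 1\right)^2\right]} = \sqrt{\chisquare{P}{Q}}\mper
\]
Combining these two displays yields $\TV(P,Q) \leq \tfrac{1}{2}\sqrt{\chisquare{P}{Q}}$, as desired.

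There is essentially no obstacle here; this is a textbook calculation. The only minor care point is handling the case when $P$ is not absolutely continuous with respect to $Q$, which is handled trivially since the right-hand side is then $+\infty$. One could equivalently phrase the second step as Cauchy--Schwarz applied to $|P/Q - 1| \cdot 1$ under $Q$, which would give the same bound.
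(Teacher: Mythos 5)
Your proof is correct and takes exactly the route the paper indicates: the paper does not spell out a proof but simply remarks that the fact ``from Jensen's inequality'' follows, which is precisely the step you carry out (rewriting $\TV$ as $\tfrac12\E_Q|P/Q-1|$ and bounding the first absolute moment by the square root of the second). Your handling of the non-absolutely-continuous case is a reasonable extra care point but not a deviation in approach.
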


When $Q$ is a product distribution over $\Omega^n$, it has an associated orthonormal basis $\{\psi_\alpha\}_{\alpha\in \N^n}$ for functions $f: \Omega^n \to \R$.
Then, we have the following expression for the $\chi^2$-divergence.

\begin{lemma} \label{lem:chi-squared-fourier}
    Let $Q$ be a product distribution over $\Omega^n$, and let $\{\psi_\alpha\}_{\alpha\in \N^n}$ be an orthonormal basis with respect to $Q$.
    Then, $\chisquare{P}{Q} = \sum_{\alpha \neq 0} (\E_{\rvx\sim P}[\psi_\alpha(\rvx)])^2$.
\end{lemma}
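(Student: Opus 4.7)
The plan is to use Parseval's identity applied to the likelihood ratio $L(x) = P(x)/Q(x)$ viewed as an element of $L^2(Q)$. Recall from \Cref{def:chi-squared} that $\chisquare{P}{Q} = \E_Q[L^2] - 1$, so the claim reduces to showing that the Fourier expansion of $L$ has $\alpha = 0$ coefficient equal to $1$, and its non-zero-frequency coefficients equal to $\E_P[\psi_\alpha]$.

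First I would invoke the standard convention that the orthonormal basis $\{\psi_\alpha\}_{\alpha\in \N^n}$ associated with the product measure $Q$ includes the constant function $\psi_0 \equiv 1$ (corresponding to the multi-index $\alpha = 0$). Assuming $L \in L^2(Q)$ (otherwise both sides of the claimed identity are $+\infty$ and there is nothing to prove), I would expand
\[
L = \sum_{\alpha \in \N^n} \widehat{L}(\alpha)\, \psi_\alpha, \qquad \widehat{L}(\alpha) = \langle L, \psi_\alpha\rangle_Q = \E_{\rvx \sim Q}[L(\rvx)\, \psi_\alpha(\rvx)].
\]
The crucial computation is that $\E_Q[L \cdot \psi_\alpha] = \E_P[\psi_\alpha]$ by the definition of the Radon--Nikodym derivative, so $\widehat{L}(\alpha) = \E_{\rvx \sim P}[\psi_\alpha(\rvx)]$. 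In particular $\widehat{L}(0) = \E_P[1] = 1$.

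Next I would apply Parseval's identity in $L^2(Q)$ to get
\[
\E_Q[L^2] = \sum_{\alpha \in \N^n} \widehat{L}(\alpha)^2 = 1 + \sum_{\alpha \neq 0} \bigl(\E_{\rvx \sim P}[\psi_\alpha(\rvx)]\bigr)^2.
\]
Subtracting $1$ from both sides yields the claim. There is no substantive obstacle here: the only subtlety is verifying the two bookkeeping facts (that $\psi_0 = 1$ is part of the basis, and that the basis is complete in $L^2(Q)$), both of which are standard for tensor-product bases associated with product measures on nice domains. If one wishes to be more careful about existence of $L \in L^2(Q)$, one can first argue that the right-hand side is an increasing limit of finite sums and apply Parseval to partial sums, but this is routine.
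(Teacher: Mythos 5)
Your proof is correct and matches the paper's argument essentially step-for-step: write the likelihood ratio $L = P/Q$, compute its coefficients $\widehat{L}(\alpha) = \E_P[\psi_\alpha]$ and $\widehat{L}(0)=1$, then apply Parseval to $\E_Q[L^2] = \chisquare{P}{Q}+1$. The paper's proof is the same, only slightly less explicit about the edge case $L\notin L^2(Q)$.
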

\begin{proof}
    Let $L(x) = P(x) / Q(x)$.
    We can express $L$ in terms of the basis: $L(x) = \sum_{\alpha} \wh{L}(\alpha) \psi_{\alpha}(x)$, where $\wh{L}(\alpha) = \E_{\rvx\sim Q}[L(\rvx) \psi_{\alpha}(\rvx)] = \sum_{x\in \Omega^n} Q(x) \cdot \frac{P(x)}{Q(x)}\psi_{\alpha}(x) = \E_{\rvx\sim P}[\psi_{\alpha}(\rvx)]$.
    We also have that $\wh{L}(0) = \E_Q[L] = 1$.

    From \Cref{def:chi-squared}, we can write $\chisquare{P}{Q}$ as $\E_Q[L^2]-1$.
    By Parseval's theorem and the above expressions for $\wh{L}(\alpha)$, we get $\chisquare{P}{Q} = \sum_{\alpha} \wh{L}(\alpha)^2 - 1 = \sum_{\alpha\neq 0} (\E_P[\psi_\alpha])^2$.
\end{proof}

The following is a simple but crucial statement we will need to bound the $\chi^2$-divergence between two distributions.

\begin{lemma} \label{lem:chi-squared-naive-bound}
    Suppose $\supp(P) \subseteq \supp(Q)$, then $\chisquare{P}{Q} \leq \frac{1}{\min_{x\in \supp(P)} Q(x)}$.
\end{lemma}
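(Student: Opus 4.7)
The statement is a direct consequence of the definition of $\chi^2$-divergence combined with the trivial bound $P(x) \leq 1$, so the proof will be quite short. My plan is to expand the $\chi^2$-divergence as a sum over the support of $P$, pull out the worst-case likelihood ratio, and then observe that since $P$ is a probability distribution, each individual ratio $P(x)/Q(x)$ is at most $1/Q(x)$.

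\textbf{Execution.} First, I would use the alternative form of the $\chi^2$-divergence from \Cref{def:chi-squared}:
\[
\chisquare{P}{Q} = \E_{\rvx \sim Q}\!\left[\left(\frac{P(\rvx)}{Q(\rvx)}\right)^{\!2}\right] - 1 = \sum_{x \in \supp(P)} \frac{P(x)^2}{Q(x)} - 1,
\]
where restricting the sum to $\supp(P)$ is valid because the summand vanishes elsewhere (and the assumption $\supp(P) \subseteq \supp(Q)$ ensures no division by zero). Next, I would rewrite $P(x)^2/Q(x) = P(x) \cdot (P(x)/Q(x))$ and factor out the maximum:
\[
\chisquare{P}{Q} \;\leq\; \max_{x \in \supp(P)} \frac{P(x)}{Q(x)} \cdot \sum_{x \in \supp(P)} P(x) \;-\; 1 \;=\; \max_{x \in \supp(P)} \frac{P(x)}{Q(x)} \;-\; 1.
\]
Finally, since $P(x) \leq 1$ for every $x$, we have $P(x)/Q(x) \leq 1/Q(x)$, and taking the maximum gives
\[
\chisquare{P}{Q} \;\leq\; \frac{1}{\min_{x \in \supp(P)} Q(x)} - 1 \;\leq\; \frac{1}{\min_{x \in \supp(P)} Q(x)},
\]
which is the claimed bound.

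\textbf{Obstacles.} I do not anticipate any real technical difficulty; the only subtlety is a cosmetic one, namely whether the statement is formulated for discrete or general distributions. In the continuous case, one should replace $Q(x)$ by the density $q(x)$ (and similarly for $P$), and interpret ``$\min_{x \in \supp(P)} Q(x)$'' as an essential infimum of the density. The same two-line argument goes through verbatim, so this does not pose a genuine obstacle.
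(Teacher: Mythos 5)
Your proof is correct and takes essentially the same one-line approach as the paper: expand the $\chi^2$-divergence as $\sum_{x\in\supp(P)} P(x)^2/Q(x) - 1$ and bound the sum using $P(x)\leq 1$ and $\sum_x P(x)=1$. The paper simply compresses your two intermediate steps (factoring out $\max_x P(x)/Q(x)$ and then bounding that max) into one inequality, but the underlying argument is identical.
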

\begin{proof}
    From \Cref{def:chi-squared}, $\chisquare{P}{Q} = \E_Q[(P/Q)^2] - 1 = \sum_{x\in \supp(P)} \frac{P(x)^2}{Q(x)} -1 \leq \frac{1}{\min_{x\in \supp(P)} Q(x)}$.
\end{proof}

We also need the well-known data processing inequality:

\begin{fact}[Data processing inequality] \label{fact:data-processing}
    If $\kappa$ is an arbitrary channel that transforms $P$ and $Q$ into $P_\kappa$ and $Q_\kappa$ respectively, then $\TV(P_\kappa, Q_\kappa) \leq \TV(P, Q)$.
\end{fact}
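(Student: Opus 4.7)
The plan is to prove the data processing inequality using the standard variational (Jordan decomposition) characterization of total variation, which makes the inequality transparent regardless of whether the channel $\kappa$ is deterministic or stochastic. Recall $\TV(P,Q) = \sup_{B} |P(B) - Q(B)|$, where the supremum ranges over measurable subsets of the underlying space. I would first fix an arbitrary measurable set $B$ in the output space of $\kappa$ and view $\kappa$ as a Markov kernel $\kappa(x, \cdot)$, so that
\[
    P_\kappa(B) - Q_\kappa(B) = \int \kappa(x, B)\, d(P-Q)(x).
\]

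Next, I would invoke the Jordan (Hahn) decomposition $P - Q = \mu_+ - \mu_-$, where $\mu_+$ and $\mu_-$ are mutually singular nonnegative measures satisfying $\mu_+(\Omega) = \mu_-(\Omega) = \TV(P,Q)$. Substituting this decomposition and using $0 \le \kappa(x,B) \le 1$ pointwise, the first integral is bounded above by $\mu_+(\Omega) = \TV(P,Q)$, while the second is bounded below by $0$. Therefore $P_\kappa(B) - Q_\kappa(B) \le \TV(P,Q)$, and exchanging the roles of $P$ and $Q$ yields $|P_\kappa(B) - Q_\kappa(B)| \le \TV(P,Q)$. Taking the supremum over $B$ gives the result.

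An alternative route, which I would mention in case a more probabilistic flavor is preferred, is via maximal coupling: let $(\rvX,\rvY)$ be a maximal coupling of $P$ and $Q$ so that $\Pr[\rvX \neq \rvY] = \TV(P,Q)$, and let $\rvU$ be a source of randomness independent of $(\rvX,\rvY)$ driving the channel so that $(\kappa(\rvX,\rvU), \kappa(\rvY,\rvU))$ is a valid coupling of $(P_\kappa, Q_\kappa)$. Whenever $\rvX = \rvY$ the outputs agree, hence $\Pr[\kappa(\rvX,\rvU) \neq \kappa(\rvY,\rvU)] \le \TV(P,Q)$, and the coupling characterization of total variation delivers the inequality.

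The only mildly subtle step is the handling of the kernel in the non-discrete case: one must ensure that $x \mapsto \kappa(x,B)$ is a measurable, $[0,1]$-valued function so that the integral against $P - Q$ is well-defined. This is built into the definition of a Markov kernel and therefore not really an obstacle; in the discrete setting used throughout the rest of the paper it is entirely routine. Accordingly, there is no substantial technical hurdle, and the argument is essentially a one-line consequence of the Jordan decomposition together with $0 \le \kappa(x,B) \le 1$.
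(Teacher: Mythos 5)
The paper states this as a well-known fact and gives no proof of its own, so there is nothing to compare against directly. Your argument is correct and standard: the Jordan decomposition route (bounding $P_\kappa(B)-Q_\kappa(B)=\int \kappa(x,B)\,d(P-Q)(x)$ using $0\le\kappa(x,B)\le 1$ and $\mu_+(\Omega)=\mu_-(\Omega)=\TV(P,Q)$) is airtight, and the maximal-coupling alternative you sketch is equally valid and arguably more in the spirit of how the paper uses the inequality (passing from $\pi$ to a conditioned/truncated version and then applying the same channel). Either suffices; no gap.
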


\subsection{Bernoulli and Binomial Distributions}
\label{sec:binomial-prelims}

The following is a standard fact that follows from Taylor expansion.

\begin{fact} \label{fact:KL-bernoulli}
    Denote the KL divergence between $\Ber(\berp)$ and $\Ber(\berq)$ as $D(\berp\|\berq) \coloneqq \berp\log \frac{\berp}{\berq} + (1-\berp) \log \frac{1-\berp}{1-\berq}$.
    Then for $\berp \leq 1/2$ and $\eps \leq \berp/2$ it holds that $D(\berp+\eps \|\berp) \leq \frac{\eps^2}{2\berp(1-\berp)} + O(|\eps|^3 /\berp^2)$.
\end{fact}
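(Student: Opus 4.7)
The plan is to view $f(\eps) := D(\berp+\eps\,\|\,\berp)$ as a smooth function of the perturbation $\eps$ on the interval $|\eps|\leq \berp/2$ and to Taylor expand it to second order with a Lagrange remainder of third order. Under the hypotheses $\berp\leq 1/2$ and $|\eps|\leq \berp/2$, both $\berp+\eps$ and $1-\berp-\eps$ are bounded away from $0$ and $1$ (specifically $\berp+\eps \geq \berp/2$ and $1-\berp-\eps \geq 1/4$), so $f$ is three times continuously differentiable on that interval.

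First I would compute the Taylor coefficients $f(0)$, $f'(0)$, and $f''(0)$. Writing $f(\eps) = (\berp+\eps)\log\frac{\berp+\eps}{\berp} + (1-\berp-\eps)\log\frac{1-\berp-\eps}{1-\berp}$, a direct differentiation gives $f(0)=0$, $f'(\eps) = \log\frac{(\berp+\eps)(1-\berp)}{\berp(1-\berp-\eps)}$ so $f'(0)=0$, and $f''(\eps) = \frac{1}{(\berp+\eps)(1-\berp-\eps)}$, which at $\eps=0$ equals $\frac{1}{\berp(1-\berp)}$. This pins down the quadratic term of the expansion around $\eps=0$ as $\frac{\eps^2}{2\berp(1-\berp)}$, matching the leading term in the target bound.

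Next I would control $f'''$ uniformly on $|\eps|\leq\berp/2$ in order to bound the Lagrange remainder. Differentiating once more yields $f'''(\eps) = \frac{1}{(1-\berp-\eps)^2} - \frac{1}{(\berp+\eps)^2}$, and using the range constraints above, this is at most $\frac{1}{(\berp/2)^2} + \frac{1}{(1/4)^2} = O(1/\berp^2)$ in absolute value. Taylor's theorem with Lagrange remainder then yields $f(\eps) = \frac{\eps^2}{2\berp(1-\berp)} + \frac{f'''(\xi)}{6}\eps^3 = \frac{\eps^2}{2\berp(1-\berp)} + O(|\eps|^3/\berp^2)$ for some $\xi$ between $0$ and $\eps$, which is exactly the claim.

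There is no real obstacle here; it is a routine Taylor expansion whose only subtlety is keeping $\berp+\eps$ bounded away from the endpoints $0$ and $1$ so that the derivatives stay finite. The assumption $\eps\leq \berp/2$ is precisely what prevents $\berp+\eps$ from collapsing to $0$ and thus guarantees $f'''$ is $O(1/\berp^2)$, while $\berp\leq 1/2$ keeps $1-\berp-\eps$ bounded below by an absolute constant.
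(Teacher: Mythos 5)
Your proof is correct and is exactly the ``Taylor expansion'' argument the paper alludes to but does not spell out (the paper merely labels the statement a standard fact that follows from Taylor expansion). You correctly compute $f(0)=f'(0)=0$, $f''(0)=\tfrac{1}{\berp(1-\berp)}$, bound $|f'''|=\bigl|\tfrac{1}{(1-\berp-\eps)^2}-\tfrac{1}{(\berp+\eps)^2}\bigr|=O(1/\berp^2)$ on $|\eps|\le\berp/2$ using $\berp+\eps\ge\berp/2$ and $1-\berp-\eps\ge 1/4$, and conclude via the Lagrange form of the remainder.
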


The following is a folklore result.
We provide a proof in \Cref{sec:binomial-appendix} for completeness.

\begin{lemma}
\torestate{\label{lem:binomial-prob}
    Let $\berp \leq 1/2$.
    For any $y \in \R$ such that $|y| \leq o(\sqrt{\berp n})$ and $n\berp + y\sqrt{n\berp(1-\berp)} \in \{0,1,\dots, n\}$,
    \begin{align*}
        \Pr\bracks*{\Bin(n,\berp) = n\berp + y\sqrt{n\berp(1-\berp)}} \geq \frac{1}{\sqrt{2n}} \exp \parens*{-\frac{y^2}{2} - O(|y|^3/\sqrt{\berp n})} \mper
    \end{align*}
}
\end{lemma}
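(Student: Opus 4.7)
The plan is to establish this via a direct local central limit theorem computation, using Stirling's formula applied to $\binom{n}{k}\berp^k(1-\berp)^{n-k}$ and combining it with the KL expansion already stated as \Cref{fact:KL-bernoulli}.

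First I would set $\berq \coloneqq k/n$, so that $\berq = \berp + \eta$ with $\eta = y\sqrt{\berp(1-\berp)/n}$. The hypothesis $|y| \le o(\sqrt{\berp n})$ ensures both that $\berq \in [\berp/2, 3\berp/2]$ (so $k = \Theta(n\berp)$ and $n-k = \Theta(n)$) and that $|\eta| \le \berp/2$, which is exactly the precondition needed to apply \Cref{fact:KL-bernoulli}. Applying Stirling's formula $m! = \sqrt{2\pi m}(m/e)^m(1+O(1/m))$ to each of $n!$, $k!$, $(n-k)!$ and simplifying yields
\begin{equation*}
    \Pr\bigl[\Bin(n,\berp) = k\bigr] \;=\; \frac{1+O(1/(n\berp))}{\sqrt{2\pi\, n\, \berq(1-\berq)}} \cdot \exp\bigl(-n\, D(\berq\,\|\,\berp)\bigr)\,,
\end{equation*}
where the combined Stirling error $O(1/n) + O(1/k) + O(1/(n-k))$ collapses to $O(1/(n\berp))$ given our bounds on $k$ and $n-k$.

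Next, for the polynomial prefactor, I would use $\berq(1-\berq) \le 1/4$ to bound $\sqrt{2\pi\, n\, \berq(1-\berq)} \le \sqrt{\pi n/2} \le \sqrt{2n}$, which gives the desired $1/\sqrt{2n}$ factor. For the exponent, I would invoke \Cref{fact:KL-bernoulli} and substitute $\eta = y\sqrt{\berp(1-\berp)/n}$: the quadratic piece becomes $n\eta^{2}/(2\berp(1-\berp)) = y^{2}/2$, and the cubic remainder becomes $n|\eta|^{3}/\berp^{2} = |y|^{3}(1-\berp)^{3/2}/\sqrt{\berp n} \le |y|^{3}/\sqrt{\berp n}$, giving the exponent the claimed form. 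The remaining Stirling factor $1+O(1/(n\berp)) = \exp(O(1/(n\berp)))$ is absorbed into the exponent: whenever $|y| \ge (n\berp)^{-1/6}$ we have $1/(n\berp) \le |y|^{3}/\sqrt{\berp n}$ directly, while for smaller $|y|$ both the Stirling error and $y^{2}/2$ are $o(1)$, so enlarging the hidden constant in the $O(\cdot)$ inside the exponential covers it.

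The main subtlety I expect is the bookkeeping in the KL expansion --- specifically, confirming that the cubic remainder scales as $1/\sqrt{\berp n}$ (rather than as $1/(\berp n)$ or some weaker quantity in $\berp$), since this is what makes the bound useful in the small-$\berp$ regime relevant to \Cref{thm:binom_intro}. Everything else is a routine local CLT computation.
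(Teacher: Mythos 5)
Your proposal follows essentially the same route as the paper: set $\berq = k/n$, reduce to bounding $e^{-nD(\berq\|\berp)}$ via \Cref{fact:KL-bernoulli}, and control the polynomial prefactor. The computations of the quadratic and cubic pieces of the exponent are identical to the paper's, and correctly give the $|y|^3/\sqrt{\berp n}$ scaling.

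The one place the paper is cleaner is the prefactor. Rather than applying Stirling with a multiplicative error $1+O(1/(n\berp))$ and then having to absorb it, the paper invokes the non-asymptotic inequality $\binom{n}{\beta n} \geq \sqrt{\tfrac{1}{8n\beta(1-\beta)}}\, e^{nh(\beta)}$, whose built-in constant $1/\sqrt{8}$ (rather than Stirling's $1/\sqrt{2\pi}$) already leaves room, so $\sqrt{1/(8n\beta(1-\beta))} \ge 1/\sqrt{2n}$ directly with no leftover error term. In your version, the argument for $|y| < (n\berp)^{-1/6}$ is not quite right as stated: in that regime $|y|^3/\sqrt{\berp n} < 1/(n\berp)$, so enlarging the hidden constant in $O(|y|^3/\sqrt{\berp n})$ \emph{cannot} absorb a term of order $1/(n\berp)$. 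What actually saves the bound there is the slack you already computed but did not invoke, namely $\sqrt{2\pi n\,\berq(1-\berq)} \le \sqrt{\pi n/2}$, which makes the prefactor at least $\sqrt{4/\pi}\cdot\tfrac{1}{\sqrt{2n}}$; this multiplicative cushion dominates $1-O(1/(n\berp))$ once $n\berp$ is large enough (which is implicit in the asymptotic statement). With that fix the proof goes through, so the gap is minor and repairable, but it is worth being explicit that the cushion comes from the prefactor, not from the exponent's $O(\cdot)$.
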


We give a quantitative version of the well-known fact that the moments of the centered binomial random variable match those of a standard gaussian, and give a proof in \Cref{sec:binomial-appendix}.
\begin{lemma}[Binomial moments]
\torestate{\label{lem:binomial-moments}
    Let $\berp \leq 1/2$, and let $\rvx \sim \Ber(\berp)^n$ and $\rvy = \frac{1}{2\sqrt{n\berp(1-\berp)}}\sum_{i=1}^n (\rvx_i - (2\berp-1))$.
    Then, for any $k \in \N$,
    $\E[\rvy^{2k}] \leq (2k-1)!! \cdot (1+O(\frac{k^3}{\berp n}))$.
}
\end{lemma}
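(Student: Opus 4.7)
The plan is to expand $\E[\rvy^{2k}]$ via multinomials and group the resulting terms by set partitions of $[2k]$. Write $\xi_i \coloneqq (\rvx_i - (2\berp-1))/(2\sqrt{\berp(1-\berp)})$, so the $\xi_i$ are i.i.d.\ with $\E[\xi_i] = 0$, $\E[\xi_i^2] = 1$, and $\rvy = n^{-1/2} \sum_i \xi_i$. Since $\xi_i$ takes values $\sqrt{(1-\berp)/\berp}$ and $-\sqrt{\berp/(1-\berp)}$ with probabilities $\berp$ and $1-\berp$, a direct computation yields $\E[\xi_i^m] = \berp^{1-m/2}(1-\berp)^{m/2} + (-1)^m(1-\berp)^{1-m/2}\berp^{m/2}$, and together with $\berp \leq 1/2$ this gives $|\E[\xi_i^m]| \leq 2\berp^{1-m/2}$ for all $m \geq 2$, with equality to $1$ at $m = 2$.

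Next, I expand $\E[\rvy^{2k}] = n^{-k}\sum_{i_1,\dots,i_{2k}\in [n]}\E[\xi_{i_1}\cdots \xi_{i_{2k}}]$ and group the tuples according to the set partition $\pi$ of $[2k]$ encoding equality of indices. Partitions containing a singleton vanish because $\E[\xi_i] = 0$, and a partition with block sizes $m_1,\dots,m_\ell$ (all $\geq 2$) contributes $\tfrac{n!/(n-\ell)!}{n^k}\prod_j \E[\xi^{m_j}]$. The main term is the sum over perfect matchings ($\ell = k$), which equals $\tfrac{n!/(n-k)!}{n^k}(2k-1)!! \leq (2k-1)!!$. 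For any other partition, set $r \coloneqq k - \ell \geq 1$; since $\sum_j (m_j - 2) = 2r$, the number $s$ of blocks of size $\geq 3$ satisfies $s \leq 2r$, so $\prod_j |\E[\xi^{m_j}]| \leq 2^s \berp^{-r} \leq (4/\berp)^r$. Combined with $\tfrac{n!/(n-\ell)!}{n^k} \leq n^{-r}$, every such partition contributes at most $(4/(\berp n))^r$ in absolute value.

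It remains to control $T_{\ell}$, the number of set partitions of $[2k]$ into $\ell$ blocks each of size $\geq 2$. Using the exponential generating function $g(x) = e^x - 1 - x = \sum_{m\geq 2}x^m/m!$, one has $T_{\ell} = (2k)!\cdot [x^{2k}] g(x)^\ell/\ell!$. A coefficient-wise comparison shows $[x^m]g(x) = 1/m! \leq 1/(2(m-2)!) = [x^m](x^2 e^x/2)$ for every $m \geq 2$, and hence $g(x)^\ell \leq (x^2 e^x/2)^\ell$ as formal power series with non-negative coefficients. This gives $[x^{2k}]g(x)^\ell \leq \ell^{2r}/((2r)!\, 2^\ell)$, and using $k!/(k-r)! \leq k^r$ and $\ell \leq k$ one obtains $T_{k-r}/(2k-1)!! \leq (2k^3)^r/(2r)!$.

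Multiplying the per-partition bound by the partition count and summing over $r \geq 1$ gives
\begin{align*}
    \E[\rvy^{2k}] \leq (2k-1)!!\cdot\Paren{1 + \sum_{r\geq 1}\frac{(8k^3/(\berp n))^r}{(2r)!}} \leq (2k-1)!!\cdot\Paren{1 + O\Paren{\frac{k^3}{\berp n}}} \mper
\end{align*}
The final step uses that in the only nontrivial regime $k^3/(\berp n) = O(1)$, the series is dominated by its $r = 1$ term. I expect the main obstacle to be the generating-function bound in the third paragraph; an equivalent route via the moment--cumulant formula (using $|\kappa_m(\rvy)| \leq C_m(\berp n)^{-(m-2)/2}$) produces the same final estimate but requires a separate computation of the cumulants of $\xi_i$.
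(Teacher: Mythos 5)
Your proof is correct and shares the paper's top-level decomposition: expand $\E[\rvy^{2k}] = n^{-k}\sum_{i_1,\dots,i_{2k}}\E[\xi_{i_1}\cdots\xi_{i_{2k}}]$, discard tuples in which some index appears exactly once (they vanish because $\E[\xi_i]=0$), and bound each surviving term via $|\E[\xi_i^m]| \leq O(\berp^{-(m-2)/2})$ for $m \geq 2$. Where you genuinely diverge is the counting step. The paper parametrizes by the total excess multiplicity $\ell = \sum_u (m_u-2)\Ind(m_u\geq 3)$ (equal to your $2r$) and gives a rough direct enumeration by choosing and matching high-multiplicity positions, which leads to a geometric-type series $\sum_j (32k^3/(\berp n))^j$; you instead group tuples exactly by the induced set partition of $[2k]$, and bound the number $T_{\ell}$ of partitions into $\ell$ blocks of size $\geq 2$ via the exponential-generating-function comparison $e^x - 1 - x \leq x^2 e^x/2$ (coefficient-wise), yielding the tighter series $\sum_r (8k^3/(\berp n))^r/(2r)!$. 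This is a cleaner and slightly sharper route: your intermediate bound converges for all $k,n$, while the paper's geometric series requires $k^3/(\berp n)$ small to even converge. Both arguments, including yours, pass from their respective series to the stated $1 + O(k^3/(\berp n))$ only under the implicit restriction that $k^3/(\berp n) = O(1)$; you flag this explicitly, and it matches the regime in which the paper actually invokes the lemma ($k \leq o((\berp n)^{1/3})$ in Lemma~\ref{lem:basis-truncated}). The alternative route you sketch via cumulants ($|\kappa_m(\rvy)| \lesssim (\berp n)^{-(m-2)/2}$) would also work and is essentially a repackaging of the same partition combinatorics through the moment--cumulant formula.
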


For the following fact is standard; see, e.g., \cite[Theorem 10.21]{o2014analysis}.
\begin{lemma}[Hypercontractivity] \label{lem:hypercontractivity}
    Fix a constant $\berp \in (0,1)$.
    There is a constant $C_{\berp} > 0$ such that for any $n$-variate polynomial $p$ of degree $d$, $\E_{\rvx\sim\Ber(\berp)^n}[p(\rvx)^4] \leq C_{\berp}^{d} \cdot \E_{\rvx\sim\Ber(\berp)^n}[p(\rvx)^2]^2$.
\end{lemma}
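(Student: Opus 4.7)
The plan is to use the standard three-step template for establishing biased hypercontractivity: a single-variable base case, tensorization across the $n$ coordinates, and a Fourier degree-truncation argument. I fix the orthonormal basis $\{1, \chi\}$ of $L^2(\Ber(\berp))$ with $\chi(x) = (x-\berp)/\sqrt{\berp(1-\berp)}$, and for $S\subseteq [n]$ write $\chi_S = \prod_{i\in S}\chi(x_i)$, so that any $f:\{0,1\}^n\to\R$ has a Fourier expansion $f = \sum_S \wh{f}(S)\chi_S$.

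First I would establish the single-variable $(2,4)$-hypercontractive inequality. Define the noise operator $T_\rho(a+b\chi) = a + \rho b\chi$ on $L^2(\Ber(\berp))$. The goal is to produce $\rho_\berp \in (0,1)$ depending only on $\berp$ such that $\|T_{\rho_\berp} f\|_4 \leq \|f\|_2$ for every $f = a+b\chi$. Expanding gives
\begin{align*}
\|T_{\rho_\berp} f\|_4^4 = a^4 + 6\rho_\berp^2 a^2 b^2 + 4\rho_\berp^3 ab^3\,\E[\chi^3] + \rho_\berp^4 b^4\,\E[\chi^4],
\end{align*}
while $\|f\|_2^4 = (a^2+b^2)^2$. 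For fixed $\berp\in(0,1)$, $\E[\chi^3]$ and $\E[\chi^4]$ are bounded constants, and after normalizing $a^2+b^2=1$ the inequality reduces to a one-parameter scalar optimization, which is satisfied by a small enough $\rho_\berp>0$.

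Next I would tensorize. Extend $T_\rho$ to $\{0,1\}^n$ by $T_\rho^{\otimes n} f = \sum_S \rho^{|S|}\wh{f}(S)\chi_S$. Applying Minkowski's integral inequality coordinate by coordinate (standard induction on $n$) lifts the single-variable bound to $\|T_{\rho_\berp}^{\otimes n} f\|_4 \leq \|f\|_2$ for every $f:\{0,1\}^n\to\R$. Finally, for a polynomial $p$ of degree $\leq d$, introduce $q \coloneqq \sum_{|S|\leq d} \rho_\berp^{-|S|}\wh{p}(S)\chi_S$, so that $T_{\rho_\berp}^{\otimes n} q = p$. The tensorized inequality then yields
\begin{align*}
\|p\|_4 = \|T_{\rho_\berp}^{\otimes n} q\|_4 \leq \|q\|_2 = \Bigparen{\sum_{|S|\leq d} \rho_\berp^{-2|S|} \wh{p}(S)^2}^{1/2} \leq \rho_\berp^{-d}\|p\|_2,
\end{align*}
which, raised to the fourth power, gives $\E[p(\rvx)^4] \leq \rho_\berp^{-4d}\,\E[p(\rvx)^2]^2$, so $C_\berp \coloneqq \rho_\berp^{-4}$ does the job.

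The main obstacle is the one-variable base case, which is genuinely more delicate than in the unbiased setting. When $\berp = 1/2$ the third moment of $\chi$ vanishes and a two-term calculation recovers the classical Bonami--Beckner constant $\rho = 1/\sqrt{3}$. For biased $\berp$, the $ab^3\E[\chi^3]$ cross term persists, and since $\E[\chi^3]$ scales like $\berp^{-1/2}$ as $\berp \to 0$, the best attainable $\rho_\berp$ necessarily degrades with $\berp$; this is why $C_\berp$ depends on $\berp$ rather than being universal. Verifying existence of \emph{some} $\rho_\berp\in(0,1)$ for each fixed $\berp\in(0,1)$ is a direct scalar optimization, and is exactly the content of \cite[Theorem 10.21]{o2014analysis} cited in the lemma.
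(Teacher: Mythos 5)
Your proof is correct, and it follows essentially the same route as the cited reference (O'Donnell, Theorem~10.21 and its surrounding corollaries), which the paper invokes without proof: a single-variable $(2,4)$-hypercontractive base case for the biased noise operator, tensorization over the $n$ coordinates via Minkowski, and the standard degree-truncation trick $p = T_{\rho_\berp}^{\otimes n} q$ to convert $\|T_\rho f\|_4 \leq \|f\|_2$ into $\E[p^4] \leq \rho_\berp^{-4d}\,\E[p^2]^2$. Two minor notational notes that do not affect correctness: the paper treats $\Ber(\berp)$ as $\{\pm1\}$-valued (so its $\chi$ is $\frac{x-(2\berp-1)}{2\sqrt{\berp(1-\berp)}}$ rather than your $\frac{x-\berp}{\sqrt{\berp(1-\berp)}}$, an inconsequential affine reparametrization); and in the base case it is worth spelling out that after normalizing $a^2+b^2=1$ one handles the possibly-adversarial sign of the $4\rho^3 ab^3\,\E[\chi^3]$ cross term by AM--GM, so that for $\rho$ small enough relative to $|\E[\chi^3]|$ and $\E[\chi^4]$ the required inequality holds uniformly over $(a,b)$, including near $b=0$ where both sides touch.
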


\subsection{Orthogonal Polynomials}
\label{sec:orthogonal-polynomials}

In this paper, we will crucially use several families of orthogonal polynomials, depending on the specific application, which we review here.

\paragraph{Probabilist's Hermite polynomials}
For a non-negative integer $k$, let $h_k: \R \to \R$ denote the $k$-th normalized (probabilist's) Hermite polynomial~\cite{Sze75}:
\[
h_k (x) \coloneqq \frac{1}{\sqrt{k!}} \mathrm{He}_k(x) = \frac{1}{\sqrt{k!}} (-1)^k e^{x^2/2} \frac{d^k}{d x^k} e^{-x^2 / 2} \; .
\]
Here, $\mathrm{He}$ is the usual notation for the (unnormalized) Hermite polynomials.
A standard property of the Hermite polynomials is that for all $k \geq 0$, the polynomials $\{h_\ell\}_{\ell = 0}^k$ form a complete orthonormal basis for all degree-$k$ polynomials with respect to the Gaussian inner product $\iprod{\cdot, \cdot}$, defined by $\iprod{p, q} = \E_{\rvg \sim \calN(0,1)} [p(\rvg) q(\rvg)]$.
In particular,
\begin{align*}
    \E_{\rvg \sim \calN(0,1)} [h_k(\rvg) h_{\ell}(\rvg)] = \one(k=\ell) \; .
\end{align*}
Moreover, we have the recurrence $\mathrm{He}_{\ell}'(x) = \ell \cdot \mathrm{He}_{\ell-1}(x)$ for all $x$, and thus $h_{\ell}'(x) = \sqrt{\ell} \cdot h_{\ell-1}(x)$ for all $\ell \geq 1$.

We also use the following standard estimates, which will prove useful:
\begin{lemma}
\label{lem:deriv-bound}
    Let $\rvg \sim \calN(0,1)$.
    For any polynomial $p$ of degree $k$, we have
    $\Var[p(\rvg)] \leq \E[p'(\rvg)^2] \leq k\cdot \Var[p(\rvg)]$.
\end{lemma}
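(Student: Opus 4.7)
The plan is to prove both inequalities simultaneously by expanding $p$ in the Hermite basis. Since $p$ has degree at most $k$, I can write
\[
p(x) = \sum_{\ell=0}^{k} c_\ell\, h_\ell(x), \qquad c_\ell = \E_{\rvg \sim \calN(0,1)}[p(\rvg) h_\ell(\rvg)].
\]
By orthonormality of $\{h_\ell\}$ under the Gaussian measure and the fact that $h_0 \equiv 1$, the mean is $\E[p(\rvg)] = c_0$, so Parseval yields $\Var[p(\rvg)] = \sum_{\ell=1}^{k} c_\ell^2$.

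Next I would differentiate term by term, using the recurrence $h_\ell'(x) = \sqrt{\ell}\, h_{\ell-1}(x)$ already recorded in the excerpt (and $h_0' \equiv 0$), to obtain
\[
p'(x) = \sum_{\ell=1}^{k} c_\ell \sqrt{\ell}\, h_{\ell-1}(x).
\]
Applying orthonormality again gives the clean identity
\[
\E[p'(\rvg)^2] = \sum_{\ell=1}^{k} \ell\, c_\ell^2.
\]

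With this identity in hand, both bounds are immediate. The lower bound follows from $\ell \geq 1$ for each term, giving $\sum_{\ell=1}^k \ell c_\ell^2 \geq \sum_{\ell=1}^k c_\ell^2 = \Var[p(\rvg)]$ (this is the degree-$k$ instance of the Gaussian Poincaré inequality). The upper bound follows from $\ell \leq k$ for each term, giving $\sum_{\ell=1}^k \ell c_\ell^2 \leq k \sum_{\ell=1}^k c_\ell^2 = k\cdot \Var[p(\rvg)]$. There is no real obstacle here; the only thing to be careful about is to use the normalized recurrence $h_\ell' = \sqrt{\ell}\, h_{\ell-1}$ rather than the unnormalized one $\mathrm{He}_\ell' = \ell\, \mathrm{He}_{\ell-1}$, which would otherwise introduce a spurious factor.
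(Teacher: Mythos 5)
Your proof is correct and follows essentially the same route as the paper: expand $p$ in the normalized Hermite basis, use $h_\ell' = \sqrt{\ell}\,h_{\ell-1}$ and orthonormality to get $\E[p'(\rvg)^2] = \sum_{\ell=1}^k \ell\, c_\ell^2$, then compare $\ell$ to $1$ and to $k$. No differences worth noting.
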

\begin{proof}
    Write $p(\rvg) = \sum_{\ell = 0}^k \alpha_\ell h_\ell (\rvg)$, so that $\Var[p(\rvg)] = \sum_{\ell =1}^k \alpha_\ell^2$.
    Then, since $h'_\ell = \sqrt{\ell} \cdot h_{\ell - 1}$ for all $\ell > 0$, and $h'_0 = 0$, we have that $\E[p'(\rvg)^2] = \sum_{\ell = 1}^{k} \ell \alpha_\ell^2$, from which the claim trivially follows.
\end{proof}

We recall hypercontractivity of Gaussian polynomials, see, e.g., \cite[Theorem 9.21 and note of Chapter 9]{o2014analysis} and \cite{nelson1973free}.
\begin{fact}[Gaussian hypercontractivity]
    \label{fact:gauss_hypercontractivity}
    Let $k \in \N$, $q \geq 2$, and let $p$ be an arbitrary polynomial of degree at $k$.
    Then $\E[|p(\rvg)|^q] \leq (q-1)^{qk/2} \cdot \E[p(\rvg)^2]^{q/2}$.
\end{fact}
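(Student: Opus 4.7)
The plan is to derive this fact as a standard consequence of Nelson's Gaussian hypercontractivity inequality for the Ornstein--Uhlenbeck semigroup. Let $U_\rho$ denote the Ornstein--Uhlenbeck operator, acting on $f: \R \to \R$ by $U_\rho f(x) = \E_{\rvg \sim \calN(0,1)}[f(\rho x + \sqrt{1-\rho^2}\,\rvg)]$. The key eigenfunction property I would invoke is that $U_\rho h_j = \rho^j h_j$ for every $j \ge 0$, which follows from Mehler's formula. Nelson's theorem states that for every $q \ge 2$ and every $\rho \le 1/\sqrt{q-1}$, one has $\|U_\rho f\|_q \le \|f\|_2$. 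I would take this inequality as a black box, since the paper is citing the result rather than reproving it; the argument below then turns it into the stated polynomial inequality.

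The main computation is an inversion step. Write $p = \sum_{j=0}^k \wh{p}(j)\, h_j$ in the Hermite basis, and define $f \coloneqq \sum_{j=0}^k \rho^{-j}\wh{p}(j)\, h_j$ for $\rho = 1/\sqrt{q-1}$. Then $U_\rho f = p$ by the eigenfunction property, and by Parseval,
\begin{equation*}
\|f\|_2^2 \;=\; \sum_{j=0}^k \rho^{-2j}\,\wh{p}(j)^2 \;\le\; \rho^{-2k}\sum_{j=0}^k \wh{p}(j)^2 \;=\; \rho^{-2k}\,\E[p(\rvg)^2].
\end{equation*}
Applying Nelson's inequality to $f$ with this choice of $\rho$,
\begin{equation*}
\E\bigl[|p(\rvg)|^q\bigr]^{1/q} \;=\; \|U_\rho f\|_q \;\le\; \|f\|_2 \;\le\; \rho^{-k}\,\E[p(\rvg)^2]^{1/2} \;=\; (q-1)^{k/2}\,\E[p(\rvg)^2]^{1/2}.
\end{equation*}
Raising both sides to the $q$-th power yields the desired bound $\E[|p(\rvg)|^q] \le (q-1)^{qk/2}\cdot \E[p(\rvg)^2]^{q/2}$.

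For the multivariate setting (the actual relevant case, since $\rvg$ in the paper is a multivariate standard Gaussian when the polynomial has several inputs), the argument goes through unchanged after expanding $p$ in the tensor-product Hermite basis $\{h_\alpha\}_{\alpha \in \N^n}$: the operator $U_\rho^{\otimes n}$ acts diagonally with eigenvalue $\rho^{|\alpha|}$ on $h_\alpha$, and Nelson's hypercontractive inequality for $U_\rho^{\otimes n}$ is a direct tensorization of the one-dimensional statement. The only obstacle to a fully self-contained proof would be Nelson's theorem itself, which is typically proved via Gross's logarithmic Sobolev inequality or a two-point semigroup limit; I would simply cite this and highlight that the truncation $\|f\|_2^2 \le \rho^{-2k}\|p\|_2^2$ above is the only place where the degree bound $k$ enters, explaining the exponent $k/2$ in the final inequality.
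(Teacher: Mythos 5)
Your proof is correct. The paper does not prove this fact but cites it as standard (O'Donnell, Theorem~9.21 and the note in Chapter~9, together with Nelson); your derivation---expanding $p$ in the Hermite basis, pulling out the Ornstein--Uhlenbeck operator $U_\rho$ with $\rho = 1/\sqrt{q-1}$ so that $U_\rho f = p$ with $\|f\|_2 \le \rho^{-k}\|p\|_2$, and then applying Nelson's $(2,q)$-hypercontractive inequality, with tensorization handling the multivariate case---is precisely the textbook argument behind those citations.
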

\noindent
We also require the following standard anti-concentration bound for polynomials in Gaussian space:
\begin{fact}[Carbery-Wright inequality~\cite{CW01}]
\label{fact:carbery-wright}
    Let $\rvg \sim \calN (0, \I_n)$, and let $p: \R^n \to \R$ be a degree-$k$ polynomial.
    Then there is some universal constant $B > 0$ so that
    \[
    \Pr \left[ |p(\rvg)| \leq \eps \var[p(\rvg)]^{1/2} \right] \leq B k \eps^{1/k} \; .
    \]
\end{fact}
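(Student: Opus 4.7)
The plan is to combine a reduction to one dimension with a Remez-type small-ball inequality. After normalizing $p$ so that $\Var[p(\rvg)]=1$, I would pick a direction $u \in S^{n-1}$ along which $p$ has substantial fluctuation and decompose $\rvg = \rvt u + \rvg_\perp$, with $\rvt \sim \calN(0,1)$ independent of $\rvg_\perp$. For each fixed $y \in u^\perp$, the restriction $t \mapsto q_y(t) := p(y + tu)$ is a univariate polynomial of degree at most $k$. By Fubini, the $n$-dimensional estimate reduces to a one-dimensional anti-concentration bound for $q_y$ applied to a typical $y$, combined with a lower bound on $\E_y[\Var_t q_y(t)]$, which is $\Omega(\Var[p(\rvg)])$ for $u$ chosen along a direction of non-negligible variation (e.g.\ aligned with the top direction of the Hermite expansion of $p$).

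For the one-dimensional claim, I would show: any polynomial $q: \R \to \R$ of degree $\leq k$ with $\E[q(\rvg)^2] = 1$ satisfies $\Pr[|q(\rvg)| \leq \eps] \leq B k \eps^{1/k}$. Gaussian hypercontractivity (\cref{fact:gauss_hypercontractivity}) gives $\E[q(\rvg)^4] \leq 9^k$, so by Paley--Zygmund there is a Gaussian set $E$ of measure $\Omega(9^{-k})$ on which $|q| \geq 1/\sqrt{2}$. Restricting to a window $I = [-L_0, L_0]$ with $L_0 = O(\sqrt{k \log(1/\eps)})$ (outside which the Gaussian mass is at most $\eps^{1/k}$), the classical Remez inequality applied to $q$ on $I$ shows that any set $E' \subseteq I$ where $|q| \leq \eps$ has Lebesgue measure $O(L_0 \eps^{1/k})$, using that $\sup_I |q| \geq 1/\sqrt{2}$ from $E \cap I$. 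Converting back to Gaussian measure via the uniform bound on the standard Gaussian density yields the desired $O(L_0)\,\eps^{1/k}$ bound, which is polynomial in $k$ for the relevant range of $\eps$.

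The main obstacle is the fibrewise reduction: I must ensure that for most $y \in u^\perp$ the polynomial $q_y$ has $L^2$-norm comparable to that of $p$, which requires choosing $u$ adaptively using the Hermite structure of $p$. A secondary subtlety is the careful accounting of constants --- Remez constants, the size of the Paley--Zygmund set, the Gaussian window, and the conversion between Lebesgue and Gaussian measure must all interact favorably to preserve the sharp $1/k$ exponent and a pre-factor polynomial in $k$. If squeezing out the tightest constant proves too painful via Paley--Zygmund plus Remez, the alternative I would keep in reserve is the original Carbery--Wright route: show that $t \mapsto \Pr[|p(\rvg)| \leq t]$ has the appropriate $1/k$-concavity via the Pr\'ekopa--Leindler inequality, which directly delivers $\Pr[|p(\rvg)| \leq \eps] \leq O(k)\cdot \eps^{1/k}$ with the optimal pre-factor, at the cost of a more intricate convex-geometric argument.
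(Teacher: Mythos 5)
The paper does not prove this statement---it is quoted from \cite{CW01} and used as a black box---so I will just assess your plan on its own terms. The central fibrewise reduction has a genuine gap: after fixing a direction $u$ and writing $\sigma(y)^2 \coloneqq \Var_t\, p(y+tu)$, the one-dimensional fiber bound gives $\Pr[|p(\rvg)| \le \eps] \le Bk\,\eps^{1/k}\,\E_{y}\bigl[\sigma(y)^{-1/k}\bigr]$, so you need $\E_y\bigl[\sigma(y)^{-1/k}\bigr] = O_k(1)$ \emph{uniformly in $n$}, which is far stronger than your stated target of $\E_y[\sigma(y)^2] = \Omega(\Var p)$. No fixed direction $u$ can achieve this. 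Consider the spherically symmetric $p(x) = (\|x\|_2^2 - n)/\sqrt{2n}$ (degree $2$, unit Gaussian variance): for every direction $u$ and every $y$ one has $\sigma(y)^2 = 1/n$, so $\E_y[\sigma^{-1/2}] = n^{1/4}$ and the fiber bound degrades with the dimension; the Hermite-adaptive choice of $u$ is of no help here precisely because of the symmetry. Dimension-freeness is the hard content of Carbery--Wright, and it is exactly what a fixed-direction Fubini argument cannot deliver. The known ways around this are the log-concavity/Pr\'ekopa--Leindler route you hold in reserve, or rotational averaging over $\theta \mapsto \cos\theta\,\rvg + \sin\theta\,\rvg'$ (the device that appears in the paper's own proof of \cref{lem:PolyFourierBoundLem}) followed by a trigonometric Remez bound, or a careful induction on the degree; your primary plan supplies none of these.

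There is also a secondary quantitative issue in the one-dimensional step itself: to ensure both that the Paley--Zygmund set meets the window $I=[-L_0,L_0]$ and that the Gaussian tail outside $I$ is $\le \eps^{1/k}$, you are forced to take $L_0 \asymp \sqrt{k} + \sqrt{\log(1/\eps)}$, and the Remez level-set estimate then produces a prefactor $O(L_0)$ rather than $O(k)$; the resulting $\sqrt{\log(1/\eps)}\,\eps^{1/k}$ term is not dominated by $Bk\,\eps^{1/k}$ uniformly in $\eps$. This is repairable (e.g.\ apply Remez on dyadic shells and exploit the Gaussian density decay), but as written the one-dimensional estimate only yields a slightly weaker form. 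Your fallback via Pr\'ekopa--Leindler is indeed essentially how Carbery and Wright prove the inequality and would directly deliver the stated $Bk\,\eps^{1/k}$; the primary Remez-plus-Fubini route needs a genuinely new idea to become dimension-free.
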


\parhead{Symmetric Hermite polynomials} We will also crucially use the following symmetric multivariate extension of the Hermite polynomials:
\begin{definition}
Let $k,n$ be positive integers. Define the vector valued polynomial $F_k:\R^n\rightarrow\R^k$ by
$$
(F_k)_i := \frac{1}{\sqrt{n}} \sum_{j=1}^n h_i(x_j) \; .
$$
\end{definition}
\noindent
By the fundamental theorem of symmetric polynomials any degree-$k$ symmetric polynomial can be written as a function of the elementary symmetric polynomials up to degree $k$.
It is not hard to see that the elementary symmetric polynomials up to degree $k$ can be written as a function of the vector $F_k$ above.
As a consequence, any degree-$k$ symmetric polynomial can be written as some function of $F_k$.

Note that this is a slightly non-standard basis for symmetric polynomials in Gaussian space, as one would typically take Hermite variants of the elementary symmetric polynomials instead.
However, this basis turns out to be much more convenient for us.

\parhead{\Krawdaunt polynomials}
Any symmetric polynomial over variables $x_1, x_2,\dots, x_n$ can be uniquely expressed as a polynomial of the elementary symmetric polynomials.
In particular, suppose $\calD$ is the distribution where $x_i = 1$ with probability $\berp \in (0,1)$ and $-1$ with probability $1-\berp$ independently for each $i \in [n]$, then the following polynomials are the associated orthogonal basis:
\begin{align*}
    s_k(x) = \sum_{S\subseteq[n]: |S|=k} \chi_S(x) \mcom
\end{align*}
where $\chi_S(x) = \prod_{i\in S} \frac{x_i - (2\berp-1)}{2\sqrt{\berp(1-\berp)}}$.
(For ease of notation, we will omit the dependence on $\berp$).
Indeed, since $\E[x_i] = 2\berp-1$ and $\Var[x_i] = 4\berp(1-\berp)$, we have that $\E_{x\sim \calD}[s_k(x)] = 0$ for $k\geq 1$ and $\E_{x\sim \calD}[s_k(x) s_{\ell}(x)] = \binom{n}{k}\delta_{k\ell}$ for all $k, \ell \leq n$.

Moreover, $s_k(x)$ can be expressed as a degree-$k$ univariate polynomial in $\sum_{i=1}^n x_i$.
These are known as the (shifted and normalized) \Krawdaunt polynomials~\cite{Kra29}.

\begin{definition}[Shifted \Krawdaunt polynomials] \label{def:krawdaunt}
    Fix $\berp\in (0,1)$ and $n\in \N$.
    We define $\Kr_k$ to be the unique degree-$k$ univariate polynomial, depending on $n$ and $\berp$, such that
    \begin{align*}
        \Kr_k(y;\ n, \berp) = \binom{n}{k}^{-1/2} s_k(x)
    \end{align*}
    for all $x \in \pmo^n$ and $y = \frac{1}{\sqrt{n}} s_1(x) = \frac{1}{\sqrt{n}} \sum_{i=1}^n \frac{x_i - (2\berp-1)}{2\sqrt{\berp(1-\berp)}}$.
    We will omit the dependence on $n,\berp$ when they are clear from context.
\end{definition}

One can easily see that $\{\Kr_k\}_{k\leq n}$ are orthonormal with respect to the distribution of $y$.
Intuitively, since $y \xrightarrow{d} \calN(0,1)$ as $n\to \infty$ by the central limit theorem, $\Kr_k(y)$ converges to the normalized (probabilist's) Hermite polynomials $h_k$, which are orthonormal with respect to the standard Gaussian distribution~\cite{Sze75}.

The following is a key fact we will need:
\begin{lemma}
\torestate{
\label{lem:krawtchouk-bound}
    Fix a constant $\berp \in (0,1)$.
    For any $k\leq n/2$ and $|y| \leq o(n^{1/6})$, we have $|\Kr_k(y)| \leq O(k^{1/4}) \cdot e^{y^2/4}$.
}
\end{lemma}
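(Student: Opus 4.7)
The plan is to leverage the explicit generating function identity for the shifted Krawtchouk polynomials derived at the end of \Cref{sec:orthogonal-polynomials}. Setting $a = \sqrt{(1-\berp)/\berp}$, $b = \sqrt{\berp/(1-\berp)}$ (so $ab = 1$), and letting $N = n\berp + y\sqrt{n\berp(1-\berp)}$ denote the number of $+1$'s in any $x \in \pmo^n$ corresponding to $y$, the observation that $\chi_{\{i\}}(x) \in \{a, -b\}$ gives $\sum_{k=0}^n \sqrt{\binom{n}{k}}\, \Kr_k(y)\, t^k = (1+at)^N (1-bt)^{n-N}$. By Cauchy's integral formula, for every radius $r > 0$,
\begin{align*}
    \sqrt{\binom{n}{k}}\, \Kr_k(y) \;=\; \frac{1}{2\pi r^k} \int_{-\pi}^{\pi} (1+are^{i\theta})^N (1-bre^{i\theta})^{n-N}\, e^{-ik\theta}\, d\theta\mcom
\end{align*}
reducing the task to a saddle-point estimate of this contour integral.

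The bulk of the work is a Taylor expansion of the log-integrand $f(\theta) = N\log|1+are^{i\theta}| + (n-N)\log|1-bre^{i\theta}|$. Using $1 \pm 2cr\cos\theta + c^2r^2 = (1 \pm cr)^2 \pm 2cr(1-\cos\theta)$ and expanding in $r$ yields $f(\theta) \approx ry\sqrt{n}\cos\theta - (r^2 n/2)\cos(2\theta) + O(r^2|y|\sqrt{n}) + O(r^3 n)$, where the two clean identities $Na - (n-N)b = y\sqrt{n}$ and $Na^2 + (n-N)b^2 = n + O(y\sqrt{n})$, both from direct substitution of $N = n\berp + y\sqrt{n\berp(1-\berp)}$, determine the leading coefficients. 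Choosing the radius $r = \sqrt{k/n}$, the critical points of $f$ satisfy $\cos\theta^\ast = y/(2\sqrt{k})$ (when $|y| \leq 2\sqrt{k}$), at which $f(\theta^\ast) = y^2/4 + k/2$. Substituting into the contour bound and subtracting $\log\sqrt{\binom{n}{k}} = (k/2)\log(n/k) + k/2 + O(\log k)$ (Stirling) gives the saddle-value estimate $\log|\Kr_k(y)| \leq y^2/4 + O(\log k)$; Laplace's method applied to the $\theta$-integral, together with the curvature bound $|f''(\theta^\ast)| = 2k - y^2/2$, contributes a further polynomial factor comfortably bounded by $O(k^{1/4})$.

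The main obstacle is executing this analysis uniformly across the full range $|y| \leq o(n^{1/6})$ and $k \leq n/2$. The hypothesis $|y| \leq o(n^{1/6})$ is exactly what is needed to ensure the cubic Taylor-error $O(r^3 n) = O(k^{3/2}/\sqrt{n})$ contributes only $o(1)$ at the saddle $r = \sqrt{k/n}$. When $k$ approaches the upper boundary of $n/2$, the saddle may leave the region of Taylor validity, but here the cruder reproducing-kernel estimate $\sum_{j=0}^n \Kr_j(y)^2 = 1/\Pr[Y=y]$ combined with the binomial local CLT of \Cref{lem:binomial-prob} already yields $|\Kr_k(y)| \leq O(n^{1/4})e^{y^2/4} = O(k^{1/4})e^{y^2/4}$. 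A second delicate regime is $|y|$ close to $2\sqrt{k}$, where $|f''(\theta^\ast)|$ degenerates and the Laplace approximation needs to be replaced either by Airy-type asymptotics or by a direct Gaussian tail estimate; this regime is a vanishingly small fraction of the $y$-range and can be absorbed into the $O(k^{1/4})$ constant via an a priori cruder bound.
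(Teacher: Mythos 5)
Your proposal starts from the same generating-function and Cauchy-integral setup as the paper but diverges at the key step. You Taylor-expand the log-integrand $f(\theta)$ in powers of $r$ and then estimate the contour integral by Laplace's method; the paper never Taylor-expands at all. It bounds the integrand \emph{exactly} on the circle via Jensen's inequality, exploiting the identity $|\sqrt{\berp}+\sqrt{\berq}z|^2 + |\sqrt{\berq}-\sqrt{\berp}z|^2 = 1 + |z|^2$, which gives $\log|f_w(z)z^{-k}| \le \tfrac n2\bigl(D(\alpha\|\berp) + \log(1+r^2) - \beta\log r^2\bigr)$ with $\alpha=w/n$, $\beta=k/n$. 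Optimizing at $r^2 = \beta/(1-\beta)$ (not $k/n$) turns $\log(1+r^2)-\beta\log r^2$ into $h(\beta)$, and $\tfrac n2 h(\beta)$ cancels $\log\sqrt{\binom nk}$ exactly up to a $\tfrac14\log k$ loss, with no approximation. The hypothesis $|y|=o(n^{1/6})$ enters only at the very end, to show $\tfrac n2 D(\alpha\|\berp) = \tfrac{y^2}{4} + O(|y|^3/\sqrt n) = \tfrac{y^2}{4}+o(1)$.

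This comparison exposes a genuine gap in your argument. Your cubic Taylor error $O(r^3 n)=O(k^{3/2}/\sqrt n)$ at $r=\sqrt{k/n}$ does not depend on $y$, so $|y|=o(n^{1/6})$ does not tame it, contrary to what you assert; it is $o(1)$ only when $k = o(n^{1/3})$, and otherwise it contributes a multiplicative factor $e^{\Theta(k^{3/2}/\sqrt n)}$ to the saddle value, super-polynomial at, e.g., $k=\sqrt n$. Your fallback via the reproducing kernel $\sum_j\Kr_j(y)^2 = 1/\Pr[\by=y]$ together with \Cref{lem:binomial-prob} gives only $|\Kr_k(y)|\le O(n^{1/4})e^{y^2/4}$, and $n^{1/4}=O(k^{1/4})$ requires $k=\Omega(n)$. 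Hence the whole range $n^{1/3}\ll k\ll n$, well within the lemma's hypothesis $k\le n/2$, is covered by neither branch. Independently, your Stirling estimate $\log\sqrt{\binom nk} = \tfrac k2\log(n/k)+\tfrac k2+O(\log k)$ is already off by an unbounded $\Theta(k^2/n)$ term once $k\gg\sqrt{n\log n}$. Closing the gap along your route would mean abandoning the Taylor expansion and bounding $\max_\theta f(\theta)$ exactly, at which point the argument essentially reproduces the paper's Jensen bound; the Laplace-method refinement you invoke, while correct where it applies and capable of improving $O(k^{1/4})$ to $O(k^{-1/4})$ when $|y|$ is bounded away from $2\sqrt k$, is not what rescues the intermediate-$k$ regime.
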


\Cref{lem:krawtchouk-bound} is analogous to the bound $|h_k(y)| \leq O(e^{y^2/4})$ for Hermite polynomials~\cite{BC90}.
For the case when $\berp=1/2$, a.k.a.\ the binary \Krawdaunt polynomials, precise bounds (without the extra $k^{1/4}$ factor) were given in \cite{Kra01} and recently \cite{Pol19,DILV24}.
However, we could not find a reference for \Cref{lem:krawtchouk-bound} for an arbitrary $\berp\in (0,1)$.
Therefore, we provide a proof in \Cref{sec:krawtchouk-bound-proof} that closely follows the proof given in \cite{Pol19,DILV24}.

\section{Binomial Space: Full Indistinguishability}
\label{sec:bernoulli}

In this section, we prove \cref{thm:binom_intro} restated below.
\restatetheorem{thm:binom_intro}

Throughout this section, we will assume that $\berp$ is a fixed constant and omit the dependence on $\berp$ for ease of notation.

Given $\bx \in \{\pm1\}^n$ drawn from $\Null$ (Null) or $\Planted$ (Planted), we denote the \emph{centered} and \emph{normalized} sum of coordinates as
\begin{equation*}
    \by = \frac{1}{2\sqrt{n\berp(1-\berp)}} \sum_{i=1}^n (\bx_i - (2\berp-1)) \mper
    \numberthis \label{eq:centered-sum}
\end{equation*}
Let $\nu, \pi$ be the distributions of $\by$ when $\bx$ is sampled from $\Null$ and $\Planted$ respectively.
Note that $\E_\nu[\by] = 0$ and $\E_\nu[\by^2] = 1$ since under the null, $\bx_i \sim \Ber(\berp)$ has mean $2\berp-1$ and variance $4\berp(1-\berp)$.

With a slight abuse of notation, let $\T_\e \pi$ denote the distribution of $\by$ when $\bx \sim \T_\e \Planted$.
Since $\Null$ and $\Planted$ (and also $\T_\e \Planted$) are symmetric distributions, we have the following lemma which states that we can focus on the distributions of $\by$:
\begin{lemma} \label{lem:pass-to-y}
    For symmetric distributions $\Null$ and $\Planted$, $\TV(\Null, \T_\e \Planted) = \TV(\nu, \T_\e \pi)$.
\end{lemma}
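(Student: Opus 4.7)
The plan is to reduce the identity to two applications of the data processing inequality (Fact~\ref{fact:data-processing}), using symmetry to ``reverse'' one of them. The map $\bx \mapsto \by$ is a deterministic (hence valid) channel, so one direction
\[
    \TV(\nu, \T_\e \pi) \;\leq\; \TV(\Null, \T_\e \Planted)
\]
is immediate. The content is the other direction, which is where symmetry comes in.

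For the reverse inequality, I will exhibit a channel $\kappa$ such that $\kappa(\nu) = \Null$ and $\kappa(\T_\e \pi) = \T_\e \Planted$; the data processing inequality then gives $\TV(\Null, \T_\e \Planted) \leq \TV(\nu, \T_\e \pi)$. The channel $\kappa$ is the obvious one: on input $y$, sample $\bx$ uniformly from the level set $\{\bx' \in \{\pm 1\}^n : \tfrac{1}{2\sqrt{n\berp(1-\berp)}}\sum_i (\bx'_i - (2\berp-1)) = y\}$ (which is a single Hamming-weight class). For this to actually transport $\nu$ to $\Null$ and $\T_\e\pi$ to $\T_\e\Planted$, I need that both $\Null$ and $\T_\e\Planted$ are $S_n$-invariant, so their densities depend only on $\by$. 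For $\Null = \Ber(\berp)^n$ this is clear; for $\T_\e\Planted$, since $\Planted$ is $S_n$-symmetric by assumption and the noise operator $\T_\e$ acts i.i.d.\ on coordinates, the noise commutes with any permutation, so $\T_\e \Planted$ inherits $S_n$-symmetry from $\Planted$. It follows that conditional on $\by = y$, both $\Null$ and $\T_\e\Planted$ are uniform over the same level set, which is precisely the output of $\kappa$ on $y$.

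Alternatively, and perhaps more transparently, I can just compute both sides directly:
\[
    2\,\TV(\Null, \T_\e\Planted) \;=\; \sum_{\bx} \bigl|\Null(\bx) - \T_\e\Planted(\bx)\bigr| \;=\; \sum_y \sum_{\bx : \by(\bx) = y} \frac{|\nu(y) - \T_\e\pi(y)|}{N_y} \;=\; \sum_y |\nu(y) - \T_\e\pi(y)|,
\]
where $N_y$ is the number of $\bx$ mapping to $y$, and where the middle equality uses that by symmetry each $\bx$ in a level set carries mass $\nu(y)/N_y$ under $\Null$ and $\T_\e\pi(y)/N_y$ under $\T_\e\Planted$. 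The right-hand side is $2\,\TV(\nu, \T_\e\pi)$.

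There is no real obstacle; the only thing to verify carefully is that $\T_\e\Planted$ is indeed $S_n$-symmetric when $\Planted$ is, which follows from the fact that applying a permutation before or after the independent, coordinate-wise noise yields the same distribution.
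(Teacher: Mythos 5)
Your proposal is correct; the direct computation you offer as the ``alternative'' route is essentially the paper's own proof, which observes that for symmetric distributions the PMFs (and hence the likelihood ratio $\T_\e\Planted(\bx)/\Null(\bx)$) depend only on $\by$ and then regroups the sum over $\bx$ by level set. Your data-processing framing via the uniform-over-level-set channel is a clean equivalent reformulation, and your explicit remark that $\T_\e\Planted$ inherits $S_n$-symmetry from $\Planted$ (because coordinate-wise i.i.d.\ noise commutes with permutations) fills in a step the paper leaves implicit.
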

\begin{proof}
    Note that any symmetric function $f: \{\pm1\}^n \to \R$ can be uniquely expressed as a function of the sum of coordinates, i.e., $f(x) = g(\sum_{i=1}^n x_i)$ for all $x\in \{\pm1\}^n$ and some function $g: \R \to \R$.
    For symmetric distributions, the probability mass functions (and thus likelihood ratio) are symmetric functions, which means that they only depend on $y$.
    The statement then follows immediately since $\T_\e \Planted(x) / \Null(x) = \T_\e \pi(y) / \nu(y)$.
\end{proof}

Therefore, to prove \Cref{thm:binom_intro}, it is equivalent to upper bound $\TV(\nu, \T_\e \pi)$.

We will also use the following fact that the induced noise operator for $\T_\e \pi$ acts as expected with respect to the Krawtchouk basis.
\begin{fact}
\label{fact:noisy_krawtchouk}
Let $\pi$ be induced by any distribution over $\{\pm 1\}^n$.
For any $\ell \geq 1$ it holds that
$$
\E_{\wt{\rvy} \sim \T_\e \pi} \Kr_\ell(\wt{\rvy}) = (1-\e)^\ell \E_{\rvy \sim \pi} \Kr_\ell(\rvy) \mper
$$
\end{fact}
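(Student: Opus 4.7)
The plan is to reduce the claim to the fact that the coordinate-wise noise operator on $\pmo^n$ acts diagonally on the character basis $\chi_S$ with eigenvalue $(1-\e)^{\abs{S}}$. Since $\pi$ is by assumption the pushforward of some distribution $\Planted$ on $\pmo^n$ under the normalized centered sum $\rvy$ of \eqref{eq:centered-sum}, and $\T_\e \pi$ is the pushforward of $\T_\e \Planted$ under the same map, both sides of the identity are linear functionals of $\Planted$. Hence it suffices to prove the statement conditional on a fixed $x \in \pmo^n$, namely
$$
\E_{\wt{\rvx} \mid x}\bigl[\Kr_\ell(\wt{\rvy})\bigr] = (1-\e)^\ell \Kr_\ell(\rvy(x))\mcom
$$
where $\wt{\rvx}$ is obtained by independently resampling each coordinate from $\Ber(\berp)$ with probability $\e$ and retaining $x_i$ with probability $1-\e$.

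Recall from \Cref{def:krawdaunt} that $\Kr_\ell(\rvy(x)) = \binom{n}{\ell}^{-1/2} s_\ell(x)$ where $s_\ell(x) = \sum_{\abs{S}=\ell}\chi_S(x)$ and $\chi_S(x)=\prod_{i\in S}\phi(x_i)$ with $\phi(t) := \frac{t - (2\berp-1)}{2\sqrt{\berp(1-\berp)}}$. The key elementary identity is $\E_{\rvz\sim \Ber(\berp)}[\phi(\rvz)] = 0$, which holds since $\phi$ is the centered and scaled version of $t\mapsto t-(2\berp-1)$. Therefore, for each coordinate $i \in S$,
$$
\E[\phi(\wt{\rvx}_i) \mid x_i] = (1-\e)\,\phi(x_i) + \e\cdot 0 = (1-\e)\,\phi(x_i)\mper
$$

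By the coordinate-wise independence of the noise, the conditional expectation factorizes over $i\in S$, giving $\E[\chi_S(\wt{\rvx}) \mid x] = (1-\e)^{\abs{S}}\chi_S(x)$. Summing this identity over $\abs{S} = \ell$ and dividing by $\binom{n}{\ell}^{1/2}$ yields the pointwise claim above. Taking the outer expectation over $\rvx \sim \Planted$ and using linearity finishes the proof. There is no serious obstacle here: the only routine verification is the vanishing first moment $\E_{\Ber(\berp)}[\phi]=0$, which is immediate from the definition of $\phi$, and all remaining manipulations are linearity and the independence structure of the noise model in \Cref{def:noise-models}.
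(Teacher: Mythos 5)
Your proposal is correct and follows essentially the same route as the paper's proof: expand $\Kr_\ell$ in the character basis $\chi_S$, condition on $x$, factorize the conditional expectation over coordinates by independence of the noise, and use that the resampled coordinate contributes $\E_{\Ber(\berp)}[\phi]=0$ so each factor picks up exactly $(1-\e)$. No gaps.
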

\begin{proof}
    Let $\rvx \sim \Planted$ and $\rvz \sim \Null$ be independent.
    Let $\wt{\rvx} = \T_\e [\rvx]$, i.e., such that for each $i$ independently $\wt{\rvx}_i = \rvx_i$ with probability $1-\e$ and $\wt{\rvx}_i = \rvz_i$ with probability $\e$.
    Then 
    \begin{align*}
        \Kr_\ell(\wt{\rvy}) = \binom{n}{\ell}^{-1/2} \sum_{S \subseteq [n] \colon \abs{S} = \ell}  \chi_S(\wt{\rvx}) \quad\quad\text{and}\quad\quad\Kr_\ell(\rvy) = \binom{n}{\ell}^{-1/2} \sum_{S \subseteq [n] \colon \abs{S} = \ell}  \chi_S(\rvx) \,.
    \end{align*}
    Thus, by linearity it is enough to show that for every $S \subseteq [n]$ of size $\ell$ it holds that $\E_{\wt{\rvx} \sim \T_\e \Planted} \chi_S(\wt{\rvx}) = (1-\e)^\ell \E_{\rvx \sim \Planted} \chi_S(\rvx)$.
    Since conditioned on $\rvx$ the entries of $\wt{\rvx}$ are independent, it holds that
    \begin{align*}
        \E_{\wt{\rvx} \sim \T_\e \Planted} \chi_S(\wt{\rvx}) = \E_{\rvx \sim\Planted}\E_{\wt{\rvx} \sim \T_\e \Planted }\bracks*{ \chi_S(\wt{\rvx}) \mid \rvx} = \E_{\rvx \sim\Planted}\E_{\wt{\rvx} \sim \T_\e \Planted }\bracks*{ \prod_{j \in S} \frac{\wt{\rvx}_i - (2\gamma-1)}{2\sqrt{\gamma(1-\gamma)}} \mid \rvx} = \E_{\rvx \sim\Planted} \prod_{j \in S} \E_{\wt{\rvx} \sim \T_\e \Planted }\bracks*{ \frac{\wt{\rvx}_i - (2\gamma-1)}{2\sqrt{\gamma(1-\gamma)}} \mid \rvx} 
    \end{align*}
    Now, conditioned on $\rvx$, either $\wt{\rvx}_j = \rvx_j$ or $\wt{\rvx}_j = \rvz_j$.
    In the second case, the inner expectation is 0 by construction.
    Thus, 
    $$
        \E_{\wt{\rvx} \sim \T_\e \Planted} \chi_S(\wt{\rvx}) = \E_{\rvx \sim\Planted} \prod_{j \in S} \parens*{1-\e} \cdot \parens*{ \frac{\rvx_i - (2\gamma-1)}{2\sqrt{\gamma(1-\gamma)}}} = (1-\e)^\ell \E_{\rvx \sim \Planted} \chi_S(\rvx) \mper
    $$

\end{proof}

\subsection{Truncated Planted Distribution}

In this section, we analyze properties of the truncated planted distribution $\ol{\pi}$, i.e., $\pi$ conditioned on $|\by| \leq \tau$ for some threshold $\tau$.
We start with a simple bound which follows directly from lower bounds on the probability mass function of the binomial distribution.

\begin{lemma} \label{lem:chi-squared-of-truncated}
    Let $\ol{\pi}$ be the distribution of $\pi$ conditioned on $|\by| \leq \tau = 2\sqrt{\log n}$.
    Then, $\chisquare{\ol{\pi}}{\nu}\leq O(n^3)$.
\end{lemma}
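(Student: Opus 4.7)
The plan is to apply \Cref{lem:chi-squared-naive-bound}, which reduces the task to producing a pointwise lower bound on $\nu(y)$ over the support of $\ol{\pi}$. Since $\ol{\pi}$ is just $\pi$ conditioned on the event $\{|\by|\le\tau\}$, its support is contained in $\{y : |y|\le\tau\}$ intersected with the lattice of achievable values of $\by$, namely values of the form $y = (s-n\berp)/\sqrt{n\berp(1-\berp)}$ for $s\in\{0,1,\dots,n\}$. These are exactly the points where $\nu$ is supported, so the containment $\supp(\ol{\pi})\subseteq\supp(\nu)$ needed by \Cref{lem:chi-squared-naive-bound} is automatic.

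Next, I would invoke \Cref{lem:binomial-prob} at every such $y$. Since $\tau = 2\sqrt{\log n}$ certainly satisfies $\tau = o(\sqrt{\berp n})$ and $n\berp + y\sqrt{n\berp(1-\berp)}$ is an integer in $\{0,\dots,n\}$ by the lattice observation above, the lemma applies and gives
\[
\nu(y) \;\ge\; \frac{1}{\sqrt{2n}}\exp\!\Paren{-\tfrac{y^2}{2} - O(|y|^3/\sqrt{\berp n})}.
\]
For $|y|\le\tau$, we have $y^2/2\le 2\log n$, so $\exp(-y^2/2)\ge n^{-2}$, while the cubic error term is $O((\log n)^{3/2}/\sqrt{\berp n}) = o(1)$ (absorbed into a constant). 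Thus $\nu(y)\ge \Omega(n^{-5/2})$ uniformly over $\supp(\ol{\pi})$.

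Plugging this into \Cref{lem:chi-squared-naive-bound} yields
\[
\chisquare{\ol{\pi}}{\nu} \;\le\; \frac{1}{\min_{y\in\supp(\ol{\pi})}\nu(y)} \;\le\; O(n^{5/2}) \;\le\; O(n^3),
\]
which is the desired bound (with some slack to spare). There is no real obstacle here: the only thing worth being careful about is that $\tau$ is chosen small enough so that both the hypothesis $|y|=o(\sqrt{\berp n})$ of \Cref{lem:binomial-prob} holds and the exponential factor $e^{-y^2/2}$ does not decay faster than a polynomial in $1/n$. The choice $\tau = 2\sqrt{\log n}$ is designed precisely so that both conditions hold simultaneously, producing only a polynomial blowup in the $\chi^2$-divergence after conditioning.
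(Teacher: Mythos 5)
Your proposal is correct and follows essentially the same route as the paper: reduce via \Cref{lem:chi-squared-naive-bound} to a pointwise lower bound on $\nu(y)$ over $\{|y|\le\tau\}$, invoke \Cref{lem:binomial-prob} to get $\nu(y)\ge\Omega(n^{-5/2})$, and conclude. The paper even states the same intermediate bound $\Omega(n^{-5/2})$ before quoting the (slightly loose) final bound $O(n^3)$, so the slack you note is genuinely present in the source as well.
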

\begin{proof}
    Note that $\supp(\ol{\pi}) \subseteq \supp(\nu)$.
    Thus, by \Cref{lem:chi-squared-naive-bound}, it suffices to lower bound $\nu(y)$ for $|y|\leq \tau$.
    Since $\tau \leq o(\sqrt{\berp n})$, by \Cref{lem:binomial-prob} we have
    \begin{align*}
        \nu(y) = \Pr_{\bx\sim \Ber(\berp)^n} \bracks*{\sum_{i=1}^n \bx_i = (2\berp-1)n + 2\sqrt{n\berp(1-\berp)} \cdot y}
        \geq \frac{1}{\sqrt{2n}} \exp\parens*{-\frac{y^2}{2} - O\parens*{|y|^3/\sqrt{\berp n}}} \mper
    \end{align*}
    For $\tau = 2\sqrt{\log n}$, the above is at least $\Omega(n^{-5/2})$.
    Then, by \Cref{lem:chi-squared-naive-bound}, $\chisquare{\ol{\pi}}{\nu} \leq \frac{1}{\min_{|y|\leq \tau} \nu(y)} \leq O(n^3)$.
\end{proof}

Since $\nu$ is the (centered and normalized) binomial distribution, we know that its moments and tail bounds are close to those of a standard Gaussian (\Cref{lem:binomial-moments}).
We show the following general result: assuming low-degree indistinguishability, any statistic with Gaussian-like moments under $\Null$ also has Gaussian-like tail bounds under $\Planted$.
\begin{lemma}[Gaussian-like tail bounds] \label{lem:tail-bound}
    Suppose $\Null$ and $\Planted$ satisfy $\chisquaretrunc{\Planted}{\Null} \leq \delta$, and suppose $p$ is a multivariate polynomial of degree $d$ such that $\E_{\bx\sim \Null}[p(\bx)^{2k}] \leq B(2k-1)!!$ for some $B \geq 1$ for all $k \leq \frac{D}{2d}$.
    Then, for any $t \leq \sqrt{D/d}$, we have $\Pr_{\bx\sim \Planted}[|p(\bx)| \geq t] \leq O(B) \cdot (\delta + 2^{-t^2/4}) e^{-t^2/4}$.
\end{lemma}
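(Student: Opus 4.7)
The plan is to combine a polynomial Markov inequality with a low-degree transfer of moments from $\Null$ to $\Planted$ supplied by $\chisquaretrunc{\Planted}{\Null}\leq\delta$. The Gaussian-type tail will come from the moment hypothesis under $\Null$, and the low-degree closeness of the two distributions will allow this moment profile to be approximately inherited by $\Planted$.

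Concretely, for an even integer $2k$ to be chosen later, I would first apply Markov to $p^{2k}$:
\[
    \Pr_{\bx\sim\Planted}\bracks*{\abs{p(\bx)}\geq t} \leq \frac{\E_\Planted[p(\bx)^{2k}]}{t^{2k}}.
\]
Next, for any polynomial $q$ of degree at most $D$, expanding $q-\E_\Null[q]$ in an orthonormal basis under $\Null$ and applying Cauchy--Schwarz against the $L^2$-bound $\Var_\Null[L^{\leq D}]\leq\delta$ gives the standard transfer inequality
\[
    \E_\Planted[q] \leq \E_\Null[q] + \sqrt{\delta\cdot\Var_\Null[q]}.
\]
Applying this with $q=p^{2k}$ requires $2kd\leq D$, and using the crude bound $\Var_\Null[p^{2k}]\leq\E_\Null[p^{4k}]$ requires $4k\leq D/d$, so $k\leq D/(4d)$. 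The Gaussian-type moment hypothesis then yields
\[
    \E_\Planted[p^{2k}] \leq B(2k-1)!! + \sqrt{\delta\,B\,(4k-1)!!}.
\]

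Finally, I would choose $k=\ceil{t^2/4}$, which is admissible since $t\leq\sqrt{D/d}$. Standard Stirling estimates give $(2k-1)!!\asymp(2k/e)^{k}$ and $(4k-1)!!\asymp(4k/e)^{2k}$ up to $\poly(k)$ factors, and an elementary computation then produces $(2k-1)!!/t^{2k}\lesssim 2^{-t^2/4}e^{-t^2/4}$ and $\sqrt{(4k-1)!!}/t^{2k}\lesssim e^{-t^2/4}$. Combining with the Markov step yields a bound of the form $O(B)\parens*{2^{-t^2/4}+\sqrt{\delta/B}}\cdot e^{-t^2/4}$.

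The main obstacle is the gap between the $\sqrt{\delta}$ factor produced by Cauchy--Schwarz and the $\delta$ factor in the stated conclusion. Two natural routes to close it are: (i) replace $p^{2k}$ by a more refined low-degree surrogate for the indicator of $\{\abs{p}\geq t\}$ --- e.g., a centered, rescaled polynomial whose $\Null$-variance is itself of order $\E_\Null[p^{2k}]$ rather than $\E_\Null[p^{4k}]$, so that Cauchy--Schwarz produces an extra factor of $\sqrt{\E_\Null[p^{2k}]}$ and upgrades $\sqrt{\delta}$ to $\delta$; or (ii) case-split on whether $\delta\geq 2^{-t^2/4}$, using a trivial tail bound in the large-$\delta$ regime and the moment argument above otherwise. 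Carrying out either refinement while carefully tracking the double-factorial estimates and the $B$-dependence is where the actual technical effort lies.
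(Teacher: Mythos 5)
Your route is the paper's: Markov's inequality on $p^{2k}$ with $k\approx t^2/4$, a transfer of low-degree moments from $\Null$ to $\Planted$, the crude bound $\Var_\Null[p^{2k}]\leq\E_\Null[p^{4k}]$, and Stirling estimates on the double factorials. The degree budget $k\leq D/(4d)$ (so that $p^{4k}$ is still covered by the hypothesis) is also the same; the floor-vs-ceiling choice of $k$ is immaterial.

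The $\sqrt{\delta}$-versus-$\delta$ discrepancy you flag as the ``main obstacle'' is real, but it sits in the lemma statement, not in your argument. By Definition~1.1, $\chisquaretrunc{\Planted}{\Null}\leq\delta$ bounds the \emph{square} of the degree-$D$ advantage, so for any polynomial $q$ of degree $\leq D$ the transfer inequality is exactly the one you wrote, $\E_\Planted[q]-\E_\Null[q]\leq\sqrt{\delta}\cdot\sqrt{\Var_\Null[q]}$. The paper's own proof of this lemma instead writes $\delta\sqrt{\Var_\Null[p^{2k}]}$ at this step --- i.e., it silently drops the square root --- and that is where the bare $\delta$ in the stated conclusion originates. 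The bound the stated hypothesis actually delivers is $O(B)\bigl(\sqrt{\delta}+2^{-t^2/4}\bigr)e^{-t^2/4}$, which is what your computation gives; the version with $\delta$ would follow if the hypothesis were read as a bound on the advantage itself rather than on $\chisquaretrunc{\Planted}{\Null}$.

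Neither of your proposed repairs can close this gap, and you should not search for one. Route (i) cannot improve the power of $\delta$: the constraint $(\E_\Planted[q]-\E_\Null[q])^2\leq\chisquaretrunc{\Planted}{\Null}\cdot\Var_\Null[q]$ holds for \emph{every} degree-$\leq D$ polynomial $q$, so the $\sqrt{\delta}$ on the deviation is intrinsic and no choice of surrogate polynomial changes it. Route (ii) also fails: in the regime $\delta\leq 2^{-t^2/4}$, which is the interesting one, $\sqrt{\delta}\gg\delta$ and a case split buys nothing. Carry $\sqrt{\delta}$ through (it propagates into Lemma~4.6 and Theorem~1.4 as well) and your proof is complete and correct.
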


\begin{proof}
    Let $k = \floor{t^2/4} \leq D/4d$.
    If $t\leq 2$, the statement is trivial, so we may assume $k \geq 1$.
    Let $\eta \coloneqq \Pr_{\bx\sim\Planted}[|p(\bx)| \geq t]$.
    Consider the degree-$2dk$ polynomial $p^{2k}$.
    First, we have
    \begin{align*}
        \E_{\bx\sim\Planted}[p(\bx)^{2k}] \geq \eta\cdot \E_{\bx\sim\Planted} \bracks*{p(\bx)^{2k} \mid |p(\bx)|\geq t}
        \geq \eta t^{2k} \mper
    \end{align*}
    On the other hand, $\chisquaretrunc{\Planted}{\Null} \leq \delta$ and $4dk \leq D$ imply that
    \begin{equation*}
        \E_{\bx\sim\Planted}[p(\bx)^{2k}] \leq \E_{\bx\sim\Null}[p(\bx)^{2k}] + \delta \sqrt{\Var_{\bx\sim\Null}[p(\bx)^{2k}]}
        \leq \E_{\bx\sim\Null}[p(\bx)^{2k}] + \delta \sqrt{\E_{\bx\sim\Null}[p(\bx)^{4k}]} \mper
    \end{equation*}
    By assumption, we have $\E_{\Null}[p^{2k}] \leq B(2k-1)!!$, which is at most $B\cdot \sqrt{2} (\frac{2k}{e})^k$ by Stirling's approximation.
    Similarly, we have $\E_{\Null}[p^{4k}] \leq B \cdot \sqrt{2} (\frac{4k}{e})^{2k}$ since $2k \leq D/2d$.
    Thus,
    \begin{align*}
        \eta \leq t^{-2k} \E_{\bx\sim\Planted}[p(\bx)^{2k}]
        \leq t^{-2k} \cdot 2B\parens*{\parens*{\frac{2k}{e}}^k + \delta \parens*{\frac{4k}{e}}^k }
        \leq 2B(\delta + 2^{-k}) \parens*{\frac{4k}{et^2}}^k \mper
    \end{align*}
    As $k = \floor{t^2/4} \geq t^2/4-1$, the above is $O(B) \cdot (\delta + 2^{-t^2/4})  e^{-t^2/4}$.
\end{proof}

Next, we study the behavior of the expectation of functions under truncation.
We prove the following general bound assuming that the distribution has exponential tail bounds and that the function is bounded.



\begin{lemma}[Bounded functions under truncation]
\label{lem:function-truncated}
    Let $\tau > 0$ and $\alpha > 0$.
    Let $\pi$ be a distribution such that $\Pr_{\by\sim\pi}[|\by| \geq \tau] \leq \alpha e^{-\tau^2/4}$.
    Then, for any $f: \R \to \R$,
    \begin{align*}
        \abs{\E_{\by\sim\pi}[f(\by) \cdot \Ind(|\by|\leq \tau)]}
        \leq \abs{\E_{\by\sim\pi}[f(\by)]} + \sqrt{\alpha e^{-\tau^2/4} \E_{\by\sim\pi}[f(\by)^2]} \mper
        \numberthis \label{eq:small-i-bound}
    \end{align*}
    Moreover, suppose $T \geq \tau \geq 2$ and $\beta > 0$ such that $\Pr_{\by\sim\pi}[|\by| \geq t] \leq \alpha e^{-t^2/4}$ for all $t\in [\tau, T]$ and $|f(y)| \leq \beta e^{y^2/4}$ for all $|y| \in [\tau, T]$, then
    \begin{align*}
        \abs{\E_{\by\sim\pi}[f(\by) \cdot \Ind(|\by|\leq \tau)]}
        \leq \abs{\E_{\by\sim\pi}[f(\by)]} + \frac{1}{4} \alpha \beta T^2 + \sqrt{\alpha e^{-T^2/4} \E_{\by\sim\pi}[f(\by)^2]} \mper
        \numberthis \label{eq:large-i-bound}
    \end{align*}
\end{lemma}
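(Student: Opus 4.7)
The plan is to prove both parts of the lemma starting from the identity
\[
    \E_{\by\sim\pi}[f(\by)\,\Ind(|\by|\leq \tau)] \;=\; \E_{\by\sim\pi}[f(\by)] \;-\; \E_{\by\sim\pi}[f(\by)\,\Ind(|\by|>\tau)],
\]
and then applying the triangle inequality. This reduces the task to upper bounding the tail contribution $|\E_{\pi}[f(\by)\,\Ind(|\by|>\tau)]|$ under the respective hypotheses. Both parts of the lemma then fall out by plugging in different bounds on this single quantity.

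\textbf{First inequality \eqref{eq:small-i-bound}.} Here I would apply Cauchy--Schwarz directly:
\[
    \abs{\E_{\pi}[f(\by)\,\Ind(|\by|>\tau)]} \;\leq\; \sqrt{\E_{\pi}[f(\by)^2]\cdot\Pr_{\pi}[|\by|>\tau]} \;\leq\; \sqrt{\alpha e^{-\tau^2/4}\,\E_{\pi}[f(\by)^2]},
\]
using the hypothesis $\Pr_{\pi}[|\by|\geq \tau] \leq \alpha e^{-\tau^2/4}$. Combined with the triangle inequality, this immediately yields \eqref{eq:small-i-bound}.

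\textbf{Second inequality \eqref{eq:large-i-bound}.} The idea is to split the tail $\{|\by|>\tau\}$ into the bounded shell $\{\tau<|\by|\leq T\}$ and the outer region $\{|\by|>T\}$. For the outer region I would again use Cauchy--Schwarz together with $\Pr_{\pi}[|\by|>T]\leq \alpha e^{-T^2/4}$ to get the term $\sqrt{\alpha e^{-T^2/4}\,\E_{\pi}[f^2]}$. For the bounded shell, the pointwise bound $|f(y)|\leq \beta e^{y^2/4}$ gives
\[
    \abs{\E_{\pi}[f(\by)\,\Ind(\tau<|\by|\leq T)]} \;\leq\; \beta\,\E_{\pi}[e^{\by^2/4}\,\Ind(\tau<|\by|\leq T)].
\]
To control the right-hand side, I would apply integration by parts with $F(t)=\Pr_{\pi}[|\by|\leq t]$:
\[
    \E_{\pi}[e^{\by^2/4}\,\Ind(\tau<|\by|\leq T)] \;=\; \int_{\tau}^{T} e^{t^2/4}\,dF(t) \;=\; -\bigl[e^{t^2/4}\Pr_{\pi}[|\by|>t]\bigr]_{\tau}^{T} + \int_{\tau}^{T} \tfrac{t}{2}\,e^{t^2/4}\,\Pr_{\pi}[|\by|>t]\,dt.
\]
Using $\Pr_{\pi}[|\by|>t]\leq \alpha e^{-t^2/4}$ on $[\tau,T]$, the boundary term at $\tau$ contributes at most $\alpha$, the boundary term at $T$ is non-positive, and the integral becomes $\int_{\tau}^{T}\tfrac{\alpha t}{2}\,dt = \alpha(T^2-\tau^2)/4$. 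Bounding $\alpha + \alpha(T^2-\tau^2)/4$ by $\tfrac{1}{4}\alpha T^2$ using $T\geq \tau\geq 2$ (so $1\leq \tau^2/4$) yields the middle term of \eqref{eq:large-i-bound}. Assembling the three pieces via the triangle inequality completes the proof.

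\textbf{Main obstacle.} The argument is essentially routine; the only delicate step is the integration-by-parts estimate on $\E_{\pi}[e^{\by^2/4}\,\Ind(\tau<|\by|\leq T)]$, where one must exploit that the Gaussian-type growth of $e^{y^2/4}$ is exactly cancelled by the Gaussian-type decay of the tail $\alpha e^{-t^2/4}$, leaving only a polynomial factor in $T$. Keeping track of the constants to match the claimed $\tfrac{1}{4}\alpha\beta T^2$ requires the mild assumption $\tau\geq 2$ to absorb the boundary contribution into the integral term.
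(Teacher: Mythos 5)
Your proof is correct and follows essentially the same route as the paper's: you use the same decomposition of $\E_\pi[f]$ into the truncated piece, a shell $\{\tau < |\by| \le T\}$, and an outer tail $\{|\by| > T\}$, with Cauchy--Schwarz handling the outer tail and the pointwise bound plus the Gaussian-decay hypothesis handling the shell. The only cosmetic difference is that you control $\E_\pi[e^{\by^2/4}\,\Ind(\tau<|\by|\le T)]$ by Stieltjes integration by parts whereas the paper uses the equivalent layer-cake formula $\E[\bz]=\int_0^\infty\Pr[\bz\ge\theta]\,d\theta$; both yield the boundary contribution $\alpha$ plus the integral $\alpha(T^2-\tau^2)/4$, absorbed into $\tfrac14\alpha T^2$ via $\tau\ge 2$.
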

\begin{proof}
    Let $\calA$ be the event that $|\by| > T$, and let $\calB$ be the event that $\tau < \by \leq T$.
    Then,
    \begin{align*}
        \E_{\pi}[f]
        = \E_{\pi} \bracks*{ f \cdot \Ind(|\by| \leq \tau)} + \E_{\pi} \bracks*{ f \cdot \Ind(\calB) } + \E_{\pi} \bracks*{ f \cdot \Ind(\calA)} \mper
        \numberthis \label{eq:Ki-3-terms}
    \end{align*}
    To establish \Cref{eq:small-i-bound}, we set $T = \tau$ and thus the second term is $0$.
    For the third term, by Cauchy-Schwarz, we have $\E_{\pi}[f \cdot \Ind(\calA)] \leq \sqrt{\E_{\pi}[f^2] \cdot \Pr_{\pi}[\calA]} \leq \sqrt{\alpha e^{-T^2/4} \E_{\pi}[f^2]}$.
    
    To establish \Cref{eq:large-i-bound}, we consider $T > \tau$ and bound the second term of \Cref{eq:Ki-3-terms} using the assumption that $|f(y)| \leq \beta e^{y^2/4}$ for $|y| \in [\tau, T]$:
    \begin{align*}
        \abs{ \E_{\pi} \bracks*{ f \cdot \Ind(\calB) }}
        \leq \E_{\pi} \bracks*{ |f| \cdot \Ind(\calB) }
        \leq \beta \cdot \E_{\pi} \bracks*{e^{\by^2/4} \cdot \Ind(\calB)} \mper
    \end{align*}
    Denote $\bz \coloneqq e^{\by^2/4} \cdot \Ind(\calB)$, which is a non-negative random variable, and $\bz = 0$ unless $\tau < \by \leq T$.
    Thus, for any $\theta > 0$,
    \begin{equation*}
        \Pr_{\pi}[\bz \geq \theta] \leq
        \begin{cases}
            \Pr_{\pi}[|\by| \geq \tau] \mcom & 0 < \theta < e^{\tau^2/4} \mper \\
            \Pr_{\pi}[|\by| \geq 2\sqrt{\log \theta}] \mcom  & e^{\tau^2/4} \leq \theta \leq e^{T^2/4} \mper \\
            0 \mcom & \theta > e^{T^2/4} \mper
        \end{cases}
    \end{equation*}
    For the first two cases, we have $\Pr[|\by| \geq 2\sqrt{\log \theta}] \leq \alpha e^{-\log \theta} = \alpha \theta^{-1}$ by assumption (since $2\sqrt{\log\theta} \leq T$).
    Thus,
    \begin{align*}
        \E_{\pi}[\bz] = \int_0^{\infty} \Pr_{\pi}[\bz \geq \theta] \ d\theta
        &\leq \alpha e^{-\tau^2/4} \cdot e^{\tau^2/4} +  \int_{e^{\tau^2/4}}^{e^{T^2/4}} \alpha \theta^{-1} \ d\theta 
        \leq \alpha + \frac{\alpha}{4} (T^2 - \tau^2)
        \leq \frac{1}{4} \alpha T^2 \mper
    \end{align*}
    Then, $|\E_{\pi}[f \cdot \Ind(\calB)]| \leq \frac{1}{4}\alpha \beta T^2$.
    Plugging into \Cref{eq:Ki-3-terms} completes the proof.
\end{proof}

As a corollary, we get the following.

\begin{lemma} 
\label{lem:basis-truncated}
    Let $\delta \in [n^{-1/2}, 1]$,
    $\tau = 2\sqrt{\log n}$, and $2\tau^2 \leq D \leq o(\berp n)^{2/3}$.
    Suppose $\nu$ and $\pi$ satisfy $\chisquaretrunc{\Planted}{\Null} \leq \delta$.
    Then, there is a constant $C_{\berp} > 1$ (depending only on $\berp$) such that $|\E_{\pi}[\Kr_\ell(\by) \cdot \Ind(|\by|\leq \tau)]| \leq O(\delta \ell^{5/4})$ for all $1 \leq \ell \leq D/C_{\berp}$.
\end{lemma}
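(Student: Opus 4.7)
The plan is to invoke \Cref{lem:function-truncated} (the sharper form \Cref{eq:large-i-bound}) with $f=\Kr_\ell$ and a threshold $T\ge\tau$ to be chosen, which reduces the claim to assembling four estimates: a bound on $|\E_\pi[\Kr_\ell]|$, a tail coefficient $\alpha$ satisfying $\Pr_\pi[|\by|\ge t]\le\alpha e^{-t^2/4}$ on $[\tau,T]$, a Gaussian-type pointwise bound $|\Kr_\ell(y)|\le\beta e^{y^2/4}$ on the same range, and an upper bound on $\E_\pi[\Kr_\ell^2]$ to control the beyond-$T$ tail term.

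For the first, \Cref{lem:chi-squared-fourier} expanded in the Krawtchouk basis (which is orthonormal for $\nu$) gives $\sum_{\ell=1}^D(\E_\pi[\Kr_\ell])^2=\chisquaretrunc{\pi}{\nu}\le\delta$, so $|\E_\pi[\Kr_\ell]|\le\sqrt\delta$. For the second, I apply \Cref{lem:tail-bound} to the degree-$1$ statistic $p(x)=\by$: the near-Gaussian moment hypothesis holds with $B=O(1)$ via \Cref{lem:binomial-moments} in the appropriate range of $k$, and the hypotheses $\delta\ge n^{-1/2}$ and $\tau=2\sqrt{\log n}$ imply $2^{-t^2/4}\le 2^{-\tau^2/4}=n^{-\ln 2}\le\delta$ for every $t\ge\tau$, so the lemma yields $\Pr_\pi[|\by|\ge t]\le O(\delta)\,e^{-t^2/4}$, i.e.\ $\alpha=O(\delta)$. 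The third estimate is just \Cref{lem:krawtchouk-bound}, giving $\beta=O(\ell^{1/4})$ as long as $T\le o(n^{1/6})$. For the fourth, applying the degree-$D$ LDLR bound to the polynomial $\Kr_\ell^2$ (of degree $2\ell\le D$, which is allowed once $C_\berp\ge 2$) and using Boolean hypercontractivity (\Cref{lem:hypercontractivity}) to control $\Var_\nu[\Kr_\ell^2]\le\E_\nu[\Kr_\ell^4]\le C_{\mathrm{hyp}}^\ell$, I obtain $\E_\pi[\Kr_\ell^2]\le 1+\sqrt\delta\,C_{\mathrm{hyp}}^{\ell/2}=O(C_{\mathrm{hyp}}^\ell)$. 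The balancing choice of the threshold is $T^2=c_\berp\cdot\ell$ with $c_\berp$ a $\berp$-dependent constant satisfying $c_\berp\ge 4\log C_{\mathrm{hyp}}+O(1)$, so that $e^{-T^2/4}\E_\pi[\Kr_\ell^2]=O(1)$; choosing $C_\berp$ sufficiently large (depending only on $\berp$) and restricting to $\ell\le D/C_\berp$ simultaneously ensures $T\le\sqrt D$ (valid range for the tail bound) and $T\le o(n^{1/6})$ (valid range for the Krawtchouk bound). Plugging into \Cref{eq:large-i-bound}:
\[
|\E_\pi[\Kr_\ell\,\Ind(|\by|\le\tau)]|\le |\E_\pi[\Kr_\ell]|+\tfrac14\alpha\beta T^2+\sqrt{\alpha e^{-T^2/4}\E_\pi[\Kr_\ell^2]}=O(\sqrt\delta)+O(\delta\ell^{5/4})+O(\sqrt\delta),
\]
which yields the stated conclusion once the $\sqrt\delta$ leading terms are absorbed into the target (formally, with $\delta$ interpreted in the advantage normalization $\sqrt{\chisquaretrunc{\pi}{\nu}}$).

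The main obstacle is the tight coupling in the choice of $T$: the second moment $\E_\pi[\Kr_\ell^2]$ is forced by hypercontractivity to grow exponentially in $\ell$, so $T^2$ must be taken linear in $\ell$ to absorb this via $e^{-T^2/4}$; this in turn fixes the annular term at $\tfrac14\alpha\beta T^2\asymp\delta\cdot\ell^{1/4}\cdot\ell=\delta\ell^{5/4}$, which is exactly the exponent $5/4$ claimed in the conclusion. Ensuring that this $T$ stays inside both the tail regime $T\le\sqrt D$ and the Krawtchouk regime $T\le o(n^{1/6})$ while still being large enough to kill the hypercontractivity factor $C_{\mathrm{hyp}}^\ell$ is precisely what forces the restriction $\ell\le D/C_\berp$ with $C_\berp$ depending only on $\berp$.
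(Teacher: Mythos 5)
Your proposal follows the same route as the paper's proof almost step for step: the same four estimates (LDLR bound on $\E_\pi[\Kr_\ell]$, tail coefficient via \cref{lem:tail-bound}, Krawtchouk pointwise bound via \cref{lem:krawtchouk-bound}, hypercontractivity bound on $\E_\pi[\Kr_\ell^2]$), plugged into \cref{lem:function-truncated} with a threshold $T^2\asymp\ell$. Two remarks.

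First, there is a genuine gap in the uniform choice $T^2=c_\berp\ell$: equation \cref{eq:large-i-bound} requires $T\geq\tau$, and for $\ell<\tau^2/c_\berp\asymp\log n/c_\berp$ your $T$ drops below $\tau$, so that inequality cannot be invoked. The paper resolves this by splitting into two regimes --- $\ell\lesssim\tau^2/C_\berp'$ handled with \cref{eq:small-i-bound} (i.e.\ $T=\tau$, no annular term), and $\ell\gtrsim\tau^2/C_\berp'$ handled with \cref{eq:large-i-bound} and $T=\sqrt{8C_\berp'\ell}$. Your argument is easily repaired by taking $T=\max\{\tau,\sqrt{c_\berp\ell}\}$, but as written it silently assumes $T\geq\tau$ and the small-$\ell$ regime is not covered. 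Note that in the small-$\ell$ regime one still needs to check that $e^{-\tau^2/4}\E_\pi[\Kr_\ell^2]$ is controlled, which uses $C_\berp'\ell\leq\tau^2/8$ so that $e^{C_\berp'\ell}\leq\sqrt n$ and thus $e^{-\tau^2/4}\E_\pi[\Kr_\ell^2]\leq O(n^{-1/2})$; the final term then comes out to $O(\sqrt\delta\,n^{-1/4})\leq O(\delta)$ using $\delta\geq n^{-1/2}$.

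Second, you are right to flag the $\delta$ vs.\ $\sqrt\delta$ normalization: with $\chisquaretrunc{\Planted}{\Null}\leq\delta$ literally, orthonormality of the Krawtchouk basis gives $|\E_\pi[\Kr_\ell]|\leq\sqrt\delta$, not $\delta$. The paper's final line in this proof asserts $|\E_\pi[\Kr_\ell]|\leq\delta$, which only makes sense if $\delta$ is implicitly the \emph{advantage} (i.e.\ $\chisquaretrunc{\Planted}{\Null}\leq\delta^2$). That convention is used consistently throughout \cref{lem:tail-bound} and the proof of \cref{thm:binom_intro} (e.g.\ $\chi^2(\T_\e\ol\pi\|\nu)\lesssim\delta^2/\eps^4$ giving $\TV\lesssim\delta/\eps^2$), so your reinterpretation is the correct reading, but you were right that as literally stated the claim carries an extra square root.
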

\begin{proof}
    We first establish a tail bound using \Cref{lem:tail-bound}.
    By \Cref{lem:binomial-moments}, we have that $\E_{\nu}[\by^{2k}] = (1+o(1)) (2k-1)!!$ for all $k \leq o(\berp n)^{1/3}$.
    Thus, applying \Cref{lem:tail-bound} with $f(x)$ as in \Cref{eq:centered-sum} (a degree-$1$ polynomial), we have that for all $\tau \leq t \leq \sqrt{D}$,
    \begin{align*}
        \Pr_{\pi}[|\by| > t] \leq O(\delta + 2^{-t^2/4}) \cdot e^{-t^2/4}
        \leq O(\delta) \cdot e^{-t^2/4} \mcom
    \end{align*}
    since we assume that $\delta \geq \Omega(n^{-1/2}) \geq 2^{-\tau^2/4}$.
    In particular, $\Pr_{\pi}[|\by| > \tau] \leq O(\delta) \cdot n^{-1}$.

    Next, we bound $\E_{\pi}[\Kr_\ell^2]$.
    $\chisquaretrunc{\Planted}{\Null} \leq \delta$ implies that
    \begin{align*}
        \E_{\pi}[\Kr_\ell^2] \leq \E_{\nu}[\Kr_\ell^2] + \delta \sqrt{\E_{\nu}[\Kr_\ell^4]}
        \leq 1 + \delta e^{C_{\berp}' \ell} \mcom
    \end{align*}
    by hypercontractivity (\Cref{lem:hypercontractivity}) for some constant $C_{\berp}'$.

    For $\ell \leq \tau^2/ 8C_{\berp}'$, we can directly use the first bound from \Cref{lem:function-truncated} (with $\alpha = O(\delta)$):
    \begin{align*}
        \abs{\E_{\by\sim\pi}[\Kr_\ell(\by) \cdot \Ind(|\by|\leq \tau)]}
        &\leq \abs{\E_{\by\sim\pi}[\Kr_\ell(\by)]} + \sqrt{O(\delta) e^{-\tau^2/4} (1+ \delta e^{\tau^2/8})} \\
        &\leq \abs{\E_{\by\sim\pi}[\Kr_\ell(\by)]} + O(\sqrt{\delta} n^{-1/2}) + O(\delta n^{-1/4}) \mper
    \end{align*}

    For $\tau^2 / 8C_{\berp}' \leq \ell \leq D/8C_{\berp}'$, we set $T = \sqrt{8C_{\berp}' \ell} \leq \sqrt{D}$ and apply \Cref{lem:function-truncated} with $\alpha = O(\delta)$ and $\beta = O(\ell^{1/4})$ (since $|\Kr_\ell(y)| \leq O(\ell^{1/4}) \cdot e^{y^2/4}$ by \Cref{lem:krawtchouk-bound}):
    \begin{align*}
        \abs{\E_{\by\sim\pi}[\Kr_\ell(\by) \cdot \Ind(|\by|\leq \tau)]}
        &\leq \abs{\E_{\by\sim\pi}[\Kr_\ell(\by)]} + O(\delta \ell^{5/4}) + \sqrt{O(\delta) e^{-2C_{\berp}' \ell} (1+ \delta e^{C_{\berp}' \ell})} \\
        &\leq \abs{\E_{\by\sim\pi}[\Kr_\ell(\by)]} + O(\delta \ell^{5/4}) + O(\delta n^{-1/2})\mper
    \end{align*}
    Finally, $\chisquaretrunc{\Planted}{\Null} \leq \delta$ implies that $\abs{\E_{\by\sim\pi}[\Kr_\ell(\by)]} \leq \delta$, thus completing the proof.
\end{proof}

\subsection{Proof of \texorpdfstring{\Cref{thm:binom_intro}}{Theorem~\ref{thm:binom_intro}}}

\begin{proof}[Proof of \Cref{thm:binom_intro}]
    For our assumption that $\chisquaretrunc{\Planted}{\Null} \leq \delta$, we may assume that $\delta \geq 1/\sqrt{n}$ since our target bound is $O(\delta+1/\sqrt{n})/\eps^2$.
    Let $\nu, \pi$ be the distributions of $\by = \frac{1}{2\sqrt{n\berp(1-\berp)}} \sum_{i=1}^n (\bx_i - (2\berp-1)) $ when $\bx$ is sampled from $\Null$ (null) and $\Planted$ (planted) respectively.
    By \Cref{lem:pass-to-y}, we can equivalently upper bound $\TV(\nu, \T_\e \pi)$.

    Let $\tau = 2\sqrt{\log n}$.
    We first consider $\ol{\pi}$, the distribution of $\pi$ conditioned on $|\by| \leq \tau$.
    Since $D \gg \tau^2$ and the moments of $\nu$ match those of a standard Gaussian (\Cref{lem:binomial-moments}), by \Cref{lem:tail-bound}, we have that $\Pr_{\by\sim\pi}[|\by| > \tau] \leq O(\delta + 2^{-\tau^2/4}) \cdot e^{-\tau^2/4} \leq O(\delta n^{-1})$.
    Therefore, by the data processing inequality (\Cref{fact:data-processing}),
    \begin{align*}
        \TV(\T_\e \pi, \T_\e \ol{\pi}) \leq \TV(\pi, \ol{\pi}) \leq O(\delta n^{-1}) \mper
        \numberthis \label{eq:noised-planted-vs-truncated}
    \end{align*}
    Thus, it suffices to bound $\TV(\nu, \T_\e \ol{\pi})$, and we will do so by bounding $\chi^2( \T_\e \ol{\pi} \| \nu)$ and using \Cref{fact:tv-chi-squared}.
    By \Cref{lem:chi-squared-fourier} (with $n = 1$) and \cref{fact:noisy_krawtchouk},
    \begin{align*}
        \chi^2(\T_\e \ol{\pi}\| \nu)
        = \sum_{\ell \geq 1} \E_{\by \sim \T_\e \ol{\pi}}[\Kr_\ell(\by)]^2
        = \sum_{\ell\geq 1} (1-\eps)^\ell \E_{\by\sim \ol{\pi}}[\Kr_\ell(\by)]^2 \mper
        \numberthis \label{eq:noised-truncated-to-null}
    \end{align*}
    Next, we split the summation into $\ell \leq T$ and $\ell > T$ for $T = \frac{4}{\eps} \log n$.

    For $\ell\leq T$, we use the assumption that $\chisquaretrunc{\Planted}{\Null} \leq \delta$.
    By \Cref{lem:basis-truncated}, we have that $|\E_{\ol{\pi}}[\Kr_\ell]| \leq O(\delta \ell^{5/4})$ for all $\ell \leq T$.
    Here, we need $D \geq C_{\berp} T$, where $C_{\berp}$ is the constant in \Cref{lem:basis-truncated}.

    For $\ell > T$, we use the bound $\chi^2(\ol{\pi}\|\nu) \leq O(n^3)$ from \Cref{lem:chi-squared-of-truncated} (since $\tau = 2\log n$) and the fact that $\chi^2(\ol{\pi}\|\nu) = \sum_{\ell\geq 1} \E_{\ol{\pi}}[\Kr_\ell]^2$ by \Cref{lem:chi-squared-fourier}.
    Thus, we can upper bound \Cref{eq:noised-truncated-to-null} as follows:
    \begin{align*}
        \chi^2(\T_\e \ol{\pi}\| \nu)
        &\leq \sum_{1 \leq \ell \leq T} (1-\eps)^\ell \cdot O_{\berp}(\delta \ell^{5/4})^2 + (1-\eps)^T \sum_{\ell > T} \E_{\ol{\pi}}[\Kr_\ell]^2 \\
        &\leq O_{\berp}(\delta^2) \sum_{1\leq \ell \leq T} \ell^3 e^{-\ell\eps}  + e^{-\eps T} \cdot \chi^2(\ol{\pi}\|\nu) \\
        &\leq O_{\berp}(\delta^2) \sum_{1\leq \ell \leq T} \ell^3 e^{-\ell\eps} + O(1/n) \mper
    \end{align*}
    The function $z^3 e^{-\eps z}$ is maximized at $3/\eps$.
    Thus, $\sum_{\ell=1}^T \ell^3 e^{-\ell\eps} \leq O(1/\eps^4) + \int_{1}^T z^3 e^{-\eps z} dz \leq O(1/\eps^4)$.
    Then, since $\TV(\nu, \T_\e \ol{\pi}) \leq \frac{1}{2} \sqrt{\chi^2(\T_\e \ol{\pi}\|\nu)}$ (\Cref{fact:tv-chi-squared}), we get
    \begin{align*}
        \TV(\nu, \T_\e \ol{\pi})
        \leq O_{\berp}(\delta/\eps^{2} + 1/\sqrt{n}) \mper
    \end{align*}
    Finally, by \Cref{eq:noised-planted-vs-truncated} and the triangle inequality, we have $\TV(\nu, \T_\e \pi) \leq \TV(\nu, \T_\e \ol{\pi}) + \TV(\T_\e \pi, \T_\e \ol{\pi}) \leq O_{\berp}(\delta + 1/\sqrt{n}) / \eps^2$, completing the proof.
\end{proof}

\section{Gaussian Space I: Low-Degree Symmetric Polynomials}
\label{sec:tv_closeness}
\parhead{Notation.}
In this section, let $\Null = N(0,\I_n)$ and $\Planted$ be a distribution over $\R^n$.
We will use $\delta$ to denote $\chisquaretrunc{\Planted}{\Null}$.

The main goal of this section is to prove that the distributions over vectors $\NullShadow \coloneqq F_k(\Null)$ and $\PlantedShadow \coloneqq F_k(\OU_\e \Planted)$ are close in total variation distance when $\delta$ is sufficiently small.
Recall that $F_k(\cdot) = (\wt{h}_i(\cdot))_{i=1}^k$ is the vector valued polynomial, where $\wt{h}_i(x) = \tfrac 1 {\sqrt{n}} \sum_{j=1}^n h_i(x_j)$ for the $i$-th normalized Hermite polynomial $h_i(\cdot)$.
In particular, we will show the following theorem.
\begin{theorem}[Formal version of \Cref{thm:vector-main}]     \label{thm:main_technical}
    Let $0 < \e \leq 1, \delta > 0$ and $n, k, D \in \N$.
    Let $\Planted, \Null$ be such that $\chisquaretrunc{\Planted}{\Null} \leq \delta$.
    Further, assume that for a sufficiently large constant $C$:
    \[
        k\le \frac{\e \log (n)}{C\log \log (n/\e)},\qquad
        D \geq C \cdot \frac{k^2 \log(k) \cdot  \log(1/\e)}{\e}\mper
    \]
    Then,
    \[
        \dtv(\NullShadow,\PlantedShadow) \leq \exp\parens*{ -\Omega\parens*{ \frac{\eps D}{k} } }+ O(\delta)^{\Omega(1)} \cdot k^{k/2} + n^{-1/2} \,.
    \]
\end{theorem}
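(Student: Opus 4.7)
The plan is to upper bound $\dtv(\NullShadow, \PlantedShadow)$ by establishing \emph{pointwise} closeness of the two densities via Fourier inversion, in the spirit of local central limit theorems. Writing
\[
\dtv(\NullShadow, \PlantedShadow) = \tfrac{1}{2}\int_{\R^k}|\pi_{\NullShadow}(x) - \pi_{\PlantedShadow}(x)|\,dx,
\]
I would split the spatial integral into a bulk $\{\|x\| \leq R\}$ and a tail $\{\|x\| > R\}$ for $R$ polylogarithmic in $n$. The tail is controlled directly: each coordinate $(F_k)_\ell(\rvg) = n^{-1/2}\sum_j h_\ell(\rvg_j)$ is a normalized sum of $n$ i.i.d.\ unit-variance random variables and is therefore sub-Gaussian under $\Null$; Gaussian hypercontractivity (\Cref{fact:gauss_hypercontractivity}) combined with the LDLR bound transfers this concentration to $\OU_\e \Planted$ up to additive error $O(\sqrt{\delta})$ via a tail-transfer argument analogous to \Cref{lem:tail-bound}.

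For the bulk, I would use the Fourier inversion bound
\[
\|\pi_{\NullShadow} - \pi_{\PlantedShadow}\|_\infty \leq (2\pi)^{-k}\int_{\R^k}|\wh{\pi}_{\NullShadow}(\xi) - \wh{\pi}_{\PlantedShadow}(\xi)|\,d\xi
\]
and split the frequency integral at $\|\xi\| = R'$ where $R'^2 \asymp T := D/k$. In the low-frequency regime, I would Taylor-expand each characteristic function to order $T$. The first $T$ Taylor coefficients each involve moments of $\iprod{\xi, F_k(\cdot)}$ of polynomial degree at most $kT \leq D$, and therefore match between $\Null$ and $\OU_\e \Planted$ up to $O(\sqrt{\delta})$ times a null moment by Cauchy--Schwarz applied to the LDLR. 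The Taylor remainder is controlled using the sub-Gaussian-type moment bound $\E[\iprod{\xi, F_k(\rvg)}^{2T}] \lesssim (T\|\xi\|^2)^T$, which holds for $T \ll n$ because $\iprod{\xi, F_k(\rvg)}$ is a normalized sum of $n$ i.i.d.\ bounded-moment variables. Integrating the resulting pointwise bound over the ball of radius $R'$ in $\R^k$ contributes the $O(\delta)^{\Omega(1)} k^{k/2}$ term.

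For large frequencies the noise operator $\OU_\e$ must do the work. Under $\Null$, the characteristic function factorizes as $\prod_{j=1}^n \E_{\rvg_j}[\exp(i n^{-1/2}\sum_\ell \xi_\ell h_\ell(\rvg_j))]$, and each factor is at most $\exp(-\Omega(\|\xi\|^2/n))$ in the sub-Gaussian regime $\|\xi\| \lesssim \sqrt{n}$, giving overall decay $\exp(-\Omega(\|\xi\|^2))$. For $\PlantedShadow$, I would pass to a regularized surrogate by conditioning $\rvy \sim \Planted$ on a ``good'' event $\calE$ on which a constant fraction of coordinates $j$ satisfy $\Var_{\rvg_j}\bigl[\sum_\ell \xi_\ell h_\ell(\sqrt{1-\e}\,\rvy_j + \sqrt{\e}\,\rvg_j)\bigr] \geq \Omega(\e\|\xi\|^2/n)$; on $\calE$ the fresh noise Gaussian restores independence across $j$ and yields $\exp(-\Omega(\e\|\xi\|^2))$-decay of the characteristic function. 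Integrating over $\|\xi\| > R'$ then contributes $\exp(-\Omega(\e R'^2)) = \exp(-\Omega(\e D / k))$, and the cost of the triangle inequality for removing the conditioning is $\Pr[\calE^c]$, which I would bound by $O(n^{-1/2})$ via Carbery--Wright anti-concentration (\Cref{fact:carbery-wright}) applied under $\Null$ and transferred to $\Planted$ through the LDLR.

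The main obstacle is constructing the event $\calE$ and proving the uniform conditional variance lower bound along many coordinates. Since coordinates of $\rvy$ are not independent under $\Planted$, one cannot deduce this from a direct concentration statement; instead, I would apply coordinate-wise Carbery--Wright to the degree-$k$ polynomial $\rvy_j \mapsto \sum_\ell \xi_\ell h_\ell(\sqrt{1-\e}\rvy_j + \sqrt{\e}\,\cdot)$ under the Gaussian $\Null$, take a union bound over a suitably fine net of frequencies, and transfer to $\Planted$ using the LDLR tail bound. Assembling the bulk-tail, low-frequency, and high-frequency contributions and optimizing $R$ and $R'$ yields the claimed estimate $\exp(-\Omega(\eps D / k)) + O(\delta)^{\Omega(1)} k^{k/2} + n^{-1/2}$.
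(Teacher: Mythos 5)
Your skeleton matches the paper's: split the spatial integral into bulk and tail, bound the bulk pointwise by Fourier inversion, split frequency space into a low regime handled by Taylor expansion plus moment matching (with Cauchy--Schwarz against $\chisquaretrunc{\Planted}{\Null}$) and a high regime handled by the noise operator, and pass to a regularized surrogate of the planted distribution to establish Fourier decay. All of that is the right scaffolding, and your low-frequency and tail arguments are essentially what the paper does (cf.\ Lemmas~\ref{lem:frequency_matching_NEW} and~\ref{lem:moment_bounds}).

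The gap is in the construction of the regularizing event $\calE$. You propose a $\xi$-dependent condition (``a constant fraction of coordinates $j$ have $\Var_{\rvg}[p_\xi(\sqrt{1-\e}\,\rvy_j+\sqrt{\e}\,\rvg)]\geq \Omega(\e\|\xi\|^2/n)$'', which is also off by a factor of $n$: you need $\Omega(\e\|\xi\|^2)$), to be established via coordinate-wise Carbery--Wright plus a net over $\xi$, and transferred from $\Null$ to $\Planted$ ``using the LDLR tail bound.'' This last step is where it breaks. Carbery--Wright gives, per coordinate, a small probability that $v(y_j)$ is tiny when $y_j\sim\calN(0,1)$; under $\Null$ the coordinates are independent so Chernoff then gives ``many $j$ good'' w.h.p. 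But the event ``many $j$ are good'' is a union of indicator events, not a low-degree polynomial, so the LDLR assumption gives you no handle on it under $\Planted$ --- the transfer step has no mechanism. The paper circumvents this by making the regularity condition itself a statement about expectations of low-degree polynomials: Definition~\ref{def:regular_point} requires $\frac{1}{n}\sum_j p(y_j)$ to be within a constant factor of $\E_{\rvg}[p(\rvg)]$ for every degree-$\le 2\ell$ sum-of-squares $p$, which is simultaneously (i) $\xi$-independent, (ii) verifiable with a $2^{O(\ell)}$-size net \emph{over polynomials} and a second-moment argument using LDLR plus hypercontractivity (Lemma~\ref{lem:regular_points}), and (iii) strong enough to deterministically imply the first- and second-moment bounds on $\{v(y_j)\}_j$ that drive a Paley--Zygmund argument (Eqs.~\eqref{eq:variance-sum-lower-bound}--\eqref{eq:squared-variance-upper-bound}), yielding the needed Fourier decay uniformly for all $\xi$. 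Without an analogous $\xi$-independent, polynomial-based regularity notion, the Carbery--Wright-plus-net route does not close.
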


\parhead{Outline of this section.}
To prove \cref{thm:main_technical}, we bound the total variation distance by directly computing the $\ell_1$ distance between the density functions by integrating the difference.
Our strategy for bounding this integral is to split it up into two terms, the integral over the ``bulk'', i.e.\ the radius-$\tau$ ball for an appropriately chosen parameter $\tau$, and the integral over the ``tail'', i.e.\ the region outside the radius-$\tau$ ball.
We prove sufficiently strong tail decay for both $\NullShadow$ and $\PlantedShadow$ that the tail integral is negligible.
To control the bulk integral, we use a Fourier analytic argument to prove that the densities are close in $\ell_{\infty}$-norm (\Cref{thm:infty_bound}), which constitutes the bulk of our technical work, carried out in \Cref{sec:infty_closeness}.

\parhead{Definitions and helper lemmas.}
We will also use the following definitions.
The first one is a ``regularity'' condition that ensures sufficiently strong decay of the Fourier transform of $\PlantedShadow$ (see \cref{sec:fourier_decay}).
\begin{definition}[Regularity]
    \label{def:regular_point}
    A fixed vector $y\in \R^n$ is \emph{$\ell$-regular} for a positive integer $\ell$ if for every univariate degree-$\le 2\ell$ sum-of-squares polynomial $p$, we have
    $$
        \frac{1}{2} \E_{\rvg\sim\Null}[p(\rvg)] \leq \frac{1}{n}\sum_{i=1}^n p(y_i) \leq \frac{3}{2}\E_{\rvg\sim\Null}[p(\rvg)]\mper
    $$
    We will use $\calR_\ell\subseteq\R^n$ to refer to the set of all $\ell$-regular points.
\end{definition}

\begin{definition}[Conditional planted distribution]
    We define the \emph{conditional planted distribution} $\Planted'_{\ell}$ as $\Planted | \calR_{\ell}$.
    Throughout the section, we set $\ell = 4k$, and use $\Planted'$ to refer to $\Planted_{4k}'$.
    We use $\FakePlantedShadow$ to refer to the distribution $F_k\parens*{\OU_{\eps}\Planted'}$.
\end{definition}

We can show that a draw from a moment-matching distribution is $(\ell,m)$-regular with high probability.
\begin{lemma}
\torestate{
    \label{lem:regular_points}
    Let $\numm, \ell\in \N$ and $c > 0$ be a sufficiently small constant such that $\numm = 4 \ell$ and $\ell < c \log n$.
    Let $\Planted$ be a distribution such that $\chisquaretrunc{\Planted}{\Null} \leq \delta$.
    Then $\rvy\sim\Planted$ is $\ell$-regular with probability at least $1-(1+\delta)\frac{2^{O(\ell)}}{n}$.}
\end{lemma}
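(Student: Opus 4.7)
The plan is to recast $\ell$-regularity as an operator inequality on a small explicit matrix and then control that matrix in Frobenius norm using the moment-matching assumption and Gaussian hypercontractivity. First I would note that a univariate SOS polynomial of degree $\le 2\ell$ is precisely a finite sum of squares of polynomials of degree $\le \ell$, so $y$ is $\ell$-regular if and only if
\[
    \tfrac{1}{2}\,\E_{\rvg\sim\Null}[q(\rvg)^2]\;\le\;\tfrac{1}{n}\sum_{i=1}^n q(y_i)^2\;\le\;\tfrac{3}{2}\,\E_{\rvg\sim\Null}[q(\rvg)^2]
\]
for every univariate polynomial $q$ of degree $\le \ell$. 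Expanding $q=\sum_{i=0}^\ell c_i h_i$ in the normalized Hermite basis, the left side equals $\|c\|_2^2$ and the right side equals $c^\top A(y)\, c$, where $A(y)\in\R^{(\ell+1)\times(\ell+1)}$ is defined by
\[
    A(y)_{ij} \;=\; \tfrac{1}{n}\sum_{k=1}^n h_i(y_k)\, h_j(y_k).
\]
Thus $\ell$-regularity is equivalent to $\tfrac12\,\I\preceq A(y)\preceq\tfrac32\,\I$, and using $\|\cdot\|_{\op}\le\|\cdot\|_F$ it suffices to ensure $\|A(y)-\I\|_F\le \tfrac12$.

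Next I would bound $\E_{\rvy\sim\Planted}\|A(\rvy)-\I\|_F^2$ entrywise. For each $i,j\le \ell$ define the polynomial $f_{ij}(y)\seteq A(y)_{ij}-\delta_{ij}$, which has degree $i+j\le 2\ell$ in $y$ and satisfies $\E_{\Null}f_{ij}=0$ by orthonormality of the Hermite basis. Because $f_{ij}$ is an average of $n$ i.i.d.\ centered terms under $\Null$ and each term has degree $\le 2\ell$, Gaussian hypercontractivity (\Cref{fact:gauss_hypercontractivity}) combined with Cauchy--Schwarz gives
\[
    \E_{\Null}[f_{ij}^2]\;\le\;\frac{1}{n}\,\E_{\rvg\sim\Null}[h_i(\rvg)^2 h_j(\rvg)^2]\;\le\;\frac{3^{\,i+j}}{n}\;\le\;\frac{9^{\ell}}{n},
\]
and applying \Cref{fact:gauss_hypercontractivity} to $f_{ij}$ itself (a degree-$\le 2\ell$ polynomial) yields $\E_{\Null}[f_{ij}^4]\le 3^{4\ell}\,\E_{\Null}[f_{ij}^2]^2$, so $\sqrt{\Var_{\Null}[f_{ij}^2]}\le 9^\ell\,\E_{\Null}[f_{ij}^2]$.

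Now I would invoke the moment-matching hypothesis. The polynomial $f_{ij}^2$ has degree $\le 2(i+j)\le 4\ell=\numm$, so by \Cref{def:LDLR-intro} the degree-$\numm$ advantage gives
\[
    \E_{\Planted}[f_{ij}^2]\;\le\;\E_{\Null}[f_{ij}^2]+\sqrt{\delta}\,\sqrt{\Var_{\Null}[f_{ij}^2]}\;\le\;\bigl(1+\sqrt{\delta}\cdot 9^{\ell}\bigr)\cdot\frac{9^{\ell}}{n}.
\]
Summing over the at most $(\ell+1)^2$ entries and using $1+\sqrt\delta\le 2(1+\delta)$,
\[
    \E_{\rvy\sim\Planted}\bigl\|A(\rvy)-\I\bigr\|_F^2\;\le\;(\ell+1)^2\cdot\frac{9^{\ell}+\sqrt\delta\cdot 81^{\ell}}{n}\;\le\;(1+\delta)\cdot\frac{2^{O(\ell)}}{n}.
\]
Markov's inequality with threshold $1/4$ on $\|A(\rvy)-\I\|_F^2$ then concludes that $\rvy$ fails to be $\ell$-regular with probability at most $(1+\delta)\,2^{O(\ell)}/n$, as claimed. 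The only delicate step is tracking all the hypercontractivity constants to verify that the $(\ell+1)^2$ terms and the $9^\ell$ factors remain absorbed by the $2^{O(\ell)}$ on the right; the assumption $\ell<c\log n$ for small enough $c$ is what keeps this quantity $o(1)$, though the lemma itself only asserts the bound and leaves this to the caller.
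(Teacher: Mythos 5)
Your argument is correct and establishes the lemma, but it takes a genuinely different (and arguably cleaner) route from the paper's. The paper proceeds via an $\eta$-net: it constructs a $2^{O(\ell)}$-sized net of the unit sphere of degree-$\le\ell$ polynomials under the Gaussian $L^2$ norm, proves concentration of $\norm{p}_{\rvy}^2$ at each net point by a second-moment bound (using the degree-$D$ advantage bound plus Gaussian hypercontractivity), and then passes from the net to arbitrary polynomials by a geometric-series / triangle-inequality argument. You instead recast $\ell$-regularity as the operator inequality $\tfrac12\I\preceq A(y)\preceq\tfrac32\I$ for the empirical $(\ell+1)\times(\ell+1)$ Gram matrix of the Hermite basis, dominate the operator norm of $A(y)-\I$ by its Frobenius norm, and control $\E_{\Planted}\norm{A(\rvy)-\I}_F^2$ by summing a second-moment bound over the $(\ell+1)^2$ entries. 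The probabilistic core is the same (second moments of order $2^{O(\ell)}/n$ under $\Null$, transferred to $\Planted$ via the low-degree advantage, then Markov), but you replace the continuous net and its bookkeeping with a finite, explicit sum of matrix entries, which is a more transparent way to see where the $2^{O(\ell)}$ comes from. One small remark: you use the (correct) advantage inequality $\E_{\Planted}[q]\le\E_{\Null}[q]+\sqrt{\delta}\,\sqrt{\Var_{\Null}[q]}$, whereas the paper writes $\delta$ in place of $\sqrt{\delta}$ at the analogous step; as you note via $\sqrt{\delta}\le 1+\delta$, the claimed final bound $(1+\delta)\,2^{O(\ell)}/n$ is unaffected either way.
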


The next lemma states that the distributions $\NullShadow$ and $\PlantedShadow$ have subgaussian moments in every direction up to order $\exp(\log n / k)$.
\begin{lemma}[Moment Bounds]
    \torestate{
        \label{lem:moment_bounds}
        Let $0 < \e \leq 1, 0 < \delta$.
        Let $T, k, \numm \in \N$ such that $k T \leq \numm$ and $T \leq \exp(\log(\tfrac n C)/k)$ for a sufficiently large enough constant $C$.
        For $\Planted$ such that $\chisquaretrunc{\Planted}{\Null} \leq \delta$ and unit $\xi \in \R^k$, we have:
        \begin{enumerate}
            \item $\E_{\rvz\sim\NullShadow} \bracks*{\angles*{\rvz, \xi}^T} \leq O(T)^{T/2}$,
            \item $\E_{\rvz\sim\PlantedShadow} \bracks*{\angles*{\rvz, \xi}^T} \leq (1+\delta) \cdot O(T)^{T/2}$.
        \end{enumerate}
    }
\end{lemma}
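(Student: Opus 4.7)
The plan is to reduce the moment bound to a question about a single univariate random variable.  For any unit $\xi \in \R^k$,
\[
    \langle F_k(x),\xi\rangle \;=\; \frac{1}{\sqrt{n}}\sum_{j=1}^n q_\xi(x_j), \qquad q_\xi(x) \coloneqq \sum_{i=1}^k \xi_i\, h_i(x),
\]
and by orthonormality of Hermite polynomials $\E_{\rvg\sim\calN(0,1)}[q_\xi(\rvg)] = 0$ and $\Var[q_\xi(\rvg)] = \|\xi\|_2^2 = 1$.  So under $\Null$ the inner product is a $1/\sqrt{n}$-normalized sum of $n$ i.i.d.\ mean-zero, unit-variance random variables $Y_j \coloneqq q_\xi(\rvg_j)$.

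For the null case, I would apply Gaussian hypercontractivity (\cref{fact:gauss_hypercontractivity}) to get $\|Y_j\|_s \le (s-1)^{k/2}$ for all $s \ge 2$, then invoke a Rosenthal-/Lata{\l}a-type moment inequality for sums of i.i.d.\ random variables:
\[
    \Bignorm{\tfrac{1}{\sqrt n}\textstyle\sum_j Y_j}_T \;\lesssim\; \sup_{2\le s\le T}\; \frac{T}{s}\left(\frac{n}{T}\right)^{1/s} \frac{(s-1)^{k/2}}{\sqrt n}.
\]
At $s=2$ this equals $O(\sqrt T)$; at $s=T$ it equals $O\bigl(T^{k/2}/n^{1/2-1/T}\bigr)$, and the hypothesis $T^k \le n/C$ forces the latter to also be $O(\sqrt T)$.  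A short check at the interior optimum $s^\ast \sim \log(n/T)/k$ (combined with the assumption $k \lesssim \log n/\log\log n$) shows the supremum is saturated at the endpoints, giving $\E_\Null[\langle F_k(\rvg),\xi\rangle^T] \le O(T)^{T/2}$.

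For the planted case, the key observation is that the noise operator $\OU_\eps$ contracts the truncated $\chi^2$-divergence: Hermite polynomials are eigenfunctions of $\OU_\eps$ with eigenvalues $(1-\eps)^{|\alpha|/2} \le 1$, so $\chisquaretrunc{\OU_\eps\Planted}{\Null} \le \chisquaretrunc{\Planted}{\Null} \le \delta$.  Cauchy--Schwarz in the Hermite basis then yields, for any polynomial $p$ of degree at most $D$,
\[
    \E_{\OU_\eps\Planted}[p] \;\le\; \E_\Null[p] + \sqrt{\delta\cdot\Var_\Null[p]}.
\]
I would apply this with $p(x) = \langle F_k(x),\xi\rangle^T$, whose degree $kT$ is at most $\numm$ by hypothesis.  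Bounding $\Var_\Null[p] \le \E_\Null[\langle F_k,\xi\rangle^{2T}] \le O(T)^T$ using the null case at order $2T$ (after absorbing a factor of $2$ into the constant $C$) produces $\E_{\OU_\eps\Planted}[p] \le (1+\sqrt\delta)\cdot O(T)^{T/2}$, and $1+\sqrt\delta \le 2(1+\delta)$ allows folding the $\sqrt\delta$ into the claimed $(1+\delta)$ factor.

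The main obstacle is the null moment bound.  The summands $Y_j$ are degree-$k$ Gaussian polynomials, so their tails are only sub-Weibull of parameter $2/k$ --- the $T$-th moment of a single $Y_j$ scales like $T^{Tk/2}$, vastly worse than sub-Gaussian.  The sub-Gaussian-like estimate $\|S\|_T = O(\sqrt T)$ only emerges after averaging $n$ copies and only up to order $T \lesssim n^{1/k}$, which is precisely the range carved out by $T \le \exp(\log(n/C)/k)$.  Verifying that the Lata{\l}a supremum really is attained at the endpoints (rather than blowing up at some interior $s$) is the heart of the technical calculation; everything else is a clean transfer to the planted distribution via LDLR Cauchy--Schwarz.
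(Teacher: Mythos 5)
Your planted-to-null transfer is essentially the paper's: the noise operator contracts $\chisquaretrunc{\cdot}{\Null}$, then Cauchy--Schwarz with the degree-$kT \le \numm$ test polynomial $\langle F_k(\cdot),\xi\rangle^T$ inherits the null bound at order $2T$. The genuine difference is in the null moment bound. The paper expands $\bigl(\sum_j q_\xi(\rvg_j)\bigr)^T$ over multi-indices $\alpha\in\N^n$ with $|\alpha|=T$, groups by $\norm{\alpha}_0$, applies Gaussian hypercontractivity to each factor $\E[q_\xi(\rvg)^{\alpha_j}]$, and finishes with a convexity argument showing that among the admissible multi-indices with $m$ nonzero entries the product is maximized when all but one entry equals $2$; summing the resulting geometric-type series under $(CT)^k\le n$ gives $O(T)^{T/2}$. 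You instead treat $\langle F_k(\rvg),\xi\rangle$ as a normalized i.i.d.\ sum and invoke a black-box Lata\l{}a/Rosenthal moment inequality, with the same hypercontractive input $\norm{q_\xi(\rvg)}_s\lesssim (s-1)^{k/2}$ feeding in. Your route packages the combinatorics more cleanly and makes the $T\lesssim n^{1/k}$ phase boundary transparent, but is not what the paper does.

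The step you describe as ``a short check ... shows the supremum is saturated at the endpoints'' is not short: it is the entire content of the null bound, and it is precisely where the quantitative hypotheses on $T,k,n$ must enter. Two things to watch. First, you invoke $k\lesssim\log n/\log\log n$, but this is a hypothesis of \cref{thm:main_technical}, not of \cref{lem:moment_bounds}; as a proof of the lemma this is an unlicensed assumption, and you should either add it to the statement or verify the interior-optimum bound from the stated hypothesis alone. Second, the literally stated hypothesis $T^k\le n/C$ is in fact too weak to support the conclusion: taking $T=4$, $k=\log_4(n/C)$ so that $T^k=n/C$, and $\xi=e_k$ so $q_\xi=h_k$, the single ``all four indices equal'' term already contributes $\E[h_k(\rvg)^4]/n \asymp 9^k/(kn) = n^{\Theta(1)}$, which exceeds $O(T)^{T/2}=O(1)$. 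The paper's own proof quietly works with $(CT)^k\le n$, which is strictly stronger for large $k$ and is what actually makes the bound close. So the Lata\l{}a-supremum verification genuinely needs a strengthened hypothesis, and it should be carried out explicitly rather than asserted; you should also pin down the exact form of the Lata\l{}a inequality you are citing, since the standard statement carries the constraint $\max(2,p/n)\le s\le p$ and you should confirm the version you wrote is the one you can reference.
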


\paragraph{$\ell_\infty$-closeness and putting things together.}

The following is our main theorem regarding $\ell_\infty$-closeness of densities at the regular points.
\begin{theorem}
\torestate{
    \label{thm:infty_bound}
    Let $\delta \geq 0, D \in \N$ and $\Planted, \Null$ such that $\chisquaretrunc{\Planted}{\Null} \leq \delta$.
    Let $C,c$ be a sufficiently large (respectively small) absolute constant, and let $0 < \e \leq 1,R > 0$ and $k \in \N$ such that:
    \begin{enumerate}
        \item $k\le \frac{\e \log (n)}{C\log \log (n/\e)}$,
        \item $D \geq C \cdot \frac{k^2 \log(1/\e)}{\e}$.
    \end{enumerate}
    Then:
    \[
        \sup_{y \in \calR_{k}} \Abs{\NullShadow(y) - \FakePlantedShadow(y)} \leq e^{-\Omega\parens*{\e \cdot \tfrac D k}} + O(\delta) + \frac{2^{O(k)}} n \mper
    \]
    In particular, the above is always at most 
    $$
        e^{-\Omega\parens*{\e \cdot \tfrac D k}} + O(\delta) + n^{-1+o(1)} \,.
    $$
}
\end{theorem}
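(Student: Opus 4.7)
The plan is to prove the bound pointwise in $y \in \calR_k$ using Fourier inversion in $\R^k$. Writing
\[
    \Abs{\NullShadow(y) - \FakePlantedShadow(y)} \le \frac{1}{(2\pi)^k} \int_{\R^k} \Abs{\wh{\NullShadow}(\xi) - \wh{\FakePlantedShadow}(\xi)}\, d\xi,
\]
I would split the integral at radius $R \asymp \sqrt{T}$ with $T = \lfloor D/k \rfloor$, applying the LDLR hypothesis on the low-frequency region $\set{\Normt{\xi} \le R}$ and bounding the two characteristic functions separately on the high-frequency region $\set{\Normt{\xi} > R}$. Because each coordinate of $F_k$ is a symmetric degree-$\le k$ polynomial, $\iprod{\xi, F_k(\cdot)}$ has degree $\le k$ in the underlying Gaussian variables, which makes $T \le D/k$ the natural Taylor order and $\sqrt{T}$ the natural cutoff.

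On the low-frequency side, Taylor-expanding $e^{i\iprod{\xi, z}}$ to order $T$ gives a remainder bounded by $\E[\iprod{\xi,\rvz}^T]/T!$, which by \Cref{lem:moment_bounds} is at most $(O(\Normt{\xi}^2)/T)^{T/2}$ for $\rvz$ drawn from either $\NullShadow$ or $\PlantedShadow$. The degree-$tk \le D$ polynomial $\iprod{\xi, F_k(\cdot)}^t$ falls within the LDLR window, so Cauchy--Schwarz under $\Null$ together with the moment bound gives
\[
    \Abs{\E_\Planted[\iprod{\xi, F_k(\cdot)}^t] - \E_\Null[\iprod{\xi, F_k(\cdot)}^t]} \le \sqrt{\delta} \cdot O(t)^{t/2}.
\]
Summing the first $T$ terms and adding the Taylor remainder, then integrating over the radius-$R$ ball (volume $\lesssim R^k$), produces a low-frequency contribution of $e^{-\Omega(T)} + O(\sqrt{\delta}) \cdot R^k$. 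Passing from $\PlantedShadow$ to $\FakePlantedShadow$ costs only $\dtv(\Planted, \Planted') \le 2^{O(k)}/n$ by \Cref{lem:regular_points}, which is absorbed into the stated error. The parameter choices $D \ge C k^2 \log(1/\eps)/\eps$ and $k \le c\log n/\log\log n$ then give a low-frequency contribution of $e^{-\Omega(\eps T)} + O(\delta) + 2^{O(k)}/n$ after standard tuning.

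On the high-frequency side I would handle the two characteristic functions separately. For $\NullShadow$, the identity $(F_k)_\ell(\rvg) = \tfrac{1}{\sqrt{n}} \sum_j h_\ell(\rvg_j)$ factorizes the characteristic function as $\prod_{j=1}^n \E_{\rvg_j}[\exp(\tfrac{i}{\sqrt n} q_\xi(\rvg_j))]$ with $q_\xi(x) = \sum_\ell \xi_\ell h_\ell(x)$ having unit Gaussian variance, so Carbery--Wright (\Cref{fact:carbery-wright}) bounds each factor by $1 - \Omega(\Normt{\xi}^2/n)$, yielding $|\wh{\NullShadow}(\xi)| \le e^{-\Omega(\Normt{\xi}^2)}$. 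For $\FakePlantedShadow$, condition on the regular point $\rvy \in \calR_{4k}$: since $\OU_\eps \rvy = \sqrt{1-\eps}\rvy + \sqrt{\eps}\rvg$ with $\rvg$ fresh and coordinate-independent, the characteristic function again factorizes as $\prod_i \E_{\rvg_i}[\exp(\tfrac{i}{\sqrt n} q_{\xi, \rvy_i}(\rvg_i))]$ with $q_{\xi, y}(g) \coloneqq \sum_\ell \xi_\ell h_\ell(\sqrt{1-\eps}\, y + \sqrt{\eps}\, g)$. Regularity of $\rvy$ implies that $\frac{1}{n}\sum_i \Var_{\rvg_i}[q_{\xi, \rvy_i}]$ is within a constant factor of the corresponding null quantity uniformly over unit $\xi$, so another Carbery--Wright application gives $|\wh{\FakePlantedShadow}(\xi)| \le e^{-\Omega(\eps \Normt{\xi}^2)}$. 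Integrating this decay over $\set{\Normt{\xi} > R}$ with $R \asymp \sqrt{T}$ yields a high-frequency contribution of $e^{-\Omega(\eps T)}$.

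The main obstacle is the high-frequency bound on $\wh{\FakePlantedShadow}$: one must obtain, uniformly in the frequency direction $\xi/\Normt{\xi} \in S^{k-1}$, a lower bound on $\frac{1}{n}\sum_i \Var_{\rvg_i}[q_{\xi, \rvy_i}]$ that depends only on the variance of $q_\xi$ under $\Null$. This is precisely the role played by the SoS-style condition in \Cref{def:regular_point} for $\ell = 4k$, since $\Var_{\rvg}[q_{\xi,y}(\rvg)]$ is a sum-of-squares polynomial of degree $2k$ in $y$ whose Gaussian expectation is $\Theta(\eps)$ for any unit $\xi$. The regularity of $\rvy$ then transfers this expectation to the empirical average across coordinates, driving the per-coordinate decay. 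After that, the remaining work is parameter bookkeeping in $R$ and $T$, together with a careful accounting of the $\sqrt{\delta}$ versus $\delta$ factors coming from the Cauchy--Schwarz step, to land on the stated error $e^{-\Omega(\eps D/k)} + O(\delta) + 2^{O(k)}/n$.
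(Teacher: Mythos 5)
Your high-level strategy—Fourier inversion, split at radius $R\asymp\sqrt{D/k}$, moment matching for low frequencies and Fourier decay for high frequencies—is exactly the paper's decomposition (its Lemmas for frequency matching and Fourier decay). The low-frequency side is essentially right: Taylor-expanding to order $T=D/k$, controlling the remainder with the subgaussian moment bound, and applying the LDLR condition to the low-degree pieces is precisely the paper's argument, modulo the $\delta$ vs.\ $\sqrt\delta$ bookkeeping.

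The gap is in the high-frequency decay, and it is not a bookkeeping issue. You assert that Carbery--Wright bounds each factor $\E_{\rvg_j}\bigl[\exp\bigl(\tfrac i{\sqrt n}q_\xi(\rvg_j)\bigr)\bigr]$ by $1-\Omega(\Normt{\xi}^2/n)$ and then integrate $e^{-\Omega(\eps\Normt{\xi}^2)}$ over all $\Normt{\xi}>R$. But Carbery--Wright is an anti-concentration statement; it does not yield the quadratic characteristic-function bound. That bound comes from a second-order Taylor argument, and—crucially—it only holds when the per-coordinate variance $v(y_j)/n$ is small (of order $9^{-k}$ or less). Once $\Normt{\xi}$ is large enough that $\Normt{\xi}^2/n\gtrsim 9^{-k}$, the estimate $1-\Omega(\Normt{\xi}^2/n)$ is vacuous or outright negative, and the characteristic function of a degree-$k$ polynomial of a Gaussian genuinely does not decay like $e^{-\Omega(\Normt{\xi}^2)}$ (think of $g^2$ or $h_k(g)$, whose characteristic functions decay only polynomially). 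So the claimed uniform Gaussian decay is false, and the integral over the full tail $\{\Normt{\xi}>R\}$ cannot be controlled by this single estimate.

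What the paper does instead—and what your proposal is missing—is a three-regime estimate for $\bigl|\E_{\rvg}[\exp(ip(\rvg))]\bigr|$ depending on $\var p$: a Taylor bound $1-\tfrac14\var p$ when $\var p\leq 9^{-k}$, a flat Carbery--Wright bound $1-\Omega(k^{-1}k^{-|c|k})$ for intermediate variances, and an integration-by-parts bound of the form $O(\var p)^{-1/(4k)}$ for large variances. Because different coordinates $j$ will have variances $v(y_j)/n$ in different regimes once $\Normt{\xi}$ is large, the argument must also split the coordinate set $[n]$ into those with small versus large per-coordinate variance, using the regularity upper bound on $\tfrac1n\sum_j v(y_j)^2$ (a spreadness condition) alongside the lower bound on $\tfrac1n\sum_j v(y_j)$ to guarantee that one of those subsets contributes enough total decay. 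You correctly identify the role of regularity in controlling the empirical average of $v(y_j)$, but without the variance-regime case analysis and the large-$\Normt\xi$ integration-by-parts bound, the high-frequency estimate does not close.
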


Using these tools, we can prove \cref{thm:main_technical}.
\begin{proof}[Proof of \cref{thm:main_technical}]
    For $\tau \geq C \sqrt{k}$ to be chosen later, define $B_{\tau}$ as the $\ell_2$-ball of radius $\tau$.
    Recall that $\calR_{\ell}$ is the set of all $\ell$-regular points and $\FakePlantedShadow$ is the distribution of $F_k(\OU_\e [\Planted | \calR_{4k}])$.
    We may write:
    \begin{align*}
        \dtv(\NullShadow,\PlantedShadow) &\le \int_{\R^k} \abs*{ \NullShadow(x) - \FakePlantedShadow(x) } \cdot \Ind[\NullShadow(x) \ge \FakePlantedShadow(x)] \dif x + \dtv(\PlantedShadow, \FakePlantedShadow) \\
        &\le \NullShadow(\ol{B_{\tau}}) +  \vol(B_{\tau}) \cdot \sup_{x\in B_{\tau}\cap\calR_{4k}} \abs*{ \NullShadow(x) - \PlantedShadow(x) } + \dtv(\PlantedShadow, \FakePlantedShadow)\mper
    \end{align*}
    We now bound each of the terms in the above sum.

    \parhead{Bound on $\NullShadow(\ol{B_{\tau}})$.}
    By \Cref{lem:moment_bounds} with $T = k$ and Markov's inequality, we know that for any fixed unit vector $\xi\in\R^k$,
    \[
        \Prob_{\rvz\sim\NullShadow}\bracks*{\Iprod{\xi, \rvz} \geq \tau } \leq \Paren{\frac {O(k)} {\tau^2}}^{k/2} \mper
    \]
    Note that this is at most $2^{-10k}$, but potentially smaller depending on the exact choice of $\tau$.
    Thus, by a union bound over choices of $\xi$ over a net, $\bz\sim\NullShadow$ is in $B_{\tau}$ except with probability at most $\paren{\frac {O(k)} {\tau^2}}^{k/2}$.

    \parhead{Bound on $\vol(B_{\tau}) \cdot \sup_{x\in B_{\tau}\cap\calR_{k,m}} \abs*{ \NullShadow(x) - \PlantedShadow(x) }$.}
    By \Cref{thm:infty_bound}, we have:
    \begin{align*}
        \vol(B_{\tau}) \cdot \sup_{x\in B_{\tau}\cap\calR_{k,m}} \abs*{ \NullShadow(x) - \PlantedShadow(x) } \le O(\tau^k) \cdot \parens*{ e^{-\Omega\parens*{ \tfrac{\eps D}{k} }} + O(\delta) + n^{-1+o(1)} } \mper
    \end{align*}

    \parhead{Bound on $\dtv(\PlantedShadow, \FakePlantedShadow)$.}
    By \Cref{lem:regular_points}, this is bounded by $(1+\delta) \tfrac {2^{O(k)}} n$.

    \parhead{Putting things together}
    Since $t^k \geq 1$ and $2^{O(k)} = n^{o(1)}$, overall, this shows that
    \begin{align*}
        \dtv(\NullShadow,\PlantedShadow) &\leq O\parens*{\frac{O(k)}{\tau^2}}^{k / 2} + O(\tau^k) \cdot \parens*{e^{-\Omega\parens*{\e \cdot \tfrac D k}} + O(\delta) + n^{-1+o(1)}}\mper
    \end{align*}
    Choosing $\tau = C\sqrt{k} + \min \left\{ e^{\eps\cdot\frac{D}{Ck^2}}, O\left(\frac{1}{\delta}\right)^{1/Ck}, n^{1/Ck} \right\}$ completes the proof.
\end{proof}

\subsection{\texorpdfstring{$\ell_\infty$}{l-infinity}-Closeness of Densities}
\label{sec:infty_closeness}

In this section we will prove \cref{thm:infty_bound}, restated below.
\restatetheorem{thm:infty_bound}

We will prove this by considering the Fourier transform of both distributions.
In particular, we will show that both distributions have sufficient Fourier decay so that we can safely ignore the contribution of large-norm frequency vectors, and for all other frequency vectors we use the moment-matching property to show that the two must be close.
We will state formally what we require and then derive a proof of \cref{thm:infty_bound}.

More specifically, we will prove the following Fourier decay result whose prove we give in \cref{sec:fourier_decay}.
\begin{lemma}[Fourier decay]
\torestate{
    \label{lem:fourier_decay}
    Let $\e\in (0, 1]$, let $k \leq \frac{\log(\eps n)}{C\log \log n}$, and $R > C\sqrt{k/\eps}$ for a sufficiently large constant $C>0$.
    Then,
    \begin{align*}
        \int_{\norm{\xi} > R} \Abs{\NullShadowHat(\xi)} \, d\xi
        \quad \text{and} \quad
        \int_{\norm{\xi} > R} \Abs{\FakePlantedShadowHat(\xi)} \, d\xi
        \leq \Omega(\eps)^{-k/2} \cdot e^{-\Omega(\eps R^2)} + e^{-\sqrt{n}} \mper
    \end{align*}
}
\end{lemma}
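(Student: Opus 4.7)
The plan is to establish approximate Gaussian Fourier decay $|\NullShadowHat(\xi)|, |\FakePlantedShadowHat(\xi)| \le e^{-\Omega(\eps\|\xi\|^2)}$ throughout a ``hypercontractive regime'' that covers all $\|\xi\|^2 \lesssim n/9^k$, together with a much stronger super-polynomial decay in the very tail. The claimed integral bound then follows by spherical integration in $\R^k$: the main term $\Omega(\eps)^{-k/2} e^{-\Omega(\eps R^2)}$ is just $\int_{\|\xi\|>R} e^{-c\eps\|\xi\|^2}\,d\xi$ for $R \gtrsim \sqrt{k/\eps}$, while the deep-tail contribution is responsible for the additive $e^{-\sqrt n}$.

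For the null case, the tensor-product structure of $F_k(\rvg)$ under i.i.d.\ Gaussians gives $\NullShadowHat(\xi) = \phi(\xi)^n$ where $\phi(\xi) = \E_{g \sim N(0,1)}[\exp(iP_\xi(g)/\sqrt n)]$ and $P_\xi(g) = \sum_{\ell=1}^k \xi_\ell h_\ell(g)$ satisfies $\Var_g P_\xi(g) = \|\xi\|^2$. I would write $|\phi(\xi)|^2 = \E_{g,g'}[\cos((P_\xi(g)-P_\xi(g'))/\sqrt n)]$, apply the Taylor estimate $\cos u \le 1 - u^2/2 + u^4/24$, and invoke Gaussian hypercontractivity (\cref{fact:gauss_hypercontractivity}) to control $\E[(P_\xi(g)-P_\xi(g'))^4] \le O(9^k)\|\xi\|^4$. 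This yields $|\phi(\xi)|^2 \le 1 - \|\xi\|^2/n + O(9^k\|\xi\|^4/n^2) \le e^{-\|\xi\|^2/(2n)}$ for $\|\xi\|^2 \le cn/9^k$, hence $|\NullShadowHat(\xi)| \le e^{-\|\xi\|^2/4}$ throughout this range.

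For the planted case, conditional independence given $\rvy \sim \Planted'$ gives $\FakePlantedShadowHat(\xi) = \E_{\rvy}[\prod_j \psi_\xi(y_j)]$ with $\psi_\xi(y) = \E_g[\exp(iQ_{\xi,y}(g)/\sqrt n)]$ and $Q_{\xi,y}(g) = \sum_\ell \xi_\ell h_\ell(\sqrt{1-\eps}\,y + \sqrt\eps\,g)$. Mehler's formula gives the Hermite expansion $Q_{\xi,y}(g) = \sum_{j \ge 0} \eps^{j/2} \beta_j(y) h_j(g)$, where $\beta_j(y) := \sum_{\ell \ge j} \xi_\ell \sqrt{\binom{\ell}{j}} (1-\eps)^{(\ell-j)/2} h_{\ell-j}(y)$, so $\Var_g Q_{\xi,y}(g) = \sum_{j \ge 1} \eps^j \beta_j(y)^2$. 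Repeating the Taylor-plus-hypercontractivity bound gives $|\psi_\xi(y)|^2 \le 1 - \Var_g Q_{\xi,y}(g)/(2n)$ in the hypercontractive range. The critical use of regularity is as follows: the univariate polynomial $p(y) := \Var_g Q_{\xi,y}(g)$ is a sum of squares of degree at most $2(k-1) \le 8k$, so for $\rvy \in \calR_{4k}$ \cref{def:regular_point} yields $\tfrac1n \sum_j p(y_j) \ge \tfrac12 \E_g p(g) = \tfrac12 \sum_\ell \xi_\ell^2 [1-(1-\eps)^\ell] \ge \tfrac{\eps}{2}\|\xi\|^2$, where the last inequality uses the elementary fact $1 - (1-\eps)^\ell \ge \eps$ for every $\ell \ge 1$. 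Summing and exponentiating yields $\prod_j |\psi_\xi(y_j)| \le e^{-\Omega(\eps\|\xi\|^2)}$ uniformly over $\rvy \in \calR_{4k}$.

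The main obstacle is that the Taylor-plus-hypercontractivity step requires the per-coordinate bound $\Var_g Q_{\xi,y_j}(g) \le cn/9^k$, whereas regularity only directly controls the \emph{average}. I would handle this by splitting indices into ``good'' coordinates (where the per-coordinate variance is below threshold, so the Taylor bound applies) and ``bad'' ones (where I fall back on the trivial bound $|\psi_\xi(y)| \le 1$), invoking a Markov estimate from regularity applied to $\sum_{j'}\eps^{j'}\beta_{j'}(y)^2$ together with the implicit point-wise bound $|y_j| \lesssim (Cn)^{1/(8k)}$ enforced by regularity applied to the monomial $y^{8k}$, thereby ensuring the good coordinates carry a constant fraction of the total variance. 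Finally, for $\|\xi\|$ in the very-tail regime (once $\|\xi\|$ is polynomially large in $n$), a stationary-phase / anti-concentration estimate of the form $|\phi(\xi)|,|\psi_\xi(y)| \lesssim (\sqrt n/\|\xi\|)^{1/k}$ for characteristic functions of univariate degree-$k$ polynomials of a Gaussian, raised to the $n$-th power, gives decay $\le e^{-\sqrt n}$, contributing the additive error term.
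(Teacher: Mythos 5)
Your architecture tracks the paper's very closely: the tensor-product form of the characteristic function $\FakePlantedShadowHat(\xi) = \E_{\rvy}\prod_j \psi_\xi(y_j)$, the use of $4k$-regularity to lower-bound the average per-coordinate variance $\tfrac1n\sum_j v(y_j) \ge \tfrac{\eps}{2}\|\xi\|^2$, a Taylor-plus-hypercontractivity argument for factors with small variance, and a stationary-phase bound for the deep tail all appear in the paper's proof as well. Your $\cos$-Taylor derivation of $|\psi_\xi(y)|^2 \le 1 - \Var + O(9^k\Var^2)$ is a clean route to the paper's small-variance branch of \cref{lem:PolyFourierBoundLem}.

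The genuine gap is the intermediate regime of per-coordinate variance. Your Taylor bound gives decay only when $v(y_j)/n \lesssim 9^{-k}$, and your stationary-phase bound only bites when $v(y_j)/n \gtrsim k^{Ck}$; in between you fall back on the trivial bound $|\psi_\xi(y_j)|\le 1$, which gives nothing. The proposed fix --- ``the good coordinates carry a constant fraction of the variance'' --- is false once $\|\xi\|$ is not small. Indeed, regularity applied to $v(y)^2$ (a degree-$4k$ SoS) gives $\tfrac1n\sum_j v(y_j)^2 \le \tfrac32\cdot 9^k\|\xi\|^4$, so by Cauchy--Schwarz the coordinates with $v(y_j) \le M$ carry half the variance only once $M \gtrsim (9^k/\eps)\|\xi\|^2$; for Taylor to apply to those coordinates you also need $M \le n\cdot 9^{-k}$, which is compatible only for $\|\xi\|^2 \lesssim \eps n/81^k$. (Your Markov argument gives a coordinate count, not a variance fraction, and the pointwise bound $|y_j|\lesssim n^{1/(8k)}$ only worsens the threshold.) Thus for $\|\xi\|^2$ in the window $[\eps n\cdot 9^{-O(k)},\ n k^{Ck}]$ --- an annulus of volume $n^{\Theta(k)}$, vastly larger than the target $e^{-\sqrt n}$ --- your proposal has no bound at all.

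The paper closes this window with two ingredients you lack. First, \cref{lem:PolyFourierBoundLem} gives a \emph{non-trivial} bound $|\E_{\rvg}[e^{ip(\rvg)}]|\le 1-\Omega(k^{-Ck})$ for all intermediate variance scales; this middle branch is a Carbery--Wright anti-concentration argument, not a Taylor estimate, and cannot be recovered by the $\cos$-expansion you use. Second, a spreadness/Paley--Zygmund argument (built from the same $\tfrac1n\sum_j v(y_j)^2$ bound above) shows that when the large-variance coordinates carry the variance, there are $\Omega(\eps^2 9^{-k} n)$ of them; multiplying a per-factor loss of $1-\Omega(k^{-Ck})$ over that many coordinates yields $e^{-\Omega(\eps^2 k^{-O(k)} n)} \le e^{-\sqrt n}$ given $k\le \log(\eps n)/(C\log\log n)$. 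Without the intermediate branch the case analysis collapses, so the proposal as stated does not prove the lemma.
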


For frequency vectors of low norm, we will prove the following (the proof is given in \cref{sec:frequency_matching}).
\begin{lemma}[Frequency matching]
\torestate{
    \label{lem:frequency_matching_NEW}
    Let $\delta, R \geq 0$ and $k,D \in \N$ such that $R^2 \leq \min\left\{c \cdot \frac {D} k, \exp\parens*{\log(cn)/k} \right\}$ where $c$ is sufficiently small.
    Then
    $$
        \sup_{\norm{\xi} \leq R} \Abs{\FakePlantedShadowHat(\xi) - \NullShadowHat(\xi) } \leq 2^{-\Omega(R^2)} + O(\delta) +(1+\delta) \cdot \frac{2^{O(k)}} n\,.
    $$
}
\end{lemma}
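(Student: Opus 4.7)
The plan is to decompose
\begin{align*}
|\FakePlantedShadowHat(\xi) - \NullShadowHat(\xi)| \le |\FakePlantedShadowHat(\xi) - \PlantedShadowHat(\xi)| + |\PlantedShadowHat(\xi) - \NullShadowHat(\xi)|\mcom
\end{align*}
bounding the first term by the conditioning probability and the second by a Taylor-plus-LDLR argument. Throughout, I set $T = \lceil CR^2\rceil$ for a sufficiently large absolute constant $C$; the hypotheses on $R$ ensure $kT \le D$ and $T \le \exp(\log(cn)/k)$, so \cref{lem:moment_bounds} applies at both orders $T$ and $2T$.

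\textbf{Conditioning step.} By \cref{lem:regular_points}, $\dtv(\Planted, \Planted') \le (1+\delta)\cdot 2^{O(k)}/n$. Data processing (applied to the noise operator $\OU_\e$ and then to $F_k$) gives
\begin{align*}
|\FakePlantedShadowHat(\xi) - \PlantedShadowHat(\xi)| \le 2\dtv(\PlantedShadow,\FakePlantedShadow) \le (1+\delta)\cdot 2^{O(k)}/n\mper
\end{align*}

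\textbf{Moment-matching step.} Let $X_\xi(w) \coloneqq \langle\xi, F_k(w)\rangle$, a degree-$k$ polynomial with $\E_{\Null}[X_\xi^2] = \|\xi\|^2 \le R^2$. Taylor-truncate
\begin{align*}
e^{iX_\xi(w)} = \widetilde P(w) + \rho(w), \qquad \widetilde P(w) \coloneqq \sum_{\ell=0}^{T-1}\frac{(iX_\xi(w))^\ell}{\ell!}\mcom\qquad |\rho(w)| \le \frac{|X_\xi(w)|^T}{T!}\mcom
\end{align*}
so that $\widetilde P$ is a complex polynomial of degree $\le k(T-1) \le D$. Split $\widetilde P = \widetilde P_R + i\widetilde P_I$ into real and imaginary parts, each a real polynomial of degree $\le D$. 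Using the Mehler identity $\E_{\rvg \sim \Null}[h_\alpha(\sqrt{1-\e}y+\sqrt{\e}\rvg)] = (1-\e)^{|\alpha|/2}h_\alpha(y)$ shows that $\chisquaretrunc{\OU_\e\Planted}{\Null} \le \chisquaretrunc{\Planted}{\Null} \le \delta$, so applying the LDLR advantage bound to $\widetilde P_R$ and $\widetilde P_I$ separately yields
\begin{align*}
|\E_{\OU_\e\Planted}[\widetilde P] - \E_{\Null}[\widetilde P]| \le \sqrt{\delta}\cdot \sqrt{\Var_{\Null}[\widetilde P_R] + \Var_{\Null}[\widetilde P_I]} \le \sqrt{\delta}\cdot \sqrt{\E_{\Null}[|\widetilde P|^2]}\mper
\end{align*}
Since $|\widetilde P| \le |e^{iX_\xi}| + |\rho| = 1 + |\rho|$ and by \cref{lem:moment_bounds} at order $2T$ we have $\E_{\Null}[|\rho|^2] \le \E_{\Null}[|X_\xi|^{2T}]/(T!)^2 \le (O(R^2)/T)^T \le 2^{-\Omega(T)}$ for $T \ge CR^2$, it follows that $\sqrt{\E_{\Null}[|\widetilde P|^2]} \le 2$. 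For the remainder, \cref{lem:moment_bounds} at order $T$ similarly gives $\max(|\E_{\Null}[\rho]|, |\E_{\OU_\e\Planted}[\rho]|) \le (1+\delta)\cdot (O(R^2)/T)^{T/2} \le 2^{-\Omega(T)}$. Combining with $T = \Theta(R^2)$,
\begin{align*}
|\PlantedShadowHat(\xi) - \NullShadowHat(\xi)| \le O(\sqrt{\delta}) + 2^{-\Omega(R^2)}\mcom
\end{align*}
which together with the conditioning step yields the stated bound (absorbing $\sqrt{\delta}$ into $O(\delta)$, or equivalently noting $\sqrt{\delta} = \delta^{\Omega(1)}$ as needed in \cref{thm:main_technical}).

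\textbf{Main obstacle.} The crux is controlling $\E_{\Null}[|\widetilde P|^2]$ in the moment-matching step. A naive term-by-term Cauchy--Schwarz bound using $\E_{\Null}[X_\xi^{2\ell}] \lesssim O(2\ell)^\ell R^{2\ell}$ leads to $\sum_\ell (O(R^2)/\ell)^{\ell/2}$, which peaks at $\ell \approx R^2/e$ with value $e^{\Omega(R^2)}$ and completely swamps the $\sqrt{\delta}$ savings. The decisive observation is that $\widetilde P$ is a Taylor truncation of the unit-modulus function $e^{iX_\xi}$, whose $L^2$ norm under $\Null$ is exactly $1$; this property emerges only through the cancellations between successive Taylor coefficients and is invisible to term-by-term analysis. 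Exploiting it reduces $\E_{\Null}[|\widetilde P|^2]$ to essentially $1$ plus an exponentially small remainder, which is exactly what is needed. This is also the reason the Taylor-expansion viewpoint is preferable to a direct Hermite expansion of $e^{iX_\xi}$: in the latter, one loses this boundedness and has to fight to control high-order Hermite coefficients.
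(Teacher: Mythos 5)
The proposal is correct and takes essentially the same route as the paper's proof. Both arguments (i) use \cref{lem:regular_points} and data processing to reduce to bounding $\abs{\PlantedShadowHat(\xi)-\NullShadowHat(\xi)}$, (ii) Taylor-truncate the characteristic function at order $T=\Theta(R^2)$, apply the LDLR advantage bound to the truncation, and control the remainder via the sub-Gaussian moment bounds of \cref{lem:moment_bounds}, and (iii) handle the one nontrivial step---bounding $\E_{\Null}$ of the square of the Taylor polynomial---by observing it is $L^2(\Null)$-close to a bounded function (paper: $\cos$, $\sin$; you: $e^{iX_\xi}$ with $|\widetilde P|\le 1+|\rho|$), rather than by the doomed term-by-term expansion. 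The only cosmetic difference is that you keep the complex exponential and split $\widetilde P$ into real and imaginary parts only when applying the LDLR bound, whereas the paper decomposes $e^{iy}=\cos y+i\sin y$ from the start. Your remark that the literal hypothesis $\chisquaretrunc{\Planted}{\Null}\le\delta$ yields advantage $\le\sqrt{\delta}$, hence an $O(\sqrt{\delta})$ term rather than $O(\delta)$, is a fair nitpick: the paper consistently writes $\delta$ where $\sqrt{\delta}$ would follow from \cref{def:LDLR-intro}, but as you observe this only changes the constant in the exponent of the final $\delta^{\Omega(1)}$ bound in \cref{thm:main_technical}.
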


\paragraph{Deriving $\ell_\infty$ bounds.}

We will next derive \cref{thm:infty_bound}.

\begin{proof}[Proof of \cref{thm:infty_bound}]

Let $R$ to be chosen below.
Then, for all $y \in \calR_{k,m}$,
\begin{align*}
    \Abs{\NullShadow(y) - \FakePlantedShadow(y)} &= \Abs{\int_{\R^k} e^{-i\iprod{\xi,y}} \Paren{{\NullShadowHat}(\xi) -{\FakePlantedShadowHat}(\xi)} \dif\xi } \\
    &\leq \int_{\R^k} \Abs{{\NullShadowHat}(\xi) - {\FakePlantedShadowHat}(\xi)} \dif\xi \\
    &\leq \int_{\norm{\xi} \leq R} \Abs{{\NullShadowHat}(\xi) -{\FakePlantedShadowHat}(\xi)} \dif\xi + \int_{\norm{\xi} > R} \Abs{{\NullShadowHat}(\xi)} \dif\xi + \int_{\norm{\xi} > R} \Abs{{\FakePlantedShadowHat}(\xi)}  \dif\xi \,.
\end{align*}

\paragraph{Choosing parameters.}
We next choose $R$ and such that the pre-conditions of \cref{lem:fourier_decay,lem:frequency_matching_NEW} are met.
Set $R^2 =\min\left\{c \cdot \frac {D} k, \exp\parens*{\log(cn)/k}\right\}$ for a sufficiently small constant $c$.
We first verify that $R^2 \geq C' \frac k \e$ for a large constant $C'$.
Note that by our condition on $D$ we have that $\frac{D}{k} \geq C \frac k \e$.
It remains to check that 
$$
    C \cdot \frac k \e \leq \exp\parens*{\log(cn)/k} \,.
$$
Taking logs this is implied by
$$
    C k \log (k/\e) \leq \log n \mcom 
$$
which holds when $k \leq \e \log(n) / C' \log \log (n/\e)$.
Lastly, it also holds that $k \leq \log(n/\e)/C \log\log n$ as required by \cref{lem:fourier_decay}.

\paragraph{Final bound.}
By \cref{lem:fourier_decay}, the last two terms are at most 
\begin{align*}
    \Omega(\e)^{-k/2} \cdot e^{-\Omega(\e R^2)} + e^{-\sqrt{n}} 
    &= \Omega(\e)^{-k/2} \cdot \parens*{e^{-\Omega\parens*{\e \cdot \tfrac D k}}+ e^{-\e \cdot \Omega(n)^{1/k}}} + e^{-\sqrt{n}} \\
    &= \Omega(\e)^{-k/2} \cdot \parens*{e^{-\Omega\parens*{\e \cdot \tfrac D k}}+ e^{-\e \cdot \Omega(n)^{1/k}}} \,.
\end{align*}
We claim that under our parameter choices this simplifies to 
$$
e^{-\Omega\parens*{\e \cdot \tfrac D k}}+ e^{-\e \cdot \Omega(n)^{1/k}} \,.
$$
For this, it is enough to show that $\Omega(n)^{1/k} \geq \frac{C'} \e k \log(1/\e)$ and $\e \cdot \frac D k \geq C' k \log(1/\e)$.
The second one follows immediately by our lower bound on $D$.
For the first one, note that $k \leq \e \log n/C \log \log (n/\eps)$ so $\frac{C'} \e k \log (1/\e) \leq \log n \log (1/\e)$.
Taking logs we obtain
\begin{align*}
    \frac{\log(cn)}{k} \geq \frac{\log n}{2k} \geq \frac {C \cdot \log \log (n/\eps)} {2\e} \geq \frac{C \cdot \parens*{\log \log n + \log \log (1/\e)}}{2} \geq C' \log k + C' \log \log(1/\e) \,.
\end{align*}

Further, by \cref{lem:frequency_matching_NEW}, the first term is at most
    $$
        2^{-\Omega(R^2)} + O(\delta) + (1+\delta) \cdot \frac{2^{O(k)}}{ n} \,.
    $$
    Note that $2^{-\Omega(R^2)} \leq e^{-\Omega(\e R^2)}$, and $\delta \cdot \frac{2^k} n \leq O(\delta)$ (by our choice of $k$).
    Further, since $\log(cn)/k \geq \frac {C \cdot \log \log n} \e$, it holds that $\Omega(n)^{1/k} \geq \log^{(C/\e)}(n) \geq \log^C(n)/\e$.
    So $e^{-\e \cdot \Omega(n)^{1/k}} \leq e^{-\log^{C}(n)} \ll 2^{O(k)} / n$.
    Thus, we obtain the overall bound
\begin{equation*}
   \Abs{\NullShadow(y) - \FakePlantedShadow(y)} \leq e^{-\Omega\parens*{\e \cdot \tfrac D k}} + O(\delta) + \frac{2^{O(k)}} n \mper \qedhere
\end{equation*}

For the simplified form, we notice that $2^{O(k)} = n^{o(1)}$.
\end{proof}

\subsubsection{Fourier Decay}
\label{sec:fourier_decay}
In this section we will prove \cref{lem:fourier_decay}, restated below.
\restatelemma{lem:fourier_decay}

\noindent
The key estimate we need for the proof of this lemma is the following set of bounds on the expectations of Fourier characters of low-degree polynomials.
The proof is given in \Cref{sec:characteristic}.
\begin{lemma}
\torestate{
    \label{lem:PolyFourierBoundLem}
    Let $p$ be a univariate polynomial of degree $\le k$.
    Then:
    \begin{itemize}
        \item If $\var_{\rvg\sim\calN(0,1)}[p(\rvg)] \leq 9^{-k}$, we have that 
        $$
            \left|\E_{\rvg\sim\calN(0,1)}[\exp(i p(\rvg))]\right| \leq 1 - \frac{1}{4}\var_{\rvg\sim\calN(0,1)}[p(\rvg)] \mper
        $$
        \item If $\var_{\rvg\sim\calN(0,1)}[p(\rvg)] = k^{c k}$ for any $c\in \R$, we have that
        $$
            \left|\E_{\rvg\sim\calN(0,1)}[\exp(i p(\rvg))]\right| \leq 1-\frac{1}{16k} k^{- |c| k} \mper
        $$
        \item Finally, if $\var_{\rvg\sim\calN(0,1)}[p(\rvg)]\geq k^{C k}$ for a sufficiently large $C > 0$, we have that
        $$
            \left|\E_{\rvg\sim\calN(0,1)}[\exp(i p(\rvg))]\right| \leq O \left( \var_{\rvg\sim\calN(0,1)}[p(\rvg)]^{-\frac{1}{4k}} \right) \mper
        $$
    \end{itemize}
}
\end{lemma}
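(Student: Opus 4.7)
The plan is to handle each of the three variance regimes with a tailored argument.  All three build on the decomposition of the characteristic function, and the latter two cases are unified by the duplication identity $|\E[e^{ip(\rvg)}]|^2 = \E_{\rvg, \rvg'}[\cos(p(\rvg)-p(\rvg'))] = 1 - \E[1-\cos D]$, where $D = p(\rvg)-p(\rvg')$ is a degree-$\le k$ polynomial in two independent Gaussians with mean zero and variance $2\sigma^2$.

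For Case 1 ($\sigma^2\le 9^{-k}$), I will first center $p$, replacing it by $q = p-\E[p(\rvg)]$; since $|\E[e^{iq}]| = |\E[e^{ip}]|$, it suffices to treat $q$.  Taylor expanding gives
\[
\E[e^{iq}] = 1 - \tfrac{1}{2}\sigma^2 + \sum_{j\ge 3}\frac{i^j}{j!}\E[q(\rvg)^j],
\]
and Gaussian hypercontractivity (\Cref{fact:gauss_hypercontractivity}) bounds $\E[|q|^j]\le (j-1)^{jk/2}\sigma^j$.  Under the hypothesis $\sigma^2\le 9^{-k}$, the remainder series is dominated by $\sigma^2/4$, and the triangle inequality closes the case.

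For Case 3 ($\sigma^2\ge k^{Ck}$), I will use a Gaussian van-der-Corput argument.  By \Cref{lem:deriv-bound}, $\|p'\|_{L^2(\gamma)}^2\ge\sigma^2$, and Carbery-Wright (\Cref{fact:carbery-wright}) applied to the degree-$\le k-1$ polynomial $p'$ gives $\Pr[|p'(\rvg)|\le\lambda]\le Bk(\lambda/\sigma)^{1/(k-1)}$.  The complementary set splits $\R$ into at most $O(k)$ intervals on which $|p'|\ge\lambda$; on each I integrate by parts against the antiderivative $e^{ip}/(ip')$, producing boundary terms of size $O(1/\lambda)$ weighted by the Gaussian density, plus an interior term controlled by the decay of the Gaussian weight and $\|p''/(p')^2\|_\infty$ on the interval.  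Optimizing at $\lambda \sim \sigma^{1/k}$ yields $|\E[e^{ip}]|\lesssim k\sigma^{-1/k}$, which in the regime $\sigma\ge k^{Ck/2}$ implies the stated $O(\sigma^{-1/(2k)})$ bound once the factor $k$ is absorbed.

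Case 2 is the unifying ``baseline'' bound and the most delicate step.  I will work with $\E[1-\cos D]$ directly.  When $\sigma^2\le 1$, the pointwise bound $1-\cos(x)\ge x^2/4$ for $|x|\le 1$ combined with a hypercontractive truncation of the event $\{|D|>1\}$ yields $\E[1-\cos D]\gtrsim\sigma^2/k$.  When $\sigma^2\ge 1$, I will apply Carbery-Wright to each \emph{shifted} polynomial $D-2\pi m$ to produce $\Pr[\operatorname{dist}(D, 2\pi\Z)\le\eta]\le O(k)\cdot\#\{m\}\cdot(\eta/\sigma)^{1/k}$, and use a high-moment tail bound on $D$ (again via hypercontractivity) to restrict the relevant range of $m$ to $|m|\lesssim \sigma\cdot\polylog$.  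Combined with the pointwise bound $1-\cos(x)\ge 1/2$ off a neighborhood of $2\pi\Z$, this gives $\E[1-\cos D]\gtrsim 1/(k\sigma^2)$.  Substituting $\sigma^2=k^{ck}$ unifies both sub-regimes into $\E[1-\cos D]\gtrsim k^{-|c|k}/k$, and the elementary inequality $|\E[e^{ip}]|\le\sqrt{1-\E[1-\cos D]}\le 1-\E[1-\cos D]/2$ yields the stated bound.  \textbf{The main obstacle} is precisely this moderate-$\sigma$ step: extracting the explicit $1/(16k)$ factor requires simultaneously controlling the mass of $D$ near \emph{every} multiple of $2\pi$, which a single-point anti-concentration bound does not handle, and the book-keeping between the Carbery-Wright estimates, the hypercontractive truncation of $m$, and the pointwise cosine bounds must be tight enough that the final constants match all three cases seamlessly.
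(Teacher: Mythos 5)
Your Cases 1 and 3 follow essentially the paper's route: a truncated Taylor expansion plus Gaussian hypercontractivity when the variance is tiny, and a van der Corput / integration-by-parts argument with Carbery--Wright applied to $p'$ when it is very large. One wrinkle in your Case 1: you write $\E[e^{iq}]$ as the full Taylor series $\sum_{j}\frac{i^j}{j!}\E[q^j]$ and propose to bound the remainder term by term using $\E[|q|^j]\le (j-1)^{jk/2}\sigma^j$; but for $k\ge 3$ the resulting series does not converge absolutely (the $j$th term is of order $(j^{k/2-1}e\sigma)^j$, which blows up), so ``the remainder series is dominated by $\sigma^2/4$'' cannot be read off from summing the moment bounds. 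The paper instead truncates $\cos$ at second order with a fourth-order Lagrange remainder (and $\sin$ at first order with a third-order remainder), which only involves the fourth and third absolute moments and avoids the divergence; your Case 1 goes through with that small but necessary adjustment.

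Case 2 is where you genuinely diverge from the paper, and the obstacle you flag is a real gap. Starting from $|\E e^{ip}|^2 = 1-\E[1-\cos D]$ with $D=p(\rvg)-p(\rvg')$ is a natural move, but a union bound of Carbery--Wright over the lattice shifts $D-2\pi m$ cannot deliver the claimed bound as you have set it up. For $\sigma\ge 1$ and $\eta$ of constant size (which is what $1-\cos x\ge 1/2$ ``off a neighborhood'' demands), you must cover $|m|\lesssim \sigma\cdot 2^{\Theta(k)}$ shifts to capture the typical range of $D$, and each contributes $\Theta\big(k(\eta/\sigma)^{1/k}\big)$, so the union bound totals $\gtrsim k\sigma^{1-1/k}2^{\Theta(k)}\ge 1$: the per-shift anti-concentration gain $\sigma^{-1/k}$ is swamped by the $\sim\sigma$ cardinality of the lattice window, and the estimate is vacuous. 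Shrinking $\eta$ to rescue the union bound forces $\eta\lesssim\sigma^{1-k}$, and then $1-\cos x\gtrsim x^2$ only yields $\E[1-\cos D]\gtrsim \sigma^{2-2k}$, far short of the stated $\Omega(k^{-|c|k}/k)\sim\sigma^{-2}/k$. The paper sidesteps the union bound entirely via the rotation trick: writing $\E[e^{ip(\rvg)}]=\E_{\rvx,\rvx'}\E_\theta\big[e^{ip(\rvx_\theta)}\big]$ with $\rvx_\theta=\cos\theta\,\rvx+\sin\theta\,\rvx'$, it applies Carbery--Wright \emph{once} to make $|p(\rvx)-p(\rvx')|$ large with constant probability, and then uses continuity of $\theta\mapsto p(\rvx_\theta)$ (an intermediate-value argument) together with a Markov bound on $\E_\theta\big[(\partial_\theta\, p(\rvx_\theta))^2\big]$ to produce a $\theta$-interval on which $p(\rvx_\theta)$ stays far from $2\pi\Z$ --- never having to identify which multiple of $2\pi$ is nearest. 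That is exactly the bookkeeping your union bound cannot close, and the rotation argument is the clean way around it.
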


In the proof of \Cref{lem:fourier_decay}, we also need a few well-known facts.

\begin{fact} \label{fact:ball-volume-gaussian-norm}
    Let $x$ be $k$-dimensional.
    For any $r > 0$, $c > 0$, and $R > 4\sqrt{k/c}$,
    \begin{align*}
        \vol(B_r) = \int_{\|x\| \leq r} dx \leq O(r^k)
        \mcom  \quad
        \int_{\|x\| \geq R} e^{-c\|x\|^2}\ dx \leq (2\pi/c)^{k/2} \cdot e^{-cR^2/2} \mper
    \end{align*}
\end{fact}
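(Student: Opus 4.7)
The statement splits into two elementary bounds that should not require any new ideas beyond standard calculus. For the first bound, the plan is to invoke the closed-form volume of a Euclidean ball in $\R^k$, namely $\vol(B_r) = \pi^{k/2} r^k / \Gamma(k/2+1)$. Since the prefactor depends only on $k$, it is absorbed into the $O(\cdot)$.

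For the second bound, rather than computing the tail via polar coordinates and estimating an incomplete $\Gamma$-function, I would use a dominated-decay trick that gives the stated bound almost immediately. Specifically, on the region $\|x\| \geq R$, I factor
\[
e^{-c\|x\|^2} \;=\; e^{-cR^2/2}\cdot e^{-c\|x\|^2/2}\cdot e^{-c(\|x\|^2 - R^2)/2} \;\leq\; e^{-cR^2/2}\cdot e^{-c\|x\|^2/2},
\]
since the third factor is $\leq 1$ whenever $\|x\|\geq R$. Integrating and extending the domain from $\{\|x\|\geq R\}$ to all of $\R^k$,
\[
\int_{\|x\|\geq R} e^{-c\|x\|^2}\,dx \;\leq\; e^{-cR^2/2}\int_{\R^k} e^{-c\|x\|^2/2}\,dx \;=\; (2\pi/c)^{k/2}\, e^{-cR^2/2},
\]
where the final equality is the standard Gaussian integral $\int_{\R^k} e^{-a\|x\|^2}dx = (\pi/a)^{k/2}$ with $a = c/2$.

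No step here is subtle; the condition $R > 4\sqrt{k/c}$ is not actually required for the inequalities to hold, but is included because it is the regime in which the bound is informative (i.e., $e^{-cR^2/2} \ll (2/\pi)^{k/2}$, so the tail mass is much smaller than the total mass $(\pi/c)^{k/2}$). The only minor obstacle is correctly tracking the Gaussian normalization constants when converting between the $e^{-c\|x\|^2}$ and $e^{-c\|x\|^2/2}$ integrals; once this bookkeeping is set, the factor $(2\pi/c)^{k/2}$ in the statement emerges directly from the $a = c/2$ choice above.
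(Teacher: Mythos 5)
The paper states this fact without proof, so there is no argument to compare against; your proof is correct and is exactly the kind of one-line argument the authors must have had in mind. Both halves check out: the exact ball-volume formula $\vol(B_r)=\pi^{k/2}r^k/\Gamma(k/2+1)$ gives the first bound (and in fact the constant $\pi^{k/2}/\Gamma(k/2+1)$ is bounded uniformly over $k$, by Stirling, so the $O(\cdot)$ can be taken universal), and the factoring $e^{-c\|x\|^2}=e^{-cR^2/2}\,e^{-c\|x\|^2/2}\,e^{-c(\|x\|^2-R^2)/2}$ with the last factor $\leq 1$ on $\{\|x\|\geq R\}$ cleanly yields the second after extending to all of $\R^k$ and evaluating the Gaussian integral with $a=c/2$. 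Your observation that the hypothesis $R>4\sqrt{k/c}$ is not needed for validity, only for the bound to be nontrivially small relative to the total mass $(\pi/c)^{k/2}$, is also accurate.
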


\noindent
With these estimates, we can now complete the proof of \Cref{lem:fourier_decay}:


\begin{proof}[Proof of \Cref{lem:fourier_decay}]
    For any $\xi = (\xi_1,\ldots,\xi_k) \in \R^k$, let $p_\xi (x): \R \to \R$ be the univariate polynomial $p_\xi (x) = \sum_{\ell = 1}^k \xi_\ell h_\ell (x)$. 
    Observe that by the orthogonality of the Hermite polynomials, $\Var p_\xi (\bg) = \|\xi\|^2$.
    Recall that the vector $F_k$ (on input $x \in \R^n$) has coordinates $\tfrac 1 {\sqrt{n}} \sum_{j=1}^n h_\ell(x_j)$ for $\ell = 1, \ldots, k$.
    Thus,
    \begin{align*}
        \iprod{F_k(x), \xi} = \sum_{\ell = 1}^k (F_k(x))_\ell \cdot \xi_\ell = \frac 1 {\sqrt{n}} \sum_{\ell = 1}^k \parens*{\sum_{j=1}^n h_\ell(x_j)} \xi_\ell = \frac 1 {\sqrt{n}} \sum_{j=1}^n \parens*{\sum_{\ell=1}^k h_\ell(x_j) \xi_\ell} = \frac 1 {\sqrt{n}} \sum_{j=1}^n p_\xi(x_j) \,.
    \end{align*}
    The Fourier coefficients $\NullShadowHat(\xi)$ and $\FakePlantedShadowHat(\xi) \in \C$ are the characteristic functions of $\angles*{F_k, \xi}$ under $\Null$ and $\OU_\e \Planted'$ respectively (where $\Planted'$ is the planted distribution conditioned on being $4k$-regular; see \Cref{def:regular_point}):
    $\NullShadowHat(\xi) = \E_{\rvx\sim \Null}[e^{i \angles{F_k(\rvx), \xi}}]$
    and $\FakePlantedShadowHat(\xi) = \E_{\rvx\sim \OU_\e\Planted'}[e^{i \angles*{F_k(\rvx), \xi}}]$.
    We will focus on bounding $\FakePlantedShadowHat(\xi)$, since $\NullShadowHat(\xi)$ has the same bound if we set $\eps = 1$.
    
    Note that $\rvx \sim \OU_\e \Planted'$ can be sampled by first sampling $\rvy \sim \Planted'$ and setting $\rvx = \sqrt{1-\eps} \rvy + \sqrt{\eps}\rvg$ for $\rvg \sim \calN(0, \I_n)$.
    Our strategy is to fix $y\in \R^n$, which is assumed to be regular, and use the randomness of $\rvg$ to upper bound the characteristic functions (for large frequencies $\|\xi\| > R$).

    For a fixed $y\in \R^n$, let $\rvz = \sqrt{1-\eps}y + \sqrt{\eps} \rvg$ (depending on $y$ implicitly), and
    \begin{align*}
        \E_{\rvg\sim \calN(0,\I_n)}[e^{i \angles{F_k(\rvz), \xi}}]
        &= \E_{\rvg\sim \calN(0, \I_n)} \bracks*{\exp \parens*{\frac{i}{\sqrt{n}} \sum_{j=1}^n p_{\xi}(\sqrt{1-\eps} y_j + \sqrt{\eps} \rvg_j) } } \\
        &= \prod_{j=1}^n \E_{\rvg_j \sim \calN(0,1)} \bracks*{ \exp\parens*{\frac{i}{\sqrt{n}} \cdot p_{\xi}(\sqrt{1-\eps} y_j + \sqrt{\eps} \rvg_j) }} \mper
        \numberthis \label{eq:fourier-product-form}
    \end{align*}

    Next, let $\mu(y_j) \coloneqq \E_{\rvg\sim\calN(0,1)}[p_{\xi}(\sqrt{1-\eps}y_j + \sqrt{\eps} \rvg)]$ and $v(y_j) \coloneqq \Var_{\rvg\sim \calN(0,1)}[p_{\xi}(\sqrt{1-\eps} y_j + \sqrt{\eps} \rvg)] = \E_{\rvg}[p_{\xi}(\rvz_j)^2] - \mu(y_j)^2$
    (here we drop the dependence on $\xi,\eps$ for convenience).
    Note that by the definition of the noise operator, $\mu(y_j) = \sum_{\ell=1}^k (1-\eps)^{\ell/2} \xi_{\ell} h_k(y_j)$.
    Moreover, observe that $v(y_j)$ is a univariate sum-of-squares polynomial of degree at most $2k$, since the variance can be written as an expected square.
    Thus, by the $4k$-regularity of $y$ (\Cref{def:regular_point}),
    we have
    \begin{align*}
        \frac{1}{n} \sum_{j=1}^n v(y_j) 
        &\geq \frac{1}{2} \E_{\rvh \sim \calN(0,1)}[v(\rvh)]
        = \frac{1}{2} \E_{\rvh} \bracks*{ \E_{\rvg} \bracks*{p_{\xi}(\sqrt{1-\eps} \rvh + \sqrt{\eps} \rvg)^2} - \mu(\rvh)^2 } \\
        &= \frac{1}{2} \parens*{ \E_{\rvg}[p_{\xi}(\rvg)^2] - \E_{\rvg}[\mu(\rvg)^2]}
        = \frac{1}{2} \parens*{ \|\xi\|^2 - \sum_{\ell=1}^k (1-\eps)^{\ell} \xi_\ell^2 } \\
        &\geq \frac{\eps}{2} \|\xi\|^2 \mper
        \numberthis \label{eq:variance-sum-lower-bound}
    \end{align*}
    On the other hand, by the $4k$-regularity of $y$ again, we have
    \begin{align*}
        \frac{1}{n} \sum_{j=1}^n v(y_j)^2 
        &\leq \frac{3}{2} \E_{\rvh\sim \calN(0,1)}[v(\rvh)^2]
        \leq \frac{3}{2} \E_{\rvh} \bracks*{ \E_{\rvg} \bracks*{ p_{\xi}(\sqrt{1-\eps} \rvh + \sqrt{\eps} \rvg)^2 }^2 } \\
        &\leq \frac{3}{2} \E_{\rvg} \bracks*{p_{\xi}(\rvg)^4}
        \leq \frac{3}{2} \cdot 9^k \E_{\rvg}[p_{\xi}(\rvg)^2]^2
        = \frac{3}{2} \cdot 9^k \|\xi\|^4 \mcom
        \numberthis \label{eq:squared-variance-upper-bound}
    \end{align*}
    where we use Jensen's inequality and the 2-to-4 hypercontractivity for polynomials in Gaussian space (\Cref{fact:gauss_hypercontractivity}).
    The above upper bound can be interpreted as a ``spreadness'' condition of $\{v(y_j)\}_{j\in[n]}$, which will be important later.

    To bound \Cref{eq:fourier-product-form}, we will use \Cref{lem:PolyFourierBoundLem} which gives upper bounds in terms of the variances.
    In our case, for each $j\in [n]$, we have $\Var[\frac{1}{\sqrt{n}} p_{\xi}(\sqrt{1-\eps}y_j + \sqrt{\eps} \rvg_j] = v(y_j) / n$. Thus,
    \begin{align*}
        \prod_{j=1}^n \abs*{ \E_{\rvg_j \sim \calN(0,1)} \bracks*{ \exp\parens*{\frac{i}{\sqrt{n}} \cdot p_{\xi}(\sqrt{1-\eps} y_j + \sqrt{\eps} \rvg_j) }}}
        \leq \exp \parens*{- \sum_{j=1}^n  \rho(v(y_j)/n)} \mcom
    \end{align*}
    where $\rho$ is defined as
    \begin{align*}
        \rho(a) \coloneqq \Omega(1) \cdot
        \begin{cases}
            a & 0 \leq a \leq 2^{-Ck} \mcom \\
            k^{-Ck} &  2^{-Ck} \leq a \leq k^{Ck} \mcom \\
            \frac{1}{k} \log a & a \geq k^{Ck} \mcom
        \end{cases}
    \end{align*}
    for a large enough universal constant $C > 1$.
    Note that in the regime $a \geq 2^{-Ck}$, $\rho(a)\geq \Omega(k^{-Ck})$.
    
    For a fixed $y$, let $B_1 \coloneqq \{j\in[n]: v(y_j)/n \leq 2^{-Ck} \}$ and $B_2 \coloneqq \{j\in [n]: v(y_j)/n > 2^{-Ck}\}$.
    By \Cref{eq:variance-sum-lower-bound}, we must have either
    \begin{enumerate}[(1)]
        \item $\sum_{j\in B_1} v(y_j)/n \geq \frac{\eps}{4} \|\xi\|^2$, or
        
        \item $\sum_{j\in B_2} v(y_j)/n \geq \frac{\eps}{4} \|\xi\|^2$.
    \end{enumerate}
    In the first case,
    \begin{align*}
        \exp \parens*{- \sum_{j=1}^n  \rho(v(y_j)/n)}
        \leq \exp\parens*{-\sum_{j\in B_1} \rho(v(y_j)/n)}
        \leq \exp\parens*{-\Omega(1) \sum_{j\in B_1} v(y_j)/n}
        \leq \exp\parens*{-\Omega(\eps)\|\xi\|^2} \mper
    \end{align*}
    In the second case, we use \Cref{eq:squared-variance-upper-bound} to get a lower bound on $|B_2|$:
    \begin{align*}
        \frac{\eps}{4}\|\xi\|^2 \leq \sum_{j\in B_2} \frac{v(y_j)}{n}
        \leq \sqrt{|B_2| \cdot \sum_{j\in B_2} \frac{v(y_j)^2}{n^2}}
        \leq \sqrt{|B_2| \cdot \frac{1}{n} \cdot \frac{3}{2} 9^k \|\xi\|^4} \mper
    \end{align*}
    This implies that $|B_2| \geq \Omega(\eps^2 9^{-k} n)$.
    By definition of $B_2$ and $\rho$, we have $\rho(v(y_j)/n) \geq \Omega(k^{-Ck})$ for each $j\in B_2$.
    Thus,
    \begin{align*}
        \exp \parens*{- \sum_{j=1}^n  \rho(v(y_j)/n)} \leq \exp\parens*{-|B_2| \cdot \Omega(k^{-Ck})}
        \leq \exp\parens*{-\Omega(\eps^2) k^{-Ck} n} \mper
    \end{align*}
    Therefore, combining both cases, it follows that any $y \in \R^n$ that is $4k$-regular satisfies
    \begin{align*}
        \exp \parens*{- \sum_{j=1}^n  \rho(v(y_j)/n)} 
        \leq e^{-\Omega(\eps \|\xi\|^2)} + e^{-\Omega(\eps^2 k^{-Ck} n)} \mper
    \end{align*}
    We will use the bound above for the regime $R^2 \leq \|\xi\|^2 \leq T \coloneqq n \cdot k^{Ck}/\eps$.
    \begin{align*}
        \int_{R \leq \|\xi\| \leq \sqrt{T}} |\FakePlantedShadowHat(\xi)|\ d\xi
        &\leq \int_{\|\xi\| \geq R} e^{-\Omega(\eps \|\xi\|^2)}\ d\xi
        + \int_{\|\xi\|\leq \sqrt{T}} e^{-\Omega(\eps^2 k^{-Ck} n)} \ d\xi \\
        &\leq \Omega(\eps)^{-k/2} \cdot e^{-\Omega(\eps R^2)} + O(T^{k/2}) \cdot e^{-\Omega(\eps^2 k^{-Ck} n)}  \mper
        \numberthis \label{eq:bound-small-frequency}
    \end{align*}
    Here, the second inequality uses \Cref{fact:ball-volume-gaussian-norm} and $R \gg \sqrt{k/\eps}$.
    Moreover, for $k \leq \frac{\log (\eps n)}{C'\log \log n}$ with a large enough constant $C' > C$, we have $\eps^2 k^{-Ck}n \gg \sqrt{n}$.
    Thus, we can simplify the second term in \Cref{eq:bound-small-frequency} to $e^{-\sqrt{n}}$.

    Now, we focus on the regime $\|\xi\|^2 > T$.
    By \Cref{eq:variance-sum-lower-bound,eq:squared-variance-upper-bound}, we have $\E_{j\sim[n]}[v(y_j)] \geq \frac{\eps}{2} \|\xi\|^2$ and $\E_{j\sim [n]}[v(y_j)^2] \leq \frac{3}{2}9^k \|\xi\|^4$.
    Thus, by the Paley-Zygmund inequality, we have
    $\Pr_{j\sim [n]}[v(y_j) \geq \frac{\eps}{4}\|\xi\|^2] \geq \Omega(\eps^2 9^{-k})$.
    In other words, $\Omega(\eps^2 9^{-k} n)$ of coordinates $j\in[n]$ have $v(y_j)/n \geq \frac{\eps}{4n} \|\xi\|^2$, which is larger than $k^{Ck}$, meaning $\rho(v(y_j)/n) \geq \Omega(\frac{1}{k}) \cdot \log(\frac{\eps\|\xi\|^2}{4n})$.
    Thus,
    \begin{align*}
        \exp \parens*{- \sum_{j=1}^n  \rho(v(y_j)/n)}
        \leq \parens*{\frac{\eps \|\xi\|^2}{4n}}^{-\Omega(\eps^2 9^{-k}n / k)} \mper
    \end{align*}
    Integrating this over the region $\|\xi\| > \sqrt{T}$, we get
    \begin{align*}
        \int_{\|\xi\| > \sqrt{T}} |\FakePlantedShadowHat(\xi)|\ d\xi
        \leq \int_{\sqrt{T}}^\infty \parens*{\frac{\eps r^2}{4n}}^{-\Omega(\eps^2 9^{-k} n/k)} O(r^{k-1}) \ dr
        \leq \parens*{\frac{4n}{\eps T}}^{\Omega(\eps^2 9^{-k} n/k)} \cdot O(T)^{k/2} \mper
    \end{align*}
    For our choice $T = n k^{Ck}/\eps$ and $k \ll \log n$, the above is $o(e^{-\sqrt{n}})$.
    Combined with \Cref{eq:bound-small-frequency}, this gives the desired bound $\int_{\|\xi\|>R} |\FakePlantedShadowHat(\xi)|\ d\xi \leq \Omega(\eps)^{-k/2} \cdot e^{-\Omega(\eps R^2)} + e^{-\sqrt{n}}$.

    As mentioned earlier, the bound for $\NullShadowHat(\xi)$ follows directly by setting $\eps = 1$.
    This completes the proof.
\end{proof}

\subsubsection{Frequency Matching}
\label{sec:frequency_matching}
In this section we will prove \cref{lem:frequency_matching_NEW}, restated below.
\restatelemma{lem:frequency_matching_NEW}

We will prove this by Taylor expanding the exponential function inside the Fourier transform.
By moment-matching, we can then argue that the first few terms of the expansion match approximately and that the error terms are small because both $\NullShadow$ and $\PlantedShadow$ have sub-gaussian moments.

\begin{proof}[Proof of \cref{lem:frequency_matching_NEW}]
    Let $T = CR^2$ for a sufficiently large constant $C$ and for simplicity assume this is an integer.
    Fix one $\xi$ of norm at most $R$.
    We have to bound
    $$
        \left| \E_{\rvz\sim\NullShadow}[\exp(i \iprod{\xi, \rvz})] - \E_{\rvz\sim\FakePlantedShadow}[\exp(i \iprod{\xi, \rvz})] \right| \,.
    $$
    We first note that $\exp(i \iprod{\xi, \cdot})$ is a function of magnitude at most 1.
    Thus, since $\TV(\PlantedShadow,\FakePlantedShadow) \leq (1+\delta) \frac{2^{O(k)}} n$ by \cref{lem:regular_points}, it is sufficient to show that
    $$
        \left| \E_{\rvz\sim\NullShadow}[\exp(i \iprod{\xi, \rvz})] - \E_{\rvz\sim\PlantedShadow}[\exp(i \iprod{\xi, \rvz})] \right| \leq 2^{-\Omega(R^2)} + O(\delta)\,.
    $$
    Since $e^{i y} = \cos(y) + i \cdot \sin(y)$ it is enough to bound 
    $$
        \left| \E_{\rvz\sim\NullShadow}[\cos( \iprod{\xi, \rvz})] - \E_{\rvz\sim\PlantedShadow}[\cos(\iprod{\xi, \rvz})] \right| + \left| \E_{\rvz\sim\NullShadow}[\sin( \iprod{\xi, \rvz})] - \E_{\rvz\sim\PlantedShadow}[\sin(\iprod{\xi, \rvz})] \right| \,.
    $$
    We focus on the first term, the second term is completely analogous.
    Consider a fixed vector $z$.
    Since all derivatives of $\cos(\cdot)$ are at most one in magnitude, it follows by Taylor expansion that there exist $r_z$ of magnitude at most 1 such that 
    $$
        \cos(\iprod{\xi,z}) = \underbrace{\sum_{\ell=0}^{T-1} \frac{(-1)^\ell}{(2\ell)!} \iprod{\xi,z}^{2\ell}}_{\coloneqq p_T(\iprod{\xi,z})} + r_z \cdot \frac{\iprod{\xi,z}^{2T}}{(2T)!} \mper
    $$
    Thus, by small LDLR and the triangle inequality we obtain
    \begin{align*}
        &\left| \E_{\rvz\sim\NullShadow}[\cos( \iprod{\xi, \rvz})] - \E_{\rvz\sim\PlantedShadow}[\cos(\iprod{\xi, \rvz})] \right| \\
        &\leq \left| \E_{\rvz\sim\NullShadow}[p_T( \iprod{\xi, \rvz})] - \E_{\rvz\sim\PlantedShadow}[p_T(\iprod{\xi, \rvz})] \right| + \left| \E_{\rvz\sim\NullShadow} \bracks*{r_{\rvz} \frac{\iprod{\xi,z}^{2T}}{(2T)!}} \right| + \left| \E_{\rvz\sim\PlantedShadow} \bracks*{r_{\rvz} \frac{\iprod{\xi,z}^{2T}}{(2T)!}} \right| \\
        &\leq \delta \cdot \sqrt{\Var_{\NullShadow}\bracks*{p_T(\iprod{\xi,\rvz})}} + \left| \E_{\rvz\sim\NullShadow} \bracks*{r_{\rvz} \frac{\iprod{\xi,z}^{2T}}{(2T)!}} \right| + \left| \E_{\rvz\sim\PlantedShadow} \bracks*{r_{\rvz} \frac{\iprod{\xi,z}^{2T}}{(2T)!}} \right|  \mper
    \end{align*}
    Note that $T \cdot k = CR^2 \leq D$ and $T = C R^2 \leq \exp(\log(c \cdot n)/k)$, when $c$ is small enough.
    Thus, $\iprod{\xi,\rvz}$ has order-$T$ sub-gaussian moments under both $\PlantedShadow$ and $\NullShadow$ (cf.~\cref{lem:moment_bounds}) and we can bound the last two terms as
    \begin{align*}
        \left| \E_{\rvz\sim\NullShadow} \bracks*{r_{\rvz} \frac{\iprod{\xi,z}^{2T}}{(2T)!}} \right| + \left| \E_{\rvz\sim\PlantedShadow} \bracks*{r_{\rvz} \frac{\iprod{\xi,z}^{2T}}{(2T)!}} \right| \leq (1+\delta) \cdot \frac{O\parens*{\norm{\xi}^2 T}^T}{(2T)!} = (1+\delta) \cdot O\parens*{\frac{R^2}{T}}^T  \leq (1+\delta) \cdot 2^{-\Omega(R^2)} \mper
    \end{align*}

    It remains to bound the variance term.
    Using $(a+b)^2 \leq 2a^2 + 2b^2$ and sub-gaussian moments again, we obtain
    \begin{align*}
        \Var_{\rvz \sim \NullShadow}\bracks*{p_T(\iprod{\xi,\rvz})} &\leq \E_{\rvz \sim \NullShadow} \bracks*{p_T(\iprod{\xi,\rvz})^2} \leq 2 \E_{\rvz \sim \NullShadow} \bracks*{\parens*{\cos(\iprod{\xi,\rvz}) -p_T(\iprod{\xi,\rvz})}^2} + 2\E_{\rvz \sim \NullShadow} \bracks*{\cos(\iprod{\xi,\rvz})^2} \\
        &\leq 2 \E_{\rvz \sim \NullShadow} \bracks*{\parens*{\frac{\iprod{\xi,z}^{2T}}{(2T)!}}^2} + 2 = O\parens*{\frac{R^2}{T}}^{2T} + 2 = O\parens*{1}\mper
    \end{align*}

    Overall, we obtain 
    $$
        \left| \E_{\rvz\sim\NullShadow}[\cos( \iprod{\xi, \rvz})] - \E_{\rvz\sim\PlantedShadow}[\cos(\iprod{\xi, \rvz})] \right| \leq \delta \cdot O(1) + (1+\delta) \cdot 2^{-\Omega(R^2)} = O(\delta) + 2^{-\Omega(R^2)} \mper
    $$
    
\end{proof}

\subsection{Proofs of Small Helper Lemmas}
\label{sec:proofs_helper}

In this section, we give the proofs of \cref{lem:regular_points} (high probability regular points) and \cref{lem:moment_bounds} (Gaussian moment bounds).

\paragraph{High probability regularity.}
We start with the proof of \cref{lem:regular_points}, restated below.
\restatelemma{lem:regular_points}

\begin{proof}
    We prove the desired statement via an $\e$-net argument.
    Let $\calP_\ell$ be the set of degree-$\le\ell$ polynomials over $\R$ and define the following two norms on $\calP_\ell$:
    \[
        \norm{p}^2 = \E_{\rvg\sim\calN(0,1)}[p(\rvg)^2] \; \mbox \; , \; \; \norm{p}_y^2 = \frac{1}{n}\sum_{i=1}^n p(y_i)^2 \; .
    \]
    We need to show that
    \[
        \Prob_{\rvy\sim\Planted}\bracks*{\forall p \neq 0 \in \calP_k \colon \frac{\norm{p}_{\rvy}^2}{\norm{p}^2} \in [0.5,1.5]} \geq 1 - \frac{(1+\delta)\cdot 2^{O(\ell)}}{n}\,.
    \]
    The more general case for sum-of-squares polynomials follows readily.

    \parhead{Reducing to a net.}
    Let $\calC$ be an $\eta$-net of the unit ball under $\norm{\cdot}$ comprised of unit norm elements.
    Note that since $\norm{\cdot}$ is an $\ell$-dimensional Euclidean metric, there is such a net on $\frac{1}{\eta^{O(\ell)}}$ elements.
    We first prove:
    \begin{quote}
    For a fixed $y$, suppose for all $p\in\mathcal{C}$, $\frac{\norm{p}_{y}}{\norm{p}}\in [1-\eta,1+\eta]$ for $\eta \le 0.1$, then $y$ is $\ell$-regular.
    \end{quote}

    Note that any $p$ such that $\norm{p} = 1$ can be written as $p = p_0 + \eta p'$, where $p_0 \in \calC$, and $\norm{p'} \leq 1$.
    Iterating this on $p'$, we can write 
    \[
        p = \sum_{i=0}^{\infty} \eta^i c_i p_i \,,
    \]
    where all $p_i \in \calC$, and all $|c_i| \le 1$.
    By the triangle inequality:
    \[
        \norm{p_0}_y - \sum_{i\ge 1} \eta^i \norm{p_i}_y \le \norm{p}_y \le \norm{p_0}_y + \sum_{i\ge 1} \eta^i \norm{p_i}_y
    \]
    Using the assumption on $p_i\in\calC$ and $\norm{p_i} = 1$, we obtain:
    \begin{align*}
        \norm{p}_y &\le (1+\eta)\parens*{\norm{p_0} + \sum_{i\ge 1} \eta^i \norm{p_i} } = \frac{1+\eta}{1-\eta}\mper \\
        \norm{p}_y &\ge (1-\eta)\norm{p_0} - (1+\eta)\sum_{i\ge 1} \eta^i \norm{p_i} \ge \frac{1-3\eta}{1-\eta}
    \end{align*}

    \paragraph{Union bound over the net.}
    We have left to prove that $\frac{\norm{p}_{\rvy}}{\norm{p}} \in [0.9,1.1]$ for all $p \in\mathcal{C}$ with probability at least $1-(1+\delta) \frac{2^{O(\ell)}} n$.
    We prove this via the second moment method and a union bound.
    In particular, since $\calC$ has size at most $2^{O(\ell)}$ and $\ell = c\log n$ for some sufficiently small constant $c$, it suffices to prove that for a fixed $p \in \calC$
    \[
        \Prob\bracks*{\frac{\norm{p}_{\rvy}}{\norm{p}} \not\in [0.9,1.1]} \leq \frac{(1+\delta) 2^{O(\ell)}}{n}\mper
    \]
    Note that this is implied by showing that $\lvert \norm{p}_{\rvy}^2 - \norm{p}^2\rvert \geq (0.1)^2 \norm{p}^2$ with at most the same probability.
    We will show that
    $$
    \E_{\rvy\sim\Planted}\bracks*{\parens*{\norm{p}_{\rvy}^2 - \norm{p}^2}^2 } \leq \frac{(1+\delta) \cdot 2^{O(\ell)}}{n}\cdot\norm{p}^4 \mcom
    $$
    which implies the claim via Markov's Inequality.
    Note that:
    \begin{align*}
        \E_{\rvy\sim\Planted}\bracks*{\parens*{\norm{p}_{\rvy}^2 - \norm{p}^2}^2 }
        &\le \E_{\rvg\sim\Null} \bracks*{ \parens*{ \norm{p}_{\rvg}^2 - \norm{p}^2 }^2 } + \delta\sqrt{ \E_{\rvg\sim\Null}\bracks*{ \parens*{ \norm{p}_{\rvg}^2 - \norm{p}^2 }^4 } } \\
        &\le \parens*{1+\delta\cdot 2^{O(\ell)}} \cdot \E_{\rvg\sim\Null} \bracks*{ \parens*{ \norm{p}_{\rvg}^2 - \norm{p}^2 }^2 } \\
        &= \parens*{1+\delta\cdot 2^{O(\ell)}} \cdot
        \frac{1}{n} \Var_{\rvg\sim\calN(0,1)}\left[p(\rvg)^2\right] \\
        &\le \frac{(1+\delta)\cdot2^{O(\ell)}}{n}\cdot\E_{\rvg\sim\calN(0,1)}\bracks*{ p(\rvg)^2 }^2 = \frac{(1+\delta) \cdot 2^{O(\ell)}}{n}\cdot\norm{p}^4\mcom
    \end{align*}
    where in the first line we used a bound on $\chisquaretrunc{\Planted}{\Null}$, in the second line we use $2$-to-$4$ hypercontractivity for polynomials over Gaussian space (\Cref{fact:gauss_hypercontractivity}), in the third line we expanded out the variance, and in the fourth line we once again used hypercontractivity.
\end{proof}

\paragraph{Gaussian moment bounds.}
We next give the proof of \cref{lem:moment_bounds}, restated below.
\restatelemma{lem:moment_bounds}

\begin{proof}
We will first derive the second statement from the first (which, overloading notation, we will prove up to $2T$).
Recall that $\NullShadow, \PlantedShadow$ are the laws of $F_k(\rvx)$ with $\rvx \sim \Null$ and $\rvx\sim \OU_\e \Planted$ respectively.
Note that $\Iprod{\xi, F_k(x)}^T$ is a polynomial in $x$ of degree at most $kT \leq D$.
Since $\chisquaretrunc{\Planted}{\Null} \leq \delta$ (and hence $\chisquaretrunc{\OU_\e \Planted}{\Null} \leq \delta$), it follows that
\begin{align*}
    \E_{\rvx\sim \OU_\e \Planted} \Brac{\Iprod{\xi, F_k(\rvy)}^T} &\leq \E_{\rvx\sim \Null} \Brac{\Iprod{\xi, F_k(\rvx)}^T} + \delta \sqrt{\Var_{\rvx\sim\Null}\Brac{\Iprod{\xi, F_k(\rvx)}^T}} \\
    &\leq \E_{\rvx\sim\Null} \Brac{\Iprod{\xi, F_k(\rvx)}^T} + \delta \sqrt{\E_{\rvx\sim\Null}\Iprod{\xi, F_k(\rvx)}^{2T}} \\
    &\leq (1+ \delta) \cdot O(T)^{T/2} \,.
\end{align*}

\paragraph{The Gaussian case: $\Null = \calN(0,\I_n)$.}
We next prove the first statement (note that here we will only use the condition $(CT)^k \leq n$ and hence changing $T$ by a factor of 2 does not matter).
Fix a unit vector $\xi \in \R^k$ and define $p_\xi \colon \R \rightarrow \R$ as $p_\xi(x) = \sum_{i=1}^k \xi_i h_i(x)$.
Note that $p_\xi$ is a polynomial of degree at most $k$ and satisfies $\E_{\rvg\sim\calN(0,1)}[p_\xi(\rvg)] = 0$ and $\E_{\rvg\sim\calN(0,1)}[p_\xi(\rvg)^2] = 1$.
Further, it holds that
$$
\iprod{\xi, F_k(x)} = \sum_{i=1}^k \xi_i \tilde{h}_i(x) = \tfrac 1 {\sqrt{n}} \sum_{j=1}^n \sum_{i=1}^k \xi_i h_i(x_j) = \tfrac 1 {\sqrt{n}} \sum_{j=1}^n p_\xi(x_j) \,.
$$
For simplicity, we will drop the dependence on $\xi$ in $p_\xi$.
We need to bound
$$
n^{-T/2} \E_{\rvg}\left[\left(\sum_{j=1}^n p(\rvg_j) \right)^T \right].
$$
We deal with this expectation by expanding out the power in the expectation.
We set up some notation first.
Denote by $\alpha$ a multi-index in $\N^n$ and by $\abs{\alpha} = \sum_{j=1}^n \alpha_j$ and by $\norm{\alpha}_0$ the number of non-zero entries of $\alpha$.
Then
$$
\E_{\rvg}\left[\left(\sum_{j=1}^n p(\rvg_j) \right)^T \right] = \sum_{\abs{\alpha} = T} \prod_{j=1}^n \E_{\rvg_j} \Brac{p(\rvg_j)^{\alpha_j}} \,.
$$
Note that the product is 0 for all $\alpha$ for which one of $\alpha_j = 1$.
Further, when $\alpha_j = 2$, the corresponding term is exactly 1.
Define the set
\[
    A_{m,T} = \set{\alpha \in \N^n \mid \forall j: \alpha_j \neq 1 \,, \abs{\alpha} = T \,, \norm{\alpha}_0 = m} \,.
\]
Further, by Gaussian hypercontractivity (cf. \cref{fact:gauss_hypercontractivity}) we know that for any integer $\ell$, $\E_{\rvg}[p(\rvg)^\ell] \leq \ell^{k \ell /2}$.
Thus, applying hypercontractivity for any $\alpha_j > 2$, it follows that 
$$
\E\left[\left(\sum_{j=1}^n p(\rvg_j) \right)^T \right] \leq \sum_{m=1}^{T/2} \sum_{\alpha \in A_{m,T}} \prod_{j=1}^n O(\alpha_j)^{k \alpha_j /2} = \sum_{m=1}^{T/2} \sum_{\alpha \in A_{m,T}} \prod_{j \colon \alpha_j > 2 } O(\alpha_j)^{k \alpha_j /2}\,.
$$
We claim that the product is at most $(O(T)^k)^{T/2 - m}$.
Deferring the proof to the end of the section, we first see how this yields the final bound.
Note that $\abs{A_{m,T}} \leq \binom{n}{m}\cdot  m^T \leq (n/m)^m m^T$.
Thus, using that $m \leq T$
\begin{align*}
    n^{-T/2} \E\left[\left(\sum_{j=1}^n p(\rvg_j) \right)^T \right] &\leq \sum_{m=1}^{T/2} \Paren{\frac{n}{m}}^m m^T n^{-T/2} \Paren{O(T)^k}^{T/2 - m} \\
    &\leq T^{T/2} \sum_{m=1}^{T/2} \Paren{\frac m n}^{T/2 - m} \Paren{O(T)^k}^{T/2 - m} \\
    &=  T^{T/2} \sum_{m=1}^{T/2} \Paren{\frac {m \cdot \Paren{O(T)^k} } n}^{T/2 - m} \,.
\end{align*}
Since $m \leq T/2$ and $(CT)^k \leq n$ as for a large constant $C$, it follows that each term in the sum is at most $1$ and hence the overall expression is at most $O(T)^{T/2}$.

To argue that the product is at most $(O(T)^k)^{T/2 -m}$, we will show that it is maximized when all but one of the non-zero coefficients of $\alpha$ are equal to 2 and the remaining one is equal to $T - 2m$.
Note that in this case, this is value of the product.
Let us pretend the $\alpha_j$ are real-valued but such that all are at least 2 and they still sum to $T$ (ignoring all the entries that are 0).
Note that this is a polytope with corners corresponding to entries in which all but one coordinates are 2 and the remaining one is $T-2m$.
Taking logarithms in the product, we observe that the function $\tfrac k 2 \sum_{j} \alpha_j \log(\alpha_j)$ is convex and thus maximized at one of the corners of this polytope.
This implies the claim.
\end{proof}

\section{Gaussian Space II: Connected Subgraph Counts}
\label{sec:subgraph-stats}
In this section we will prove our result of TV-closeness of subgraph counts for matrix-valued inputs.

\paragraph{Notation for this section.}
We will use the following notation throughout this section.
$n$ is the dimension of the problem, the null distribution $\Null$ is given by a random symmetric $n\times n$ matrix with independent Gaussian entries, $\Planted$ is a distribution that is $\delta$-low-degree indistinguishable from $\Null$, i.e., $\chisquaretrunc{\Planted}{\Null} \le \delta$.
We use $\eps$ to denote the amount of noise we add to the planted distribution.
Without loss of generality, we can take $\eps < 1/2$.
In particular, $\OU_\e \Planted$ denotes the distribution of $\sqrt{1-\e} \rvY + \e \rvG$ where $\rvY \sim \Planted, \rvG \sim \Null$ are indpendent.

We use $\sgraph$ to refer to a constant-sized connected graph.
We use $v(\sgraph)$ to refer to the number of vertices in $\sgraph$, and $e(\sgraph)$ to refer to the number of edges in $\sgraph$.
$E(\sgraph)$ refers to the set of all edges in $\sgraph$.
Let $\calL_{\sgraph}$ refer to the set of all \emph{distinct} injective labelings $\pi$ from vertices of $\sgraph$ to $[n]$ where we consider two labelings $\pi,\pi'$ identical if there is an automorphism of $\sgraph$ that transforms $\pi$ to $\pi'$.
For an edge $e = (u,v)$, we denote $\pi(e) = \set{\pi(u),\pi(v)}$.
For a matrix $M$, define
$$
    \chi_\sgraph(M) \coloneqq
    \frac{1}{ \sqrt{ \abs{\calL_{\sgraph}} } }
    \sum_{\pi\in\calL_{\sgraph}} \prod_{ab\in E(\sgraph)} M_{\pi(a),\pi(b)}\mper
$$

In this section, we will prove the following.
\begin{theorem} \label{thm:main-subgraph}
    Let $\delta \geq 0,  D \ge \log n \cdot \log \log n$ and $\Planted, \Null$ be such that $\chisquaretrunc{\Planted}{\Null} \leq \delta$.
    Let $\pi_{\sgraph}$ be the law of $\chi_{\sgraph}\parens*{\rvM}$ for $\rvM\sim\OU_{\eps} \Planted$, and let $\nu_{\sgraph}$ be the law of $\chi_{\sgraph}\parens*{\rvM}$ for $\rvM\sim \Null$.
    Then, for a sufficiently small constant $\alpha > 0$, we have:
    \[
        \dtv\parens*{ \nu_{\sgraph}, \pi_{\sgraph} } \le O\parens*{ \delta^{\alpha \eps^2} + n^{-\alpha\eps^2} + \frac{\polylog n}{\sqrt{n}}}\mcom
    \]
    where the $O\parens*{\cdot}$ hides constants depending on $\sgraph$.
\end{theorem}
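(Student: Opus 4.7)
The plan is to apply to the single statistic $\chi_\sgraph$ the three-step Fourier-analytic blueprint sketched in \Cref{sec:overview} for Gaussian space. Writing $f \coloneqq \chi_\sgraph$, I would bound
\begin{align*}
\dtv(\nu_\sgraph, \pi_\sgraph) \;\leq\; \tfrac12 \int_{|x|\le R} |\nu_\sgraph(x) - \pi_\sgraph(x)|\,\dif x \;+\; \Pr_{\Null}[|f| > R] \;+\; \Pr_{\OU_\eps\Planted}[|f| > R],
\end{align*}
and use Gaussian hypercontractivity (\Cref{fact:gauss_hypercontractivity}) applied to the degree-$e(\sgraph)$ polynomial $f$ together with the LDLR assumption (as in \Cref{lem:tail-bound}) to control the two tails for $R$ a suitable $\polylog n$. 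For the bulk, Fourier inversion gives the pointwise bound $|\nu_\sgraph(x) - \pi_\sgraph(x)| \leq \tfrac{1}{2\pi}\int_{\R} |\wh{\nu}_\sgraph(\xi) - \wh{\pi}_\sgraph(\xi)|\,\dif\xi$, which I would split at a threshold $|\xi| \leq R'$ versus $|\xi| > R'$ with $R' = \Theta(\polylog n)$.

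For the \emph{low-frequency} regime I would use moment matching: Taylor expanding $e^{i\xi f(\rvM)}$ to order $T = \Theta(D/e(\sgraph)) = \Theta(\log n)$ and invoking $\chisquaretrunc{\Planted}{\Null} \leq \delta$ shows that the first $T$ moments of $f$ under $\nu_\sgraph$ and $\pi_\sgraph$ agree up to an additive $O(\delta)$. The Taylor remainder is then dominated by sub-Gaussian moment bounds of the form $\E[f^T] \leq (O(T))^{T/2}$, which I would prove combinatorially by expanding $f^T$ as a sum over $T$-tuples of copies of $\sgraph$ in $K_n$ and grouping terms by the isomorphism type of the resulting multigraph (this is where the $|\calL_\sgraph|$ normalization and the connectivity of $\sgraph$ pay off). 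The calculation is the matrix analogue of \Cref{lem:moment_bounds}; pushing $T$ all the way to $\log n$ is the reason the hypothesis asks $D \ge \log n \cdot \log\log n$.

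For the \emph{high-frequency} regime, the obstacle is that $f(\rvM)$, viewed as a polynomial in the independent Gaussian entries of $\rvM$, is not a sum of genuinely independent summands, so I cannot factorize $\wh{\nu}_\sgraph(\xi)$ as a product and read off exponential decay. Instead I would replace $\chi_\sgraph(\OU_\eps \Planted)$ by a \emph{regularized} surrogate via Chatterjee's second-order Poincar{\'e}/local CLT (\Cref{lem:shivam}): conditioning on $\rvY \sim \Planted$ and writing $\OU_\eps \rvY = \sqrt{1-\eps}\rvY + \sqrt{\eps}\rvG$, this shows that the law of $f(\OU_\eps \rvY)$ is TV-close to $f(\sqrt{1-\eps}\rvY) + \sigma(\rvY)\,\rvG'$ for a one-dimensional Gaussian $\rvG'$, whose characteristic function has the Gaussian decay $e^{-\sigma(\rvY)^2 \xi^2 /2}$. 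Making this quantitative requires three high-probability estimates over $\rvY$: a lower bound on the variance $\sigma(\rvY)^2 = \Omega(\eps) \,\|\nabla f(\rvY)\|_2^2$, an upper bound on $\|\nabla f(\rvY)\|_2$, and an upper bound on $\|\nabla^2 f(\rvY)\|_{\op}$. Each of $\nabla f$ and $\nabla^2 f$ is itself a constant-sized subgraph statistic (obtained by deleting one or two edges of $\sgraph$), and I would bound them on $\rvY \sim \Planted$ by combining moment control via LDLR with the graph-matrix norm technology of \cite{AMP16,BHKKMP16} that was developed for SoS lower bounds.

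The step I expect to be the main obstacle is obtaining the Hessian spectral norm bound under $\Planted$, not $\Null$: although the graph matrix toolkit of \cite{AMP16} is designed for Gaussian inputs, here we need an analogous bound to hold with probability $1 - o(1)$ under a distribution with only $\chisquaretrunc{\Planted}{\Null} \le \delta$. I would handle this by defining a typical set $\calR \subseteq \R^{n\times n}$ on which the relevant graph matrices behave like their Gaussian counterparts and showing, in the spirit of \Cref{lem:regular_points}, that $\Pr_\Planted[\rvY \notin \calR] \leq O(\delta) + n^{-\Omega(1)}$ by applying LDLR to the squared Frobenius norms of these graph matrices (which are themselves low-degree polynomials). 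Running the full Fourier argument on $\Planted' \coloneqq \Planted \mid \calR$ and invoking the data-processing inequality to pay the truncation loss yields the theorem with the advertised rate $\delta^{\alpha\eps^2} + n^{-\alpha\eps^2} + \polylog(n)/\sqrt{n}$, after optimizing $R$ and $R'$.
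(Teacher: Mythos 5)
Your proposal follows essentially the same three-step blueprint as the paper's proof: tail-cutoff plus Fourier inversion, low-frequency moment matching via Taylor expansion of $e^{i\xi\chi_\sgraph}$ up to order $\Theta(\log n)$ against the sub-Gaussian moment bounds of \Cref{lem:subgraph-moments}, and high-frequency decay by regularizing via Chatterjee's second-order Poincar\'e bound (\Cref{lem:shivam}) with gradient/Hessian/variance estimates transported from $\Null$ to $\Planted$ through LDLR and the graph-matrix toolkit of \cite{AMP16}. One small packaging difference worth noting: rather than conditioning the input on a regular set $\calR$ and running the whole Fourier argument on $\Planted|\calR$, the paper regularizes at the \emph{output} level, defining $\ul{\pi}_\sgraph$ using a noise variance $\wt{\sigma}^2(M)$ that is floored below by $\tfrac{\eps}{2}(1-\eps)^{e(\sgraph)-1}$ (\Cref{def:noisy-subgraph-counts}), which makes the high-frequency decay of $\wh{\ul{\pi}}_\sgraph$ deterministic rather than high-probability, and lets the low-frequency moment matching be carried out directly on the unconditioned $\wh{\pi}_\sgraph$ by paying a single $\dtv(\pi_\sgraph,\ul{\pi}_\sgraph)$ term via \Cref{lem:noisy-planted}; this avoids ever needing to re-establish moment matching for a conditioned distribution. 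Finally, a small imprecision: to control $\|\nabla^2\chi_\sgraph\|_{\op}$ under $\Planted$ you cannot apply LDLR to \emph{squared Frobenius norms} (the operator-to-Frobenius gap would cost you polynomial factors); the paper instead bounds $\ell_q$-norms of $\|\nabla^2\chi_\sgraph\|_{\op}$ via trace moments $\E\,\Tr\bigl((u_{\sgraph,e,e'}u_{\sgraph,e,e'}^\top)^{2q}\bigr)$, which are themselves low-degree polynomials amenable to LDLR transfer and are controlled by the graph-matrix moment lemmas you already cite, so the correction is local.
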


In service of proving \Cref{thm:main-subgraph}, we use the following result of Chatterjee \cite[Theorem 2.2]{Cha09} which allows us to pretend that adding noise to $\rvM$ ``behaves like'' adding noise to $\chi(\rvM)$, i.e., after we apply the subgraph polynomial.
\begin{lemma}   \label{lem:shivam}
    Let $F:\R^d\to\R$ be a twice-differentiable function.
    For the statistics:
    \begin{align*}
        \kappa_1 &\coloneqq \parens*{\E_{\rvg\sim\calN(0,I_d)}\norm{\grad F\parens*{\rvg}}_2^4}^{1/4}
        \mcom
        \quad
        \kappa_2 \coloneqq \parens*{\E_{\rvg\sim\calN(0,I_d)}\norm{\grad^2 F\parens*{\rvg}}_{\op}^4}^{1/4}
        \mcom\\
        \mu &\coloneqq \E_{\rvg\sim\calN(0,I_d)} F(\rvg)
        \mcom
        \quad
        \sigma^2 \coloneqq \Var_{\rvg\sim\calN(0,I_d)} F(\rvg)\mcom
    \end{align*}
    we have
    \[
        \dtv\parens*{ F(\rvg), \calN(\mu, \sigma^2) } \le \frac{2\sqrt{5} \kappa_1 \kappa_2}{\sigma^2}\mper
    \]
\end{lemma}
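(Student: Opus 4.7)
The plan is to prove the second-order Poincaré inequality of \Cref{lem:shivam} via Stein's method for Gaussian approximation, combined with the Ornstein--Uhlenbeck (OU) semigroup to produce a twice-iterated Gaussian integration-by-parts identity, following the approach of Chatterjee. After centering and scaling, I would write $W \coloneqq (F(\rvg)-\mu)/\sigma$ and compare against a standard normal $Z$, so that $\dtv(F(\rvg),\calN(\mu,\sigma^2)) = \dtv(W, Z)$ by translation-dilation invariance. For any measurable $h$ with $\norm{h}_\infty \le 1$, the Stein equation $f_h'(x) - xf_h(x) = h(x) - \E h(Z)$ admits a solution satisfying the classical bounds $\norm{f_h}_\infty \le \sqrt{\pi/2}$ and $\norm{f_h'}_\infty \le 2$, and
\[
    \E h(W) - \E h(Z) \;=\; \E\bracks*{f_h'(W) - W f_h(W)}\mper
\]
Taking the supremum over such $h$ then bounds $\dtv(W,Z)$.

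The central step is to derive the Stein-type identity
\[
    \E\bracks*{F(\rvg)\,\psi(F(\rvg))} \;=\; \E\bracks*{\Lambda(\rvg)\,\psi'(F(\rvg))}\mcom
\]
valid for all smooth $\psi$, where
\[
    \Lambda(\rvg) \;\coloneqq\; \int_0^\infty e^{-t}\,\grad F(\rvg)\cdot \E_{\rvg'}\bracks*{\grad F\parens*{e^{-t}\rvg + \sqrt{1-e^{-2t}}\,\rvg'}}\dif t
\]
and $\rvg'\sim\calN(0,I_d)$ is an independent copy of $\rvg$. This follows from Mehler's formula for the OU semigroup $P_t h(x)\coloneqq \E_{\rvg'}[h(e^{-t}x+\sqrt{1-e^{-2t}}\,\rvg')]$, Gaussian integration by parts $\E[\psi(\rvg)\,L\phi(\rvg)] = -\E[\grad\psi(\rvg)\cdot\grad\phi(\rvg)]$ for the OU generator $L$, and the commutation rule $\grad(P_t\phi) = e^{-t}P_t(\grad\phi)$; applied to the telescoping identity $F(\rvg)-\mu = -\int_0^\infty \partial_t[P_t(F-\mu)](\rvg)\dif t$, these combine to yield the claim. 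Setting $\psi(x) = x$ gives $\E\Lambda(\rvg) = \sigma^2$, and the choice $\psi(x) = f_h(x/\sigma)$ gives
\[
    \E h(W) - \E h(Z) \;=\; \frac{1}{\sigma^2}\,\E\bracks*{f_h'(W)\parens*{\sigma^2 - \Lambda(\rvg)}}\mper
\]

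Since $\E[\sigma^2 - \Lambda(\rvg)] = 0$ and $\norm{f_h'}_\infty \le 2$, Cauchy--Schwarz reduces the task to controlling $\Var\Lambda(\rvg)$; applying the ordinary (first-order) Gaussian Poincaré inequality gives $\Var\Lambda(\rvg) \le \E\norm{\grad\Lambda(\rvg)}_2^2$. Differentiating $\Lambda$ with the product rule produces two summands, one involving $\grad^2 F(\rvg)$ contracted against $\E_{\rvg'}\bracks*{\grad F(e^{-t}\rvg+\sqrt{1-e^{-2t}}\,\rvg')}$, and one involving $e^{-t}\E_{\rvg'}\bracks*{\grad^2 F(e^{-t}\rvg+\sqrt{1-e^{-2t}}\,\rvg')}$ contracted against $\grad F(\rvg)$. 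Bounding each integrand by operator-norm versus $\ell_2$-norm Cauchy--Schwarz, using that $e^{-t}\rvg+\sqrt{1-e^{-2t}}\,\rvg'$ shares the marginal law of $\rvg$ (so $\E\norm{\grad F(\cdot)}_2^4 = \kappa_1^4$ and $\E\norm{\grad^2 F(\cdot)}_\op^4 = \kappa_2^4$), each term contributes at most $\kappa_1^2\kappa_2^2$ to $\E\norm{\grad\Lambda}_2^2$; taking the square root and assembling the constants from the Stein bound and the two Cauchy--Schwarz steps yields the claimed constant $2\sqrt{5}\,\kappa_1\kappa_2/\sigma^2$.

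The main obstacle I anticipate is rigorously establishing the Stein-type identity $\E[F\,\psi(F)] = \E[\Lambda\,\psi'(F)]$: it requires carefully chaining Mehler's formula, the commutation $\grad\circ P_t = e^{-t} P_t\circ\grad$, and Gaussian integration by parts, and it is precisely here that the \emph{second-order} (as opposed to first-order Poincaré) character of the inequality enters, since one effectively integrates by parts twice and the second derivative $\grad^2 F$ is forced to appear when one differentiates $\Lambda$ in the subsequent variance bound. A secondary but nontrivial challenge is to pair the two summands arising in $\grad\Lambda$ with the right Cauchy--Schwarz grouping so that the leading constant collapses to the sharp value $2\sqrt{5}$ rather than something larger.
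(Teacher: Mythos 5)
The paper does not prove \Cref{lem:shivam} at all: it is imported verbatim as \cite[Theorem~2.2]{Cha09}, so there is no in-paper argument to compare against. Your sketch is, in substance, a reconstruction of Chatterjee's own proof of that theorem --- the total-variation Stein equation with $\norm{f_h}_\infty\le\sqrt{\pi/2}$, $\norm{f_h'}_\infty\le 2$; the covariance identity $\E[(F(\rvg)-\mu)\,\psi(F(\rvg))]=\E[\Lambda(\rvg)\,\psi'(F(\rvg))]$ obtained from the Ornstein--Uhlenbeck interpolation (your $\Lambda$ is Chatterjee's $T$ after the substitution $t=e^{-2s}$); and the first-order Gaussian Poincar\'e inequality applied to $\Lambda$ --- and the outline is sound modulo the routine justification of differentiating under the integral and expectation. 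Two small corrections: the Stein-type identity must be stated for the centered function $F-\mu$ (as written, taking $\psi\equiv 1$ would force $\E F=0$), and the final constant accounting is less delicate than you fear. The two summands of $\grad\Lambda$ carry weights $\int_0^\infty e^{-t}\,dt=1$ and $\int_0^\infty e^{-2t}\,dt=\tfrac12$ (the extra $e^{-t}$ in the second coming from the chain rule through the interpolated point), so Minkowski's integral inequality plus Cauchy--Schwarz and the fact that $e^{-t}\rvg+\sqrt{1-e^{-2t}}\,\rvg'$ is marginally standard Gaussian give $\bigl(\E\norm{\grad\Lambda}_2^2\bigr)^{1/2}\le\tfrac32\kappa_1\kappa_2$, which already beats the stated $2\sqrt5$; no clever pairing of the terms is required.
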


\parhead{Overall strategy for proving \Cref{thm:main-subgraph}.}
Let $\rvG\sim\Null$.
For $\rvM\sim\Planted$, let $\rvM_{\eps} \coloneqq \sqrt{1-\eps}\rvM + \sqrt{\eps}\rvG$.
Our overall proof strategy is the same as in \cref{sec:tv_closeness}.
The main difference is in the first step whose goal is to pass to a pair of distributions for which showing bounds on higher-order Fourier coefficients is easier. 
\begin{itemize}
    \item We first use \Cref{lem:shivam} to show that the law of $\chi_{\sgraph}\parens*{\rvM_{\eps}}$ is close in total variation distance to that of $\sqrt{1-\eps}^{e(\sgraph)}\chi_{\sgraph}\parens*{\rvM} + \sigma_{\rvM}\cdot\rvg$ where $\rvg\sim\calN(0,1)$ is independent of $\rvM$, and $\sigma_{\rvM}$ has ``nontrivial'' magnitude \emph{everywhere}.
    More specifically, denoting the second distribution as $\ul{\pi}_{\sgraph}$, we will prove that $\dtv\parens*{ \pi_{\sgraph}, \ul{\pi}_{\sgraph} } \lesssim \polylog n/\sqrt{n}$.
    We will similarly define $\ul{\nu}_{\sgraph}$ and prove a similar bound on the total variation distance to $\nu_\sgraph$.
    Thus, in the rest of the argument it is sufficient to bound the total variation distance between $\ul{\pi}$ and $\ul{\nu}$.
    This corresponds to \cref{lem:noisy-planted} below.
    \item To bound $\dtv\parens*{\ul{\pi}_{\sgraph},\ul{\nu}_{\sgraph}}$, we bound $\abs{\ul{\pi}_{\sgraph}(x) - \ul{\nu}_{\sgraph}(x)}$ for $x$ in some high-probability region $[-\tau,\tau]$:
    In particular, if we prove a bound of $\eta$ in that region, we get:
    \[
        \dtv\parens*{\ul{\pi}_{\sgraph}, \ul{\nu}_{\sgraph}} \le \Pr_{\bx\sim\nu}\bracks*{|\bx|\ge \tau} + 2\eta\tau\mper
    \]
    We then show that there is a trade-off between $\eta$ and $\tau$ such that this bound becomes small enough.
    \item Let us fix $x$ in this high-probability region. To control $\abs{\ul{\pi}_{\sgraph}(x) - \ul{\nu}_{\sgraph}(x)}$, we take a Fourier-analytic approach. 
    Specifically, we use the identity:
    \[
        \abs*{\ul{\pi}_{\sgraph}(x) - \ul{\nu}_{\sgraph}(x)} = \tfrac 1 {\sqrt{2\pi}}\abs*{\int_{-\infty}^{\infty}\parens*{\wh{\ul{\pi}}_{\sgraph}(\xi) - \wh{\ul{\nu}}_{\sgraph}(\xi)}\cdot\exp(-i\xi x) }\dif\xi \mper 
    \]
    We split this integral into two parts:
    For large $\xi$ (on the order of $\xi \ge T = C\cdot \log n$), we will prove that $\wh{\ul{\pi}}_{\sgraph}(\xi)$ and $\wh{\ul{\nu}}_{\sgraph}(\xi)$ are individually at most $1/\poly(n)$.
    \item For $\xi\in[-T,T]$, we will show that the Fourier coefficients of both distributions are close: Using the fact that $\wh{\ul{\pi}}_{\sgraph}(\xi) = \wh{\pi}_{\sgraph}(\xi) \pm \dtv\parens*{ \ul{\pi}_{\sgraph}, \pi_{\sgraph} }$ and $\wh{\ul{\nu}}_{\sgraph}(\xi) = \wh{\nu}_{\sgraph}(\xi) \pm \dtv\parens*{ \ul{\nu}_{\sgraph}, \nu_{\sgraph} }$ (where the TV distance are small as discussed above), it is enough to control $\abs{\wh{\pi}_{\sgraph}(\xi) - \wh{\nu}_{\sgraph}(\xi)}$.
    This follows via moment-matching ($\chisquaretrunc{\Planted}{\Null}$ being small). 
    See \cref{lem:infty-norm-bound} below for the precise statement.
\end{itemize}

In the remainder of this subsection, we will state \cref{lem:noisy-planted,lem:infty-norm-bound} as well as a lemma that shows that the moments of $\chi_{\sgraph}\parens*{\rvM}$ for $\rvM \sim \Null$ are sub-gaussian (cf.~\cref{lem:subgraph-moments}).
We will then show how these imply \cref{thm:main-subgraph}.
We will prove \cref{lem:noisy-planted} in \cref{sec:noisy-subgraph} (this constitute the bulk of the proof), \cref{lem:infty-norm-bound} in \cref{sec:infty-norm-bound}, and \cref{lem:subgraph-moments} in \cref{sec:subgraph-moments}.

\parhead{Passing to noisy subgraph counts and moment bounds.}
We start with the following definition.

\begin{definition}[Noisy subgraph counts]
    \label{def:noisy-subgraph-counts}
    For a fixed matrix $M \in \R^{n \times n}$ and $\rvG$ a symmetric $n \times n$ matrix with i.i.d. $\calN(0,1)$ entries (up to symmetry), define
    \[
        \wt{\sigma}^2(M) \coloneqq \max\braces*{ \Var_{\rvG\sim\Null} \chi_{\sgraph}\parens*{ \sqrt{1-\eps}M + \sqrt{\eps}\rvG }, \frac{\eps}{2}\parens*{1-\eps}^{e(\sgraph)-1}   }\mper
    \]
    Let $\rvM \sim \Planted, \rvg \sim \calN(0,1)$ be independent.
    We denote by $\ul{\pi}_{\sgraph}$ the law of $\sqrt{1-\eps}^{e(\sgraph)} \chi_{\sgraph}(\rvM) + \wt{\sigma}\parens*{\rvM}\cdot\rvg$.
    Similarly, when $\rvM \sim \Null$ instead, we denote the resulting law by $\ul{\nu}_{\sgraph}$.
\end{definition}

The following lemma bounds the total variation distance between $\pi_{\sgraph}$ and $\ul{\pi}_{\sgraph}$, and $\nu_{\sgraph}$ and $\ul{\nu}_{\sgraph}$ respectively.
\begin{lemma}  
\torestate{
\label{lem:noisy-planted}
    Assume that $D\ge\log n\cdot \log\log n$ and let $\Planted$ be such that $\chisquaretrunc{\Planted}{\Null} \le n$,
    then we have:
    
    \begin{align*}
        \dtv\parens*{ \pi_{\sgraph}, \ul{\pi}_{\sgraph} } \le O\parens*{\frac{\sqrt{\eps}\cdot\polylog n}{\sqrt{n}} + \frac{1}{n^2}}\quad\text{and}\quad\dtv\parens*{ \nu_{\sgraph}, \ul{\nu}_{\sgraph} } \le O\parens*{\frac{\sqrt{\eps}\cdot\polylog n}{\sqrt{n}} + \frac{1}{n^2}} \mper
    \end{align*}
}
\end{lemma}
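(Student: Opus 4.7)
Plan: The strategy is to apply Chatterjee's second-order Poincar\'e inequality (\Cref{lem:shivam}) conditioned on $\rvM$, and then integrate over $\rvM \sim \Planted$. I describe the argument for $\dtv(\pi_\sgraph, \ul\pi_\sgraph)$; the companion bound for $\dtv(\nu_\sgraph, \ul\nu_\sgraph)$ follows by the same argument with $\rvM \sim \Null$ directly, which avoids the moment-matching transfer in the final step and is strictly easier. Let $F_M(\rvG) \coloneqq \chi_\sgraph(\sqrt{1-\e}M + \sqrt\e\rvG)$; expanding $F_M$ in Hermite polynomials on $\rvG$ and using that each labeling of $\sgraph$ is injective on edges yields
\[
    \E_\rvG F_M(\rvG) = (1-\e)^{e(\sgraph)/2}\chi_\sgraph(M) \eqqcolon \mu_M\mper
\]
Let $\sigma_M^2 \coloneqq \Var_\rvG F_M(\rvG)$ and define the ``good event'' $\calE \coloneqq \braces*{\sigma_\rvM^2 \ge (\e/2)(1-\e)^{e(\sgraph)-1}}$, on which $\wt\sigma^2(\rvM) = \sigma_\rvM^2$ by definition of $\wt\sigma^2$.

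By TV convexity,
\begin{align*}
    \dtv(\pi_\sgraph, \ul\pi_\sgraph) \le \Pr_\Planted[\calE^c] + \E_\Planted\bracks*{\Ind[\calE] \cdot \dtv(F_\rvM(\rvG), \calN(\mu_\rvM, \sigma_\rvM^2))}\mper
\end{align*}
On $\calE$, Chatterjee's inequality gives $\dtv(F_M(\rvG), \calN(\mu_M, \sigma_M^2)) \le O_\sgraph(\e^{-1})\kappa_1(M)\kappa_2(M)$, where $\kappa_1(M)^4 = \e^2 \E_\rvG\|\nabla \chi_\sgraph(\sqrt{1-\e}M + \sqrt\e\rvG)\|_2^4$ and $\kappa_2(M)^4 = \e^4 \E_\rvG\|\nabla^2\chi_\sgraph(\sqrt{1-\e}M+\sqrt\e\rvG)\|_\op^4$ are polynomials in $M$ of degree at most $4(e(\sgraph)-1) \le D$. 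By Cauchy--Schwarz, $\E_\Planted[\kappa_1\kappa_2] \le (\E_\Planted[\kappa_1^4]\,\E_\Planted[\kappa_2^4])^{1/4}$. Under $\Null$, the change-of-variables $\sqrt{1-\e}\rvM + \sqrt\e\rvG \sim \Null$ (for independent $\rvM, \rvG \sim \Null$) reduces these to moments of derivatives of $\chi_\sgraph$: Gaussian hypercontractivity (\Cref{fact:gauss_hypercontractivity}) controls the gradient term, while the Hessian term requires operator-norm bounds for graph matrices from the SoS-lower-bound literature \cite{AMP16,BHKKMP16}, obtained by decomposing $\nabla^2\chi_\sgraph$ as a linear combination of constantly many graph matrices indexed by pairs of deleted edges of $\sgraph$. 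Transferring these bounds from $\Null$ to $\Planted$ via $|\E_\Planted p - \E_\Null p| \le \sqrt{n\Var_\Null[p]}$ (valid since $\chisquaretrunc{\Planted}{\Null} \le n$ and $\deg p \le D$), and controlling the variances by hypercontractivity, produces the stated contribution $O_\sgraph(\sqrt\e\polylog(n)/\sqrt n)$.

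To bound $\Pr_\Planted[\calE^c]$, expand $\sigma_M^2 = \sum_{|S| \ge 1}\e^{|S|}(1-\e)^{e(\sgraph)-|S|}V_S(M)$ for non-negative polynomials $V_S(M) \coloneqq \Var_\rvG T_S(M, \rvG)$, where $T_S(M,\rvG)$ is the Hermite component of $F_M$ of degree $|S|$ on $\rvG$. Lower-bounding by the $|S|=1$ terms gives $\sigma_M^2 \ge \e(1-\e)^{e(\sgraph)-1}\sum_{e_0 \in E(\sgraph)}V_{\{e_0\}}(M)$, so it suffices to show $\sum_{e_0}V_{\{e_0\}}(\rvM) \ge 1/2$ with high probability. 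The diagonal ``$\pi = \pi'$'' contribution alone forces $\E_\Null[V_{\{e_0\}}] \ge 1$, and a direct combinatorial enumeration of double-label pairings of $\sgraph$ (with the marked edge $e_0$ fixed) gives $\Var_\Null\bracks*{\sum_{e_0} V_{\{e_0\}}} = O_\sgraph(1/n)$. Applying Markov to a sufficiently high even power of $\sum_{e_0}V_{\{e_0\}}(\rvM) - \E_\Null\sum_{e_0}V_{\{e_0\}}(\rvM)$ (a bounded-degree polynomial whose higher moments are controlled by hypercontractivity), then transferring to $\Planted$ via moment matching with an additional $\sqrt n$ factor, yields $\Pr_\Planted[\calE^c] = O_\sgraph(\polylog(n)/\sqrt n + 1/n^2)$.

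The two main technical obstacles are (i) the operator-norm bound on $\nabla^2\chi_\sgraph$, which is handled by decomposing it into graph matrices and invoking the norm bounds of \cite{AMP16}, and (ii) the combinatorial calculation of $\Var_\Null[V_{\{e_0\}}]$, which reduces to counting \emph{marked} double-labelings of $\sgraph$ and tracking the surviving power of $n$ as a function of the shape of each diagram. Once these are in hand, the rest of the proof assembles via the standard pipeline of hypercontractivity and moment-matching transfer.
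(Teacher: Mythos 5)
Your overall plan is the right one and coincides with the paper's at a structural level: condition on a good event, apply Chatterjee's second-order Poincar\'e inequality (\cref{lem:shivam}) pointwise in $M$, control the gradient statistic by hypercontractivity and the Hessian statistic via graph-matrix norm bounds, and transfer moments from $\Null$ to $\Planted$ using $\chisquaretrunc{\Planted}{\Null}\le n$. You also correctly identify the two technical obstacles and the Hermite/Taylor lower bound for $\sigma^2$. However, there is a genuine gap in the step that bounds $\E_\Planted[\kappa_1\kappa_2]$.

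You write $\E_\Planted[\kappa_1\kappa_2]\le(\E_\Planted[\kappa_1^4])^{1/4}(\E_\Planted[\kappa_2^4])^{1/4}$ and transfer the fourth moments via $|\E_\Planted p-\E_\Null p|\le\sqrt{n\Var_\Null p}$. But this inflates $\E_\Planted[\kappa_i^4]$ by a multiplicative $\sqrt n$ over $\E_\Null[\kappa_i^4]$: hypercontractivity gives $\Var_\Null[\kappa_i^4]=O_\sgraph(\E_\Null[\kappa_i^4]^2)$, not something that decays in $n$, so $\E_\Planted[\kappa_i^4]\le O_\sgraph(\sqrt n)\,\E_\Null[\kappa_i^4]$. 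After taking fourth roots you lose $n^{1/8}$ per factor. Carrying through the numbers, $\E_\Null[\kappa_1^4]=O_\sgraph(\e^2)$ and $\E_\Null[\kappa_2^4]=O_\sgraph(\e^4/n)$, so $(\E_\Planted[\kappa_1^4])^{1/4}=O_\sgraph(\sqrt\e\,n^{1/8})$ and $(\E_\Planted[\kappa_2^4])^{1/4}=O_\sgraph(\e\,n^{-1/8})$; the $n$-powers cancel, and after dividing by $\sigma^2\geq\Omega(\e)$ you get $\dtv\le O_\sgraph(\sqrt\e)$ --- a trivial bound. Even if one argues separately that $\kappa_i^4$ concentrates (so $\Var_\Null$ gains an extra $1/n$), the Cauchy--Schwarz step is intrinsically lossy: $(\E[\kappa_2^4])^{1/4}$ is of order $n^{-1/4}$ already under $\Null$, so the product can never reach $\polylog(n)/\sqrt n$.

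The fix, and the route the paper actually takes, is to avoid $\E_\Planted[\kappa_1\kappa_2]$ altogether. Instead, bound the $\ell_q$ norms $\ell_q(\kappa_1(\rvM))$ and $\ell_q(\kappa_2(\rvM))$ for $q=\log n$: then the $\sqrt n$ transfer factor appears inside the $q$-th root as $n^{O(1/q)}=O(1)$, giving $\ell_{\log n}(\kappa_1)=O(\sqrt\e\polylog n)$ and $\ell_{\log n}(\kappa_2)=O(\e\polylog n/\sqrt n)$. Markov then yields \emph{pointwise} high-probability bounds on $\kappa_1(\rvM)$, $\kappa_2(\rvM)$, and $\sigma^2(\rvM)$ simultaneously, all holding except with probability $1/n^2$, and Chatterjee is applied pointwise on that event. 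Note that you already use this $\ell_{\log n}$ device for the $\Pr[\calE^c]$ estimate; applying it to $\kappa_1$ and $\kappa_2$ as well is exactly what is missing.
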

Note that the second part follows immediately by the first by setting $\Planted = \Null$.

We also use the following lemma showing that polynomially many moments of $\chi_\vartheta(\rvM)$ are sub-gaussian when $\rvM \sim \Null$.
\begin{lemma}  
\torestate{
\label{lem:subgraph-moments}
    There exists a sufficiently small constant $\alpha=\alpha(\sgraph)$ such that for any $0\le q \le n^{\alpha}$:
    \begin{align*}
        \E_{\rvM\sim\Null} \chi_{\sgraph}\parens*{\rvM}^{q} \le \sqrt{q}^{q}\cdot\parens*{1+\frac{1}{\poly(n)}}\mper
    \end{align*}
}
\end{lemma}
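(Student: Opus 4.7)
\textbf{Proof plan for \Cref{lem:subgraph-moments}.} The strategy is a direct Gaussian moment computation via Wick's theorem, together with a trace-method style accounting of ``glued'' tuples of copies of $\sgraph$. First I expand
\[
    \E_{\rvM\sim\Null}\chi_\sgraph(\rvM)^q \;=\; \frac{1}{|\calL_\sgraph|^{q/2}} \sum_{\vec\pi \in \calL_\sgraph^q} \E_{\rvM\sim\Null}\prod_{i=1}^q\prod_{e\in E(\sgraph)} \rvM_{\pi_i(e)}\mper
\]
Since the entries of $\rvM$ are i.i.d.\ $\calN(0,1)$ up to symmetry, Isserlis' theorem evaluates the inner expectation as $\prod_e (m_e-1)!!$, where $m_e$ is the multiplicity of the edge $e\in\binom{[n]}{2}$ in the multiset $H(\vec\pi)=\bigcup_i \pi_i(E(\sgraph))$; this vanishes unless every $m_e$ is even. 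This reduces the moment to a weighted sum over ``gluing patterns'' of the $q$ labeled copies of $\sgraph$.

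Next I isolate the dominant contribution: tuples produced by a perfect matching of $[q]$ with $\pi_i=\pi_j$ for each matched pair. Every $m_e$ equals $2$, the Wick weight is $1$, and there are $(q-1)!!\cdot |\calL_\sgraph|^{q/2}$ such tuples, contributing exactly $(q-1)!!\le \sqrt{q}^{\,q}$ after normalization, which is the claimed leading term. Every other nonzero tuple corresponds to a ``glued'' shape $T$ in which either two non-matched copies share a vertex in $[n]$, or two edges within a matched pair are identified via a non-trivial automorphism of $\sgraph$. Such a shape uses at most $v(\sgraph)q/2-k(T)$ distinct vertices in $[n]$ for some $k(T)\ge 1$. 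Its Wick weight $\prod_e(m_e(T)-1)!!$ is bounded by a shape-dependent constant when multiplicities are spread out, and grows when a few edges carry large multiplicity — but such concentration forces a proportionally larger vertex saving, so that after normalization by $|\calL_\sgraph|^{q/2}=\Theta(n^{v(\sgraph)q/2})$ each non-dominant template contributes at most $C^q\cdot n^{-k(T)}$ for a constant $C=C(\sgraph)$.

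The main obstacle is the combinatorial blow-up in the number of non-dominant templates, which grows like $(q\,e(\sgraph))^{O(q\,e(\sgraph))}$ once edge pairings and vertex identifications are jointly enumerated. The hypothesis $q\le n^\alpha$ is what tames this: for $\alpha=\alpha(\sgraph)>0$ sufficiently small, the $n^{-1}$ savings per merged vertex dominates the combinatorial factor, and a geometric sum over the number of merged vertices bounds the total off-diagonal contribution by $\sqrt{q}^{\,q}\cdot n^{-\Omega(1)}$. The cleanest way to execute this bookkeeping is to adapt the shape/graph-matrix accounting of \cite{AMP16,BHKKMP16}: index templates by their vertex partition, charge $n^{-1}$ per identification, bound the Wick weight by $C^q$ times a per-shape factor, and sum geometrically. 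Combining the diagonal and off-diagonal bounds yields $\E\chi_\sgraph^q\le \sqrt{q}^{\,q}(1+1/\poly(n))$, as claimed.
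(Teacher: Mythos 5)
Your proposal follows the same overall route as the paper: expand the moment via Wick's theorem, identify the vertex-disjoint pairing terms as the $(q-1)!!\le\sqrt{q}^{\,q}$ leading contribution, and argue that every other gluing pattern loses a polynomial factor in $n$ per unit of "excess overlap," which the hypothesis $q\le n^{\alpha}$ is used to leverage against the combinatorial and Wick-weight blowups. The organization by glued shapes, the use of $\abs{\calL_\sgraph}=\Theta(n^{v(\sgraph)})$ for normalization, and the appeal to graph-matrix style accounting all match the paper's strategy, which works with connected components of $H=\bigcup_i\pi_i(\sgraph)$ and the per-component copy-counts $\beta_i$.

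There is, however, a genuine gap at the crux. You assert that "such concentration forces a proportionally larger vertex saving," and that each non-dominant template contributes at most $C^q\cdot n^{-k(T)}$ for a constant $C=C(\sgraph)$, but neither claim is proved and the latter is too optimistic as stated: for a single connected component gluing many ($\beta_i$) copies, the Wick weight $\prod_e(m_e-1)!!$ and the encoding count are $\beta_i^{\Theta(\beta_i)}$, not $C^{\beta_i}$, so you need the vertex savings to grow \emph{linearly} in $\beta_i$ to defeat this. The paper supplies exactly this as its key inequality: for any evenly-covered connected component built from $\beta_i\ge 3$ copies of $\sgraph$, the number of distinct vertices $f_i$ satisfies $f_i\le\beta_i v(\sgraph)/2-\beta_i/12$. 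Proving this requires a careful argument (the fresh/boundary/stale encoding together with two nontrivial observations: each vertex must be stale at its last appearance because all its incident edges must then be evenly covered, giving $s_i\ge f_i$; and connectivity forces each step after the first to contribute either a boundary vertex or $v(\sgraph)$ stale vertices, giving $b_i\ge\beta_i/2-1$). Without this quantitative lemma the off-diagonal bound does not close; the rest of your plan, including how the $q\le n^\alpha$ assumption enters, is consistent with the paper once this inequality is in hand.
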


\paragraph{$\ell_\infty$-closeness and putting things together.}

We first show the following bound on the infinity distance of $\ul{\pi}_{\sgraph}$ and $\ul{\nu}_{\sgraph}$.\footnote{Note that the restriction on $\e$ is not a big restriction since \cref{thm:main-subgraph} is vacuous for $\e = o(1/\sqrt{\log n})$.}
\begin{lemma}
    \torestate{
    \label{lem:infty-norm-bound}
    Let $D \geq \log n \log \log n$ and $\delta \geq 0$ be such that $\chisquaretrunc{\Planted}{\Null} \leq \delta$ and assume that $\e \geq 1/ \polylog n$.
    Then there is an absolute constant $c > 0$ such that
    \[
        \norm{ \ul{\pi}_{\sgraph} - \ul{\nu}_{\sgraph} }_{\infty} \le O\parens{\delta^{c\e^2} + n^{-c\e^2}} \mper
    \]
    }
\end{lemma}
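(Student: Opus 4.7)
The plan is to control $\|\ul\pi_\sgraph - \ul\nu_\sgraph\|_\infty$ via a Fourier-analytic argument: both distributions are absolutely continuous (each carries an independent Gaussian summand $\wt\sigma(\rvM)\rvg$), so by Fourier inversion
\[
    |\ul\pi_\sgraph(x) - \ul\nu_\sgraph(x)| \le \tfrac{1}{2\pi}\int_\R |\wh{\ul\pi}_\sgraph(\xi) - \wh{\ul\nu}_\sgraph(\xi)| \dif \xi
\]
uniformly in $x$. I will split this integral at $|\xi| = T$ for a parameter $T$ to be tuned and handle the two regimes separately.

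For the tail $|\xi| > T$, I will use the built-in Gaussian noise to obtain pointwise Fourier decay of each characteristic function individually. Integrating out $\rvg$ conditionally on $\rvM$ gives
\[
    |\wh{\ul\pi}_\sgraph(\xi)| = \Abs{\E_{\rvM\sim\Planted}\bracks*{e^{i\xi(1-\e)^{e(\sgraph)/2}\chi_\sgraph(\rvM)}\cdot e^{-\xi^2\wt\sigma(\rvM)^2/2}}} \le \E_{\rvM}\bracks*{e^{-\xi^2\wt\sigma(\rvM)^2/2}} \le e^{-\Omega(\e\xi^2)},
\]
using the floor $\wt\sigma^2(\rvM) \ge \tfrac{\e}{2}(1-\e)^{e(\sgraph)-1}$ from \cref{def:noisy-subgraph-counts}, and analogously for $\wh{\ul\nu}_\sgraph$. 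The tail integral thus contributes $O\parens*{\e^{-1/2}\exp(-\Omega(\e T^2))}$.

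For the bulk $|\xi| \le T$, I will first replace $\wh{\ul\pi}_\sgraph, \wh{\ul\nu}_\sgraph$ by $\wh\pi_\sgraph, \wh\nu_\sgraph$ at cost $O(\sqrt\e\polylog n/\sqrt n + n^{-2})$ per $\xi$ using the trivial bound $|\wh f - \wh g| \le 2\dtv(f,g)$ and \cref{lem:noisy-planted}. Then I will control $|\wh\pi_\sgraph(\xi) - \wh\nu_\sgraph(\xi)|$ by Taylor expansion and moment-matching: writing $e^{i\xi y} = \sum_{k=0}^{K-1} \tfrac{(i\xi y)^k}{k!} + R_K(y)$ with $|R_K(y)| \le |\xi y|^K/K!$, each low-order difference $|\E_{\OU_\e\Planted}[\chi_\sgraph^k] - \E_\Null[\chi_\sgraph^k]|$ is bounded, provided $Ke(\sgraph) \le D$, by $\sqrt\delta\,\sqrt{\Var_\Null[\chi_\sgraph^k]} \le \sqrt\delta \cdot (2k)^{k/2}$ via the LDLR hypothesis combined with the sub-gaussian moment bound of \cref{lem:subgraph-moments}; the remainder term is controlled by the same moment bound, transferred to $\OU_\e\Planted$ through one more LDLR application. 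Summing with Stirling and taking $K = \Theta(T^2)$ yields $|\wh\pi_\sgraph(\xi) - \wh\nu_\sgraph(\xi)| \le O(\sqrt\delta \cdot e^{cT^2}) + 2^{-\Omega(T^2)}$ for $|\xi| \le T$ and some absolute constant $c > 0$.

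The final step is to pick $T$. Taking $T^2 = \Theta(\e\log(1/\delta))$ (and analogously $T^2 = \Theta(\e\log n)$ for the $n^{-c\e^2}$ contribution) makes the tail contribution $\exp(-\Omega(\e T^2)) = \delta^{\Omega(\e^2)}$ (resp.\ $n^{-\Omega(\e^2)}$), while the bulk contribution $T\sqrt\delta \cdot e^{cT^2}$ evaluates to $\widetilde O(\delta^{1/2 - O(\e)})$ --- for $\e$ bounded away from $1/2$, this is easily dominated by $\delta^{\Omega(\e^2)}$, which is what produces the $\e^2$ exponent rather than a mere $\e$. The constraints $K = \Theta(T^2) \le D/e(\sgraph)$ and $K \le n^\alpha$ (as required by \cref{lem:subgraph-moments}) are both satisfied under the hypotheses $D \ge \log n \log\log n$ and $\e \ge 1/\polylog n$. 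The chief technical obstacle will be ensuring that $\E_{\OU_\e\Planted}[|\chi_\sgraph|^K]$ inherits the sub-gaussian moment bound of \cref{lem:subgraph-moments} via LDLR transfer in the full window $K \le D/e(\sgraph)$, and carefully absorbing the $O(\polylog n/\sqrt n)$ overhead from the TV reduction into the $n^{-c\e^2}$ term by choosing $c$ sufficiently small.
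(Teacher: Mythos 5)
Your proposal follows the paper's proof essentially step for step: Fourier inversion split at $|\xi|=T$, pointwise decay from the noise floor $\wt\sigma^2\geq\tfrac{\e}{2}(1-\e)^{e(\sgraph)-1}$ for the tail, TV-reduction to $\wh\pi_\sgraph,\wh\nu_\sgraph$ via \cref{lem:noisy-planted} plus Taylor expansion and moment-matching via \cref{lem:subgraph-moments} for the bulk, and a final balancing of $T$ to extract the $\e^2$ exponent. The one place to be careful is your parameter choice $K=\Theta(T^2)$ with $T^2=\Theta(\e\log(1/\delta))$: for very small $\delta$ this blows past both $Ke(\sgraph)\leq D$ and the $K\leq n^{\alpha}$ window of \cref{lem:subgraph-moments}, so you must explicitly cap $T^2$ by $O(\e\log n)$ (equivalently take the $\min$, as the paper does with $T=\alpha\min\{\sqrt{\log n},\sqrt{\log(1/\delta)}\}$ and the fixed order $k=\log n$); you gesture at this but should state it as a hard $\min$.
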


We are now finally ready to prove \Cref{thm:main-subgraph}.

\begin{proof}[Proof of \Cref{thm:main-subgraph}]
    First, note that when $\e = o(1/\sqrt{\log n})$ or $\e = o(1/\sqrt{\log(1/\delta)})$ the right-hand side is at least 1, so the bound trivially holds.
    We can thus assume this is not the case.
    
    By triangle inequality and \Cref{lem:noisy-planted} it holds that:
    \[
        \dtv\parens*{ \pi_{\sgraph}, \nu_{\sgraph} } \le \dtv\parens*{ \ul{\pi}_{\sgraph}, \ul{\nu}_{\sgraph} } + O\parens*{\frac{\sqrt{\e}\polylog n}{\sqrt{n}} + \frac 1 {n^2}} \leq \dtv\parens*{ \ul{\pi}_{\sgraph}, \ul{\nu}_{\sgraph} }  + O\parens*{\frac {\polylog n} n }\mper
    \]
    So it is enough to show that $\dtv\parens*{ \ul{\pi}_{\sgraph}, \ul{\nu}_{\sgraph} } \leq O(\delta^{\alpha \e^2} + n^{-\alpha \e^2})$ for a small absolute constant $\alpha$.
    Let $\beta = \delta^{c\e^2} + n^{-c\e^2}$ and $\tau = \sqrt{\log(1/\beta)}$, where $c$ is the constant in the bound of \cref{lem:infty-norm-bound}.
    Note that since $\e \geq \Omega(1/\sqrt{\log n})$ and $\e \geq \Omega(1/\sqrt{\log (1/\delta})$, $\beta$ is at most (say) 0.001.
    Then, 
    \begin{align*}
        \dtv\parens*{\ul{\pi}_{\sgraph}, \ul{\nu}_{\sgraph}} &= \int_{-\infty}^{\infty} \abs{\ul{\pi}_{\sgraph}(x) - \ul{\nu}_{\sgraph}(x)} \cdot \Ind[\ul{\nu}_{\sgraph}(x) \ge \ul{\pi}_{\sgraph}(x) ] \dif x \\
        &= \int_{-\tau}^{\tau} \abs{\ul{\pi}_{\sgraph}(x) - \ul{\nu}_{\sgraph}(x)}\dif x + \int_{x\notin[-\tau,\tau]} \parens*{\ul{\nu}_{\sgraph}(x) -  \ul{\pi}_{\sgraph}(x)} \dif x \\
        &\leq 2\tau \cdot \norm{ \ul{\pi}_{\sgraph} - \ul{\nu}_{\sgraph} }_{\infty}  + \Pr_{\bx\sim\ul{\nu}_{\sgraph}} \bracks*{ |\bx| \ge \tau }\mper  
    \end{align*}

    Note that since $\e \leq 1$, $\tau \leq O(\sqrt{\log n})$, so by \cref{lem:subgraph-moments} (with $q = 0.01\tau^2$) and Markov's Inequality, it follows that
    $$
        \Pr_{\bx\sim\ul{\nu}_{\sgraph}} \bracks*{ |\bx| \ge \tau } \leq \parens*{\frac{2q}{\tau^2}}^{q/2} \leq e^{-\Omega(\tau^2)} = O(\beta) \mper
    $$
    Combining this with \cref{lem:infty-norm-bound} it follows that 
    $$
        \dtv\parens*{\ul{\pi}_{\sgraph}, \ul{\nu}_{\sgraph}} \leq O\parens*{\beta \sqrt{\log(1/\beta)}} \leq O\parens*{\sqrt{\beta}}  = O(\delta^{\alpha \e^2 / 2} + n^{-\alpha \e^2/2}) \mper  \qedhere
    $$
\end{proof}

\subsection{Subgraph Counts of Noisy Inputs Are Close to Noisy Subgraph Counts}
\label{sec:noisy-subgraph}

In this section we prove \cref{lem:noisy-planted} restated below.
Recall that $\ul{\pi}_\sgraph$ (respectively $\ul{\nu}_\sgraph$), corresponds to the law of $\sqrt{1-\eps}^{e(\sgraph)} \chi_{\sgraph}(\rvM) + \wt{\sigma}\parens*{\rvM}\cdot\rvg$, where $\rvg \sim \calN(0,1)$ and $\rvM \sim \Planted$ ($\rvM \sim \Null$ respectively), and for a fixed matrix $M \in \R^{n\times n}$
\[
        \wt{\sigma}^2(M) \coloneqq \max\braces*{ \Var_{\rvG\sim\Null} \chi_{\sgraph}\parens*{ \sqrt{1-\eps}M + \sqrt{\eps}\rvG }, \frac{\eps}{2}\parens*{1-\eps}^{e(\sgraph)-1}   }\mcom
\]
where $\rvG$ is a symmetric $n \times n$ matrix with i.i.d. $\calN(0,1)$ entries (up to symmetry).

\restatelemma{lem:noisy-planted}

\begin{proof}
    For an $n\times n$ matrix $M$, we will apply \Cref{lem:shivam} with $F(\rvG)\coloneqq \chi_{\sgraph}\parens*{\sqrt{1-\eps}M+\sqrt{\eps}\rvG}$.\footnote{To make this fully formal, because of the symmetry in the matrices, one would have to work with a vector that only contains the upper triangular part. This can be easily done and we will omit it for clarity.}

    Following notation from \Cref{lem:shivam}, define:
    \begin{align*}
        \kappa_1(M) &\coloneqq \parens*{  \E_{\rvG\sim\Null} \norm{\grad_{\rvG}\, \chi_{\sgraph}\parens*{\sqrt{1-\eps}M+\sqrt{\eps}\rvG} }_2^4 }^{1/4}\mcom \\
        \kappa_2(M) &\coloneqq \parens*{  \E_{\rvG\sim\Null} \norm{\grad^2_{\rvG}\, \chi_{\sgraph}\parens*{\sqrt{1-\eps}M+\sqrt{\eps}\rvG} }_{\op}^4 }^{1/4}\mcom \\
        \sigma^2(M) &\coloneqq \Var_{\rvG\sim\Null} \chi_{\sgraph}\parens*{\sqrt{1-\eps}M + \sqrt{\eps}\rvG}\mper
    \end{align*}
    It suffices for us to prove high-probability upper bounds on $\kappa_1(\rvM)$ and $\kappa_2(\rvM)$, and a high-probability lower bound on $\sigma^2(\rvM)$ for $\rvM\sim\Planted$.
    For a random variable $\bX$, we will use $\ell_q(\bX)$ to refer to $\parens*{\E \bX^q}^{1/q}$.%
    \footnote{The reason for this choice of notation is to avoid confusing notation like $\norm{\norm{\bX}_2}_q$ where the outer norm is a random variable norm and the inner norm is a vector norm.}
    We will make repeated use of the fact that $\ell_q\parens*{\cdot}$ is a norm.

    \parhead{High-probability upper bound on $\kappa_1(\rvM)$.}
    For $\rvM\sim\Planted$, we will prove a bound on $\ell_q\parens*{\kappa_1(\rvM)}$ for an appropriate choice of $q\ll D$, and an adequate high-probability bound will follow from Markov's inequality.
    \begin{align*}  \label{eq:ainesh-bakshi}
        \ell_q\parens*{ \kappa_1(\rvM) } = \ell_q \parens*{
            \ell_4\parens*{
                \norm{
                    \grad_{\rvG}\,\chi_{\sgraph}\parens*{\sqrt{1-\eps}\rvM + \sqrt{\eps}\rvG}
                }_2 \Big| \rvM
            }
        }   \numberthis
    \end{align*}
    Let us disentangle the above term.
    To keep the terms we will be propagating through our calculations light, let us define some more notation.
    For $e\in E(\sgraph)$ and $i,j\in[n]$, we will define $\calL_{\sgraph,e,\{i,j\}}$ as the set of all $\pi\in\calL_{\sgraph}$ such that $\pi(e)=\{i,j\}$.
    Define
    \[
        \chi_{\sgraph,e}^{i,j}(M) \coloneqq \frac{1}{\sqrt{\abs{\calL_{\sgraph}}}} \sum_{\pi\in\calL_{\sgraph,e,\{i,j\}}} \prod_{ab\in E(\sgraph)\setminus e} M_{\pi(a),\pi(b)}\mcom
    \]
    and $u_{\sgraph,e}(M)$ as the ${n\choose 2}$-dimensional vector where $u_{\sgraph,e}(M)[\{i,j\}] = \chi_{\sgraph,e}^{i,j}(M)$.
    Now, observe that:
    \[
        \grad_{\rvG}\, \chi_{\sgraph}\parens*{ \sqrt{1-\eps} \rvM + \sqrt{\eps} \rvG } =
        \sum_{e\in E(\sgraph)} \sqrt{\eps} \cdot u_{\sgraph, e}\parens*{ \sqrt{1-\eps} \rvM + \sqrt{\eps} \rvG }
    \]
    Plugging in the above into \Cref{eq:ainesh-bakshi}, we get:
    \begin{align*}
        \ell_q\parens*{\kappa_1\parens*{\rvM}}
        &\le \sqrt{\eps} \sum_{e\in E(\sgraph)}
        \ell_q\parens*{
            \ell_4\parens*{
                \norm{ u_{\sgraph, e}\parens*{
                    \sqrt{1-\eps}\rvM + \sqrt{\eps}\rvG
                } \Big| \rvM}_2
            }
        } \\
        &\le \sqrt{\eps} \sum_{e\in E(\sgraph)}
        \ell_{4q}\parens*{
            \norm{ u_{\sgraph, e}\parens*{
                    \sqrt{1-\eps}\rvM + \sqrt{\eps}\rvG
                }
            }_2
        }   \numberthis \label{eq:moth}
    \end{align*}
    where the first inequality follows from applying the triangle inequality for norms, and the second inequality follows from Jensen's inequality.

    We will now analyze a single summand.
    Defining $\rvM_{\eps} \coloneqq \sqrt{1-\eps}\rvM + \sqrt{\eps}\rvG$, we have:
    \begin{align*}
        \ell_{4q}\parens*{
            \norm{ u_{\sgraph, e}\parens*{ \rvM_{\eps} } }_2
        }^{4q} &= \E_{\rvM_{\eps}\sim\OU_{\eps}\Planted}
        \norm{ u_{\sgraph, e}\parens*{ \rvM_{\eps} } }_2^{4q} \mper
    \end{align*}
    Since this is a polynomial of degree $C_\vartheta q \ll D$ in $\rvM_\e$, where $C_\vartheta$ only depends on $\vartheta$, it follows using that $\Var X \leq \E X^2$ and $\E X \leq \sqrt{\E X^2}$ that 
    \begin{align}
        \nonumber \ell_{4q}\parens*{
            \norm{ u_{\sgraph, e}\parens*{ \rvM_{\eps} } }_2
        }^{4q} &= \E_{\rvM_{\eps}\sim\OU_{\eps}\Planted}
        \norm{ u_{\sgraph, e}\parens*{ \rvM_{\eps} } }_2^{4q} \leq \E_{\rvG\sim\Null}
        \norm{ u_{\sgraph, e}\parens*{ \rvG } }_2^{4q} + \sqrt{
                \chisquaretrunc{\Planted}{\Null} \cdot \Var_{\rvG\sim\Null}
                \norm{ u_{\sgraph, e}\parens*{ \rvG } }_2^{4q}
            } \\
        &\leq (1+\sqrt{n}) \cdot \sqrt{\E_{\rvG\sim\Null}
        \norm{ u_{\sgraph, e}\parens*{ \rvM_{\eps} } }_2^{8q}} \mper \label{eq:ldlr-trace}
    \end{align}

    In summary, we have proved:
    \begin{align*}
        \ell_{4q}\parens*{ \norm{u_{\sgraph,e}\parens*{\rvM_{\eps}}}_2 } \le O\parens*{n^{1/8q}} \cdot
        \parens*{
            \E_{\rvG\sim\Null}
        \norm{ u_{\sgraph, e}\parens*{ \rvM_{\eps} } }_2^{8q}
        }^{1/8q}
        \numberthis \label{eq:ell4q-garbage}
    \end{align*}
    Now, note that $\norm{ u_{\sgraph, e}\parens*{ \rvM_{\eps} } }_2^{8q} = \Tr\parens*{ \parens*{ u_{\sgraph,e}\parens*{\rvG} u_{\sgraph,e}\parens*{\rvG}^{\top} }^{4q}}$.
    By \cite[Lemmas 8.7 and 8.15]{AMP16} (``graph matrix moment bounds''):\footnote{In our setting our matrix is of size $\binom{n}{2} \times 1$. In graph matrix terms, there is only one set of distinguished vertices, corresponding to the indices for the entries of the vector.}
    \begin{align*}
        \E_{\rvG\sim\Null} \Tr\parens*{ \parens*{ u_{\sgraph,e}\parens*{\rvG} u_{\sgraph,e}\parens*{\rvG}^{\top} }^{4q}}
        &\le
        \frac{1}{\abs{ \calL_{\sgraph} }^{4q}} (8q)^{8q(e(\sgraph)+v(\sgraph))} \cdot n^{4q v(\sgraph)} \\
        &\le \frac{v(\sgraph)^{4qv(\sgraph)}}{ n^{4q v(\sgraph)} } (8q)^{8q(e(\sgraph)+v(\sgraph))} \cdot n^{4q v(\sgraph)} \\
        &\le (8v(\sgraph)q)^{8q(e(\sgraph)+v(\sgraph))}\mper
    \end{align*}
    Plugging the above into \Cref{eq:ell4q-garbage} gives:
    \[
         \ell_{4q}\parens*{ \norm{u_{\sgraph,e}\parens*{\rvM_{\eps}}}_2 } \le
          O\parens*{n^{1/8q}} \cdot
          (8v(\sgraph)q)^{e(\sgraph)+v(\sgraph)}\mper
    \]
    We explicitly choose $q = \log n$, and plug the above into \Cref{eq:moth} to obtain the bound:
    \begin{align*}
        \ell_{\log n}\parens*{ \kappa_1(\rvM) } \le O(\sqrt{\eps}) \cdot e(\sgraph) \cdot (8v(\sgraph)\log n)^{e(\sgraph)+v(\sgraph)}\mper   \numberthis \label{eq:final-gradgrad}
    \end{align*}

    \parhead{High-probability upper bound on $\kappa_2(\rvM)$.}
    Akin to the previous case, will bound $\ell_q\parens*{\kappa_2(\rvM)}$ for some large $q$, and an adequate high-probability bound will follow from Markov's inequality.
    By carrying out a similar calculation to the previous case up until \Cref{eq:moth}, we get:
    \begin{align*}
        \ell_q\parens*{ \kappa_2(\rvM) }
        \le
        \eps \cdot \sum_{e\ne e'\in E(\sgraph)}
        \ell_{4q}\parens*{
            \norm{ u_{\sgraph, e, e'}\parens*{ \rvM_{\eps} } }_{\op}
        }\mper   \numberthis \label{eq:slitherwing}
    \end{align*}
    In the above, $u_{\sgraph,e,e'}(\rvM)$ is defined as the following ${n \choose 2} \times {n \choose 2}$ matrix that arises from computing $\grad^2 \chi_{\sgraph}\parens*{\rvM_{\eps}}$
    \[
        u_{\sgraph,e,e'}(M)[\{i,j\}, \{i', j'\}] =
        \frac{1}{\sqrt{\abs{\calL_{\sgraph}}}}
        \sum_{\pi\in\calL_{\sgraph,e,\{i,j\}} \cap \calL_{\sgraph,e',\{i',j'\}} }
        \prod_{ab\in E(\sgraph)\setminus\{e,e'\} } M_{\pi(a),\pi(b)}\mper
    \]
    We now study a single summand of \Cref{eq:slitherwing}.
    For a fixed $e\ne e'$, we have:
    \begin{align*}
        \ell_{4q}\parens*{
            \norm{ u_{\sgraph, e, e'}\parens*{ \rvM_{\eps} } }_{\op}
        }^{4q}
        &\le
        \E_{\rvM_{\eps}\sim\OU_{\eps}\Planted} \norm{ u_{\sgraph, e, e'}\parens*{ \rvM_{\eps} } }_{\op}^{4q} \\
        &\le \E_{\rvM_{\eps}\sim\OU_{\eps}\Planted} \Tr\parens*{
            \parens*{u_{\sgraph, e, e'}\parens*{ \rvM_{\eps} }u_{\sgraph, e, e'}\parens*{ \rvM_{\eps} }^{\top}
            }^{2q}
        } \\
        &\le O(n) \cdot \sqrt{ \E_{\rvG\sim\Null} \Tr\parens*{
            \parens*{u_{\sgraph, e, e'}\parens*{ \rvG }u_{\sgraph, e, e'}\parens*{ \rvG }^{\top}
            }^{4q} }
        },
    \end{align*}
    where the final inequality is derived along analogous lines to \Cref{eq:ldlr-trace}.
    The above establishes:
    \begin{align*}
        \ell_{4q}\parens*{ \norm{ u_{\sgraph,e,e'}\parens*{\rvM_{\eps}} } } \le O\parens*{ n^{1/4q} } \cdot \parens*{
            \E_{\rvG\sim\Null} \Tr\parens*{
                \parens*{u_{\sgraph, e, e'}\parens*{ \rvG }u_{\sgraph, e, e'}\parens*{ \rvG }^{\top}
                }^{4q}
            }
        }^{1/8q}    \numberthis \label{eq:iron-bundle}
    \end{align*}
    Once again, we bound the above using ``graph matrix moment bounds''.
    To apply \cite[Lemmas 8.7 and 8.15]{AMP16}, we will need a lower bound on the size of the minimum vertex separator between $e$ and $e'$ in $\sgraph$.
    Observe that since $\sgraph$ is connected, the minimum vertex separator between $e$ and $e'$ in $\sgraph$ has size at least $1$.
    Thus:
    \begin{align*}
        \E_{\rvG\sim\Null}
        \Tr\parens*{
            \parens*{u_{\sgraph, e, e'}\parens*{ \rvG }u_{\sgraph, e, e'}
                \parens*{ \rvG }^{\top}
            }^{4q}
        }
        &\le
        \frac{1}{\abs{\calL_{\sgraph}}^{4q}} \parens*{ 8q }^{8q(e(\sgraph) + v(\sgraph))} \cdot n^{4q(v(\sgraph)-1)+1} \\
        &\le (8v(\sgraph)q)^{8q(e(\sgraph) + v(\sgraph))} n^{-4q+1}\mper
    \end{align*}
    Plugging the above into \Cref{eq:iron-bundle} gives:
    \[
        \ell_{4q}\parens*{ \norm{ u_{\sgraph,e,e'}\parens*{\rvM_{\eps}} } } \le
        O(n^{3/8q})\cdot\frac{(8v(\sgraph)q)^{e(\sgraph)+v(\sgraph)}}{\sqrt{n}}
    \]
    We explicitly choose $q = \log n$ and plug in the above into \Cref{eq:slitherwing} and obtain:
    \begin{align*}
        \ell_{\log n}\parens*{\kappa_2(\rvM))} \le O(1) \cdot \frac{\eps\cdot(8v(\sgraph)\log n)^{e(\sgraph)+v(\sgraph)}}{\sqrt{n}}\mper
        \numberthis \label{eq:final-bound-kappa2}
    \end{align*}

    \parhead{High-probability lower bound on $\sigma^2(\rvM)$.}
    %
    To lower bound $\sigma^2(\rvM)$ with high probability, we first give a deterministic lower bound on $\sigma^2(\rvM)$ by an expression that is analytically more tractable.
    We then show a high probability lower bound on this expression.

    Fix an arbitrary matrix $M \in \R^{n\times n}$.
    Let $p(\rvG) = \chi_{\sgraph}\parens*{ \sqrt{1-\eps}M + \sqrt{\eps}\rvG }$.
    For a function $f \colon \R^d \rightarrow \R$, let $D^t \,f$ be the tensor containing all partial derivatives of order $t$ where there are no repetitions in the directions we take derivatives in.
    Since $p$ is multi-linear, all other derivatives of order $t$ are 0.
    Thus, by Taylor expansion
    \begin{align*}
        p(\rvG) = \sum_{t=0}^{e(\vartheta)} \frac 1 {t!} \angles*{D^t\, p(0), \rvG^{\otimes t}} \mcom
    \end{align*}
    where by $\rvG^{\otimes t}$ we mean the tensor that contains all products of $t$ distinct entries of $\rvG$.
    Note that for indices $s \neq t \geq 1$, the corresponding terms in the sum above are uncorrelated and have mean 0.
    Thus, using that the constant term doesn't influence the variance we obtain
    $$
        \sigma^2(M) = \Var_{\rvG} p(\rvG) = \sum_{t=1}^{e(\vartheta)} \frac 1 {(t!)^2} \E_{\rvG} \angles*{D^t \,p(0), \rvG^{\otimes t}}^2 \mper
    $$
    Further, 
    $$
        D^t \,p(0) = \sqrt{\e}^t \cdot D^t \,\chi_\vartheta\parens*{\sqrt{1-\e} M} = \sqrt{\e}^t \sqrt{1-\e}^{e(\vartheta-t)}\cdot D^t \,\chi_\vartheta\parens*{M} \,. 
    $$
    
    So we get 
    \begin{align*}
        \sigma^2(M)  &= \sum_{t\ge 1} \frac{\eps^t (1-\eps)^{e(\sgraph)-t}}{(t!)^2} \cdot \E_{\rvG} \angles*{\mathrm{D}^t\, \chi_{\sgraph}\parens*{ M }, \rvG^{\otimes t} }^2 = \sum_{t\ge 1} \frac{\eps^t (1-\eps)^{e(\sgraph)-t}}{(t!)^2}  \cdot \norm{ \mathrm{D}^t\, \chi_{\sgraph}\parens*{ M }}_2^2 \\
        &\ge \eps\cdot(1-\eps)^{e(\sgraph)-1}\cdot \norm{ \grad \chi_{\sgraph}\parens*{M} }_2^2 \numberthis \label{eq:variance-lower-bound}
    \end{align*}
    Thus, to prove a high-probability lower bound on $\sigma^2(\rvM)$, it suffices to prove a high-probability lower bound on $\norm{ \grad \chi_{\sgraph}\parens*{\rvM} }^2$.
    To do so, we will express $\norm{ \grad \chi_{\sgraph}(\rvM) }_2^2$ as the sum of a ``deterministic term'' and a ``fluctuation term''.
    The deterministic and fluctuation part arise from writing the expression as a linear combination of \emph{subgraph statistics} of $\rvM$.

    We will also consider subgraph statistics of \emph{multigraphs} in the sequel. For a multigraph $\iota$ and a matrix $M$, we define the subgraph statistic $\chi_{\iota}(M)$ as $\sum_{\pi\in\calL_{\iota}} \prod_{ab\in E(\iota)} h_{\mathrm{mult}(ab)}\parens*{ M_{\pi(a),\pi(b)} }$,
    where $\mathrm{mult}(ab)$ refers to the multiplicity of the edge $ab$ in $\iota$, and $h_k$ refers to the $k$-th (monic) Hermite polynomial.

    Now, let's write:
    \begin{align*}
        \norm{ \grad \chi_{\sgraph}(\rvM) }_2^2 &=
        \sum_{\{i,j\}\in{[n]\choose 2}} \frac{1}{\abs{ \calL_{\sgraph} } }
        \parens*{
            \sum_{e\in E(\sgraph)}
            \sum_{ \pi \in \calL_{\sgraph, e, \{i,j\} } }
            \prod_{ ab \in E(\sgraph) \setminus e } \rvM_{\pi(a), \pi(b)}
        }^2 \\
        &= \sum_{\{i,j\}\in{[n]\choose 2}} \frac{1}{\abs{ \calL_{\sgraph} } }
        \sum_{e,e'\in E(\sgraph)}
        \sum_{\substack{ \pi \in \calL_{\sgraph, e, \{i,j\}} \\ \pi' \in \calL_{\sgraph, e', \{i,j\}} } } \prod_{ab\in E(\sgraph)\setminus e} \rvM_{\pi(a), \pi(b)} \cdot
        \prod_{ab \in E(\sgraph)\setminus e'} \rvM_{\pi'(a),\pi'(b)}\mper  \numberthis \label{eq:grad-square}
    \end{align*}

    Now, we define the notion of an \emph{overlap function}.
    An overlap function $\Ov$ is an injective map from a subset of vertices $S\subseteq V(\sgraph)$ to $V(\sgraph)$.
    We use $\calO_S$ to refer to the set of all overlap functions from $S$.
    Given an overlap function $\Ov\in\calO_S$, we use $\sgraph_{\Ov}$ to refer to the graph obtained by taking two copies of $\sgraph$, and overlapping them by overlaying vertex $v\in S$ in the first copy of $\sgraph$ with $\Ov(v)$ in the second copy of $\sgraph$.

    With this notation in hand, we can write \Cref{eq:grad-square} as:
    \begin{align*}
        \eqref{eq:grad-square} = \frac{1}{\abs{ \calL_{\sgraph} } } \sum_{e,e'\in E(\sgraph)} \sum_{V(\sgraph)\supseteq S \supseteq e} \sum_{\substack{\Ov\in\calO_S \\ \Ov(e)=e'}} \sum_{\pi\in\calL_{\sgraph_{\Ov}}} \prod_{ab\in E(\sgraph_{\Ov})\setminus \{e,e'\} } \rvM_{\pi(a),\pi(b)}^{\mathrm{mult}(ab)} \numberthis \label{eq:overlap-first}
    \end{align*}
    In the above expression, $\mathrm{mult}(ab)$ is either equal to $1$ or $2$, and when it is $2$, the corresponding term can be split into $h_2\parens*{\rvM_{\pi(a),\pi(b)}}+1$.

    Define $\calC_{\text{proto}}(\sgraph, e, e')$ as the multiset of all multigraphs of the form $\sgraph_{\Ov}$ for $\Ov\in\calO_S$ such that $S\supseteq e$ and $\Ov(e) = e'$.

    We now construct a multiset $\calC(\sgraph,e,e')$ from $\calC_{\text{proto}}(\sgraph,e,e')$ via the following iterative method.
    \begin{itemize}
        \item Define $\calC_0$ as $\calC_{\text{proto}}(\sgraph,e,e')$, and for every graph $\iota\in\calC_{\text{proto}}(\sgraph,e,e')$, color all its double-edges red.
        \item For time $t\ge 1$, perform the following until termination.
        \item Suppose every graph $\iota\in\calC_{t-1}$ has no red double-edges, define $\calC(\sgraph,e,e') \coloneqq \calC_{t-1}$ and terminate.
        \item Otherwise, pick a graph $\iota$ that has a red double-edge.
        \item Pick a red double-edge $e$ in $\iota$ and define the following two new graphs: $\iota_1$ by deleting $e$, and $\iota_2$ by recoloring $e$ blue.
        \item Let $\calC_t \coloneqq \calC_{t-1} \cup \{\iota_1,\iota_2\} \setminus\{\iota\}$.
    \end{itemize}
    The reader should interpret the ``splitting'' of $\iota$ into $\iota_1$ and $\iota_2$ as writing $x^2$ as $h_2(x) + 1$.
    Finally, define $\calC(\sgraph)$ as the multiset obtained as $\bigcup_{e,e'}\calC(\sgraph,e,e')$.
    
    It is straightforward to verify that we can write:
    \begin{align*}
        \eqref{eq:overlap-first} = \frac{1}{\abs{ \calL_{\sgraph} } } \sum_{\iota\in\calC(\sgraph)} \chi_{\iota}\parens*{ \rvM }\mper
    \end{align*}
    We can split the above sum into a deterministic term and a fluctuation term, based on whether $\iota$ is empty or nonempty.
    In particular:
    \[
        \eqref{eq:overlap-first} = \underbrace{\frac{1}{\abs{\calL_{\sgraph}}}
            \sum_{ \substack{\iota\in\calC(\sgraph) \\ \iota\text{ empty}} } \chi_{\iota}\parens*{ \rvM }}_{\text{Deterministic term}} +
        \underbrace{\frac{1}{\abs{\calL_{\sgraph}}}
            \sum_{ \substack{\iota\in\calC(\sgraph) \\ \iota\text{ nonempty}} }
            \chi_{\iota}\parens*{ \rvM }}_{\text{Fluctuation term}}
    \]
    The first term does not depend on $\rvM$ as each summand is deterministically equal to $\abs{\calL_{\iota}}$.
    Observe that the magnitude of the deterministic part is at least $1$, since for $\iota$ arising from overlaying two copies of $\sgraph\setminus e$ according to the identity map for some edge $e$, we have $\abs{\calL_{\iota}} = \abs{\calL_{\sgraph}}$.
    We record this observation below:
    \begin{align*}
        \text{Deterministic term} \ge 1 \numberthis \label{eq:deterministic-lb}
    \end{align*}

    We will prove that with high probability, the magnitude of the fluctuation term is at most $\frac{\polylog n}{\sqrt{n}}$ by bounding its $\ell_{q}$-norm for appropriately chosen even $q$.
    \begin{align*}
        \ell_q\parens*{ \text{Fluctuation term} } =
        \ell_q\parens*{ \frac{1}{\abs{\calL_{\sgraph}}}
            \sum_{ \substack{\iota\in\calC(\sgraph) \\ \iota\text{ nonempty}} }
            \chi_{\iota}\parens*{ \rvM } } &\le \frac{1}{\abs{\calL_{\sgraph}}} \sum_{ \substack{\iota\in\calC(\sgraph) \\ \iota\text{ nonempty}} } \ell_q\parens*{ \chi_{\iota}\parens*{ \rvM } }    \numberthis \label{eq:nonempty-terms}
    \end{align*}
    We now analyze a summand of the RHS of the above.
    \begin{align*}
        \ell_q\parens*{ \chi_{\iota}\parens*{\rvM} }^q &= \E_{\rvM\sim\Planted} \chi_{\iota}\parens*{\rvM}^q \\
        &\le \E_{\rvG\sim\Null} \chi_{\iota}\parens*{\rvG}^q + \sqrt{n \cdot \E_{\rvG\sim\Null} \chi_{\iota}\parens*{\rvG}^{2q} } \\
        &\le O\parens*{\sqrt{n}} \cdot \sqrt{ \E_{\rvG\sim\Null} \chi_{\iota}\parens*{\rvG}^{2q} }
    \end{align*}
    By the graph matrix moment bounds \cite[Lemmas 8.7 and 8.15]{AMP16}, we have:
    \[
        \E_{\rvG\sim\Null} \chi_{\iota}\parens*{\rvG}^{2q} \le (8q)^{2q e(\iota)}\cdot(2q)^{2q v(\iota)} n^{q (v(\iota) + \mathrm{iso}(\iota))},
    \]
    where $\mathrm{iso}(\iota)$ is the number of isolated vertices in $\iota$,
    and thus, as an upshot we have:
    \begin{align*}
        \ell_q\parens*{\chi_{\iota}\parens*{\rvM}} \le O\parens*{n^{1/4q}} \cdot (8q)^{e(\iota)+v(\iota)}\cdot n^{(v(\iota)+\mathrm{iso}(\iota))/2}
    \end{align*}
    Consequently:
    \begin{align*}
        \eqref{eq:nonempty-terms} \le O\parens*{n^{1/4q}}\cdot \frac{v(\sgraph)^{v(\sgraph)}}{n^{v(\sgraph)}} \cdot 2^{v(\sgraph)}v(\sgraph)^{v(\sgraph)}\cdot \max_{\substack{\iota\in\calC(\sgraph) \\ \iota\text{ nonempty}}} (8q)^{e(\iota)+v(\iota)}\cdot n^{(v(\iota)+\mathrm{iso}(\iota))/2}   \numberthis \label{eq:chilinko}
    \end{align*}
    Any $\iota$ in $\calC(\sgraph)$ arises from overlaying two copies of $\sgraph$ via some map $\Ov\in\calO_{S(\iota)}$ for some subset of vertices $S(\iota)$ of size at least $2$, and so $v(\iota) = 2v(\sgraph) - \abs{S(\iota)}$.
    In particular, the RHS of the above is at most
    \begin{align*}
        \max_{\substack{\iota\in\calC(\sgraph) \\ \iota\text{ nonempty}}} O\parens*{n^{1/4q}}\cdot (16qv(\sgraph))^{2e(\sgraph) + 2v(\sgraph)} \cdot \sqrt{n}^{\mathrm{iso}(\iota)-|S(\iota)|}   \numberthis \label{eq:max-iota-treatable}
    \end{align*}
    We will prove that for any $\iota\in\calC(\sgraph)$, the power of $n$ in the above term is always at most $-1/2$ by showing that the set of isolated vertices $I$ in $\iota$ is a strict subset of $S = S(\iota)$.

    Suppose $I$ is empty, then the claim is true since $S$ is nonempty.
    Henceforth, we assume that $I$ is nonempty.
    Define $\wt{\iota}$ as the graph obtained by taking the (multiset) union of the two overlapped copies of $\sgraph$ (without deleting any edges) via the map $\Ov$.
    Observe that $\wt{\iota}$ is a connected graph since $\sgraph$ is connected, and $\wt{\iota}$ is obtained by overlapping two copies of $\sgraph$ on a nonempty set.
    Since $\iota$ is not an empty graph, its set of nonisolated vertices $T$ must be nonempty.
    Since $I = V\parens*{\wt{\iota}} \setminus T$, and $\wt{\iota}$ is connected, there must be a nonempty cut between $I$ and $T$.
    Observe that for an isolated vertex $v$ in $\iota$, every edge incident to $v$ in $\wt{\iota}$ is a double-edge, and additionally, if a vertex $v$ is incident to a double-edge, it must be in $S$.
    Every edge in the cut between $I$ and $T$ is a double-edge, and hence there must be some vertex $u\in T$ that is in $S$, and so $I$ is a strict subset of $T$, which leads us to conclude the following:
    \begin{align*}
        \eqref{eq:max-iota-treatable} \le O\parens*{n^{1/4q}} \frac{\parens*{16qv(\sgraph)}^{2(e(\sgraph)+v(\sgraph))}}{\sqrt{n}}
    \end{align*}
    Explicitly choosing $q = \log n$, and using the above along with $\eqref{eq:nonempty-terms}\le\eqref{eq:chilinko}\le\eqref{eq:max-iota-treatable}$, we get:
    \begin{align*}
        \ell_{\log n}\parens*{\text{Fluctuation term}} \le \frac{\parens*{32v(\sgraph)\log n}^{2(e(\sgraph)+v(\sgraph))}}{\sqrt{n}} \numberthis \label{eq:final-fluctuation}
    \end{align*}

    \parhead{Total variation distance bound.}
    By \Cref{eq:final-gradgrad}, \Cref{eq:final-bound-kappa2}, \Cref{eq:variance-lower-bound}, \Cref{eq:deterministic-lb}, and \Cref{eq:final-fluctuation}, and Markov's inequality, we have that with probability at least $1-\frac{1}{n^2}$ over $\rvM\sim\Planted$:
    \begin{align*}
        \kappa_1\parens*{\rvM} &\le \sqrt{\eps}\cdot\polylog n \\
        \kappa_2\parens*{\rvM} &\le \eps\cdot\frac{\polylog n}{\sqrt{n}} \\
        \sigma^2\parens*{\rvM} &\ge \eps(1-\eps)^{v(\sgraph)-1}\cdot\parens*{1 - \frac{\polylog n}{\sqrt{n}}} \mper
    \end{align*}
    We will use $\calE$ to refer to the event that the above bounds hold.
    For a fixed choice of $M$, we define $\pi_{\sgraph,M}$ as the law of $\chi_{\sgraph}\parens*{\sqrt{1-\eps}M + \sqrt{\eps}\rvG}$ for $\rvG\sim\Null$, and define $\ul{\pi}_{\sgraph,M}$ as the law of $\chi_{\sgraph}\parens*{\sqrt{1-\eps}M}+\wt{\sigma}(M)\cdot\rvg$ where $\rvg \sim \calN(0,1)$.
    By \Cref{lem:shivam}, for $M\in\calE$ we have:
    \[
        \dtv\parens*{\pi_{\sgraph,M}, \ul{\pi}_{\sgraph,M} } \le \frac{\eps^{3/2}\cdot \polylog n}{\Omega(\eps)\cdot \sqrt{n} } = \frac{\sqrt{\eps}\cdot\polylog n}{\sqrt{n}}\mper
    \]
    Now, observe that $\pi_{\sgraph} = \E_{\rvM\sim\Planted} \pi_{\sgraph,\rvM}$, and $\ul{\pi}_{\sgraph} = \E_{\rvM\sim\Planted} \ul{\pi}_{\sgraph,\rvM}$.
    
    We can finally prove the desired statement via the following chain of inequalities.
    \begin{align*}
        \dtv\parens*{\pi_{\sgraph}, \ul{\pi}_{\sgraph}} &\le \E_{\rvM\sim\Planted} \dtv\parens*{ \pi_{\sgraph,\rvM}, \ul{\pi}_{\sgraph,\rvM} } \\
        &= \E_{\rvM\sim\Planted} \Ind\bracks*{\calE}\cdot\dtv\parens*{ \pi_{\sgraph,\rvM}, \ul{\pi}_{\sgraph,\rvM} } + \E_{\rvM\sim\Planted} \Ind\bracks*{\ol{\calE}}\cdot\dtv\parens*{ \pi_{\sgraph,\rvM}, \ul{\pi}_{\sgraph,\rvM} } \\
        &\le \frac{\sqrt{\eps}\cdot \polylog n }{\sqrt{n}} + \frac{1}{n^2}
        \qedhere
    \end{align*}
\end{proof}

\subsection{Infinity Norm Bounds}
\label{sec:infty-norm-bound}

In this section we prove \cref{lem:infty-norm-bound} restated below.
\restatelemma{lem:infty-norm-bound}

\begin{proof}
For any $x \in \R$ by Fourier inversion we can write:
    \begin{align*}
        \sqrt{2\pi}\abs{ \ul{\pi}_{\sgraph}(x) - \ul{\nu}_{\sgraph}(x) }
        &\le
        \int_{-\infty}^{\infty} \abs*{ \wh{\ul{\pi}}_{\sgraph}(\xi) - \wh{\ul{\nu}}_{\sgraph}(\xi)}  \dif\xi \\
        &=
        \int_{[-T,T]} \abs*{ \wh{\ul{\pi}}_{\sgraph}(\xi) - \wh{\ul{\nu}}_{\sgraph}(\xi)}  \dif\xi
        +
        \int_{\R\setminus[-T,T]} \abs*{ \wh{\ul{\pi}}_{\sgraph}(\xi) - \wh{\ul{\nu}}_{\sgraph}(\xi)}  \dif\xi   \numberthis \label{eq:split-fourier}
    \end{align*}
    for some $T\in\R$ that we choose later.
    We bound the first term via moment-matching (bounded LDLR) and the second term using that the added noise attenuates the large Fourier coefficients.

    \parhead{Case $|\xi| > T$.}
    Recall that $\wt{\sigma}(\rvM) \geq \tfrac \e 2 (1-\e)^{e(\vartheta)-1}$ and observe that
    \begin{align*}
        \abs*{\ul{\wh{\pi}}_{\sgraph}\parens*{ \xi }}
        &=
        \abs*{\E_{\rvM\sim\Planted} \E_{\rvg\sim\calN(0,1)} \exp\parens*{ i \xi \parens*{ \sqrt{1-\eps}^{e(\sgraph)}\chi_{\sgraph}\parens*{\rvM} + \wt{\sigma}\parens*{\rvM}\cdot\rvg } }} \\
        &\le \abs*{\E_{\rvM\sim\Planted}  \exp\parens*{ i \xi \parens*{ \sqrt{1-\eps}^{e(\sgraph)}\chi_{\sgraph}\parens*{\rvM} } } \cdot \exp\parens*{-\xi^2\wt{\sigma}^2\parens*{ \rvM}/2 } } \\
        &\le \exp\parens*{ -\xi^2 \eps^2 (1-\eps)^{2e(\sgraph)-2} / 8 }\mper
    \end{align*}
    When $\Planted = \Null$, the above establishes the same bound for $\abs*{\ul{\wh{\nu}}_{\sgraph}\parens*{ \xi }}$.
    Plugging in this bound into the second term of \Cref{eq:split-fourier} and integrating gives:
    \begin{align*}
         \int_{\R\setminus[-T,T]} \abs*{ \wh{\ul{\pi}}_{\sgraph}(\xi) - \wh{\ul{\nu}}_{\sgraph}(\xi)}  \dif\xi
         \le
         \sqrt{\frac{4\pi}{\eps^2(1-\eps)^{2e(\sgraph)-2}}} \cdot
         \exp\parens*{ -\frac{T^2 \eps^2 (1-\eps)^{2e(\sgraph)-2}}{8} } \numberthis \label{eq:iron-hands}
    \end{align*}

    We are now ready to bound the first term.

    \parhead{Case $|\xi| \le T$.}
    In this case, observe that:
    \begin{align*}
        \abs*{\ul{\wh{\pi}}_{\sgraph}\parens*{\xi} - \ul{\wh{\nu}}_{\sgraph}\parens*{\xi}}
        &\le
        \abs*{{\wh{\pi}}_{\sgraph}\parens*{\xi} - {\wh{\nu}}_{\sgraph}\parens*{\xi}} + \dtv\parens*{ \ul{\pi}_{\sgraph}, \pi_{\sgraph} } + \dtv\parens*{ \ul{\nu}_{\sgraph}, \nu_{\sgraph} } \\
        &\le \abs*{ \wh{\pi}_{\sgraph}\parens*{\xi} - \wh{\nu}_{\sgraph}\parens*{\xi} } + \frac{\mathrm{\polylog n}}{\sqrt{n}}\mper \numberthis \label{eq:pass-to-pi}
    \end{align*}
    To control $\abs*{ \wh{\pi}_{\sgraph}\parens*{\xi} - \wh{\nu}_{\sgraph}\parens*{\xi} }$, we will use the fact that the moments of $\pi_{\sgraph}$ and $\nu_{\sgraph}$ approximately match.
    Indeed, by Taylor's theorem, and for a choice of $k \ll D$ we will make later in the proof, we can write:
    \begin{align*}
        \abs*{ \wh{\pi}_{\sgraph}\parens*{\xi} - \wh{\nu}_{\sgraph}\parens*{\xi} }
        &\le
        \abs*{ \sum_{t=0}^{k-1} \E_{\rvM\sim\OU_{\eps}\Planted} \frac{\parens*{ i\xi\chi_{\sgraph}(\rvM) }^t }{t!} - \E_{\rvM\sim\Null} \frac{\parens*{ i\xi\chi_{\sgraph}(\rvM) }^t }{t!} } + 
        \E_{\rvM\sim\OU_{\eps}\Planted}\frac{\abs*{ \xi\chi_{\sgraph}(\rvM) }^k }{k!}
        +
        \E_{\rvM\sim\Null}\frac{\abs*{ \xi\chi_{\sgraph}(\rvM) }^k }{k!} \\
        &\le
        \sum_{t=0}^{k-1} \frac{|\xi|^t}{t!} \sqrt{\chisquaretrunc{\Planted}{\Null} \E_{\rvM\sim\Null}  \chi_{\sgraph}(\rvM)^{2t} } + 2 \E_{\rvM\sim\Null} \frac{\abs*{ \xi \chi_{\sgraph}(\rvM) }^k }{k!} + \frac{|\xi|^k}{k!} \sqrt{ \chisquaretrunc{\Planted}{\Null} \E_{\rvM\sim\Null} \chi_{\sgraph}(\rvM)^{2k} } \mper
    \end{align*}

    Let $C$ be a sufficiently large absolute constant.
    Now by \cref{lem:subgraph-moments} we know that $\E_{\rvM\sim\Null}\chi_{\sgraph}(\rvM)^q \le O(q)^{q/2}$, so the last two terms are at most $(1+\sqrt{\delta})\frac{C^k \abs{\xi}^k}{k^{k/2}}$.
    Similarly, the first term is at most $\sqrt{\delta}\sum_{t= 0}^{k-1} \frac{C^t\abs*{\xi}^t}{t^{t/2}}$.
    Overall, we obtain
    \begin{align*}
        \abs*{ \wh{\pi}_{\sgraph}\parens*{\xi} - \wh{\nu}_{\sgraph}\parens*{\xi} }& \leq \sqrt{\delta}\sum_{t\ge 0} \frac{C^t\abs*{\xi}^t}{t^{t/2}} + \frac{C^k|\xi|^k}{k^{k/2}} \le \sqrt{\delta}(1+C|\xi|)\exp\parens*{ C^2|\xi|^2 } + \frac{C^k|\xi|^k}{k^{k/2}}\\
        &\le \sqrt{\delta}(1+CT)\exp(C^2 T^2) + \frac{C^k T^k}{k^{k/2}} \mper   \numberthis \label{eq:moments-matched}
    \end{align*}

    By \Cref{eq:pass-to-pi} and \Cref{eq:moments-matched}, we have:
    \begin{align*}
        \int_{[-T,T]} \abs*{ \wh{\ul{\pi}}_{\sgraph}(\xi) - \wh{\ul{\nu}}_{\sgraph}(\xi)}  \dif\xi \le \frac{T\polylog n}{\sqrt{n}} + \frac{C^k T^{k+1}}{k^{k/2}} + \sqrt{\delta} (CT^2 + T)\exp(C^2 T^2)\mper  \numberthis \label{eq:matcha-sinistcha-gotcha}
    \end{align*}
    We choose $k = \log n$, and $T = \alpha \min\braces*{\sqrt{\log n}, \sqrt{\log\frac{1}{\delta}} }$ for a sufficiently small constant $\alpha > 0$.
    By plugging in \Cref{eq:iron-hands} and \Cref{eq:matcha-sinistcha-gotcha} into \Cref{eq:split-fourier}, we get for a (new) sufficiently large constant $C$ and sufficiently small constant $c$:
    \begin{align*}
        \norm{ \ul{\pi}_{\sgraph} - \ul{\nu}_{\sgraph} }_{\infty}
        &\le
        O\parens*{\frac{1}{\eps}} \cdot \exp\parens*{ -c\eps^2 T^2 } + \frac{T\polylog n}{\sqrt{n}} + \frac{C^k T^{k+1}}{k^{k/2}} + C\sqrt{\delta} T^2 \exp(C^2T^2)\mper
    \end{align*}
    Using that $\e \geq 1/\polylog n$, for a (new) sufficiently small constant $\alpha$, we can see that:
    \begin{align*}
        \norm{ \ul{\pi}_{\sgraph} - \ul{\nu}_{\sgraph} }_{\infty} &\le O\parens*{n^{-\alpha\eps^2} + \delta^{\alpha\eps^2}}\mper   \qedhere
    \end{align*}
\end{proof}

\section*{Acknowledgments}
\label{sec:ack}
We would like to thank Sam Hopkins for countless hours of illuminating conversations on the low-degree heuristic.
S.M.\ is grateful to Sam Hopkins and Siqi Liu for discussions on proto-versions of the problems considered here, as well as Prasad Raghavendra and Tselil Schramm for many inspiring conversations.
We are indebted to Shivam Nadimpalli for pointing us to \cite{Cha09}.
We would also like to thank Rares Darius Buhai for helpful conversations, and anonymous reviewers for their heroic feedback on an earlier version of this paper.

Finally, we are grateful to the American Institute of Mathematics for their hospitality and support at the workshop on \emph{``Low-degree polynomial methods in average-case complexity''} where this collaboration began, and the organizers Sam Hopkins, Tselil Schramm, and Alex Wein who facilitated this event.

\bibliographystyle{alpha}
\bibliography{main}

\appendix
\section{Binomial Distributions}
\label{sec:binomial-appendix}
In this section, we provide proofs that we omit in \Cref{sec:binomial-prelims}.

\restatelemma{lem:binomial-prob}

\begin{proof}
    Let $\beta \coloneqq \berp + y\sqrt{\berp(1-\berp)/n}$.
    Then,
    \begin{align*}
        \Pr[\Bin(n,\berp) = \beta n] = \binom{n}{\beta n} \berp^{\beta n} (1-\berp)^{(1-\beta)n} \mper
    \end{align*}
    First, we have
    $$\binom{n}{\beta n} \geq \sqrt{\frac{1}{8n\beta (1-\beta)}} e^{n h(\beta)} \geq \frac{1}{\sqrt{2n}} e^{n h(\beta)}\mcom$$
    where $h(\beta) = -\beta \log \beta - (1-\beta) \log(1-\beta)$.
    Next, a simple calculation shows that 
    $$e^{nh(\beta)} \berp^{\beta n} (1-\berp)^{(1-\beta)n} = e^{-n D(\beta \| \berp)}\mcom$$
    where $D(\beta \| \berp) = \beta \log \frac{\beta}{\berp} + (1-\beta)\log(\frac{1-\beta}{1-\berp})$ is the KL-divergence between $\Ber(\beta)$ and $\Ber(\berp)$.

    We use the bound $D(\berp+\eps\|\berp) \leq \frac{\eps^2}{2\berp(1-\berp)} + O(|\eps|^3/\berp^2)$ from \Cref{fact:KL-bernoulli}, assuming $\berp \leq 1/2$ and $\eps \leq o(\berp)$.
    Setting
    $$|\eps| = |y|\sqrt{\berp(1-\berp)/n} \leq |y|\sqrt{\berp/n} \leq o(\berp)$$
    (by assumption), we get
    $$n\cdot D(\beta\|\berp) \leq y^2/2 + O(|y|^3/\sqrt{\berp n})\mper$$
    This completes the proof.
\end{proof}

\restatelemma{lem:binomial-moments}
\begin{proof}
    Let $\bz_i = \frac{1}{2\sqrt{\berp(1-\berp)}} (\bx_i-(2\berp-1))$.
    A standard calculation shows that $\E[\bz_i] = 0$, $\E[\bz_i^2]=1$, and $0 < \E[\bz_i^k] \leq (\frac{1-\berp}{\berp})^{k/2-1} \leq \berp^{-(k-2)/2}$ for $k\geq 3$ (here we assume $\berp \leq 1/2$).

    We have $\by = \frac{1}{\sqrt{n}}\sum_{i=1}^n \bz_i$ and thus
    \begin{align*}
        \E[\by^{2k}] = n^{-k} \sum_{i_1, i_2,\dots, i_{2k} \in [n]} \E[\bz_{i_1} \bz_{i_2}\cdots \bz_{i_{2k}}] \mper
    \end{align*}
    Since $\E[\bz_i] = 0$, the summand is zero if any $u\in [n]$ appears exactly once in the tuple $(i_1,\dots, i_{2k})$.
    Let $m_u$ be the multiplicity of $u$ in $(i_1,\dots,i_{2k})$, and let $\ell = \sum_u (m_u-2) \one(m_u \geq 3)$, i.e., the multiplicities that exceed $2$.
    Note that $\ell$ must be even, otherwise some $m_u$ would be $1$.
    Then,
    \begin{align*}
        \E[\bz_{i_1}\cdots \bz_{i_{2k}}]
        = \prod_{u\in [n]} \E[\bz_u^{m_u}] \leq \prod_{u\in [n]} \berp^{-(m_u-2)/2 \cdot \one(m_u \geq 3)}
        = \berp^{-\ell/2} \mper
    \end{align*}
    The dominating term is when $\ell = 0$, i.e., each index appears $0$ or $2$ times.
    In this case, the number of such tuples can be formed by first choosing a perfect matching between $1,2,\dots, 2k$ (for the repeating pattern), and then choosing $k$ distinct indices from $[n]$ in order.
    The number of matchings is $(2k-1)!!$, and the number of choices for the $k$ indices is at most $n^k$.

    For $\ell > 0$, we may choose the tuples as follows: (1) select $\ell$ positions among $\{1,2,\dots,2k\}$ to be the ``high-multiplicity'' indices; (2) for each position, we match it with another position in $\{1,2,\dots,2k\}$; (3) for the remaining $2k-\ell$ positions (an even number), we choose a perfect matching; (4) select $k-\ell/2$ distinct indices from $[n]$.
    Thus,
    \begin{align*}
        \E[\by^{2k}] &\leq n^{-k} \sum_{\ell=0,2,4,\dots} \berp^{-\ell/2} (2k)^{2\ell} (2k-\ell-1)!! \cdot n^{k-\ell/2} \\
        &\leq (2k-1)!! \sum_{\ell=0,2,4,\dots} \berp^{-\ell/2} (2k)^{2\ell} \parens*{\frac{2}{k}}^{\ell/2} n^{-\ell/2} \\
        &\leq (2k-1)!! \cdot \parens*{1 + O\parens*{\frac{k^3}{\berp n}}} \mper
    \end{align*}
    This completes the proof.
\end{proof}

\section{Upper Bounds on \Krawdaunt Polynomials}
\label{sec:krawtchouk-bound-proof}

In this section, we provide a proof of \Cref{lem:krawtchouk-bound} that closely follows the proof given in \cite{Pol19,DILV24}.

\restatelemma{lem:krawtchouk-bound}

\begin{proof}
    We first note that by definition,
    \begin{align*}
        \prod_{i=1}^n (1 + \chi_i(x) z) = \sum_{k=0}^n s_k(x) z^k \mper
        \numberthis \label{eq:generating-function}
    \end{align*}
    Let $w \coloneqq \sum_{i=1}^n \frac{x_i+1}{2} \in \{0,1,\dots,n\}$, which equals the number of $1$s in $x$ and is distributed as $\text{Bin}(n,\berp)$.
    Denote $\berq \coloneqq 1-\berp$, we have
    \begin{align*}
        y = \frac{1}{\sqrt{n}}\sum_{i=1}^n \frac{x_i-(2\berp-1)}{2\sqrt{\berp \berq}} = \frac{1}{\sqrt{n\berp \berq}} (w-\berp n)
        \implies w = \berp n + \sqrt{n\berp \berq} y \mper
        \numberthis \label{eq:w-y-relation}
    \end{align*}
    Let $K_k(w)$ be the degree-$k$ univariate polynomial representing $s_k(x)$.
    Then, we have $\chi_i(x) = \sqrt{\berq/\berp}$ if $x_i = 1$ and $-\sqrt{\berp/\berq}$ if $x_i=-1$, and thus \Cref{eq:generating-function} is equivalent to
    \begin{align*}
        f_w(z) \coloneqq \parens*{1 + \sqrt{\frac{\berq}{\berp}}z}^w \parens*{1 - \sqrt{\frac{\berp}{\berq}}z}^{n-w}
        = \sum_{k=0}^n K_k(w) z^k \mper
    \end{align*}
    This is the generating function for the usual \Krawdaunt polynomials.
    In particular, we have $K_k(w) = \frac{1}{k!} f_w^{(k)}(0)$, and thus by Cauchy's integral formula,
    \begin{align*}
        K_k(w) = \frac{1}{2\pi i} \oint_C f_w(z) z^{-k} \frac{dz}{z} \mcom
    \end{align*}
    where $C$ is an arbitrary circle on the complex plane centered at $z=0$, oriented counterclockwise.
    We will choose an appropriate radius $r$ and bound $|f_w(z) z^{-k}|$ for $|z|=r$.

    For simplicity, denote $\alpha = w/n$ and $\beta = k/n$.
    First, we rewrite $f_w(z)$ as
    \begin{align*}
        f_w(z) = \berp^{-\alpha n/2} \berq^{-(1-\alpha)n/2} \parens*{\sqrt{\berp} + \sqrt{\berq} z}^{\alpha n} \parens*{\sqrt{\berq} - \sqrt{\berp} z}^{(1-\alpha)n} \mper
    \end{align*}
    Then,
    \begin{align*}
        \log |f_w(z) z^{-k}|
        &= \frac{n}{2} \Big(-\alpha \log \berp - (1-\alpha) \log \berq \\
        &\quad \quad \quad + \alpha \log |\sqrt{\berp} + \sqrt{\berq} z|^2 + (1-\alpha) \log |\sqrt{\berq} - \sqrt{\berp} z|^2 
        - \beta \log |z|^2 \Big) \\
        &= \frac{n}{2} \parens*{ D(\alpha \| \berp) + \alpha \log \frac{|\sqrt{\berp} + \sqrt{\berq} z|^2}{\alpha} + (1-\alpha) \log \frac{|\sqrt{\berq} - \sqrt{\berp} z|^2}{1-\alpha} -\beta \log |z|^2 } \mper
    \end{align*}
    Here $D(\alpha \| \alpha') \coloneqq  \alpha \log(\frac{\alpha}{\alpha'}) + (1-\alpha) \log(\frac{1-\alpha}{1-\alpha'})$ is the KL divergence between two Bernoulli distributions with parameter $\alpha$ and $\alpha'$.
    Next, we use the concavity of $\log$ and Jensen's inequality:
    \begin{align*}
        \alpha \log \frac{|\sqrt{\berp} + \sqrt{\berq} z|^2}{\alpha} + (1-\alpha) \log \frac{|\sqrt{\berq} - \sqrt{\berp} z|^2}{1-\alpha}
        &\leq \log \parens*{|\sqrt{\berp} + \sqrt{\berq} z|^2 + |\sqrt{\berq} - \sqrt{\berp} z|^2} \\
        &= \log(1 + |z|^2) \mper
    \end{align*}
    The equality follows because $|\sqrt{\berp} + \sqrt{\berq} z|^2 = \berp + \berq z^2 + \sqrt{\berp \berq}(z + \ol{z})$ while $|\sqrt{\berq} - \sqrt{\berp} z|^2 = \berq + \berp z^2 - \sqrt{\berp \berq}(z + \ol{z})$, and $\berp+\berq=1$.

    Thus, we may choose the radius $r$ of $C$ that minimizes $\log(1+r^2) - \beta\log r^2$.
    Differentiating this with respect to $r^2$, we get that $r^2 =  \frac{\beta}{1-\beta}$ is the optimal.
    Therefore, for $|z| = r$,
    \begin{align*}
        \log |f_w(z) z^{-k}| &\leq \frac{n}{2} \parens*{D(\alpha \|\berp) + \log (1+r^2) - \beta \log r^2 } \\
        &= \frac{n}{2} \parens*{D(\alpha \|\berp) + \log \frac{1}{1-\beta} - \beta \log \frac{\beta}{1-\beta} } \\
        &= \frac{n}{2} \parens*{D(\alpha \|\berp) + h(\beta) }
    \end{align*}
    where $h(\beta) = -\beta \log \beta - (1-\beta)\log(1-\beta)$.
    Then,
    \begin{align*}
        |K_k(w)| \leq e^{\frac{n}{2}(D(\alpha\|\berp) + h(\beta))} \cdot \frac{1}{2\pi} \oint_C \abs{\frac{dz}{z}} \leq e^{\frac{n}{2}(D(\alpha\|\berp) + h(\beta))} \mper
    \end{align*}
    Recall that $\alpha = w/n$ and $\beta = k/n$.
    By the inequality $\binom{n}{k} \geq \sqrt{\frac{n}{8k(n-k)}}e^{n h(k/n)} \geq \Omega(1/\sqrt{k}) e^{n h(k/n)}$ for $k \leq n/2$.
    On the other hand, we have $D(\berp+\eps \|\berp) \leq \frac{\eps^2}{2\berp(1-\berp)} + O_{\berp}(|\eps|^3)$ by \Cref{fact:KL-bernoulli}.
    From \Cref{eq:w-y-relation}, we have $\alpha = \berp + \sqrt{\frac{\berp \berq}{n}}y$, thus
    \begin{align*}
        D(\alpha\|\berp) \leq \frac{y^2}{2n} + O_{\berp}\parens*{\abs*{\frac{y}{\sqrt{n}}}^3} \mper
    \end{align*}
    Therefore,
    \begin{align*}
        |\Kr_k(y)| = \binom{n}{k}^{-1/2} |K_k(w)| \leq O(k^{1/4}) e^{y^2/4 + O_{\berp}(y^3 n^{-1/2})}
        \leq O(k^{1/4}) e^{y^2/4}
    \end{align*}
    for $|y| \leq o(n^{1/6})$.
\end{proof}

\section{Bounds on Characteristic Functions}
\label{sec:characteristic}
\restatelemma{lem:PolyFourierBoundLem}

\begin{proof}
Let $\mu = \E_{\rvg\sim\calN(0,1)}[p(\rvg)]$ and $\sigma^2 = \Var_{\rvg\sim\calN(0,1)}[p(\rvg)]$.
We will bound $|\E_{\rvg}[e^{i(p(\rvg)-\mu)}]|$ instead.
By Taylor expansion of cosine,
\begin{align*}
    \E_{\rvg} \mathrm{Re}(e^{i (p(\rvg)-\mu)}) \in 1 - \frac{1}{2} \E_{\rvg}[(p(\rvg)-\mu)^2] \pm \frac{1}{24} \E_{\rvg}[(p(\rvg)-\mu)^4] \mper
\end{align*}
Since $p$ is a polynomial of degree $k$, by the 2-to-4 hypercontractivity in the Gaussian space \cite{o2014analysis}, we have $\E_{\rvg}[(p(\rvg)-\mu)^4] \leq 9^k \cdot \E_{\rvg}[(p(\rvg)-\mu)^2]^2 = 9^k \cdot \Var_{\rvg}[p(\rvg)]^2$.
For the imaginary part, we have
\begin{align*}
    \E_{\rvg} \mathrm{Im}(e^{i (p(\rvg)-\mu)}) \in  \E_{\rvg}[(p(\rvg)-\mu)] \pm \frac{1}{6} \E_{\rvg}[|p(\rvg)-\mu|^3] \mper
\end{align*}
The first term is 0, and by 2-to-3 hypercontractivity~\cite{o2014analysis}, we have $\E_{\rvg}[|p(\rvg)-\mu|^3] \leq 2^{3k/2} \cdot \Var_{\rvg}[p(\rvg)]^{3/2}$.

Thus, if $\sigma^2 = \Var_{\rvg}[p(\rvg)] \leq 9^{-k}$, then
\begin{align*}
    \abs*{\E_{\rvg}[e^{i(p(\rvg)-\mu)}]} \leq 1 - \frac{1}{2} \sigma^2 + \frac{1}{24} 9^k \sigma^4 + \frac{1}{6} 2^{3k/2} \sigma^3
    \leq 1 - \frac{1}{4} \sigma^2 \mper
\end{align*}

For the second case, by changing $p$ by a constant, which does not affect the estimate, we can assume that $\E[ip(\rvg)]$ is positive real, and it suffices to bound the real part of this expectation above. 
Let $\rvx$ and $\rvx'$ be independent Gaussians, and let $\rvx_\theta = \cos(\theta) \rvx + \sin(\theta) \rvx'$.
Note that $\rvx_\theta$ is also a standard Gaussian. Let $f_{\rvx, \rvx'}(\theta) = p(\rvx_\theta)$, so that
\[
    \E_{\rvg} [\exp (i p(\rvg))] = \E_{\rvx, \rvx'} \E_{\theta \sim \mathrm{Unif} [0, 2 \pi]} \left[ \exp \left( i f_{\rvx, \rvx'}(\theta) \right) \right] \; .
\]
We will show that with probability at least $k^{-|c| k } / {4k}$ over the choice of $\rvx$, $\rvx'$ and $\theta$ that $f_{\rvx,\rvx'}(\theta)$ is at least $\left(k^{-|c|k/2} / 4\right)$-far from an integer multiple of $2\pi$.
This will immediately yield the desired estimate.

To begin with, by Carbery-Wright we have that with $2/3$ probability that $p(\rvx)$ and $p(\rvx')$ differ by at least $k^{-|c| k / 2}$. If this happens, then by continuity there are values of $\theta$ so that $f_{\rvx,\rvx'}(\theta)$ is more than $k^{-|c|k / 2} / 2$-far from the nearest integer multiple of $2\pi$. Additionally, we have that
\begin{align*}
\E_{\rvx,\rvx'}[\E_\theta [(f_{\rvx,\rvx'}'(\theta))^2]] & = \E_{\rvx,\rvx',\theta}[(p'(\rvx_\theta))^2 \rvx_{\theta+\pi/2}^2].
\end{align*}
Since $\rvx_\theta$ and $\rvx_{\theta+\pi/2}$ are independent Gaussians, this is
$$
    \E[(p'(\rvx_\theta))^2] \leq k \cdot \var_{\rvg\sim\calN(0,1)}[p(\rvg)].
$$
Therefore, by Markov's inequality and a union bound, there is a probability of at least $1/2$ over $\rvx$ and $\rvx'$ that there exists $\theta$ so that $f_{\rvx,\rvx'}(\theta)$ is more than $k^{ck / 2} / 2$ from an integer multiple of $2\pi$ and so that $\E_\theta[(f'_{\rvx,\rvx'}(\theta))^2] < k^{ck + 1}$. If both hold, there must be an interval in $\theta$ of length at least $\frac{1}{2k} k^{-|c| k / 2}$ for which $f_{\rvx,\rvx'}(\theta)$ is at least $(k^{-|c| k /2} / 4)$ far from the nearest multiple of $2\pi$. This completes the proof of the second case.

For the third result, we note that $\E_{\rvg\sim\calN(0,1)}[\exp(i p(\rvg))]$ is proportional to
$$
\int_{-\infty}^\infty e^{ip(x)}e^{-x^2/2} dx.
$$
Integrating by parts, we find that:
\begin{align*}
\int e^{ip(x)}e^{-x^2/2}\ dx & = \int (i p'(x)) e^{ip(x)} \cdot \frac{1}{i p'(x)} e^{-x^2/2} \ dx\\
& = \frac{e^{ip(x)} e^{-x^2/2}}{i p'(x)} - i \int e^{ip(x)} e^{-x^2/2} \left(\frac{x}{p'(x)} + \frac{p''(x)}{(p'(x))^2} \right)dx.
\end{align*}
Let $\gamma > 0$ be a parameter we set later. 
By \Cref{lem:deriv-bound}, we have that $|p'|_2^2 \geq \var_{\rvg\sim\calN(0,1)}(p(\rvg))$, and therefore by Carbery-Wright, except with probability $\gamma / 3$ we have that $|p'(\rvg)| \geq \Omega(\gamma/k)^k |p|_2$. Additionally, $|p''|_2 \leq k |p'|_2$ by \Cref{lem:deriv-bound}.
Thus, by Gaussian hypercontractivity, we have that $|p''(\rvg)| \leq O(\log(1/\gamma))^k |p'|_2$ with probability $1-\gamma / 3$. Finally, except with probability $\gamma / 3$ we have $|\rvg|\leq \log(1/\gamma)$. 
Let $E$ be the event that (1) $|p'(x)| \geq \Omega(\gamma/k)^k |p'|_2$, (2) $|p''(x)| \leq O(\log(1/\gamma)^k) |p'|_2$, and (3) $|x| \leq \log(1/\gamma)$.
By the arguments above and a union bound, we have that $\Pr[\rvg \not\in E] \leq \gamma$.
Then, we have that:
\begin{align}
\left| \int_{-\infty}^\infty e^{ip(x)}e^{-x^2/2} dx \right| &= \left| \int_E e^{ip(x)}e^{-x^2/2} dx + \int_{E^c} e^{ip(x)}e^{-x^2/2} dx \right| \notag \\
&\leq \left| \int_E e^{ip(x)}e^{-x^2/2} dx \right| + O(\gamma) \; . \label{eq:e-ec}
\end{align}
We now note that $E$ is a union of $O(k)$ intervals, since for any polynomial $q$ of degree at most $k$ and any threshold $c \in \R$, the set of $x$ so that $q(x) \leq c$ is a union of at most $k$ intervals.
So, let $E = \cup_{\ell = 1}^{O(k)} I_\ell$ for some disjoint intervals $I_\ell = [a_\ell, b_\ell]$.
Then, by the integration by parts formula, we have that
\begin{align*}
    \left| \int_E e^{ip(x)}e^{-x^2/2} dx \right| &\leq \sum_{\ell = 1}^{O(k)} \left( \left| \frac{e^{ip(b_\ell)} e^{-b_\ell^2/2}}{p'(b_\ell)} - \frac{e^{ip(a_\ell)} e^{-a_\ell^2/2}}{p'(a_\ell)} \right| +  \int_{a_\ell}^{b_\ell} \left| e^{-x^2/2} \left(\frac{x}{p'(x)} + \frac{p''(x)}{(p'(x))^2} \right) \right| dx  \right)
\end{align*}

By the definition of $E$, we have that for any $y \in E$,
\begin{align*}
    |p'(y)| &\geq \Omega\left( \frac{\log(1/\gamma)}{k} \right)^k |p'|_2 \\
    &\geq \Omega\left( \frac{\log(1/\gamma)}{k} \right)^k \sqrt{\var (p(\rvg))} \\
    &\geq \Omega(\log 1 / \gamma)^k \Var(p(\rvg))^{\Omega (1)} \; ,
\end{align*}
where the last bound follows by our assumption on $\var (p(\rvg))$.
So we have that
\begin{align*}
    \sum_{\ell = 1}^{O(k)} \left| \frac{e^{ip(b_\ell)} e^{-b_\ell^2/2}}{p'(b_\ell)} - \frac{e^{ip(a_\ell)} e^{-a_\ell^2/2}}{p'(a_\ell)} \right| &\leq O(k) \cdot (\log 1 / \gamma)^{-\Omega(k)} \Var(p(\rvg))^{-\Omega (1)} \; ,
\end{align*}
which is significantly stronger than we need, as long as $\gamma < 1/2$.

To bound the remaining term, we observe that for all $y \in E$, and all $\gamma \leq 1$, it holds that
\begin{align*}
    \frac{x}{p'(x)} + \frac{p''(x)}{(p'(x))^2} &= O\left( \frac{(\log 1 / \gamma)^k}{(\gamma / k)^{2k} \var (p)^{1/2}} \right)
\end{align*}
\noindent
So, if we set $\gamma$ to be a small multiple of $\var(p(\rvg))^{-1/(4k)}$, this quantity is at most $\Var(p(\rvg))^{-\Omega (1)}$.
Therefore,
\[
\sum_{\ell = 1}^{O(k)} \int_{a_\ell}^{b_\ell} \left| e^{-x^2/2} \left(\frac{x}{p'(x)} + \frac{p''(x)}{(p'(x))^2} \right) \right| dx \leq \Var(p(\rvg))^{-\Omega (1)} \cdot \int_{-\infty}^\infty e^{-x^2 / 2} dx = \Var(p(\rvg))^{-\Omega (1)} \; .
\]
Combining this with~\eqref{eq:e-ec} and our choice of $\gamma$ yields the desired claim.
\end{proof}

\section{Moments of Subgraph Polynomials}
\label{sec:subgraph-moments}
Let $\sgraph$ be a constant-sized connected graph.
Recall the notation from \Cref{sec:subgraph-stats}, where for an $n\times n$ matrix,
\[
    \chi_{\sgraph}(M) \coloneqq \frac{1}{\sqrt{\abs{\calL_{\sgraph}}}}\sum_{\pi\in\calL_{\sgraph}} \prod_{ab\in\sgraph} M_{\pi(a),\pi(b)}
\]
and $\calL_{\sgraph}$ is the set of all injective maps from $V(\sgraph)$ to $[n]$.

We will prove that the subgraph polynomial of a random matrix has Gaussian-like moments.
\restatelemma{lem:subgraph-moments}
\begin{proof}
    We begin by expanding out the expression we would like to compute.
    We will drop the subscript from $\calL_{\sgraph}$, and write $\calL$. 
    \begin{align*}
        \E_{\rvM \sim \Null} \chi_\vartheta(\rvM)^q = \frac{1}{\abs*{\calL}^{q/2}}\sum_{\pi_1,\dots,\pi_q\in\calL} \E_{\rvM\sim\Null} \prod_{i=1}^{q} \prod_{ab\in E(\sgraph)} {\rvM}_{\pi_i(a),\pi_i(b)}\mper  \numberthis \label{eq:moment}
    \end{align*}
    The summand vanishes unless every edge in the multiset union $H = \pi_1(\sgraph)\sqcup\dots\sqcup\pi_q(\sgraph)$ is an evenly covered edge.

    For a single choice $\pi_1,\dots,\pi_q$, we can associate:
    \begin{itemize}
        \item $r = r(H)$ defined as the number of connected components in $H$,
        \item The connected components $H_1,\dots,H_r$ of $H$ with labels erased,
        \item The number of copies of $\sgraph$ per connected component $\beta_1,\dots,\beta_r$,
        \item Sets $S_1,\dots,S_r\subseteq[q]$ where $|S_i|=\beta_i$ comprises of all $\pi_j$ such that $\pi_j(\sgraph)$ is in $H_i$,
        \item A map $\psi_i$ such that $\psi_i(j)$ for $j\in S_i$ is equal to the subgraph of $H_i$ that arises from $\pi_j(\sgraph)$.
    \end{itemize}
    For fixed choices of $H_1,\dots,H_r$, $S_1,\dots,S_r$ and $\psi_1,\dots,\psi_r$, there are at most $n^{|V(H)|}$ terms in the sum in \Cref{eq:moment} that correspond to these choices.
    Additionally, the value of each term is nonnegative, and depends only on the choices.
    Abbreviating the choices to $\vec{H}, \vec{S}, \vec{\psi}$, denoting the correspond value of each term as $\mathrm{wt}\left(\vec{H}, \vec{S}, \vec{\psi}\right)$, and using $|\calL| = n^{v(\sgraph)}$ we can bound \Cref{eq:moment} as:
    \begin{align*}
        \eqref{eq:moment} &\le \sum_{\vec{H},\vec{S},\vec{\psi}} \mathrm{wt}\parens*{\vec{H}, \vec{S}, \vec{\psi}} \cdot n^{V(H) - qv(\sgraph)/2}\mper
    \end{align*}
    Before we proceed, we introduce some notation.
    We use $\models$ to mean ``is consistent with''.
    \begin{align*}
        \sum_{\vec{H},\vec{S},\vec{\psi}} \mathrm{wt}\parens*{\vec{H}, \vec{S}, \vec{\psi}} \cdot n^{V(H)-qv(\sgraph)/2} &=
        \sum_{r\ge 1} \sum_{\vec{\beta}\models r} \sum_{\vec{S}\models \vec{\beta}} \sum_{(\vec{H},\vec{\psi})\models\vec{S}} \mathrm{wt}\parens*{\vec{H}, \vec{S}, \vec{\psi}} \cdot n^{\sum_i (V(H_i)-\beta_iv(\sgraph)/2)}   \numberthis \label{eq:genau-bundle}
    \end{align*}
    We begin by counting the number of ways in which $(\vec{H},\vec{\psi}) \models \vec{S}$.

    In particular, for each $i\in[r]$, we give an upper bound on the number of ways in which $(H_i,\psi_i)\models S_i$.
    Suppose $\beta_i = 2$, then there is exactly one choice for $(H_i,\psi_i)$ that would lead to nonzero weight.
    Henceforth, we assume that $\beta_i \ge 3$.
    Without loss of generality, say $S_i = \{1,\dots,\beta_i\}$, and that for all $j\in[t]$, the graph $\cup_{s=1}^j\psi_i(s)$ is connected.
    Given $(\vec{H},\vec{\psi}) \models \vec{S}$, we describe an encoding and then give an upper bound on the number of valid encodings.

    \parhead{Encoding scheme.}
    In our encoding scheme, we process $\psi_i(j)$ in order $j = 1,\dots,\beta_i$, and describe the procedure at each step.
    \begin{itemize}
        \item Call a vertex $v$ in $\psi_i(j)$ \texttt{fresh} if it does not appear in $\bigcup_{s=1}^{j-1} \psi_i(s)$.
        \item Call a vertex $v$ in $\psi_i(j)$ \texttt{boundary} if it exists in $\bigcup_{s=1}^{j-1} \psi_i(s)$ but is incident to some singleton edge in $\bigcup_{s=1}^{j} \psi_i(s)$.
        \item Call a vertex $v$ in $\psi_i(j)$ \texttt{stale} if it exists in $\bigcup_{s=1}^{j-1} \psi_i(s)$ and all its incident edges in $\bigcup_{s=1}^{j} \psi_i(s)$ are double-covered.
        \item For each vertex in $\sgraph$, we mark whether it appears in $\psi_i(j)$ as \texttt{fresh}, \texttt{boundary}, or \texttt{stale}, of which there are $3^{v(\sgraph)}$ choices.
        \item For a vertex that is \texttt{boundary} or \texttt{stale}, we write down the name of the vertex in $\bigcup_{s=1}^{j-1} \psi_i(s)$, of which there are at most $\beta_i v(\sgraph)$ many choices.
    \end{itemize}
    Using $f_i, b_i, s_i$ to denote the number of \texttt{fresh}, \texttt{boundary} and \texttt{stale} vertices respectively, we see that the number of encodings is bounded by:
    \begin{align*}
        \prod_{i\in[r]:\beta_i\ge 3} 3^{\beta_i}\cdot (\beta_i v(\sgraph))^{b_i+s_i}\mper
    \end{align*}
    Substituting this bound in gives:
    \begin{align*}
        \eqref{eq:genau-bundle} \le \sum_{r\ge 1} \sum_{\vec{\beta}\models r} \sum_{\vec{S}\models\vec{\beta}} \mathrm{wt}\parens*{\vec{H},\vec{S},\vec{\psi}} \cdot  \prod_{i\in[r]:\beta_i\ge 3} 3^{\beta_i}\cdot (\beta_i v(\sgraph))^{b_i+s_i} \cdot n^{V(H_i)-\beta_iv(\sgraph)/2}\mper
    \end{align*}
    Next, we bound $\mathrm{wt}\parens*{\vec{H},\vec{S},\vec{\psi}}$.
    Within $H_i$, the multiplicity of any edge is at most $\beta_i$, and thus, we have:
    \[
        \mathrm{wt}\parens*{\vec{H},\vec{S},\vec{\psi}} \le \prod_{i\in[r]:\beta_i\ge 3} O\parens*{\sqrt{\beta_i}}^{\beta_i e(\sgraph)} = \prod_{i\in[r]:\beta_i\ge 3} O\parens*{\sqrt{\beta_i}^{e(\sgraph)}}^{\beta_i}
    \]
    Along with the fact that $|V(H_i)| = f_i$ and $b_i+s_i\le\beta_i v(\sgraph)$, this leads us to conclude:
    \begin{align*}
        \eqref{eq:genau-bundle}\le \sum_{r\ge 1}\sum_{\vec{\beta}\models r}\sum_{\vec{S}\models\vec{\beta}} \prod_{i\in[r]:\beta_i\ge 3} O\parens*{\beta_i^{e(\sgraph)/2 + v(\sgraph)} v(\sgraph)^{v(\sgraph)} }^{\beta_i} \cdot n^{f_i - \beta_i v(\sgraph)/2}
    \end{align*}
    Note that $v(\sgraph)^{v(\sgraph)}$ is a constant, and henceforth we absorb it into the $O(\cdot)$-notation.
    Next, we will prove that $f_i - \beta_i v(\sgraph)/2$ is significantly negative.
    First, observe that $f_i + b_i + s_i = \beta_i v(\sgraph)$.
    Since $s_i\ge f_i$, we have $2f_i + b_i \le \beta_i v(\sgraph)$.
    We will now prove a lower bound on $b_i$.
    The first step contributes zero \texttt{boundary} vertices to $b_i$.
    Since $\psi_i(j)$ is connected to $\cup_{s=1}^{j-1}\psi_i(s)$, it must be the case that every subsequent step after the first either contributes at least one \texttt{boundary} vertex or $v(\sgraph)$ \texttt{stale} vertices.
    At most half of the vertices introduced are \texttt{stale}, and hence there can be at most $\beta_i/2$ instances where $v(\sgraph)$ \texttt{stale} vertices are introduced.
    Consequently, the number of \texttt{boundary} vertices $b_i$ is at least $\beta_i/2-1$.
    Thus, we have the inequality
    \begin{align*}
        2f_i &\le \beta_i\parens*{ v(\sgraph) - \frac{1}{2} } + 1\mcom
    \end{align*}
    which implies
    $$
        f_i - \frac{\beta_i v(\sgraph)}{2} \le -\frac{\beta_i}{4} + \frac{1}{2} = \frac{2-\beta_i}{4} \le -\frac{\beta_i}{12}\mcom
    $$
    where the final inequality used $\beta_i\ge 3$.

    As an upshot, we have:
    \begin{align*}
        \eqref{eq:genau-bundle} \le \sum_{r\ge 1} \sum_{\vec{\beta}\models r} \sum_{\vec{S}\models\vec{\beta}} \prod_{i\in[r]:\beta_i\ge 3} O\parens*{ \frac{\beta_i^{e(\sgraph) + v(\sgraph)} }{n^{1/12}} }^{\beta_i}
    \end{align*}
    For a fixed choice of $\vec{\beta}$, it is easy to see that the number of choices for $\vec{S}\models\vec{\beta}$ is at most:
    \[
        q^{q/2} \cdot \prod_{i\in[r]:\beta_i\ge 3} q^{\beta_i}\mper
    \]
    Consequently,
    \begin{align*}
        \eqref{eq:genau-bundle} \le q^{q/2} \sum_{r\ge 1} \sum_{\vec{\beta}\models r} \prod_{i\in[r]:\beta_i \ge 3} O\parens*{ \frac{q\beta_i^{e(\sgraph)+v(\sgraph)}}{n^{1/12}} }^{\beta_i}
    \end{align*}
    We can choose $\alpha(\sgraph)$ small enough so that even after we use $q,\beta\le n^{\alpha(\sgraph)}$, we have:
    \begin{align*}
        \eqref{eq:genau-bundle} &\le q^{q/2} \sum_{r\ge 1} \sum_{\beta\models r} \prod_{i\in[r]:\beta_i\ge 3} O\parens*{n^{-1/24}}^{\beta_i} \\
        &\le q^{q/2} \sum_{\beta_1,\dots,\beta_q\ge 0} \prod_{i=1}^q O\parens*{n^{-1/24}}^{\beta_i} \\
        &\le q^{q/2} \parens*{\sum_{\beta \ge 0} O\parens*{n^{-1/24}}^{\beta}}^q \\
        &\le q^{q/2} \parens*{1 + O\parens*{n^{-1/24}} }^q \\
        &\le q^{q/2}\cdot\parens*{1 + \frac{1}{\poly(n)}}\mper  \qedhere
    \end{align*}
\end{proof}

\end{document}